\documentclass[sigconf]{acmart}
\AtBeginDocument{%
  }

\copyrightyear{2026}
\acmYear{2026}
\setcopyright{cc}
\setcctype{by}
\acmConference[KDD '26]{Proceedings of the 32nd ACM SIGKDD Conference on Knowledge Discovery and Data Mining V.2}{August 09--13, 2026}{Jeju Island, Republic of Korea}
\acmBooktitle{Proceedings of the 32nd ACM SIGKDD Conference on Knowledge Discovery and Data Mining V.2 (KDD '26), August 09--13, 2026, Jeju Island, Republic of Korea}
\acmDOI{10.1145/3770855.3817961}
\acmISBN{979-8-4007-2259-2/2026/08}

\usepackage{amsmath}
\usepackage{amsthm}
\usepackage{graphicx}
\usepackage{outlines}
\usepackage{xcolor}
\usepackage{algorithm}
\usepackage{algpseudocode}
\usepackage{verbatim}
\usepackage{hyperref}
\usepackage[noabbrev,capitalize]{cleveref}
\crefname{equation}{Eq.}{Eqs.}
\usepackage{breqn}
\usepackage{bbm}
\usepackage{booktabs}
\usepackage{dsfont}
\usepackage{varwidth}
\usepackage{multirow}
\usepackage{tikz}
\usepackage{textcomp}
\usepackage{listings}
\usepackage{pifont}
\usepackage{xspace}
\usepackage{colortbl}
\usepackage{wrapfig}
\usepackage{tabularx}
\usepackage{makecell}
\usepackage{caption}
\usepackage{tikz}
\usetikzlibrary{tikzmark}
\usetikzlibrary{positioning}
\usepackage{subcaption}

\usepackage{tikz}
\usetikzlibrary{arrows.meta}

\usepackage{thmtools,thm-restate}
\theoremstyle{acmplain}

\usepackage{array} % for better column formatting

\usepackage{enumitem}
\setitemize{noitemsep,topsep=0pt,parsep=0pt,partopsep=0pt}
\setenumerate{noitemsep,topsep=0pt,parsep=0pt,partopsep=0pt}

\AtBeginDocument{%
  \setlength{\abovedisplayskip}{6pt plus 2pt minus 2pt}
  \setlength{\belowdisplayskip}{6pt plus 2pt minus 2pt}
  \setlength{\abovedisplayshortskip}{2pt plus 1pt minus 1pt}
  \setlength{\belowdisplayshortskip}{4pt plus 2pt minus 2pt}
}

\newcommand{\cut}[1]{}
\algrenewcommand\algorithmicrequire{\textbf{Input:}}
\algrenewcommand\algorithmicensure{\textbf{Output:}}

\newtheorem{lemma}{Lemma}

\newtheorem{definition}{Definition}

\newtheorem{example}{Example}
\newtheorem{claim}{Claim}
\newtheorem{observation}{Observation}
\newtheorem{fact}{Fact}

\newcommand{\val}{\ensuremath{U}}
\newcommand{\seedset}{T} %{\ensuremath{Seeds}}
\newcommand{\seed}{t}
\newcommand{\nodeset}{\ensuremath{V}}
\newcommand{\node}{v}
\newcommand{\neigh}{N}
\newcommand{\shap}{Shap}
\newcommand{\actprob}{p}
\newcommand{\Sset}{S}

\newcommand{\activeset}{A}
\newcommand{\timestep}{\tau}
\newcommand{\timeconst}{\mathbb{K}}

\newcommand{\Ssize}{k}

\newcommand{\valsim}{\hat{\val}}

\newcommand{\rrsize}{n^{\prime}}
\newcommand{\rrset}{\boldsymbol{R}}
\newcommand{\onestep}{Single-step\xspace}
\newcommand{\fixedstep}{$\timeconst$-steps\xspace}
\newcommand{\fullstep}{Complete\xspace} %{Full-steps\xspace}
\usepackage{adjustbox}

\usepackage{soul}

\newcommand{\var}{\mathrm{Var}}
\newcommand{\subcnt}{\gamma}
\newcommand{\subcntmat}{\Gamma}

\newcommand{\dpsingleopt}{\mathsf{ExactSingleStep}}
\newcommand{\permuteMC}{\mathsf{ApproxPermuteMC}}

\newcommand{\algrrset}{\mathsf{ApproxRRset}}

\newcommand{\algliveedge}{\mathsf{ApproxLiveEdge}}
\newcommand{\dpsingleoptexp}{\textsc{ExactSingleStep}\xspace}
\newcommand{\permuteMCexp}{\textsc{ApproxPermuteMC}\xspace}

\newcommand{\rrsetexp}{\textsc{ApproxRRset}\xspace}

\newcommand{\bruteforceexp}{\textsc{BruteForce}\xspace}
\newcommand{\algliveedgeexp}{\textsc{ApproxLiveEdge}\xspace}
\newcommand{\dblpdata}{DBLP Coauthor\xspace}
\newcommand{\congressdata}{Congressional Twitter\xspace}
\newcommand{\facebookdata}{Facebook\xspace}
\newcommand{\twitterdata}{Twitter\xspace}
\newcommand{\livejournaldata}{LiveJournal\xspace}
\newcommand{\orkutdata}{Orkut\xspace}

\newcommand{\delextra}[1]{}
\usepackage{etoolbox}
\newtoggle{fullversion}
\togglefalse{fullversion} % Uncomment this line to compile the conference version (with appendix)

\usepackage{balance}

\begin{document}
\title{Efficient Shapley-Based Influence Attribution in Social Networks}

\author{Fangzhu Shen}
\email{fangzhu.shen@duke.edu}
\affiliation{%
  \institution{Duke University}
  \city{Durham}\state{NC}\country{USA}}

\author{Amir Gilad}
\email{amirg@cs.huji.ac.il}
\affiliation{%
  \institution{The Hebrew University}
  \city{Jerusalem}\country{Israel}}

\author{Sudeepa Roy}
\email{sudeepa@cs.duke.edu}
\affiliation{%
  \institution{Duke University}
  \city{Durham}\state{NC}\country{USA}}

%%
%% By default, the full list of authors will be used in the page
%% headers. Often, this list is too long, and will overlap
%% other information printed in the page headers. This command allows
%% the author to define a more concise list
%% of authors' names for this purpose.
\renewcommand{\shortauthors}{Shen et al.}

\begin{abstract}
The ubiquity of social platforms has reshaped the way information, behaviors, and advertisements diffuse across networks, with influence propagation often initiated by a small set of ``seed'' users. While much of the literature emphasizes optimizing seed selection to maximize spread, a critical yet underexplored question remains: how to fairly estimate the contributions of individual seeds ``ex-ante'', i.e., before the diffusion process occurs? This capability is essential for budget allocation, influencer pricing, and fair, privacy-preserving credit distribution under uncertainty, without relying on ex-post cascade logs that capture only a single execution of influence propagation. We introduce a framework for ex-ante influence attribution based on Shapley values from cooperative game theory, which capture each seed's marginal impact in a principled and equitable manner. Adapting Shapley values to influence propagation raises unique computational challenges due to the stochastic nature of diffusion and the intricate dependencies across network structures. To address these challenges, we design polynomial-time algorithms for the special case of single-step activation that is of independent practical interest, establish a sharp tractability boundary by proving $\#P$-hardness for any propagation beyond one step, and develop approximation algorithms with provable guarantees for the standard IC model as well as time-bounded variants. Empirical evaluation on real-world and synthetic networks demonstrates that our methods are both efficient and effective, offering a practical mechanism for ex-ante influence attribution.
\end{abstract}

%%
%% The abstract is a short summary of the work to be presented in the
%% article.

%%
%% The code below is generated by the tool at http://dl.acm.org/ccs.cfm.
%% Please copy and paste the code instead of the example below.
%%
\begin{CCSXML}
<ccs2012>
<concept>
<concept_id>10002951.10003260.10003282.10003292</concept_id>
<concept_desc>Information systems~Social networks</concept_desc>
<concept_significance>500</concept_significance>
</concept>
<concept>
<concept_id>10003752.10010070.10010099.10010102</concept_id>
<concept_desc>Theory of computation~Solution concepts in game theory</concept_desc>
<concept_significance>300</concept_significance>
</concept>
<concept>
<concept_id>10003752.10003809.10003635</concept_id>
<concept_desc>Theory of computation~Graph algorithms analysis</concept_desc>
<concept_significance>100</concept_significance>
</concept>
</ccs2012>
\end{CCSXML}

\ccsdesc[500]{Information systems~Social networks}
\ccsdesc[300]{Theory of computation~Solution concepts in game theory}
\ccsdesc[100]{Theory of computation~Graph algorithms analysis}

%%
%% Keywords. The author(s) should pick words that accurately describe
%% the work being presented. Separate the keywords with commas.
\keywords{Social networks; Influence propagation; Influence diffusion; Influence attribution; Cooperative game theory; Shapley values; Independent cascade model; \#P-hardness; Approximation algorithms}

\maketitle

\newcommand\kddavailabilityurl{https://github.com/fangzhushen/Shapley-value-Influence-attribution}
\ifdefempty{\kddavailabilityurl}{}{
\begingroup\small\noindent\raggedright\textbf{Resource Availability:}\\
The source code of this paper has been made publicly available at \url{\kddavailabilityurl}.
\endgroup
}

\section{Introduction}\label{sec:intro}
The ever-growing connectivity of social media platforms and online communities has fundamentally transformed the way information, behaviors, and advertisements spread through networks.
In these widely used online environments, content such as news and advertisements can rapidly disseminate through peer-to-peer interactions and word-of-mouth effects. This phenomenon, commonly referred to as \emph{influence propagation} (or \emph{influence diffusion})~\cite{kempe2003maximizing, domingos2001mining,leskovec2007cost}, drives applications such as viral marketing, public health messaging, and the spread of news or ideas. Typically, an influence propagation process is initiated by a small set of strategically selected users, known as {\em seed nodes}. These nodes act as the origin points of the diffusion process, triggering cascades of activations or adoptions throughout the network.

While a vast body of research has focused on influence maximization~\cite{kempe2003maximizing,kimura2006tractable,chen2010scalable,kim2013scalable, borgs2014maximizing,tang2015influence,tang2014influence,demaine2014influence,chen2024link}, which aims to identify the optimal seed set to maximize future influence, a novel and equally important question has not yet been explored to our knowledge: \textbf{given a fixed set of seed nodes, how can we fairly estimate each node's contribution to the overall diffusion outcome ``before'' execution?}
We refer to this problem as \emph{``ex-ante influence attribution''}.
If we can accurately and efficiently attribute the contribution of each seed node, we can enable a broad range of downstream tasks: fair compensation of influencers in marketing campaigns, performance-based resource allocation in public health interventions, and identification of key contributors in the spread of news or misinformation. Attributing contributions ex-ante also allows us to `compare' two seed nodes, e.g., which influencer should be paid more, which influencer provides redundant coverage with respect to others, and who adds unique reach to many. 

The problem of influence attribution arises directly in practice. Consider a company launching a product campaign on social media platforms by hiring multiple influencers. These influencers serve as seed nodes that share the product with their followers, who may, in turn, recommend it to their own connections, creating successive waves of influence propagation over time.
Before launching, the company must decide how to compensate these influencers. A natural but naive approach is to estimate each seed node's contribution independently, for example, using traditional centrality measures such as out-degree~\cite{freeman2002centrality} or PageRank~\cite{brin1998anatomy}, or metrics like follower engagement rate.
However, influence in a social network is fundamentally combinatorial. The contribution of a seed node depends on which other seed nodes have already been hired, because their followers may overlap and their diffusion paths may interact. Two seed nodes with heavily overlapping follower bases provide redundant coverage, whereas a seed node that reaches an otherwise unreachable community provides a unique contribution to the overall propagation process. Consequently, independent measures cannot capture these interaction effects, leading to unfair compensation, i.e., overpaying or underpaying influencers.

To address the influence attribution problem in social networks where multiple seed nodes jointly drive a diffusion process, we adopt the concept of \emph{Shapley values}~\cite{shapley1953value} from cooperative game theory.
Shapley values provide a principled method for distributing a collective payoff among players by quantifying each player's contribution as their average marginal gain across all possible orders of player participation.
Notably, the Shapley value is the unique solution that satisfies four desirable axioms: (1) {\em efficiency} -- the total payoff is fully distributed among all players; (2) {\em symmetry} -- players with identical marginal contributions receive the same value; (3) {\em linearity} -- values are additive across games; and (4) {\em null player} -- a player with zero contribution receives zero value.
Together, these axioms provide a fair and theoretically grounded foundation for attributing influence among seed nodes.

Importantly, our framework operates in an \emph{ex-ante} manner: it measures each influencer's expected contribution based on the network structure and influence probabilities \emph{before} executing any campaign.
This stands in contrast to \emph{ex-post} attribution by Zhu et al.~\cite{zhu2021analysis}, which analyzes contributions after a campaign concludes based on a single observed cascade.
While ex-post attribution is useful for retrospective analysis, it cannot guide budget allocation or pricing decisions {\em `before'} campaigns run.
Moreover, ex-post methods require detailed cascade logs that are often unavailable due to privacy constraints, tracking limitations, or scale.
Ex-ante attribution addresses these limitations by enabling proactive decision-making based solely on network structure and diffusion probabilities, and also does not depend on one specific execution of the influence propagation process.

However, applying Shapley values to influence propagation presents significant computational challenges.
The standard Shapley formula requires summing over exponentially many coalitions.
Moreover, unlike ex-post attribution where influence values can be computed by counting in deterministic logs~\cite{zhu2021analysis}, estimating marginal contributions in the ex-ante setting requires computing the expected influence under probabilistic propagation from network structure and activation probabilities. Since the propagation through overlapping paths is non-linear, this computation is also challenging, demanding techniques specifically designed for the probabilistic setting.

\smallskip
\noindent
\textbf{Termination models.~}
In general, influence propagation can continue for a given number of steps $\timeconst$ -- referred to as \emph{\fixedstep} termination, or until no new nodes can be activated, known as \emph{\fullstep} termination (when $\timeconst = $ the number of non-seed nodes). In addition, we examine a practical special case, termed \emph{\onestep} termination, which models scenarios such as one-hop referral programs and affiliate marketing. For instance, a company may send free product samples to a set of reviewers, who then publish reviews that reach their audiences for promotion; however, audience members do not receive samples themselves and have no particular incentive to review further, so influence terminates after one step. 
Even in this seemingly simpler scenario, measuring contributions remains non-trivial, since multiple reviewers may target an overlapping set of audiences, leading to redundancy in which the marginal contribution of each additional reviewer diminishes. Understanding how to quantify contributions in this constrained setting is therefore of both practical and theoretical significance.% \red{We study all of three settings and proposes efficient algorithms.}

\smallskip
\noindent
\textbf{Our contributions.~}We present a comprehensive study of Shapley value computation for ex-ante influence attribution. \Cref{tab:results_summary} summarizes our algorithmic and complexity results. Specifically:
\begin{itemize}[leftmargin=*]
    \item We formalize the problem of ex-ante influence attribution as the computation of Shapley values under the independent cascade model~\cite{kempe2003maximizing} with different termination criteria {\bf (\Cref{sec:model})}. 
    \sloppy
        \item For \onestep termination, we design a novel dynamic-programming algorithm that computes \emph{exact} Shapley values in polynomial time, circumventing the exponential complexity of the standard Shapley formula (row~1 of \Cref{tab:results_summary}; {\bf \Cref{sec:single_step_case}}).
    \item We prove that exact computation becomes \#P-hard for 
$\timeconst \geq 2$ steps, establishing a tractability boundary: 
while $2$-step expected influence remains polynomial-time computable, 
the corresponding Shapley computation is \#P-hard. This is the first such 
hardness result specific to IC-based Shapley values and does not follow from general cooperative game hardness (row~2 of \Cref{tab:results_summary}; \textbf{\Cref{sec:hardness}}).~

\item For \fullstep termination (standard IC model) and \fixedstep termination, we develop two approximation algorithms with theoretical guarantees (rows~2-3 of \Cref{tab:results_summary}). $\algliveedge$ uses a novel credit-splitting technique on live-edge graphs, and $\algrrset$ adapts reverse reachable sets~\cite{tang2014influence} for Shapley-based influence attribution. Both significantly outperform a standard Monte Carlo baseline and enable efficient attribution on large-scale networks {\bf (\Cref{sec:multisteps})}.

    \item Finally, we conduct extensive experiments on six real-world datasets and synthetic networks, demonstrating the high accuracy and scalability of our proposed algorithms under different termination criteria {\bf (\Cref{sec:experiments})}.
\end{itemize}

\begin{table}[t]
    \centering
    \caption{Summary of our results for Shapley value computation for ex-ante influence attribution.}
    \label{tab:results_summary}
    \small
    \setlength{\tabcolsep}{2pt}
    \renewcommand{\arraystretch}{1.15}
    \begin{adjustbox}{max width=\columnwidth}
    \begin{tabular}{@{}l l l l@{}}
    \toprule
    \textbf{Termination} & \textbf{Algorithm} & \textbf{Guarantee} & \textbf{Complexity} \\
    \midrule
    \onestep ($\timeconst=1$)
        & \makecell[l]{$\dpsingleopt$\\ (Sec.~\ref{sec:single_step_case})}
        & Exact
        & $O(|\seedset|^3 |\nodeset|)$ \\
    \midrule
    \multirow{3}{*}{\shortstack[l]{\fixedstep ($\timeconst \ge 2$),\\ \fullstep (standard IC)\\ (\#P-hard, Thm.~\ref{thm:hardness})}}
        & \makecell[l]{$\algliveedge$ \\ (Sec.~\ref{subsec:live_edge})}
        & $(\epsilon, \delta)$-approx
        & $O\bigl(\frac{|\nodeset|^2}{\epsilon^2} \ln\frac{|\seedset|}{\delta} \cdot |E|\bigr)$ \\
    \cmidrule(l){2-4}
        & \makecell[l]{$\algrrset$ \\ (Sec.~\ref{subsec:rrset_approx_algo})}
        & $(\epsilon, \delta)$-approx %\makecell[l]{Mult.\ $\epsilon$ (top-$k$),\\ Additive (rest)}
        & Nearly linear in $|E|$ \\
    \bottomrule
    \end{tabular}
    \end{adjustbox}
\end{table}

\section{Related Work}\label{sec:related_work}
\textbf{Shapley values for influence and advertising attribution.~}
\citet{zhu2021analysis} computed Shapley values for ex-post influence attribution where a single deterministic cascade has been fully observed.
In contrast, our work addresses ex-ante influence attribution, where Shapley values are computed based on the {\em expected influence before observing the diffusion outcome}, enabling essential capabilities for planning and executing reward mechanisms in real-world campaigns. This ex-ante formulation leads to a distinct computational challenge: coalition values are expectations over stochastic diffusion rather than deterministic functions of a realized cascade. Therefore, the algorithms from~\cite{zhu2021analysis} do not transfer to our problem. 
\citet{singal2019shapley} studied Shapley-based attribution in a causal setting of online advertising, modeling the user journey as a Markov chain. These formulations differ from ours in the underlying process model.

\smallskip
\noindent
\textbf{Influence maximization and network centrality.~}
Much of the work on influence diffusion in social networks has focused on \emph{influence maximization} \cite{domingos2001mining,kempe2003maximizing}.
Various efficient heuristics and approximation algorithms have been proposed \cite{kimura2006tractable,chen2010scalable,kim2013scalable, borgs2014maximizing,tang2015influence,tang2014influence,demaine2014influence,chen2024link,gasko2023identification}. Notably, the use of reverse reachable sets~\cite{borgs2014maximizing,tang2015influence,tang2014influence} significantly improves the scalability; \cite{gasko2023identification} uses Shapley values as a heuristic for selecting influential seeds.
Nevertheless, this line of work targets prospective seed selection, rather than equitable attribution among the given seeds.
Another related stream of research investigates \emph{network centrality measures}~\cite{bonacich1972factoring,freeman2002centrality,brin1998anatomy,ghosh2014interplay,narayanam2011shapley,michalak2013efficient,chen2017interplay}, which aim to identify the structural importance of nodes.
Prior work~\cite{michalak2013efficient, chen2017interplay} has applied Shapley values to measure network centrality, rather than attributing influence among a fixed set of seeds.

\smallskip
\noindent
\textbf{General Shapley value estimation and applications.~} 
\sloppy
Efficient Shapley estimators are also developed for general cooperative games, including Monte Carlo sampling~\cite{castro2009polynomial,mitchell2022sampling,jia2019towards}, stratified sampling~\cite{burgess2021approximating,zhang2023efficient}, and the differential matrix~\cite{pang2025shapley}. These methods assume cheap oracle access to coalition values, whereas our setting requires computing expected influence under stochastic diffusion, which itself is $\#P$-hard for general graphs~\cite{kempe2003maximizing}, necessitating tailored algorithms.
Beyond social networks, Shapley values have been widely used in machine learning and databases to measure or explain contributions of different entities, such as features or data points in ML models~\cite{arenas2021tractability,lundberg2017unified,van2022tractability,ghorbani2019data,fryer2021shapley,akkas2024gnnshap}, or tuples or constraints in databases~\cite{livshits2021shapley,livshits2022shapley,deutch2022computing,reshef2020impact,kara2024shapley,deutch2021explanations}. These applications share the Shapley value framework but operate on fundamentally different value functions that do not involve stochastic diffusion over networks.

\section{Preliminaries}\label{sec:model}
In this section, we review related concepts and define the problem.

\subsection{Network and Independent Cascade Model}\label{subsec:influence_models}

A (social) network is modeled as a directed graph $G=(\nodeset, E)$, where $\nodeset$ is the set of {\em nodes} (users) and $E \subseteq \nodeset \times \nodeset$ is the set of directed {\em edges} (connections). We denote the set of {\em out-neighbors} of a node $u$ as $\neigh^{+}(u)$ and the set of {\em in-neighbors} of a node $u$ as $\neigh^{-}(u)$, i.e., $\neigh^{+}(u) = \{v \in \nodeset \mid (u,v) \in E\}$ and $\neigh^{-}(u) = \{v \in \nodeset \mid (v,u) \in E\}$.

\smallskip
\noindent
{\bf Influence propagation and the independent cascade model.~} 
In {\em influence propagation} or {\em diffusion}, at any timestep (step), each node exists in one of two states: \emph{active} or \emph{inactive}.  
An active node has adopted the influence (e.g., an idea or behavior) and stays active forever. In turn, it can influence its out-neighbors to become active. This process is referred to as \emph{activation}.

We consider the {\em independent cascade (IC) model}~\cite{kempe2003maximizing}. Formally, the input to the IC model is $(G(V, E), T, \actprob)$ where $G(V, E)$ is a network, $T \subseteq V$ is the set of {\em seed nodes} that are active initially (at step $\timestep=0$), and $\actprob: E \to [0,1]$ denotes the activation probability $\actprob(u,v)$ for each edge $(u, v) \in E$. 
In each step $\timestep \geq 1$, each node $u$ that is newly activated in the previous step $\timestep-1$ gets a single chance to activate each of its currently inactive out-neighbors $v$ with $\actprob(u,v)$. If successful, $v$ becomes active in step $\timestep$.

\smallskip
\noindent
{\bf Termination and scope.} 
Under the standard IC model, diffusion continues until no new nodes are 
activated. We call this \textbf{\fullstep} termination. In addition, we analyze two special cases that arise in practice and admit stronger computational results:

\begin{itemize}[leftmargin=*,topsep=2pt,itemsep=2pt]
\item \textbf{\onestep} terminates after $\timeconst = 1$ step, i.e., nodes are activated by their neighboring seed nodes and then the process stops. As described in Section~\ref{sec:intro}, \onestep termination models the practical scenario where only seed nodes promote products to their followers, and the followers have no incentive to share.
%when influence propagation stops after one step, 
%This models practical scenarios, but 
Although computing Shapley values even for $\timeconst= 1$ step remains challenging due to probabilistic overlaps with shared neighboring nodes, 
%. Thus, we design 
we show that in this case, Shapley values can be exactly computed in polynomial time in \Cref{sec:single_step_case}. 
\item \textbf{\fixedstep} terminates after $\timeconst$ steps for a given value of $\timeconst \geq 2$. This termination model generalizes the standard IC model (\fullstep termination) by setting $\timeconst = |V| - |\seedset|$, since the diffusion process continues only if at least one new node is activated per step. 
Therefore, our approximation algorithms (\Cref{sec:multisteps}) apply to both \fixedstep termination and \fullstep termination. Practically, \fixedstep\ termination models the situation where the influence up to a certain distance is measured in a large graph.

\end{itemize}

\subsection{Shapley Values for Influence Diffusion}
\label{subsec:shapley_influence}
We adapt the notion of Shapley values from cooperative game theory \cite{shapley1953value}, where the seed nodes $\seedset$ form the set of players in the cooperative game framework (See Appendix~\ref{sec:model_appendix}).

\smallskip
\noindent
{\bf Value Function.~}We define the value function as the expected number of non-seed nodes that get activated when the diffusion terminates.
Seed nodes are excluded from the value function -- they are already activated at step 0 and therefore do not contribute to the value collected by the diffusion process.
The %definition of the 
value function $\val(S)$ for a subset of seed nodes $S \subseteq T$ uses a subgraph $G_{T, S}$, where we remove $\seedset \setminus \Sset$ (seed nodes not in $S$) and their incident edges (incoming and outgoing). Then, the value is given by the expected number of non-seed nodes from $V \setminus T$ at the end of the diffusion process starting with $S$ as the seed nodes in $G_{T, S}$.

\begin{definition}[Value Function of a Set of Seed Nodes]\label{def:value_function} %[Influence of a Set of Seed Nodes]
Given $(G(V, E), \actprob, T)$ and a subset $S \subseteq T$, let $G_{T, S}(V', E')$ be the subgraph of $G$ where %vertices 
$V' = V \setminus (T \setminus S)$ and $E'$ is formed by removing all incoming and outgoing edges of $T \setminus S$ from $E$. Then, given a termination time $\timeconst$,
the \emph{value function} $\val: 2^{\seedset} \rightarrow \mathbb{R}$ for any subset $\Sset \subseteq \seedset$ is:
\begin{equation}\label{eq:seed-value}
    \begin{aligned}
    \val(\Sset) = &\mathbb{E} \Big[ \big| \{ u \in \nodeset \setminus \seedset \mid u \text{ is activated in } (G_{T, S}, p, S)  \} \big| \Big]
    \end{aligned}
    \end{equation}
under \onestep\, \fixedstep\ or \fullstep termination (the expectation $\mathbb E$ is over $p$).
\end{definition}
A natural property of this formulation is that $\val(\emptyset) = 0$, making the value function valid for Shapley values.

\begin{definition}[Shapley value of a Seed Node]\label{def:shapley_influence}
\sloppy
Given $(G(V, E), \actprob, T)$, the Shapley value of a seed node $\seed \in \seedset$ is:
\begin{align}
    \shap(\seed) &= \frac{1}{|\seedset|!} \sum_{\pi \in \Pi(\seedset)} \left[ \val(\Sset_{\pi, \seed} \cup \{\seed\}) - \val(\Sset_{\pi, \seed}) \right] \label{eq:shapley_value_1}\\
    &= \sum_{\Sset \subseteq \seedset \setminus \{\seed\}} \frac{|\Sset|! \left( |\seedset| - |\Sset| - 1 \right)!}{|\seedset|!} \left[ \val(\Sset \cup \{\seed\}) - \val(\Sset) \right] \label{eq:shapley_value_2}.
\end{align}
\end{definition}
\begin{example}\label{exm:prelim}
    %\sloppy
    Consider the network $G$ shown in \Cref{fig:ic_network_example}, with nodes $\nodeset = \{\seed_1, \seed_2, a, b, c, d\}$, set of seed nodes $\seedset = \{\seed_1, \seed_2\}$, where each edge has activation probability $0.5$.
    Under either \onestep or \fixedstep termination,
    $\shap(\seed_1) =  \frac{1}{2} \left[(\val(\{\seed_1\}) - \val(\emptyset)) + (\val(\{\seed_2, \seed_1\}) - \val(\{\seed_2\})) \right]$.
\end{example}

\smallskip
\noindent
{\bf Complexity overview.} 
When there is only one seed node $T = \{t\}$, it follows from the efficiency property of Shapley values that the sum of Shapley values of all players equals the value of the grand coalition~\cite{shapley1953value}, so we have $\shap(t) = \val(\{t\})$. Intuitively, this single seed node is responsible for all influence diffusion and thus receives full credit for it.
The problem of computing Shapley values becomes complex for $|T| > 1$ in an arbitrary graph $G$, since multiple seed nodes may be responsible for the same non-seed node being activated. In the following sections, we show that $\shap(t)$ for $t \in T$ can be computed in polynomial time for \onestep\ termination, but the problem becomes \#P-hard~\cite{valiant1979complexity} for \fixedstep\ with $\timeconst \geq 2$, for which we present efficient algorithms as practical solutions.

\begin{figure}[t]
    \begin{subfigure}{0.3\columnwidth}
    \centering
        \begin{tikzpicture}[>=stealth, node distance=1.5cm, auto]
            % Nodes
            \node[draw, circle, fill=blue!20, scale=0.8] (s) {\(\seed_1\)};
            \node[draw, circle, fill=blue!20, scale=0.8] (g) [right of=s] {\(\seed_2\)};
            \node[draw, circle, fill=red!20, scale=0.8] (a) [below of=s] {\(a\)};
            \node[draw, circle, fill=red!20, scale=0.8] (b) [right of=a] {\(b\)};
            \node[draw, circle, fill=red!20, scale=0.8] (c) [below of=a] {\(c\)};
            \node[draw, circle, fill=red!20, scale=0.8] (d) [below of=b] {\(d\)};
            % Edges with activation probabilities
            \draw[->, thick] (s) -- (a);
            \draw[->, thick] (s) -- (b);
            \draw[->, thick] (g) -- (a);
            \draw[->, thick] (g) -- (b);
            \draw[->, thick] (a) -- (b);
            \draw[->, thick] (a) -- (c);
            \draw[->, thick] (b) -- (d);
            \draw[->, thick] (b) -- (c);
            \draw[->, thick] (c) -- (d);
        \end{tikzpicture}
        %\vspace{-2mm}
        \caption{Network for \Cref{exm:prelim}}
        %arrows indicate edges. Activation probabilities of all edges are $0.5$.}
        \label{fig:ic_network_example}
    \end{subfigure}
    %\hfill
    %\hspace{-6mm}
    \begin{subfigure}{0.69\columnwidth}
    \centering
        \includegraphics[width=1.15\columnwidth]{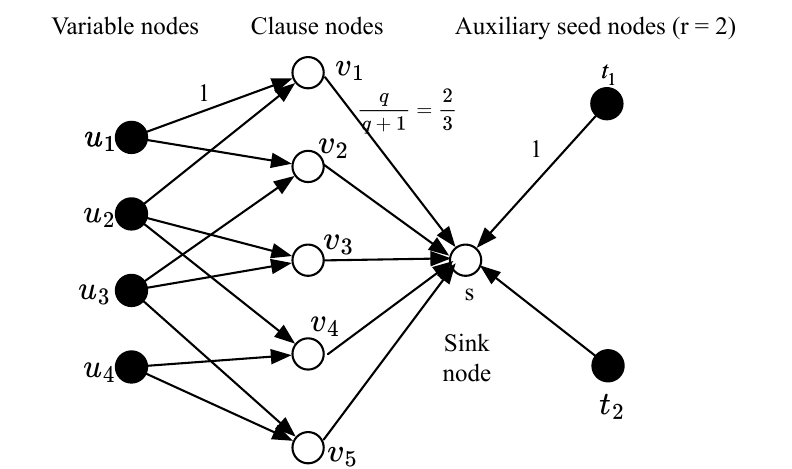}
        \caption{Example for \Cref{thm:hardness}}\label{fig:fixed_stand_graph_mainbody}
    \end{subfigure}
    \label{fig:combined_figures}
    \vspace{-1em}
    \caption{(a) shows a toy network of the IC model for \Cref{exm:prelim}. (b) shows an example of graph $G_{(r,q)}$ with $r=2, q=2$ for \Cref{thm:hardness}, where the monotone 2-CNF formula is $\phi = (x_1\lor x_2) \land (x_1 \lor x_3) \land (x_2\lor x_3) \land (x_2 \lor x_4) \land (x_3 \lor x_4)$.}
    \end{figure}
\section{\onestep Termination: Poly-time}\label{sec:single_step_case}
Computing Shapley values requires summing over exponentially-many terms in $|V|$, which is intractable even for moderate seed set sizes. In this section, we show that under the practical case of \onestep termination, this exponential barrier can be bypassed. We derive a closed-form expression with a novel dynamic programming recurrence, enabling exact computation in polynomial time.

\begin{restatable}{theorem}{singlepoly}\label{thm:single_poly}
    \sloppy
    Given $(G(\nodeset, E), \seedset, \actprob)$,  $\dpsingleopt$ (see \Cref{alg:single_step_main_opt} in Appendix~\ref{sec:singlestep_appendix}) %correctly 
    computes the Shapley values for all seed nodes under \onestep termination in $O(|\nodeset| \cdot |\seedset|^3)$ time. %can be computed in polynomial time in $|\nodeset|$.
\end{restatable}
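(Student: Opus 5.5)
Under \onestep termination the value function splits over non-seed nodes, and the plan is to exploit this. In $G_{T,S}$ a non-seed node $u \in V\setminus T$ is active at the end exactly when at least one seed $s \in S \cap \neigh^{-}(u)$ succeeds on its edge. Edges from seeds to other seeds are irrelevant, because seeds are not counted. Write $q_{s,u} = 1-\actprob(s,u)$, with $q_{s,u}=1$ if $(s,u)\notin E$. Then
\begin{equation*}
\val(S) = \sum_{u \in V\setminus T} \Big(1 - \prod_{s \in S} q_{s,u}\Big) = \sum_{u \in V\setminus T} \val_u(S).
\end{equation*}
By linearity of the Shapley value, $\shap(\seed) = \sum_{u} \shap_u(\seed)$, where $\shap_u$ is the Shapley value of the game $\val_u$. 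It therefore suffices to compute $\shap_u(\seed)$ for every $u$ and every $\seed$ in $O(|\seedset|^3)$ time per $u$; this gives the claimed total of $O(|V|\cdot|T|^3)$. Seeds with no edge to $u$ are null players in $\val_u$. So only the in-seeds $T_u = \seedset\cap \neigh^{-}(u)$ matter, with $m_u = |T_u| \le |T|$, and $\shap_u$ may be computed in the reduced game on $T_u$ by the null-player property.

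For a fixed $u$ and $\seed \in T_u$, the marginal contribution is
\begin{equation*}
\val_u(S\cup\{\seed\}) - \val_u(S) = \actprob(\seed,u)\prod_{s\in S} q_{s,u}.
\end{equation*}
Plugging this into \cref{eq:shapley_value_2} and grouping by $|S| = k$ gives
\begin{equation*}
\shap_u(\seed) = \actprob(\seed,u)\sum_{k=0}^{m_u-1} \frac{k!\,(m_u-k-1)!}{m_u!}\, e_k\big(\{q_{s,u}\}_{s \in T_u\setminus\{\seed\}}\big),
\end{equation*}
where $e_k$ is the $k$-th elementary symmetric polynomial. The core of $\dpsingleopt$ is the standard dynamic program for these polynomials. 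Process the elements one at a time with the recurrence
\begin{equation*}
\subcnt[i][k] = \subcnt[i-1][k] + q_i\,\subcnt[i-1][k-1].
\end{equation*}
This computes all $e_0,\dots,e_{m-1}$ of $m-1$ numbers in $O(m^2)$ time. Running it separately for each excluded seed $\seed$ costs $O(m_u^3)$ per node $u$, which is within the stated bound. Summing over $u$ gives $O(|V|\,|T|^3)$; building the lists $T_u$ costs an additional $O(|E|)$, which I would either absorb or note, since $|E\cap (T\times V)| \le |T||V|$.

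The main obstacle I expect is correctness bookkeeping rather than complexity. First, I must justify the product formula, which relies on the independence of the edge coin flips. Second, I must check that removing $T\setminus S$ in $G_{T,S}$ does not change the activation of non-seed nodes in one step, which holds because only seeds are active at step $0$. Third, the reduction from the full game on $T$ to the game on $T_u$ must preserve the Shapley weights. Instead of invoking the null-player reduction, I can avoid that subtlety entirely by running the DP over all of $T\setminus\{\seed\}$ with $q_{s,u}=1$ for non-neighbors. The complexity is still $O(|T|^2)$ per pair $(u,\seed)$, hence $O(|V||T|^3)$ overall. I would present that version for cleanliness.

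One optional refinement is worth recording: a leave-one-out deconvolution, dividing the generating polynomial $\prod_s(1+q_s x)$ by $(1+q_\seed x)$, would give $O(|T|^2)$ per node. I would not rely on it, because it is numerically unstable when $q_\seed$ is near zero, and the stated bound does not require it.
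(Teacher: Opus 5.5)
Your proposal is correct and follows the paper's proof essentially step for step. You decompose $\val$ over non-seed nodes $u$, use the marginal contribution $\actprob(\seed,u)\prod_{s\in S}(1-\actprob(s,u))$, group coalitions by size, and compute the resulting elementary symmetric sums (the paper's $\alpha_{u,\seed}$) with the same $O(|\seedset(u)|^2)$ recurrence per seed, for $O(|\seedset(u)|^3)$ per node. One small point: the statement is about $\dpsingleopt$ as written, which uses the local weights $\frac{k!(|\seedset(u)|-k-1)!}{|\seedset(u)|!}$, so you should keep the null-player reduction rather than replace it with your $q_{s,u}=1$ padding variant. That reduction is valid by the identity $w_n(s)+w_n(s+1)=w_{n-1}(s)$ for $w_n(s)=\frac{s!(n-s-1)!}{n!}$, and the paper itself relies on it only implicitly when it evaluates the closed form on $G_u$.
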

%$O(|\nodeset|^4)$
We outline the key ideas and analyze $\dpsingleopt$ below. See Appendix~\ref{sec:singlestep_appendix} for the complete proof and pseudocode (\Cref{alg:single_step_main_opt}).

\smallskip
\noindent
\textbf{Closed-form expression.~}
Under \onestep termination, each non-seed node $u$ can only be activated by its direct seed neighbors, and these activation attempts are mutually independent. This allows us to express the marginal contribution of seed $\seed$ as a sum over its non-seed out-neighbors: for each such $u$, the contribution of $\seed$ to $u$ is the probability that $\seed$ activates $u$ while all other seeds in the coalition fail to do so. Substituting into Eq.~(\ref{eq:shapley_value_2}) yields:
\begin{equation}\label{eq:shapley_sum_mainbody}
    \shap(\seed) = \sum_{\Ssize = 0}^{|\seedset| - 1} \frac{\Ssize!(|\seedset| - \Ssize - 1)!}{|\seedset|}  \cdot \left[\sum_{u \in (\nodeset \setminus \seedset) \cap \neigh^{+}(\seed)} \actprob_{\seed,u} \cdot \alpha_{u, \seed}(\seedset \setminus \{\seed\}, \Ssize)\right],
\end{equation}
where $\alpha_{u,\seed}(L,k) = \sum_{\substack{\Sset \subseteq L \\ |\Sset|=k}} \left( \prod_{w \in \neigh^{-}(u)\cap \Sset}(1-\actprob_{w,u}) \right)$ aggregates the failure probabilities over all $\Ssize$-sized subsets of $L$, for any $L \subseteq \seedset \setminus \{\seed\}$.

\smallskip
\noindent
\textbf{Dynamic programming recurrence.~}
The key challenge is computing $\alpha_{u,\seed}(\seedset \setminus \{\seed\}, \Ssize)$ for all $\Ssize$ values efficiently, as the na\"ive approach enumerates $\binom{\seedset \setminus \{\seed\}}{\Ssize}$ subsets. We observe that for any subset $L \subseteq \seedset \setminus \{ \seed \}$ and any seed $n \in L$, we can partition $\Ssize$-sized subsets of $L$ into those containing $n$ and those that do not, yielding:
\begin{equation}\label{eq:dp_decomposition_formula_mainbody}
    \alpha_{u,\seed}(L,\Ssize) = (1-\actprob_{n,u}) \cdot \alpha_{u,\seed}(L \setminus \{n\}, \Ssize-1) + \alpha_{u,\seed}(L \setminus \{n\}, \Ssize),
\end{equation}
with base cases $\alpha_{u,\seed}(L,0) = 1$ and $\alpha_{u,\seed}(L,\Ssize) = 0$ for $\Ssize > |L|$. 
This recurrence computes $\alpha_{u,\seed}(L, \Ssize)$ for all $\Ssize \in \{0, \ldots, |L|\}$ in $O(|L|^2)$ time.

\smallskip
\noindent
\textbf{Local decomposition for scalability.~}
To optimize the computation, we decompose
the Shapley value computation in graph $G(V,E)$ into multiple subgraphs for parallel computation. For each non-seed node $u$, let $\seedset(u)$ denote the set of seed in-neighbors of $u$. Then we form a local subgraph $G_{u}(V_u, E_u)$ where $V_u = \seedset(u) \cup \{u\}$, and $E_u = \{(t,u) \mid t \in \seedset(u)\}$. Since the activation of each non-seed node under \onestep termination is independent of other activations, the value function $\val(\Sset)$ can be decomposed as the sum of value functions in all local subgraphs. By the linearity property:
\begin{equation}\label{eq:local_decomp}
\shap(\seed) = \sum_{u\in \{\nodeset \setminus \seedset\} \cap \neigh^{+}(\seed)} \;\shap_u(\seed),
\end{equation}
where $\shap_u(\seed)$ is the Shapley value of $\seed$ in subgraph $G_u$.

\smallskip
\noindent
\textbf{Poly-time algorithm $\dpsingleopt$.~}
For each non-seed node $u$, the algorithm first constructs a local subgraph $G_u$ and precomputes the binomial coefficients from Eq.~(\ref{eq:shapley_sum_mainbody}). Then for each $\seed \in \seedset(u)$, it uses the dynamic programming recurrence in Eq.~(\ref{eq:dp_decomposition_formula_mainbody}) to compute $\alpha_{u,\seed}(L, \Ssize)$ for all $\Ssize$ values, and applies Eq.~(\ref{eq:shapley_sum_mainbody}) to obtain the local Shapley values $\shap_{u}(\seed)$. Finally, the algorithm aggregates all local contributions via Eq.~(\ref{eq:local_decomp}) to obtain the final Shapley values.

\smallskip
\noindent
\textbf{Complexity analysis.}
For each seed $\seed \in \seedset(u)$ in subgraph $G_u$, computing $\alpha_{u,\seed}(\seedset(u) \setminus \{\seed\}, \Ssize)$ for all $\Ssize \in \{0, \ldots, |\seedset(u)|-1\}$ takes $O(|\seedset(u)|^2)$ time via the recurrence in Eq.~(\ref{eq:dp_decomposition_formula_mainbody}). Since there are $|\seedset(u)|$ seeds in each subgraph, computing $\shap_u(\seed)$ for all $\seed \in \seedset(u)$ takes $O(|\seedset(u)|^3)$ time per subgraph. There are at most $|\nodeset| - |\seedset|$ non-seed nodes. Since $|\seedset(u)| \leq |\seedset|$ for each subgraph, the total time complexity is $O\bigl((|\nodeset| - |\seedset|) \cdot |\seedset|^3\bigr) = O(|\nodeset| \cdot |\seedset|^3)$. In practice, we parallelize the computation across subgraphs to further improve efficiency.

\section{$\#P$-Hardness for $2$-step termination}\label{sec:hardness}
We show that the problem is $\#P$-hard even for $\timeconst = 2$ steps, which establishes a sharp tractability boundary and justifies the need for approximation algorithms (\Cref{sec:multisteps}). This hardness result does not follow from the known $\#P$-hardness of computing expected influence under \fullstep termination~\cite{chen2010scalable}, since expected $2$-step influence is polynomial-time computable. Nor does the exponential summation in the Shapley value definition by itself imply hardness for any specific value function, as we already showed a poly-time exact algorithm in \Cref{sec:single_step_case}. In this section, we outline the key ideas of the proof. See Appendix~\ref{sec:appendix_fixedstep} for the complete proof.

\begin{restatable}{theorem}{fixedstephardness}\label{thm:hardness}
Computing $\shap(t)$ for a seed node $t \in T$ is $\#P$-hard for \fixedstep\ termination with $\timeconst \ge 2$.
\end{restatable}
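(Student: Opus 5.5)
We reduce from counting satisfying assignments of a monotone 2-CNF formula ($\#$Monotone-2SAT), which is $\#P$-complete. Equivalently, we count the independent sets / vertex covers of a graph, stratified by size. Let $\phi=\bigwedge_{j=1}^m (x_{a_j}\lor x_{b_j})$ over variables $x_1,\dots,x_n$. For each $k$, let $c_k$ denote the number of size-$k$ subsets $S\subseteq\{x_1,\dots,x_n\}$ such that setting exactly $S$ true satisfies $\phi$. Then $\#\phi=\sum_k c_k$, so it suffices to recover all $c_k$ from polynomially many Shapley-value oracle calls.

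\textbf{The gadget $G_{(r,q)}$, matching \Cref{fig:fixed_stand_graph_mainbody}.} We introduce one seed per variable, together with one distinguished seed $t$.
\begin{itemize}
\item Each clause $C_j$ gets an intermediate non-seed node $c_j$, with edges $x_{a_j}\to c_j$ and $x_{b_j}\to c_j$ of probability $1$.
\item The clause nodes all point to a single sink $z$ with probability $1$. Alternatively, we use a chain or amplification structure on the $z$ side, parametrized by $r$ copies, so that the value at step $2$ encodes whether all clauses are covered.
\item The seed $t$ attaches so that its marginal contribution depends on the clause coverage.
\end{itemize}
The simplest version I would try: $t$ has edges to every $c_j$ with probability $1$. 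There are also $q$ extra nodes hanging off each variable seed (probability-$1$ edges), used for padding.

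\textbf{Computing the marginal contribution.} For a coalition $S$ of variable seeds, $U(S\cup\{t\})-U(S)$ counts:
\begin{itemize}
\item the clause nodes not covered by $S$, which $t$ activates at step $1$;
\item plus whether $z$ is newly activated at step $2$, which happens exactly when $S$ covers no clause (step $2$ is where $\timeconst\ge 2$ enters).
\end{itemize}
That second term is not quite satisfiability, so I would use probabilities instead. Put probability $p$ on the edges $c_j\to z$ and on the edges from $t$. A cleaner target is the expected number of uncovered clauses, since $t$ activates them. Summed with the Shapley weights, this is polynomial-time computable, so it cancels. To isolate satisfiability I would instead make $t$'s gain depend on a product over clauses. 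Concretely: $z$ is reached from $t$ only through paths $t\to c_j\to z$, while $S$ reaches $z$ through covered $c_j$. Then $t$'s marginal gain at $z$ equals $P(\text{$S$ fails to activate } z)\cdot P(t \text{ activates } z \mid c_j\text{'s uncovered})$. With edge probabilities chosen as a parameter $y$, this becomes a polynomial in $y$ whose coefficients are the counts of sets $S$ of size $k$ covering exactly $i$ clauses. The Shapley value then equals $\sum_k w_k\sum_i N_{k,i}\,f_i(y)$ plus an easily computable term.

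\textbf{Interpolation.} By varying the parameters $(r,q)$ (or the edge probability $y$) over $\mathrm{poly}(n,m)$ values, we obtain a linear system in the unknowns $N_{k,i}$, the number of $k$-subsets covering exactly $i$ clauses. The padding parameter $q$ changes the Shapley weights' effective dependence on $k$. Concretely, adding $q$ dummy null seeds shifts the weights $\frac{k!(|T|-k-1)!}{|T|!}$, which gives the $k$-dimension. The parameter $r$ (or $y$) gives the $i$-dimension. We then show the system's matrix is a product/Kronecker of Vandermonde-like matrices, hence nonsingular, and solve it. Finally, $\#\phi=\sum_k N_{k,m}$. Null seeds keep the value function unchanged but alter weights, which is the standard trick.

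\textbf{Main obstacle.} The hardest step is proving invertibility of the interpolation system in the $k$-dimension. The weights $\sum_k \frac{k!(n+q-k)!}{(n+q+1)!}\binom{q}{\cdot}$ mix the sizes, and I would need to show the resulting matrix (indexed by $q$ and $k$) is nonsingular. I would first try writing the $q$-padded weight as a Beta integral $\int_0^1 s^k(1-s)^{n-k}\cdot(\ldots)\,ds$. Padding by $q$ null players yields the same Shapley value for the original players, so that does not work directly. Instead, the padding nodes must be non-null: for instance, seeds that each activate a private node, contributing additively. I would reorganize via a Beta-integral representation of Shapley values, $\mathrm{Shap}(t)=\int_0^1 \mathbb E_{S\sim s}[\Delta_t(S)]\,ds$, and recover a polynomial in $s$ through a variable-probability gadget.
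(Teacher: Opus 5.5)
There is a genuine gap, and it sits where you located your main obstacle: the $k$-dimension of the interpolation. Your clause side is sound. With $t\to c_j$ of probability $1$ and $c_j\to z$ of probability $p$, a coalition $S$ of variable seeds covering $i$ clauses gives $\Delta_t(S)=(m-i)+(1-p)^i-(1-p)^m$. Interpolating over $p$ then recovers, for each $i$, the quantity $\sum_k \frac{k!(n-k)!}{(n+1)!}N_{k,i}$. But that is a single fixed weighted combination over $k$, whereas you need $\sum_k N_{k,m}$.

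None of the mechanisms you propose for varying the weights changes $\mathrm{Shap}(t)$ at all:
\begin{itemize}
\item \emph{Null seeds.} Summing over the membership of $q$ null seeds gives $\sum_j\binom{q}{j}\frac{(k+j)!(n+q-k-j)!}{(n+q+1)!}=\frac{k!(n-k)!}{(n+1)!}$. So every instance yields the same equation, and your system has identical rows in the $k$-direction. This is null-player invariance, not a standard trick for changing weights.
\item \emph{Seeds that activate private nodes.} By linearity the game becomes your original game plus an additive game in which $t$ is null, so $\mathrm{Shap}(t)$ is unchanged.
\item \emph{The $q$ private nodes attached to variable seeds.} These affect neither $\Delta_t$ nor $|T|$.
\end{itemize}
In your Beta-integral language, $\mathrm{Shap}(t)=\int_0^1\mathbb{E}_{S\sim s}[\Delta_t(S)]\,ds$, and any seed whose membership does not alter $\Delta_t$ simply integrates out. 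So that remark does not by itself supply a gadget.

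What is missing is a family of added seeds that actually interact with $t$. The paper adds auxiliary seeds $t_1,\dots,t_r$, each with a probability-$1$ edge to the sink, and uses the Shapley value of $t_1$. Because these seeds are perfect substitutes, $\Delta_{t_1}(S)=0$ whenever $S$ contains another auxiliary seed. This multiplies the Beta density by $(1-s)^{r-1}$ and turns the weights into $a_{r,k}=\frac{k!(n+r-k-1)!}{(n+r)!}$, which genuinely depend on $r$. In your gadget, $r-1$ copies of $t$ with the same probability-$1$ edges to all $c_j$ would have the same effect.

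You would still owe invertibility in $k$. The paper gets it by multiplying row $r$ by $(n+r)!$, dividing column $k$ by $k!$, and reversing the columns. This yields the Hankel matrix $[(i+j)!]_{i,j=0}^{n-1}$, whose determinant is $\prod_{i=0}^{n-1}(i!)^2\neq 0$. Combined with the Vandermonde-type clause matrix $\bigl((1/(q+1))^c\bigr)$ (or your $\bigl((1-p)^i\bigr)$), the system $A\Gamma B=D$ can be solved, and $\#\phi=\sum_k N_{k,m}$ follows.

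The paper also attaches $t_1$ directly to the sink, so its gain is exactly $(1/(q+1))^{c(S)}$. This avoids your extra linear term in the number of uncovered clauses.
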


We give a reduction from the $\#P$-complete problem of counting satisfying assignments for a monotone 2-CNF formula  (denoted as \#2-CNF)~\cite{valiant1979complexity}. The constructed graph has maximum directed path-length 2, so the reduction 
applies to all \fixedstep terminations with $\timeconst \ge 2$. 

Let $\phi$ be a monotone $2$-CNF formula with $n$ variables $x_1,\dots,x_n$ and $m$ clauses $C_1,\dots,C_m$. 
Although the satisfiability of monotone 2-CNF formulas can be decided in poly-time, the 
counting version $\#$2-CNF is $\#P$-hard~\cite{valiant1979complexity}. Here  
each clause $C_j$ is a disjunction of two unnegated variables. For each $C_j$, define $\var(C_j) \;=\; \{\,x_i \mid x_i \text{ appears in } C_j\}$. 
The proof proceeds in two main parts: (1) constructing a family of influence graphs $G_{(r, q)}$ using parameters $r, q$ that we define below, and deriving a set of Shapley values on these graphs, and (2) recovering \#2-CNF from these Shapley values. %showing how to 
The reduction is a general Turing reduction and is not {\em parsimonious} (one-to-one) since it uses the oracle to compute Shapley values multiple times to solve \#2-CNF. 

\smallskip
\noindent
\textbf{(1) Graph Construction and Shapley Value Expression.~} 
Given the monotone 2-CNF formula $\phi$, we build a %two-step influence instance 
graph $G_{(r,q)} = (V_{(r,q)},E_{(r,q)})$ (e.g., \Cref{fig:fixed_stand_graph_mainbody}) for every ordered pair $(r,q)\in [n] \times [m]$. The construction introduces a \emph{variable seed node} $u_i$ for each variable $x_i$, a \emph{clause node} $v_j$ for each clause $C_j$, a \emph{sink node} $s$, and $r$ \emph{auxiliary seed nodes} $t_1,\dots ,t_r$. 
We create edges with probability $1$ from $u_i$ to $v_j$ whenever $x_i\in \var(C_j)$, edges with probability $\frac{q}{q+1}$ from every $v_j$ to $s$, and edges with probability $1$ from every auxiliary seed node $t_\ell$ to $s$. The construction of $G_{(r, q)}$ takes polynomial time, and we construct $m \cdot n$ such graphs $G_{(r, q)}$ in total. % network graphs

We use the Shapley values of the first auxiliary seed node $t_1$ from graphs $G_{(r, q)}$, which we denote by $\shap_{(r, q)}(t_1)$, in the second step of our hardness proof ($t_1$ is chosen arbitrarily). For %the %first auxiliary seed node 
$t_1$, the only chance to influence $s$ is through the edge $(t_1,s)$ at step $1$, so the Shapley value $\shap_{(r,q)}(\seed_1)$ of $\seed_1$ in graph $G_{(r,q)}$ can be simplified as follows ($\mathbbm{1}$ is the indicator function): 
\begin{small}
\begin{equation}\label{eq:shap_t1}
\begin{aligned}
&\shap_{(r,q)}(\seed_1) =\\
&\frac{1}{n+r} + \sum_{k=1}^{n} \frac{k!(n+r-k-1)!}{(n+r)!} \cdot  \sum_{S \subseteq U \atop |S|=k} (\frac{1}{q+1})^{\sum_{v_j\in V}\mathbbm{1}[\exists u_i \in S \text{ s.t. } (u_i, v_j) \in E_{(r,q)}]}
\end{aligned}
\end{equation}
\end{small}

\smallskip
\noindent
\textbf{(2) Computing the number of satisfying assignments of $\phi$:~}
We denote the number of satisfying assignments of the 2-CNF formula $\phi$ by $\#_{\phi}$. We show that if one can compute $\shap_{(r,q)}(\seed_1)$ in poly-time for every pair $(r,q)$, then $\#_{\phi}$ can be computed in poly-time. 

First, we express both $\#_{\phi}$ and $\shap_{(r,q)}(\seed_1)$ in terms of $\subcnt_{k,c}$, which denotes 
%Define $\subcnt_{k,c}$ as the number of size-$k$ subsets of variables that satisfy exactly $c$ clauses in $\phi$.
% Define $\subcnt_{k,c}$ as 
the number of size-$k$ subsets of variables such that the assignment setting these $k$ variables to true (and all others to false) satisfies exactly $c$ clauses in $\phi$.
%Denote $\subcnt_{k,c}$ as the number of assignments satisfied that: (1) setting $k$ variables to true and all others to false; (2) satisfying exactly $c$ clauses in $\phi$;
Then $\#_{\phi}$ can be expressed as:
%\begin{small}\begin{equation}\label{eq:satisfying_assignments_mainbody}
    $\#_{\phi} = \sum_{k=1}^{n}\subcnt_{k,m}$,
%\end{equation}
%\end{small}
and $\shap_{(r,q)}(\seed_1)$ can be expressed as:
%\begin{small}
\begin{equation}\label{eq:shapley_nck_mainbody}
    \shap_{(r,q)}(\seed_1) = \frac{1}{n+r} + \sum_{k=1}^{n} \sum_{c=1}^{m} \frac{k!(n+r-k-1)!}{(n+r)!} \cdot (\frac{1}{q+1})^{c} \cdot \subcnt_{k,c}
\end{equation}
%\end{small}
%Then we form a matrix system and solve for $\subcnt_{k,c}$. 
Collecting \Cref{eq:shapley_nck_mainbody} for all pairs $(r,q)$ gives the %following 
matrix system:
%\begin{equation}\label{eq:system_linear}
    $A \times \subcntmat \times B = D$; 
%\end{equation}
where:
%\begin{itemize}[leftmargin=*]
 %   \item 
 $A$ is an $n \times n$ matrix with entries $a_{r,k} = \frac{k!(n+r-k-1)!}{(n+r)!}$;% for $k,p \in [1,n]$
        % \begin{equation}\label{eq:matrix_A}
        % A = \begin{pmatrix}
        %     \frac{1!(n-1)!}{(n+1)!} & \frac{2!(n-2)!}{(n+1)!} & \cdots & \frac{n!0!}{(n+1)!} \\  
        %     \frac{1!(n)!}{(n+2)!} & \frac{2!(n-1)!}{(n+2)!} & \cdots & \frac{n!1!}{(n+2)!} \\  
        %     \vdots & \vdots & \ddots & \vdots \\
        %     \frac{1!(2n-2)!}{(2n)!} & \frac{2!(2n-3)!}{(2n)!} & \cdots & \frac{n!(n-1)!}{(2n)!}
        %     \end{pmatrix}
        % \end{equation}
 %   \item 
 $B$ is an $m \times m$ matrix with entries $b_{c,q} = (\frac{1}{q+1})^{c}$;% for $c,q \in [1,m]$:
    % \begin{equation}\label{eq:matrix_B}
    %     B =\begin{pmatrix}
    %     (\frac{1}{2})^{1} & (\frac{1}{3})^{1} & \cdots & (\frac{1}{m+1})^{1} \\
    %     \vdots & \vdots & \ddots & \vdots \\
    %     (\frac{1}{2})^{m} & (\frac{1}{3})^{m} & \cdots & (\frac{1}{m+1})^{m}
    %     \end{pmatrix}
    % \end{equation}
 %   \item 
  $D$ is an $n \times m$ matrix with entries $d_{r,q} = \shap_{(r,q)}(\seed_1) - \frac{1}{n+r}$;% for $p \in [1,n]$ and $q \in [1,m]$:
    %     \begin{equation}\label{eq:matrix_D}
    %         D = \begin{pmatrix}
    %         \shap_{(1,1)}(\seed_1) - \frac{1}{n+1} & \cdots & \shap_{(1,m)}(\seed_1) - \frac{1}{n+1} \\
    %         \vdots & \ddots & \vdots \\
    %         \shap_{(n,1)}(\seed_1) - \frac{1}{2n} & \cdots & \shap_{(n,m)}(\seed_1) - \frac{1}{2n}
    %         \end{pmatrix}
    %     \end{equation}
 %   \item 
 $\subcntmat$ is an $n \times m$ matrix with entries $\subcnt_{k,c}$. % for $k \in [1,n]$ and $c \in [1,m]$:
    % \begin{equation}\label{eq:matrix_subcnt}
    %     \subcntmat = \begin{pmatrix}
    %     \subcnt_{1,1} & \subcnt_{1,2} & \cdots & \subcnt_{1,m} \\
    %     \vdots & \vdots & \ddots & \vdots \\
    %     \subcnt_{n,1} & \subcnt_{n,2} & \cdots & \subcnt_{n,m}
    %     \end{pmatrix}
    % \end{equation}
%\end{itemize}
Both coefficient matrices $A$ and $B$ are non-singular, %\red{(details in the full version~\cite{fullversion})}, 
so we can solve for all $\subcnt_{k,c}$ in polynomial time if the Shapley values $\shap_{(r,q)}(\seed_1)$ are known. Finally, we sum $\subcnt_{k,m}$ over all $k$ to compute $\#_{\phi}$ in polynomial time, completing the reduction.

\section{Approximation Algorithms}\label{sec:multisteps}
In this section, we present three approximation algorithms for computing $\shap(\seed)$ under \fixedstep\ and \fullstep terminations:
(a) a baseline $\permuteMC$ adapted from the standard permutation-sampling Monte-Carlo (MC) algorithm~\cite{castro2009polynomial}; 
(b) a novel estimator $\algliveedge$ (\Cref{subsec:live_edge}) that avoids permutation sampling by introducing a new \emph{credit-splitting identity} on live-edge graphs; and
(c) $\algrrset$ (\Cref{subsec:rrset_approx_algo}), which adapts the Reverse Reachable (RR) set method~\cite{{tang2015influence,chen2017interplay}} to our problem of computing Shapley values for a given set of seed nodes and requires key modifications.
Both $\algliveedge$ and $\algrrset$ significantly improve efficiency while maintaining strong theoretical guarantees.

\smallskip
\noindent
\textbf{Baseline: Monte-Carlo Permutation Estimator}
We adapt the permutation-sampling framework \cite{castro2009polynomial,michalak2013efficient}
to obtain an unbiased $(\epsilon,\delta)$-approximation estimator.
$\permuteMC$ samples $n_{\pi} = O\left( \frac{|V|^2}{\epsilon^2} \ln (\frac{|\seedset|}{\delta}) \right)$ random permutations of seed set $\seedset$. Each permutation is a specific ordering of seed nodes. For each permutation $\pi_i$ and each seed $\seed_j$, it estimates the marginal contribution $\valsim(S_{\pi_i, j} \cup \{\seed_j\}) - \valsim(S_{\pi_i, j})$, where $S_{\pi_i, j}$ contains seeds preceding $\seed_j$ in $\pi_i$, and $\valsim(\Sset)$ estimates the value function by computing the average number of nodes activated by any seed set $\Sset$ over $n_{\text{MC}} = O\left( \frac{|V|^2}{\epsilon^2} \ln (\frac{|V|^2 |\seedset|^2}{\epsilon^2 \delta}) \right)$ propagation simulations. The final estimated Shapley value averages these marginal contributions across all sampled permutations.
However, its nested structure leads to a high runtime complexity $O\left( \frac{|V|^4}{\epsilon^4} \ln( \frac{|\seedset|}{\delta} + \frac{|V|^2 |\seedset|^2}{\epsilon^2 \delta}) \cdot |\seedset| \cdot |E| \right)$.
It simulates influence propagation $n_{\text{MC}}$ times per marginal contribution, and the process is repeated over $|\seedset|$ seeds and $n_{\pi}$ permutations, which makes it computationally expensive for large networks. Theoretical results of $\permuteMC$ are deferred to Appendix~\ref{sec:permute_mc_appendix}.

\subsection{Live-Edge Graph Estimator}\label{subsec:live_edge}
The $\permuteMC$ relies on sampling permutations of seeds and requires a nested loop to simulate the marginal contributions for each seed, creating a computational bottleneck that limits the scalability. To address this, we propose a novel approach $\algliveedge$, which generates live-edge graphs, i.e., realizations of the diffusion process by sampling all edges. Then it directly distributes influence on each live-edge graph realization. This approach bypasses the computational redundancy inherent in the baseline and offers significant performance gains.

The key insight of $\algliveedge$ is that for any single realization of the diffusion process (a sampled live-edge graph), the influence process is deterministic. Thus if a non-seed node $v$ is reached by a set of $k$ seeds in a sampled live-edge graph, each of the $k$ seeds has an equal probability of $1/k$ of being the first one to activate $v$ under a uniformly random permutation of seeds. Therefore, each of those $k$ seeds should receive $1/k$ credit for the activation of $v$ in this realization.

\sloppy
\noindent
\textbf{Algorithm Overview.~}$\algliveedge$ samples $n = O\left(\frac{|V|^2}{\epsilon^2} \ln(\frac{|\seedset|}{\delta})\right)$ live-edge graphs and processes each in two phases: (1) it performs a single multi-source BFS from all seeds to construct a reachability map $B_{g_i}$, recording which seeds can reach each non-seed node, and (2) it distributes unit credit for each non-seed node $v$ equally among all seeds in $B_{g_i}[v]$. Finally, it averages the credits across all samples to obtain the Shapley value estimates. The pseudocode is given in \Cref{alg:live_edge_graph} in \Cref{sec:algliveedge_appendix}.

\sloppy
Despite its structural simplicity, $\algliveedge$ achieves the same unbiasedness and approximation guarantees as $\permuteMC$. Moreover, since it traverses the graph only once per sampled live-edge graph, eliminating the need for nested simulation loops, the runtime complexity significantly improves from $|V|^4$ to $|V|^2$ (see \Cref{prop:liveedge_property} in \Cref{sec:algliveedge_appendix}).

\subsection{Reverse-Reachable Set Estimator}\label{subsec:rrset_approx_algo}
We present a third approximation algorithm, which is based on {\em Reverse-Reachable (RR)} sets. An RR set for a node $v$ consists of all nodes that can reach $v$ in a random live-edge graph $g$, which is sampled from $G(V,E)$ by removing each edge $e$ with $1-p(e)$ probability. Then a \emph{random RR set} is generated by first selecting a node $v \in V$ uniformly at random and then generating an RR set for $v$ on a live-edge graph $g$ randomly sampled from $G$.
$\algrrset$ avoids the permutations and nested-loop simulations by sampling reachability \emph{backwards}, from random non-seed nodes to seeds, and it also leads to better scalability compared to $\permuteMC$.

\smallskip
\noindent
\textbf{Adaptations for Shapley value estimation.~} RR sets were originally developed for influence maximization problems~\cite{borgs2014maximizing, tang2014influence, tang2015influence}, and have also been applied to Shapley centrality estimation~\cite{chen2017interplay}.
However, these prior works address different problems: influence maximization seeks to select an optimal seed set, while Shapley centrality measures the importance of all nodes in the network using Shapley values (no seed nodes). In contrast, our goal is to compute the Shapley value of each node in a \emph{fixed} seed set $\seedset$ with respect to a \emph{value function} $\val(\cdot)$ that counts only non-seed activations.

This difference necessitates modifications to both RR set construction and the input graph: %two key algorithmic adaptations:
(1) all RR sets are rooted at non-seed nodes rather than all nodes, and (2) all incoming edges to seed nodes are removed to avoid influence propagation between seeds.
As shown in the following \Cref{lem:shapley_rr}, these modifications ensure that the Shapley value of $\seed$ equals the sum of expected influence on each non-seed node, where the credit for each activated non-seed node is distributed equally among all seeds that could have reached it in a random RR set.

%the proof is in the full version \cite{fullversion}\hideappendix{, see \Cref{sec:rr_analysis_appendix}}):
 \begin{restatable}{lemma}{shapleyRr}\label{lem:shapley_rr}
    \sloppy
    Let $G^{\prime} = (V, E^{\prime})$ be the subgraph of $G$ obtained by removing all incoming edges to the set of seed nodes $\seedset$, and $\rrset$ be a random RR set generated in $G^{\prime}$ by selecting a non-seed node $v$ uniformly at random from $V \setminus \seedset$. Then, for any seed node $\seed \in \seedset$:
    \begin{small}
    \begin{equation}
    \shap(\seed) = \rrsize \cdot \mathbb{E}_{\rrset}\left[ \frac{ \mathbb{I}\{ \seed \in \rrset^{\prime} \} }{ | \rrset^{\prime} | } \right]
    \end{equation}
\end{small}
    where $\rrsize = |V \setminus \seedset|$, $\rrset^{\prime} = \rrset \cap \seedset$, and $\mathbb{I}\{ \cdot \}$ is the indicator function.
\end{restatable}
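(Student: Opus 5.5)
The plan is to use the live-edge representation of the IC model and reduce the Shapley value to a per-realization, per-node credit-splitting identity, then sum over nodes and convert the sum into an expectation over a uniformly random root.

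\textbf{Step 1: removing edges into seeds does not change $\val$.} For any $\Sset \subseteq \seedset$, the value $\val(\Sset)$ is computed in $G_{T,S}$, where seeds in $\seedset\setminus\Sset$ and their edges are deleted. Seeds in $\Sset$ are active at step $0$, so their incoming edges never matter. Deleting the seeds in $\seedset\setminus\Sset$ has the same effect as keeping them but removing their incoming edges: in $G'$ they have no incoming edges, so they can never become active unless they are initial seeds.

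Hence $\val(\Sset)$ equals the expected number of non-seed nodes reachable from $\Sset$ in a random live-edge graph $g$ of $G'$. By the standard live-edge equivalence of Kempe et al., this gives
\[
\val(\Sset)=\mathbb{E}_g\Big[\textstyle\sum_{v\in V\setminus\seedset}\mathbb{I}\{\Sset\cap R_g(v)\neq\emptyset\}\Big],
\]
where $R_g(v)$ is the set of nodes that can reach $v$ in $g$. For \fixedstep termination, reachability is replaced by reachability within $\timeconst$ hops. The same argument goes through, with RR sets defined accordingly.

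\textbf{Step 2: Shapley value of a single coverage game.} Linearity of the Shapley value lets us swap it with the expectation and the sum over $v$. It then suffices to compute the Shapley value of the deterministic game $\val_{g,v}(\Sset)=\mathbb{I}\{\Sset\cap R'\neq\emptyset\}$, where $R'=R_g(v)\cap\seedset$.

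Use the permutation form in Eq.~(\ref{eq:shapley_value_1}). The marginal contribution of $\seed$ is $1$ exactly when $\seed\in R'$ and $\seed$ precedes all other elements of $R'$ in the permutation; otherwise it is $0$. Under a uniform permutation, each element of $R'$ is equally likely to come first, so the Shapley value is $\mathbb{I}\{\seed\in R'\}/|R'|$. This is set to $0$ when $R'=\emptyset$, which is consistent because then $\seed\notin R'$.

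Summing gives
\[
\shap(\seed)=\mathbb{E}_g\Big[\textstyle\sum_{v\in V\setminus \seedset}\frac{\mathbb{I}\{\seed\in R_g(v)\cap\seedset\}}{|R_g(v)\cap\seedset|}\Big].
\]

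\textbf{Step 3: convert to a random RR set.} Pick $v$ uniformly from $V\setminus\seedset$ and draw $g$ independently. The RR set $\rrset$ is then distributed as $R_g(v)$, so $\rrset'=\rrset\cap\seedset$. The sum over $v$ therefore equals $\rrsize$ times the expectation over $\rrset$, which is the claimed identity.

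The main point requiring care is Step 1: checking that $G_{T,S}$ and $G'$ with initial set $\Sset$ produce the same distribution over activated non-seed nodes. Two sub-points need attention. First, deleting nodes versus deleting their in-edges must be equivalent, which holds because a seed in $\seedset\setminus\Sset$ with no in-edges in $G'$ is never activated and so never propagates. Second, in the step-bounded case, reachability must be measured by shortest hop distance in $g$, and the live-edge coupling must preserve activation times. It does: a node is activated at step $d$ exactly when its shortest live path from the active set has length $d$.
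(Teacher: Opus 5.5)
Your proof is correct and follows essentially the same route as the paper. Both use live-edge reachability on $G^{\prime}$ to write $\val$ as a coverage function over RR sets, then exchange the permutation average with the RR-set expectation (the paper's swap of $\mathbb{E}_{\pi}$ and $\mathbb{E}_{\rrset^{\prime}}$ is exactly your use of Shapley linearity) to get the $1/|\rrset^{\prime}|$ credit split. Your Step 1 is more careful than the paper's: it compares $G^{\prime}$ against $G_{T,S}$, whereas the paper compares against $G$ itself and so glosses over the possibility of $\Sset$ propagating through seeds in $\seedset\setminus\Sset$.
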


\smallskip
\noindent
\textbf{Our Algorithm.~}Based on Lemma~\ref{lem:shapley_rr}, we propose $\algrrset$, adapted from the adaptive-sampling framework of \cite {chen2017interplay}
with three parameters: $\varepsilon$ sets the multiplicative error, $\ell$ determines the confidence guarantee $1-\frac{1}{n^\ell}$, and $k$ sets the threshold such that the multiplicative error bound holds for the $k$ largest Shapley values. It runs in three phases (pseudocode is in \Cref{alg:shapley_rr} in Appendix~\ref{sec:rr_analysis_appendix}).

\begin{itemize}[leftmargin=*]
    \item \textbf{Phase 0: Graph Modification.} Remove all incoming edges to seed nodes from $G$ to obtain the modified graph $G^{\prime} = (V, E^{\prime})$. Denote the number of non-seed nodes as $\rrsize = |V \setminus \seedset|$.
    \item \textbf{Phase 1: Parameter Estimation.~}Estimate the required number of RR sets $\theta$ to achieve the $(\varepsilon, \delta)$-guarantee, following the adaptive-sampling framework of~\cite{chen2017interplay}. See Appendix~\ref{sec:rr_analysis_appendix} for details.
    \item \textbf{Phase 2: Shapley Value Estimation.}
    Generate $\theta$ RR sets, and for each RR set $R_j$ intersecting $\seedset$, increment the estimated contribution $\boldsymbol{est}(\seed)$ of each $\seed \in R_j \cap \seedset$ by $1/|R_j \cap \seedset|$, representing the probability that $\seed$ activates the root node of $R_j$ in this RR set. Finally, we get the estimated contribution for each seed $\seed$: $\boldsymbol{est}(t) = \sum_{j=1}^{\theta} \frac{\mathbb{I}\{\seed \in R_j \cap \seedset\}}{|R_j \cap \seedset|},$
    then Shapley value estimates are obtained by normalizing them as $\widehat{\shap}(t) =  \rrsize \cdot \frac{\boldsymbol{est}_t}{\theta}$

\end{itemize}

Our adaptation preserves the strong theoretical guarantees of the RR-set framework by adapting the proof structure from~\cite{chen2017interplay}. For accuracy, with high probability $1 - 1/n^{\ell}$, $\algrrset$ returns estimates that satisfy a multiplicative error bound relative to their own Shapley values for the top-$k$ largest Shapley values, and an additive error relative to the $k$-th largest value for the rest. For runtime, the expected runtime of $\algrrset$ is nearly linear in the number of edges. See Appendix~\ref{sec:rr_analysis_appendix} for the formal statements.

\section{Experiments}\label{sec:experiments}
In this section, we examine the necessity of Shapley values for influence attribution and the performance of proposed algorithms. The experiments demonstrate that: (1) common heuristics (e.g., degree centrality, greedy IM selection order) are poor proxies for measuring influence contributions.
(2) For the \onestep termination, $\dpsingleopt$ is exceptionally scalable, handling networks with millions of nodes and seeds in minutes.
(3) For the \fullstep and \fixedstep terminations, $\algrrset$ provides unique scalability benefits for large seed sets, while $\algliveedge$ offers a robust balance of accuracy and efficiency. Both consistently outperform the baseline $\permuteMC$, making large-scale influence attribution practical.

% Full experiments are in the full version~\cite{fullversion}. \hideappendix{See Appendix~\ref{sec:appendix_experiments}.}

\subsection{Setup}\label{subsec:experiment_setup}
We implement algorithms in Rust and conduct experiments on a machine with an EPYC CPU \footnote{AMD EPYC 7R13 48-Core Processor @2.6GHz and 256GiB RAM.}. Code and data are available.\footnote{\url{https://github.com/fangzhushen/Shapley-value-Influence-attribution}}

\smallskip
\noindent
\textbf{Datasets.~}We use six real-world networks (\Cref{tab:datasets}) varying in size and density\footnote{The \congressdata network represents Twitter interactions among members of the 117\textsuperscript{th} U.S.\ Congress (Feb--June 2022); We construct the \dblpdata network as a co-authorship network from the Proceedings of VLDB 2024 dataset.}, along with synthetic Erd\H{o}s-R\'enyi random graphs\footnote{Given the number of nodes $n$ and average degree $d$, the number of edges is $n \times d$, and the graph is randomly chosen from the collection of all satisfiable graphs.}~\cite{erdds1959random}. Results on synthetic graphs are averaged over 5 random instances.

\begin{table}[t]
    \caption{Real-world networks used in the experiments.}
    \label{tab:datasets}
    \centering
    %\begin{small}
    \begin{tabular}{|l|r|r|}
    \hline
    \textbf{Dataset} & \textbf{\# Nodes} & \textbf{Avg. Degree} \\
    \hline
    \congressdata~\cite{fink2023centrality,fink2023congressional} & 475 & 27.98 \\
    \dblpdata~\cite{dblp.xml.2025-04-01} & $\approx$1.7K & 7.99 \\
    \facebookdata~\cite{leskovec2012learning} & $\approx$4K & 43.70 \\
    \twitterdata~\cite{leskovec2012learning} & $\approx$81K & 12.75 \\
    \orkutdata~\cite{yang2012defining}& $\approx$3.1M & 76.28 \\
    \livejournaldata~\cite{backstrom2006group,leskovec2009community} & $\approx$4.8M & 14.23 \\
    \hline
    \end{tabular}
    %\end{small}
\end{table}

\begin{figure}[!t]
    \centering
    % --- Table part ---
    \captionof{table}{Top 10 members in the \congressdata network selected by out-degree. Each entry shows ``rank (actual value)'' by different metrics.}
    \renewcommand{\arraystretch}{1}
    \adjustbox{width=0.95\columnwidth,center}{
    %\footnotesize
    \begin{tabular}{|l|c|c|c|}
    \hline
    \textbf{Username} & \textbf{Out-Deg} & \textbf{PageRank} & \textbf{Shapley} \\ \hline
    SpeakerPelosi & 1 (210) & 1 (0.0167) & 2 (1.00) \\ \hline
    GOPLeader & 2 (157) & 9 (0.0013) & 3 (0.95) \\ \hline
    RepBobbyRush & 3 (111) & 5 (0.0036) & 4 (0.94) \\ \hline
    \rowcolor{gray!30} SenSchumer & 4 (97) & 3 (0.0053) & \textbf{7 (0.38)} \\ \hline
    \rowcolor{gray!30} SteveScalise & 5 (89) & 4(0.0049) & \textbf{1 (1.04)} \\ \hline
    RepMarkTakano & 6 (85) & 7 (0.0023) & 5 (0.67) \\ \hline
    rosadelauro & 7 (84) & 8 (0.0014) & 6 (0.46) \\ \hline
    \rowcolor{gray!30} LeaderHoyer & 8 (79) & 10 (0.0006) & \textbf{10 (0.24)} \\ \hline
    RepJimBanks & 9 (75) & 6 (0.0031) & 9 (0.25) \\ \hline
    SenWarren & 10 (71) & 2 (0.0068) & 8 (0.34) \\ \hline
    \end{tabular}%
    }
    \label{tab:centrality_vs_shapley}

    \vspace{0.5em}

    % --- Figure part ---
    \includegraphics[width=\linewidth]{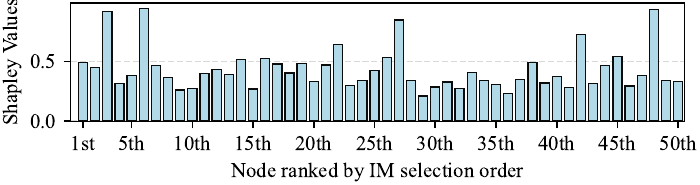}
    \captionof{figure}{Shapley values for top-50 seeds selected by greedy IM algorithm in the \congressdata network.}
    \label{fig:congress_shapley_values_im_top50}
\end{figure}

\smallskip
\noindent
\textbf{Assigning probability to edges.~} For \congressdata network, we used empirically learned influence probabilities that reflect the likelihood of influence based on observed Twitter interactions (e.g., retweets, mentions)~\cite{fink2023congressional}. For the other real-world and synthetic networks, we assign edge probabilities using the standard Weighted Cascade (WC) model that for an edge $(u,v)$, $p_{uv} = \frac{1}{d_{in}(v)}$, where $d_{in}(v)$ is the in-degree of node $v$~\cite{kempe2003maximizing}.

\smallskip
\noindent
\textbf{Seed selection.~}Unless otherwise specified, we use a degree-based selection to select the top-$k$ nodes by out-degree ~\cite{zhu2021analysis}. For the case study (Sec.~\ref{subsec:case_study}), we also use a \emph{greedy IM-based selection}.

\smallskip
\noindent
\textbf{Algorithms and parameter settings.} For \onestep termination, we evaluate \dpsingleoptexp (Sec.~\ref{sec:single_step_case}) against a \bruteforceexp baseline that enumerates all permutations. For \fullstep and \fixedstep terminations, we evaluate \permuteMCexp, \algliveedgeexp and \rrsetexp (Sec.~\ref{sec:multisteps}). Since no prior algorithm exists for ex-ante Shapley attribution in influence propagation, we therefore use \permuteMCexp as our baseline. We also compare against centrality metrics (Out-Degree, PageRank) in our case study (Sec.~\ref{subsec:case_study}).

\sloppy
For experiments in Sec.~\ref{subsec:full_step_experiments}, we fix number of samples for each approximation algorithm to achieve a balance between accuracy and computational efficiency: 500 permutations with 500 MC samples for \permuteMCexp, 5,000 live-edge graphs for \algliveedgeexp, and 500,000 RR-sets for \rrsetexp.
We adopt fixed-sample comparisons rather than fixing theoretical guarantee parameters (e.g., $\varepsilon$, $\delta$) because the three algorithms have fundamentally different per-sample costs and guarantee types, making identical theoretical guarantee parameters incomparable.
A parameter sensitivity study in Appendix~\ref{subsec:param_tuning}
examines how accuracy varies with sample size for each algorithm.

Since exact computation for \fixedstep termination is impractical, we establish ground truth using \rrsetexp with high-accuracy parameters ($\epsilon = 0.01, \ell = 1$), which provides theoretical guarantees on approximation quality. We run each experiment 5 times and report the average.

\smallskip
\noindent
\textbf{Evaluation Metric.~}
To assess the accuracy of algorithms, we report the \emph{average relative error}: $\frac{1}{|\seedset|} \sum_{i=1}^{|\seedset|} \frac{|\widehat{\shap}(\seed_i) - \shap(\seed_i)|}{\shap(\seed_i)}$,
where $\widehat{\shap}(\seed_i)$ is the estimated and $\shap(\seed_i)$ the ground-truth Shapley value. We also report \emph{runtime} to examine efficiency.

\subsection{Shapley Values vs. Other Methods}
\label{subsec:case_study}
\begin{figure}[t]
    \centering
    \includegraphics[width=0.9\columnwidth]{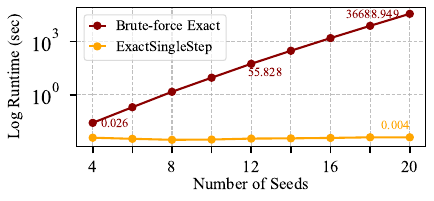}
    \caption{Runtime for \onestep termination on \facebookdata}
    \label{fig:exact_singlestep_vs_dp_single_runtime}
\end{figure}

\begin{table}[t]
\centering
\caption{Runtime (seconds) of \dpsingleoptexp.}
\label{tab:exact_single_step_runtime}
\begin{subtable}{\columnwidth}
\centering
%\small
\caption{Varying number of seeds on real-world networks. Dash (--) indicates the number of seeds larger than network size.}
\label{tab:exact_single_step_realworld}
%\small
%\footnotesize
\renewcommand{\arraystretch}{1}
\begin{tabular}{|l|rrrr|}
\hline
\multirow{2}{*}{\textbf{Dataset}} & \multicolumn{4}{c|}{\textbf{Number of Seeds}} \\
\cline{2-5}
& \textbf{10} & \textbf{1K} & \textbf{100K} & \textbf{1M} \\
\hline
\facebookdata & 0.004 & 0.013 &  -- & -- \\
\twitterdata & 0.007  & 0.079 &  -- & -- \\
\livejournaldata & 0.033  & 0.252 & 13.719 & 403.847 \\
\orkutdata & 0.101  & 1.455 & 14.211 & 26.546 \\
\hline
\end{tabular}
\end{subtable}
\vfill
\vspace{2mm} 
\begin{subtable}{\columnwidth}
    \centering
    \caption{Varying network structure parameters (default setting: 5k nodes 10 avg.degree, 500 seed nodes)}
    \label{tab:exact_single_step_synthetic}
    %\small
    %\footnotesize
    \renewcommand{\arraystretch}{1}
    \begin{tabular}{|l|cccc|}
    \hline
    %\textbf{Parameter} & \multicolumn{4}{c}{\textbf{Values}} \\
    %\hline
    \textbf{Number of nodes} & \textbf{1K} & \textbf{10K} & \textbf{100K} & \textbf{1M} \\
    \hline
    Runtime (sec) & 0.0024 & 0.0043 & 0.0058 & 0.0066 \\
    \hline
    \textbf{Average degree} & \textbf{5} & \textbf{50} & \textbf{100} & \textbf{200} \\
    \hline
    Runtime (sec) & 0.0031 & 0.0072 & 0.011 & 0.014 \\
    \hline
    \end{tabular}
\end{subtable}
\end{table}

Common methods to select seeds include centrality-based heuristics (e.g., degree, PageRank), and greedy approximation algorithms for influence maximization (IM). However, these metrics guide {\em which nodes to select}, not {\em how to measure contributions within a selected set}. Using the \congressdata network, we show that rankings by these selection methods diverge significantly from Shapley values, as they fail to account for redundant and synergistic influence among seeds. This underscores the need for a theoretically grounded attribution method such as Shapley values.

\smallskip
\noindent
\textbf{Centrality metrics vs. Shapley values.~}In \Cref{tab:centrality_vs_shapley}, we select the top-10 nodes by out-degree as seeds and report their out-degree, PageRank, and Shapley value rankings. Significant discrepancies emerge across all three metrics. Most notably, {\em SteveScalise} ranks 1\textsuperscript{st} in Shapley value despite being only 5\textsuperscript{th} in out-degree and 4\textsuperscript{th} in PageRank, indicating his influence reaches regions not well-covered by other high-centrality seeds. Conversely, {\em SenSchumer} ranks 4\textsuperscript{th} in out-degree and 3\textsuperscript{rd} in PageRank but only 7\textsuperscript{th} in Shapley value, as his influence pathways largely overlap with other seeds. These patterns show that centrality metrics, whether local (degree) or global (PageRank), cannot capture marginal contribution within a specific coalition where influence overlaps must be accounted for.

\smallskip
\noindent
\textbf{Greedy IM-based selection vs. Shapley values.~}
Figure~\ref{fig:congress_shapley_values_im_top50} shows Shapley values for top-50 seeds selected by the greedy IM algorithm~\cite{tang2015influence}, which iteratively chooses nodes maximizing marginal influence spread using reverse reachable sets. Although nodes selected earlier have higher incremental influence at selection time, this does not correspond to higher Shapley values: the 48\textsuperscript{th} selected node achieves the 2\textsuperscript{nd} highest Shapley value, surpassing the 1\textsuperscript{st} selected node (ranked 11\textsuperscript{th} in Shapley). This occurs because greedy IM evaluates marginal contribution at selection time with an incomplete seed set, while Shapley values measure contributions within the {\em complete seed set}, accounting for all interactions that emerge after all seeds are selected.

\subsection{\onestep Termination Results}\label{subsec:single_step_runtime}

\begin{figure}[t]
    \centering
    \begin{minipage}{1\columnwidth}
    \begin{subfigure}[b]{0.5\columnwidth}
        \centering
        \includegraphics[width=\textwidth]{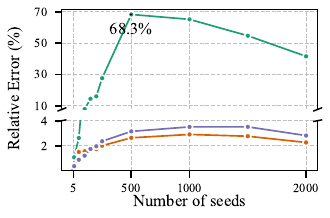}
        \caption{\facebookdata}
        \label{fig:facebook_error_completestep}
    \end{subfigure}
    \hfill
    \begin{subfigure}[b]{0.49\columnwidth}
    \centering
    \includegraphics[width=\textwidth]{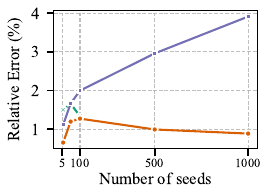}
    \caption{\twitterdata}
    \label{fig:twitter_error_completestep}
    \end{subfigure}
    \caption{Relative error vs. \# seeds for \fixedstep termination}
    \label{fig:error_vary_seeds_completestep}
    \end{minipage}
    \vfill
        \begin{minipage}{1\columnwidth}
        \begin{subfigure}[b]{0.49\columnwidth}
            \centering
            \includegraphics[width=\textwidth]{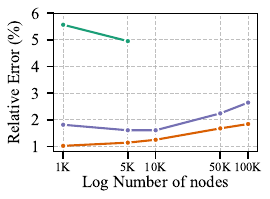}
            \caption{Varying number of nodes}
            \label{fig:synthetic_varysize_10avgdeg_10degseeds_shapley}
        \end{subfigure}
            \begin{subfigure}[b]{0.49\columnwidth}
            \centering
            \includegraphics[width=\textwidth]{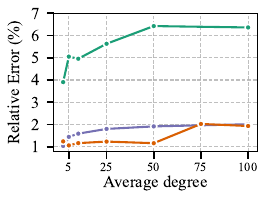}
            \caption{Varying average degree}
            \label{fig:synthetic_vary_deg_shapley_5000size_10degseeds_shapley}
        \end{subfigure}
        \end{minipage}
        \begin{minipage}{\columnwidth}
        \centering
        \includegraphics[width=\columnwidth]{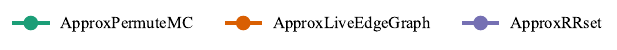}
    \end{minipage}
        \caption{Relative error vs. network structure for \fixedstep termination (default: 5k nodes, 500 seed nodes, 10 avg. degree).}
        \label{fig:quality_analysis}
    \end{figure}

    \begin{table*}[!htbp]
        \centering
        \caption{Runtime (seconds) vs. number of seeds for \fullstep termination. Dashes (--) indicate timeout (>12 hours)}
        \label{tab:runtime_vary_seed}
        %\small
        \setlength{\tabcolsep}{0.5pt}
        \renewcommand{\arraystretch}{1}
        \begin{subtable}{\textwidth}
            \centering
            \begin{tabularx}{\textwidth}{|l|*{4}{>{\centering\arraybackslash}X}|*{4}{>{\centering\arraybackslash}X}|*{4}{>{\centering\arraybackslash}X}|*{4}{>{\centering\arraybackslash}X}|}
            \hline
            \multirow{2}{*}{\makecell[l]{\textbf{Algorithm / Dataset}}}
                & \multicolumn{4}{c|}{\textbf{\facebookdata}}
                & \multicolumn{4}{c|}{\textbf{\twitterdata}}
                & \multicolumn{4}{c|}{\textbf{\livejournaldata}}
                & \multicolumn{4}{c|}{\textbf{\orkutdata}} \\
            \cline{2-17}
            & \textbf{10} & \textbf{100} & \textbf{1,000} & \textbf{2,000}
            & \textbf{10} & \textbf{100} & \textbf{1,000} & \textbf{5,000}
            & \textbf{1K} & \textbf{10K} & \textbf{100K} & \textbf{1M}
            & \textbf{1K} & \textbf{10K} & \textbf{100K} & \textbf{1M} \\
            \hline
            \rrsetexp
                & 0.615  & 0.317 & 0.147 & 0.078
                & 1.365 & 1.050 & 0.628 & 0.397
                & 45.65 & 39.48 & 32.90 & 13.81
                & 115.95 & 109.83 & 89.54 & \textbf{41.25} \\
            \algliveedgeexp
                & 0.255 & 0.277 & 0.358 & 0.434
                & 4.565 & 4.995 & 6.223 & 7.194
                & 560.48 & 591.52 & 1353.80 & 2655.62
                & 1051.06 & 1969.37 & 2294.51 & 3387.17 \\
            $\permuteMC$
                & 19.18 & 1322.52 & 15236 & 24508
                & 79.96 & 8022.48 & -- & --
                & -- & -- & -- & --
                & -- & -- & -- & -- \\
            \hline
            \end{tabularx}
        \end{subtable}
    \end{table*}

We first evaluate \dpsingleoptexp, our exact algorithm for the \onestep termination. As proven in \Cref{thm:single_poly}, it runs in polynomial time in the size of the network and seed set. Our experiments validate that this theoretical bound translates to exceptional practical performance, enabling exact computation on large-scale networks where brute-force approaches are infeasible.

Figure~\ref{fig:exact_singlestep_vs_dp_single_runtime} compares the runtime of \dpsingleoptexp against \bruteforceexp. Both obtain the same Shapley values, but the runtime of \bruteforceexp grows exponentially with the number of seeds, requiring over 10 hours for just 20 seeds. In contrast, \dpsingleoptexp completes in 0.004 seconds for 20 seeds, demonstrating orders-of-magnitude speedup.

Table~\ref{tab:exact_single_step_realworld} shows \dpsingleoptexp remains remarkably efficient as the number of seed nodes increases on large-scale real networks. On \orkutdata (3.1M nodes) and \livejournaldata (4.8M nodes) networks, it processes 1M seeds in 27 seconds and 7 minutes, respectively. Table~\ref{tab:exact_single_step_synthetic} shows graceful scaling with network size and density on synthetic networks. Increasing nodes from 1K to 1M increases runtime from only 0.0024s to 0.0066s, while increasing the average degree from 5 to 200 causes modest polynomial growth.

\subsection{Approximation Algorithms Results}\label{subsec:full_step_experiments}
We evaluate accuracy and efficiency under \fullstep termination, where exact computation is \#P-hard, varying the number of seeds, network sizes, and densities on real-world and synthetic datasets. Additional experiments varying the step bound $\timeconst$ under \fixedstep termination are in Appendix~\ref{subsec:impact_k_appendix}.

\noindent
\textbf{Accuracy analysis.~}\Cref{fig:error_vary_seeds_completestep} shows relative error vs.\ number of seeds on \facebookdata and \twitterdata networks. We observe: (1) \permuteMCexp performs poorly (up to $68.3\%$ error), while both \algliveedgeexp and \rrsetexp stay below 4\% error across all settings. (2) On \facebookdata network, a peak-and-decline pattern emerges for all algorithms. This is because as seeds increase, true Shapley values shrink faster than absolute error due to increased overlap among seeds, so relative error first rises (peaking around 500--1K seeds) then declines. (3) \algliveedgeexp is consistently slightly more accurate than \rrsetexp on both networks.

Figure~\ref{fig:quality_analysis} demonstrates that both \algliveedgeexp and \rrsetexp achieve high accuracy across diverse network structures, with \algliveedgeexp performing slightly better (less than 1\% difference). When varying the number of nodes (i.e., network size), both algorithms exhibit modest error growth while maintaining strong accuracy ($<2.6\%$ and $<1.8\%$ respectively). When varying the average degree (i.e., network density), \rrsetexp error increases to ~2.0\% in dense graphs, whereas \algliveedgeexp remains stable until avg. degree is 75, then rises to ~2.0\%. Both algorithms substantially outperform \permuteMCexp (up to ~6.5\% error), indicating robust scalability.

\smallskip
\noindent
\textbf{Runtime Analysis.~}
\Cref{tab:runtime_vary_seed} presents runtime vs.\ number of seeds on real-world datasets. First, \rrsetexp shows superior scalability with the runtime decreasing as the number of seeds grows (e.g., from $115.95$s to $41.25$s on the dense \orkutdata network), because we sample a fixed number of RR sets on $G^{\prime}$ obtained by removing all incoming edges to seed nodes. As the number of seeds grows, $G^{\prime}$ becomes sparser, reducing the average cost of the reverse graph traversals required per RR set. In contrast, the runtime of \algliveedgeexp shows a slight sublinear increase, as its cost is dominated by sampling live-edge graphs and performing a multi-source BFS from all seeds. While adding more seeds increases the number of BFS sources, the overall graph traversal cost does not grow proportionally. Thus it remains efficient across all datasets, completing its run on \livejournaldata with 1M seeds in under an hour. \permuteMCexp runtime explodes due to its nested sampling design.

\Cref{tab:performance_comparison} shows scalability with network sizes and densities. As the number of nodes grows (with 500 seeds and average degree 10), the higher accuracy of \algliveedgeexp comes at the cost of longer runtime. Both \algliveedgeexp and \rrsetexp scale near-linearly, and \rrsetexp is more efficient for large-scale networks, completing 1M-node graphs in ~12s versus \algliveedgeexp's 112s. \permuteMCexp times out beyond 1K nodes. Then when varying the average degree, \rrsetexp slows more rapidly (from 0.071s to 1.830s) and performs worse than \algliveedgeexp on dense graphs. Finally, \permuteMCexp is impractical and times out beyond 1K nodes or moderate density.

\smallskip
\noindent
\textbf{Summary of Findings.~}
Combining relative error and runtime results reveals distinct trade-offs between our algorithms: (1) \algliveedgeexp offers the best balance, maintaining high accuracy while achieving efficiency across almost all settings. For a large set of seed nodes, it provides high accuracy within a practical time. The effectiveness and efficiency make it a great choice for large-scale, real-world applications.
(2) \rrsetexp provides superior scalability with a slight increase of error, making it suitable for large-scale social networks with a large number of seed nodes (e.g., 41s on 4.8M-node \livejournaldata with 1M seeds).
(3) Both algorithms achieve similar accuracy within practical runtime, significantly outperforming \permuteMCexp, which fails to scale beyond small problem instances.

\begin{table}[t]
    \caption{Runtime (seconds) vs. network structure parameters for \fullstep termination. Dashes (--) indicate timeout (>12 hours).}
    \small
    \label{tab:performance_comparison}
    \centering
    %\footnotesize
    \setlength{\tabcolsep}{2.5pt}
    \renewcommand{\arraystretch}{1}
    \begin{tabular}{|l|cccc|cccc|}
    \hline
    \multirow{2}{*}{\textbf{Algorithm}} & \multicolumn{4}{c|}{\textbf{Number of Nodes}} & \multicolumn{4}{c|}{\textbf{Average Degree}} \\
    \cline{2-9}
    & \textbf{1K} & \textbf{10K} & \textbf{100K} & \textbf{1M} & \textbf{5} & \textbf{50} & \textbf{100} & \textbf{200} \\
    \hline
    \makecell[l]{\textsc{Approx-}\\\textsc{RRset}} & 0.064 & 0.215 & 1.557 & 11.575 & 0.071 & 0.452 & 0.927 & 1.830 \\
    \hline
    \makecell[l]{\textsc{ApproxLive-}\\\textsc{EdgeGraph}} & 0.082 & 0.543 & 6.428 & 111.916 & 0.251 & 0.473 & 0.753 & 1.318 \\
    \hline
    \makecell[l]{\textsc{Approx-}\\\textsc{PermuteMC}} & 26989 & -- & -- & -- & 2132.7 & 13948 & 27461 & -- \\
    \hline
    \end{tabular}
\end{table}

\section{Conclusion}\label{sec:conclusions}
In this paper, we studied the Shapley value computation for ex-ante influence attribution under the independent cascade model, covering various practical termination settings. We proposed a polynomial-time exact algorithm for \onestep termination, proved $\#P$-hardness for $\timeconst \geq 2$ steps, developed two scalable approximation algorithms with provable guarantees for arbitrary $\timeconst$ steps including \fullstep termination, and experimented with both synthetic and real datasets.
Our framework provides a foundational step towards influence attribution in social networks, opening several important future research directions.
While our current model assumes truthful participation, future work can integrate it with principles from mechanism design to analyze the incentive compatibility of seeds and prevent strategic manipulation. 
Extending to other diffusion models such as the Linear Threshold (LT) model is a non-trivial direction, since both our \onestep algorithm and the 
$\#P$-hardness result rely on graph structures specific to IC; thus analogous constructions for LT remain open. Finally, this framework can also be extended to dynamic networks with temporal changes, robustness to estimation error in edge probabilities, and enhanced by advanced sampling strategies to further improve scalability.

\begin{acks}
%\section{Acknowledgments}
The work of Amir Gilad was funded by the Israel Science Foundation (ISF) under grant 1702/24, the Scharf-Ullman Endowment, and the Alon Scholarship.
We thank Prof. Jian Pei (Duke University) for insightful discussions on this project.
We also thank Danyu Sun for participating in initial discussions on this project.
 %This work was supported by the [...] Research Fund of [...] (Number [...]). Additional funding was provided by [...] and [...]. We also thank [...] for contributing [...].
\end{acks}

%\clearpage

\bibliographystyle{ACM-Reference-Format}
\bibliography{bibtex}
\appendix
\section{Details From Section~\ref{sec:model}}\label{sec:model_appendix}

{\bf Shapley Values.~}
Given a set of players $N$ and a value (or utility) function $\val: 2^N \to \mathbb{R}$ with $\val(\emptyset) = 0$, where $\val(S)$ denotes the value raised by the subset $S$ of players, the Shapley value for each player $i \in N$ is defined as the average marginal contribution of $i$ over all possible permutations of the player set:
\begin{small}
\begin{equation}\label{eq:shapley-1}
    \shap_{N,\val}(i) = \frac{1}{|N|!} \sum_{\pi \in \Pi(N)} \left[ \val(\Sset_{\pi, i} \cup \{i\}) - \val(\Sset_{\pi, i}) \right]
\end{equation}
    \end{small}

Here $\Pi(N)$ is the set of all permutations of $N$, and $\Sset_{\pi, i}$ is the set of players in $N$ that precede $i$ in permutation $\pi$.
An equivalent formula for Shapley values is the following:
\begin{small}
\begin{equation}\label{eq:shapley-2}
    \shap_{N,\val}(i) = \sum_{S \subseteq N \setminus \{i\}} \frac{|S|!(|N|-|S|-1)!}{|N|!} \left[ \val(S \cup \{i\}) - \val(S) \right]
\end{equation}
    \end{small}
%\section{Appendix for Section~\ref{sec:single_step_case}}\label{sec:appendix_single_step}
\section{Details From Section~\ref{sec:single_step_case}}\label{sec:singlestep_appendix}
\subsection{Algorithm $\dpsingleopt$}
\begin{algorithm}[!htbp]
    \caption{$\dpsingleopt(G, \seedset, \actprob)$}
    \label{alg:single_step_main_opt}
    \begin{algorithmic}[1]
    \Require Network $G(\nodeset, E)$, seed set $\seedset$, activation probabilities $\actprob$
    \Ensure Shapley values $\shap(\seed)$ for each $\seed\in \seedset$
    \ForAll{non-seed node $u\in \nodeset\setminus \seedset$} \label{line:iterate_u}
    \State $\seedset(u) \gets \{t \in \seedset \mid (t,u) \in E\}$  \label{line:seedset_u}
        \If{$\seedset(u) \not= \emptyset$} \label{line:check_nonempty}
        \State Build subgraph $G_{u}(V_u, E_u)$ where $V_u = \seedset(u) \cup \{u\}$ and $E_u$ consists of edges between $\seedset(u)$ and $u$ \label{line:build_subgraph}
        \State Compute coefficients $C[\Ssize] \gets \frac{\Ssize!\,(|\seedset(u)|-\Ssize-1)!}{|\seedset(u)|!}$ for $0 \leq \Ssize \leq |\seedset(u)|-1$ \label{line:compute_coef}
        \ForAll{$t\in \seedset(u)$} \label{line:iterate_t}
            \State Compute $\alpha_{u,t}[\Ssize]$ for all $\Ssize$ by bottom-up DP using Eq.~(\ref{eq:dp_decomposition_formula}) \label{line:compute_alpha}
            \State $\shap_u(t) \gets \actprob_{t,u}\sum_{\Ssize=0}^{|\seedset(u)|-1} (C[\Ssize] \cdot \alpha_{u,t}[\Ssize])$ \label{line:compute_local_shapley}
        \EndFor
    \EndIf
    \EndFor
    \State $\shap(\seed) \gets \sum_{u\in \{\nodeset\setminus \seedset\} \cap \neigh^{+}(\seed)} \shap_u(\seed)$ for each $\seed \in \seedset$ \label{line:update_shapley_global}
    \State \Return $\shap(\seed)$ for each $\seed\in \seedset$
    \end{algorithmic}
\end{algorithm}

We present the poly-time algorithm $\dpsingleopt$ for \onestep\ in \Cref{alg:single_step_main_opt}. %\noindent
%\paragraph{\textbf{Poly-time algorithm}}
%$\dpsingleopt$ (pseudocode is in \Cref{alg:single_step_main_opt} in Appendix~\ref{sec:singlestep_appendix}) 
$\dpsingleopt$ processes each non-seed node $u$ independently (line~\ref{line:iterate_u}). For each $u$, we construct a local subgraph $G_u$ containing only $u$ and $\seedset(u)$ if its seed in-neighbors $\seedset(u)$ is not empty (lines~\ref{line:check_nonempty}, \ref{line:build_subgraph}). We first precompute all binomial coefficients for this local subgraph  as a coefficient array $C$ (line~\ref{line:compute_coef}), then for each $t \in \seedset(u)$, we compute the summation array $\alpha_{u,t}[\Ssize]$ for all $\Ssize$ values using the dynamic programming recurrence (line~\ref{line:compute_alpha}): we initialize $\alpha_{u,\seed}[0] = 1$ and $\alpha_{u,\seed}[k] = 0$ for $k > 0$, and consider nodes in $\seedset(u) \setminus \{\seed\}$ by an arbitrary order: $\{n_1, n_2, \cdots, n_{|\seedset(u)| - 1}\}$. We iterate over $i$ from $1$ to $|\seedset(u)| - 1$, and for each node $n_i$ in thie sequence, we update the $\alpha_{u,\seed}$ array by iterating $k$ from $i$ down to $1$:
\begin{equation}\label{eq:dp_decomposition}
    \alpha_{u,\seed}[k] \leftarrow \alpha_{u,\seed}[k] + (1-p_{n_i,u}) \cdot \alpha_{u,\seed}[k-1]. %\quad \text{for } k \text{ from } i \text{ to } 1
\end{equation}
Since there is only one non-seed node $u$ in the subgraph $G_{u}(V_u, E_u)$, we only need to compute one summary arrary and then we obtain the local Shapley value $\shap_u(t)$ as the dot product of coefficient array $C$ and the summation array $\alpha_{u,\seed}$, multiplied by $\actprob_{\seed, u}$ (line~\ref{line:compute_local_shapley}). Finally, we aggregate all local Shapley values to obtain the Shapley values in the original graph $G$ (line~\ref{line:update_shapley_global}).

\subsection{Proof of \Cref{thm:single_poly}}

\singlepoly*

Theorem~\ref{thm:single_poly} can be proved by combining the results of the following lemmas, which establish the correctness and polynomial-time complexity of the algorithm $\dpsingleopt$. Specifically, Lemma~\ref{lem:non_zero_marginal} simplifies the marginal contribution under \onestep\ termination, Lemma~\ref{lem:shapley_single_step} derives a closed-form expression for the Shapley value, Lemma~\ref{lem:decompose_eq} establishes the dynamic programming recurrence for efficient computation, Lemma~\ref{lem:recursive_general} analyzes its time complexity, and Lemma~\ref{lem:local_decomp} proves the decomposition property that enables the optimization. We provide the detailed proofs of these lemmas in the subsequent subsections, and present the complete proof of Theorem~\ref{thm:single_poly} in \Cref{subsubsec:complete_proof_single_poly}.

\subsubsection{Marginal contribution simplification}
\begin{restatable}{lemma}{nonzeromarginal}\label{lem:non_zero_marginal}
    Under the \onestep\ termination, the marginal contribution of seed $t \in \seedset \setminus \Sset$ to any coalition $\Sset \subseteq \seedset$ is:
    \begin{equation}\label{eq:marginval_simplified_appendix}
    \val(\Sset\cup\{t\}) - \val(\Sset)
    = \sum_{u \in (\nodeset \setminus \seedset) \cap \neigh^{+}(t)} \left[ \actprob_{t,u} \cdot \prod_{w \in \neigh^{-}(u)\cap \Sset}(1-\actprob_{w,u}) \right].
    \end{equation}
    \end{restatable}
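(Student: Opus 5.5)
We prove the lemma by writing $\val(\cdot)$ as a sum over non-seed nodes and then computing, node by node, how much adding $\seed$ changes the activation probability.

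\textbf{Step 1: decompose $\val$.} By linearity of expectation,
\[
\val(\Sset) = \sum_{u \in \nodeset\setminus\seedset} \Pr[u \text{ is activated in } (G_{\seedset,\Sset},\actprob,\Sset)].
\]
Under \onestep termination only the step-$1$ attempts matter. These are made exclusively by the seeds in $\Sset$, the only nodes active at step $0$. In $G_{\seedset,\Sset}$ the seeds in $\seedset\setminus\Sset$ and all their incident edges are removed. So a non-seed $u$ can be activated only through an edge $(w,u)$ with $w\in \neigh^{-}(u)\cap\Sset$. The edge trials are independent, hence
\[
\Pr[u \text{ activated}] = 1-\prod_{w\in \neigh^{-}(u)\cap \Sset}(1-\actprob_{w,u}).
\]
If $u$ has no in-neighbor in $\Sset$, the empty product is $1$ and the probability is $0$, as it should be.

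\textbf{Step 2: take the difference node by node.} Fix $\seed\in\seedset\setminus\Sset$ and a non-seed $u$, and write $P_u(\Sset)=\prod_{w\in \neigh^{-}(u)\cap\Sset}(1-\actprob_{w,u})$. There are two cases.
\begin{itemize}
\item If $u\notin\neigh^{+}(\seed)$, then $\neigh^{-}(u)\cap(\Sset\cup\{\seed\})=\neigh^{-}(u)\cap\Sset$. The activation probability of $u$ is unchanged and the difference for $u$ is $0$.
\item If $u\in\neigh^{+}(\seed)$, the product gains the factor $(1-\actprob_{\seed,u})$. The difference is
\[
\bigl(1-(1-\actprob_{\seed,u})P_u(\Sset)\bigr) - \bigl(1-P_u(\Sset)\bigr) = \actprob_{\seed,u}\,P_u(\Sset).
\]
\end{itemize}
Summing over $u\in(\nodeset\setminus\seedset)\cap\neigh^{+}(\seed)$ gives Eq.~(\ref{eq:marginval_simplified_appendix}).

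\textbf{Main point to check: the graph changes with the coalition.} Adding $\seed$ reinserts $\seed$ together with its incident edges, including its out-edges to non-seeds. It also reinserts edges between $\seed$ and other seeds in $\Sset$. Those seed-to-seed edges are harmless: seeds are already active and are excluded from the count $\{u\in\nodeset\setminus\seedset\}$. Any activation path through a non-seed would need two steps, which \onestep termination rules out. With that verified, Steps 1 and 2 are routine algebra.
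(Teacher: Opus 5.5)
Your proposal is correct and follows essentially the same route as the paper. Both arguments write $U(S)$ as a sum over non-seed nodes of the one-step activation probabilities $1-\prod_{w\in N^{-}(u)\cap S}(1-p_{w,u})$. Both then observe that the terms for non-out-neighbors of $t$ cancel, while each non-seed out-neighbor $u$ contributes $p_{t,u}\prod_{w\in N^{-}(u)\cap S}(1-p_{w,u})$. Your explicit check that reinserting $t$ and its incident edges in $G_{T,S\cup\{t\}}$ cannot affect any other non-seed within one step is a small clarification the paper leaves implicit.
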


\begin{proof}
    First, since we only consdier \onestep activation, by the \Cref{def:value_function} that $\val(\Sset)$ is the expected number of non-seed nodes activated solely by paths originating from $\Sset$, it can be rewritten as the sum of the probability that each non-seed node $u$ is activated by coalition $\Sset$ at step 1:
    \begin{equation}\label{eq:value_function_single}
    \begin{aligned}
        \val(\Sset) &= \sum_{u \in \nodeset \setminus \seedset} \Pr[u \in \activeset_1 \mid \text{paths originating from } \Sset]. \\
        %& = \sum_{u \in \nodeset \setminus \seedset}\ap_1(u, \Sset) \\
        %& = \sum_{u \in \nodeset \setminus \seedset} (1-\ap_0(u, \Sset))(1-\prod_{w \in \neigh(u)}(1-\actprob_{u,w}\ap_0(w, \Sset)))
    \end{aligned}
    \end{equation}

Since each active node independently attempts to activate its neighbors, the activation probability of each non-seed node under the \onestep activation equals to $1$ minus the probability that being failed to be activated by all of its neighbors in $\Sset$:
    \begin{equation}\label{eq:activation_prob}
    \begin{aligned}
        &\Pr[u \in \activeset_1 \mid \text{paths originating from } \Sset]\\ 
        =& 1 - \prod_{w \in \neigh^{-}(u)}(1-\actprob_{u,w}\cdot \mathbbm{1}[w \in \Sset])\\
        =& 1- \prod_{w \in \neigh^{-}(u)\cap \Sset}(1-\actprob_{u,w}).
    \end{aligned}
    \end{equation}

Moreover, only non-seed nodes with incoming edges from seed nodes within $\Sset$ have the chance to be activated. Thus, equation \eqref{eq:value_function_single} can be simplified as:
    \begin{equation}\label{eq:value_function_single_simplified}
        \val(\Sset) = \sum_{u \in \{\nodeset \setminus \seedset\} \cap \neigh^{+}(\Sset)} (1-\prod_{w \in \neigh^{-}(u)\cap \Sset}(1-\actprob_{u,w})).
    \end{equation}

Then we substitute \Cref{eq:value_function_single_simplified} into the formula of marginal contribution for node $\seed$ to $\Sset$ as follows:

\begin{equation}\label{eq:marginval}
\footnotesize
    \begin{aligned}
        & \val(\Sset\cup\{\seed\}) - \val(\Sset)\\
        =&  \sum_{u \in \{\nodeset \setminus \seedset\} \cap \neigh^{+}(\Sset \cup \{\seed\})} (1-\prod_{w \in \neigh^{-}(u)\cap (\Sset \cup \{\seed\})}(1-\actprob_{w,u}))\\
        &- \sum_{u \in \{\nodeset \setminus \seedset\} \cap \neigh^{+}(\Sset)} (1-\prod_{w \in \neigh^{-}(u)\cap \Sset}(1-\actprob_{w,u}))\\
        =& \sum_{u \in \{\nodeset \setminus \seedset\} \cap \neigh^{+}(\{\seed\})} (\prod_{w \in \neigh^{-}(u)\cap \Sset}(1-\actprob_{w,u}) -\prod_{w \in \neigh^{-}(u)\cap (\Sset \cup \{\seed\})}(1-\actprob_{w,u}))\\
        +& \sum_{u \in \{\nodeset \setminus \seedset\} \cap \neigh^{+}(\Sset)\setminus \neigh^{+}(\{\seed\})} (\prod_{w \in \neigh^{-}(u)\cap \Sset}(1-\actprob_{w,u}) -\prod_{w \in \neigh^{-}(u)\cap (\Sset \cup \{\seed\})}(1-\actprob_{w,u}))\\
    \end{aligned}
    \end{equation}
    In the above \Cref{eq:marginval}, we divide the marginal contribution into two disjoint summation terms. 
    In the first term, we consider $u \in \{\nodeset \setminus \seedset\} \cap \neigh^{+}(\{\seed\})$, which are the non-seed out-neighbor nodes of $v$, so we have 
    \begin{equation}
        \prod_{w \in \neigh^{-}(u)\cap (\Sset \cup \{\seed\})}(1-\actprob_{w,u}) = \left[\prod_{w \in \neigh^{-}(u)\cap \Sset}(1-\actprob_{w,u})\right] \cdot (1-\actprob_{\seed,u}),
    \end{equation}
    which follows the fact that adding $\seed$ to the set of $u$'s seed neighbors introduces the term $(1-\actprob_{\seed,u})$ to the product.
    
    In the second term, we consider $u \in\{\nodeset \setminus \seedset\} \cap \neigh^{+}(\Sset)\setminus \neigh^{+}(\{\seed\})$, which are non-seed nodes that are out-neighbor of seed nodes within $\Sset$ but are not $\seed$'s out-neighbor, so we have 
    \begin{equation}
        \prod_{w \in \neigh^{-}(u)\cap (\Sset \cup \{\seed\})}(1-\actprob_{w,u}) = \prod_{w \in \neigh^{-}(u)\cap \Sset}(1-\actprob_{w,u}),
    \end{equation}
    Therefore, the two products are identical, causing the second term to cancel out.

    Simplifying \Cref{eq:marginval}, we obtain Eq.~\ref{eq:marginval_simplified_appendix}. Note that $\actprob_{u,v} \in (0,1]$ for all $(u,v) \in E$, all terms in the summation are non-negative, we have $\val(\Sset \cup \{v\}) - \val(\Sset) \geq 0 $. Therefore, $\val(\Sset)$ is monotonically non-decreasing.
\end{proof}

\subsubsection{Closed-form expression}
\begin{restatable}{lemma}{shapleysinglesimplify}\label{lem:shapley_single_step}
    For any seed node $\seed \in \seedset$, under \onestep\ termination,
    %its Shapley value $\shap(\seed)$ under \onestep\ termination is:
    \begin{small}
    \begin{equation}\label{eq:shapley_sum}
        \begin{aligned}
        \shap(\seed) &= \sum_{\Ssize = 0}^{|\seedset| - 1} \frac{\Ssize!(|\seedset| - \Ssize - 1)!}{|\seedset|!} 
        %&\quad 
        \ \cdot \left[\sum_{u \in \{\nodeset \setminus \seedset\} \cap \neigh^{+}(\{\seed \})} \actprob_{\seed,u} \cdot \alpha_{u, \seed}(\seedset \setminus \{\seed\}, \Ssize) \right]
        \end{aligned}
    \end{equation}
    \end{small}
    \begin{small}
    \begin{equation}\label{eq:alpha_def}
    \text{where}\ \ \ \   \alpha_{u, \seed}(\seedset \setminus \{\seed\}, \Ssize) = \sum_{\substack{\Sset \subseteq \seedset \setminus \{\seed\} \\ |\Sset|=\Ssize}} \left( \prod_{w \in \neigh^{-}(u)\cap \Sset}(1-\actprob_{w,u}) \right).
    \end{equation}
    \end{small}
\end{restatable}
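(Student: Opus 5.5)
The plan is to substitute the marginal-contribution formula of \Cref{lem:non_zero_marginal} directly into the subset form of the Shapley value, Eq.~(\ref{eq:shapley_value_2}), and then regroup the resulting double sum. Concretely, I will start from
\[
\shap(\seed) = \sum_{\Sset \subseteq \seedset \setminus \{\seed\}} \frac{|\Sset|!\,(|\seedset|-|\Sset|-1)!}{|\seedset|!}\bigl[\val(\Sset\cup\{\seed\})-\val(\Sset)\bigr].
\]
Since every $\Sset$ in this sum satisfies $\seed \notin \Sset$, \Cref{lem:non_zero_marginal} applies to each term. This replaces the bracket by $\sum_{u \in (\nodeset\setminus\seedset)\cap\neigh^{+}(\seed)} \actprob_{\seed,u}\prod_{w\in\neigh^{-}(u)\cap\Sset}(1-\actprob_{w,u})$.

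The next step is to stratify the outer sum by the coalition size $\Ssize = |\Sset|$, which ranges over $0,\dots,|\seedset|-1$. The Shapley weight depends only on $|\Sset|$, so it factors out of each stratum as $\frac{\Ssize!(|\seedset|-\Ssize-1)!}{|\seedset|!}$. Inside a stratum, I interchange the finite sum over size-$\Ssize$ subsets $\Sset$ with the sum over non-seed out-neighbors $u$ of $\seed$. This interchange is legitimate because the index set of $u$ does not depend on $\Sset$. The factor $\actprob_{\seed,u}$ also does not depend on $\Sset$, so it comes out of the inner sum. What remains inside is exactly $\sum_{\Sset\subseteq\seedset\setminus\{\seed\},|\Sset|=\Ssize}\prod_{w\in\neigh^{-}(u)\cap\Sset}(1-\actprob_{w,u})$, which is $\alpha_{u,\seed}(\seedset\setminus\{\seed\},\Ssize)$ as in Eq.~(\ref{eq:alpha_def}). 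This yields Eq.~(\ref{eq:shapley_sum}).

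There is no real obstacle here; the argument is pure bookkeeping once \Cref{lem:non_zero_marginal} is available. Two points need a little care. First, the product over $\neigh^{-}(u)\cap\Sset$ is the empty product $1$ when $\Sset$ contains no in-neighbor of $u$, including $\Sset=\emptyset$. This keeps the $\Ssize=0$ stratum consistent with $\alpha_{u,\seed}(\cdot,0)=1$. Second, the normalizing denominator must be $|\seedset|!$, matching Eq.~(\ref{eq:shapley_value_2}) rather than the $|\seedset|$ appearing in the main-body display. I will note that the latter is a typo and that the lemma as stated uses the correct factorial.
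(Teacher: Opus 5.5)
Your proposal is correct and matches the paper's proof essentially step for step. You take the subset form of the Shapley value, group coalitions by size $\Ssize$, substitute \Cref{lem:non_zero_marginal}, swap the sums over $\Sset$ and $u$, and recognize the inner sum as $\alpha_{u,\seed}(\seedset\setminus\{\seed\},\Ssize)$. Your remark that the main-body display should have $|\seedset|!$ rather than $|\seedset|$ in the denominator is also correct.
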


\begin{proof}
    Let $k$ denote the size of a subset $\Sset$. By the definition of Shapley value in \Cref{def:shapley_influence}:
    \begin{equation} \label{eq:shap_single_t1}
        \begin{aligned}
            &\shap(\seed)\\
            =& \sum_{\Sset\subseteq \seedset \setminus \{ \seed\} } \frac{|\Sset|! (|\seedset|-|\Sset| -1)!}{|\seedset|!} \cdot [\val(\Sset \cup \{ \seed \}) - \val(\Sset) ]\\
            =& \sum_{\Ssize=0}^{|\seedset|-1}\sum_{\Sset \subseteq \seedset\setminus t \atop |\Sset|=\Ssize}\frac{\Ssize!(|\seedset|-\Ssize-1)!}{|\seedset|!}\cdot [\val(\Sset \cup \{t\}) - \val(\Sset) ]\\
            = &\sum_{\Ssize=0}^{|\seedset|-1}\frac{\Ssize!(|\seedset|-\Ssize-1)!}{|\seedset|!}\cdot \Big(\sum_{\Sset \subseteq \seedset\setminus t \atop |\Sset|=\Ssize}[\val(\Sset \cup \{t\}) - \val(\Sset)]\Big)
        \end{aligned}
    \end{equation}
    Substituting the marginal contribution from \Cref{lem:non_zero_marginal} and switch the order of summation, we obtain:
    \begin{equation}
        \begin{aligned}
            &\shap(\seed) \\
            =&\sum_{\Ssize=0}^{|\seedset|-1}\frac{\Ssize!(|\seedset|-\Ssize-1)!}{|\seedset|!}\\
            &\cdot \left[\sum_{\Sset \subseteq \seedset\setminus \seed \atop |\Sset|=\Ssize} \sum_{u \in \{\nodeset \setminus \seedset\} \cap \neigh^{+}(\{\seed\})} [\actprob_{\seed,u} \cdot \prod_{w \in \neigh^{-}(u)\cap \Sset}(1-\actprob_{w,u})]\right]\\
            =&\sum_{\Ssize=0}^{|\seedset|-1}\frac{\Ssize!(|\seedset|-\Ssize-1)!}{|\seedset|!}\\
            &\cdot \left[\sum_{u \in \{\nodeset \setminus \seedset\} \cap \neigh^{+}(\{\seed\})} \actprob_{\seed,u} \cdot \sum_{\Sset \subseteq \seedset\setminus \seed \atop |\Sset|=\Ssize} [\prod_{w \in \neigh^{-}(u)\cap \Sset}(1-\actprob_{w,u})]\right]\\
            %=&p\cdot \sum_{k=0}^{|V|-1}\frac{k!(|V|-k-1)!}{|V|!}\cdot \sum_{i\in N_{t_1}}\sum_{S \subseteq V\setminus t_1 \atop |S|=k}[1-\activestatus_0(i)](1-p)^{\sum_{v\in S}e_{i,v}\activestatus_0(v)}\\
            %=&\actprob\sum_{\Ssize=0}^{|\seedset|-1}\frac{\Ssize!(|\seedset|-\Ssize-1)!}{|\seedset|!}\cdot \Big(\sum_{u\in \neigh(\seed)}[1-\activestatus_0(u)]\Big(\sum_{\Sset \subseteq \seedset\setminus \seed \atop |\Sset|=\Ssize}(1-\actprob)^{\sum_{i\in \Sset}\edge_{(u,i)}\activestatus_0(i)}\Big)\Big)\\
        \end{aligned}
    \end{equation}
\end{proof}

\subsubsection{Efficient recurrence for $\alpha_{u,t}(L,k)$}
\begin{restatable}{lemma}{decomposeeq}\label{lem:decompose_eq}
    Given a seed node $\seed \in \seedset$, one of its non-seed out-neighbor $u \in \{\nodeset \setminus \seedset\} \cap \neigh^{+}(\{\seed\})$ and an integer $k \in \{1, \ldots, |\seedset| - 1\}$, the following holds for any subset $L \subseteq \seedset \setminus \{ \seed \}$: %and any node $n \in L$:
     \begin{small}
    \begin{equation}\label{eq:dp_decomposition_formula}
    \begin{aligned}
        \alpha_{u,\seed}(L,k) = 
    \begin{cases}
        1 & \text{if } k = 0\\
        (1-\actprob_{n,u}) \cdot \alpha_{u,\seed}(L \setminus \{n\}, k-1) & \\
        + \alpha_{u,\seed}(L \setminus \{n\}, k) & \forall n \in L, \text{if } 1 \leq k \leq |L| \\
        0 & \text{otherwise},
    \end{cases}
    \end{aligned}
    \end{equation}
    \end{small}
    % \begin{equation}\label{eq:dp_decomposition_formula}
    % \begin{aligned}
    %     \alpha_{u,\seed}(L,k) = & (1-\actprob_{n,\seed}) \cdot \alpha_{u,\seed}(L \setminus \{n\}, k-1) \\
    %                    & + \alpha_{u,\seed}(L \setminus \{n\}, k)
    % \end{aligned}
    % \end{equation}
where $\alpha_{u,\seed}(L,k) = \sum_{\substack{\Sset \subseteq L \\ |\Sset|=k}} \left( \prod_{w \in \neigh^{-}(u)\cap \Sset}(1-\actprob_{w,u}) \right)$.
\end{restatable}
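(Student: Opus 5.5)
We prove the recurrence directly from the definition of $\alpha_{u,\seed}(L,k)$ in Lemma~\ref{lem:decompose_eq}, treating the three cases separately. Set $f(\Sset) = \prod_{w \in \neigh^{-}(u)\cap \Sset}(1-\actprob_{w,u})$, so that $\alpha_{u,\seed}(L,k)=\sum_{\Sset\subseteq L,\,|\Sset|=k} f(\Sset)$.

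\emph{Base cases.} If $k=0$, the only subset of size $0$ is $\emptyset$, and $f(\emptyset)$ is an empty product, so it equals $1$. If $k>|L|$ (or $k<0$, as arises in the recursion), no $k$-subset of $L$ exists and the sum is empty, hence $0$.

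\emph{Main case $1\le k\le |L|$.} Fix any $n\in L$ and split the $k$-subsets of $L$ into those not containing $n$ and those containing $n$.
\begin{itemize}
\item The first family is exactly the $k$-subsets of $L\setminus\{n\}$, so it contributes $\alpha_{u,\seed}(L\setminus\{n\},k)$. This term is correctly $0$ when $k=|L|$.
\item The second family is in bijection with the $(k-1)$-subsets $\Sset'$ of $L\setminus\{n\}$ via $\Sset=\Sset'\cup\{n\}$.
\end{itemize}

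The key step is the multiplicative factorization $f(\Sset'\cup\{n\}) = f(\Sset')\cdot(1-\actprob_{n,u})$. This holds because $\neigh^{-}(u)\cap(\Sset'\cup\{n\})$ equals $(\neigh^{-}(u)\cap\Sset')\cup\{n\}$ when $n\in\neigh^{-}(u)$, and the product gains exactly the factor $(1-\actprob_{n,u})$; the same argument appears in the proof of Lemma~\ref{lem:non_zero_marginal}.

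The only subtlety, and the one I expect to be the main point needing care, is the case $n\notin\neigh^{-}(u)$. Here the product is unchanged, so the identity requires the convention $\actprob_{n,u}=0$ for non-edges. I will state this convention explicitly. It is consistent with Eq.~(\ref{eq:activation_prob}), where non-neighbors contribute factor $1$. In the algorithm it is moot anyway, since $L\subseteq\seedset(u)\setminus\{\seed\}$ consists only of in-neighbors.

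Summing the second family then gives $(1-\actprob_{n,u})\,\alpha_{u,\seed}(L\setminus\{n\},k-1)$. Adding the two contributions yields Eq.~(\ref{eq:dp_decomposition_formula}). Since $n$ was arbitrary, the identity holds for all $n\in L$.
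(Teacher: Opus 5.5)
Your proof is correct and follows essentially the same route as the paper: partition the $k$-subsets of $L$ according to whether they contain $n$, and factor $(1-p_{n,u})$ out of the sum over subsets that contain $n$. Your explicit convention $p_{n,u}=0$ for $n\notin N^{-}(u)$ is a point the paper's proof uses without stating it when it pulls out $(1-p_{n,u})$ for every $n\in L$, so stating it makes the argument cleaner.
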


\begin{proof}
    We partition the subsets of $L$ of size $\Ssize$ into those that include $n$ and those that do not.

For the sums, we have following observations:
\begin{itemize}[leftmargin=*]
    \item $\Ssize \in \{0, 1, \cdots, |L|\}$, limited by the size of $L$. If $\Ssize > |L|$, we set the value to $0$ as it is meaningless.
    \item When $\Ssize = 0$, then subset $\Sset=\emptyset$ for any $L$. Thus $\alpha_{u,t}(L,0) = 1$ for $\forall L \subseteq \seedset\setminus \{ \seed \}$.
    \item When $\Ssize = 1$ and $|L|=1$, denote $L=\{v^{\prime}\}$ without loss of generality. The summation can also be divided into two parts: a subset including the only node $v^{\prime}$ and an empty subset. Then the decomposition equation holds:
    \begin{equation}
        \begin{aligned}
            & \sum_{\Sset \subseteq \{v^{\prime}\} \atop |\Sset|=1}[\prod_{w \in \neigh^{-}(u)\cap \Sset}(1-\actprob_{w,u})]\\
            &=[\sum_{\Sset \subseteq \emptyset \atop |\Sset|=0}1 \cdot (1-\actprob_{v^{\prime},u})] + \sum_{\Sset \subseteq \emptyset \atop |\Sset|=1}[\prod_{w \in \neigh^{-}(u)\cap \Sset}(1-\actprob_{w,u})]\\
            &=(1-\actprob_{v^{\prime},u})
    \end{aligned}
    \end{equation}
    %for $\forall v^{\prime} \in \seedset\setminus \{ \seed \}$ and $L = \{v^{\prime} \}$, $\alpha_{u,t}(L,1) = \prod_{w \in \neigh^{-}(u)\cap \{v^{\prime}\}}(1-\actprob_{w,u}) = (1-\actprob_{v^{\prime},u})$, which is the base case.
    \item When $|L| \geq 2$, for all $\Ssize = 1, \cdots, |L|$, the sum over possible subsets $\Sset$ can be decomposed as two parts: subsets $\Sset$ including $n$ and subsets $\Sset$ not including $n$, where $n$ denotes an arbitrary node in $L$: 
    \begin{equation}
        \begin{aligned}
            & \sum_{\Sset \subseteq L \atop |\Sset|=k}[\prod_{w \in \neigh^{-}(u)\cap \Sset}(1-\actprob_{w,u})]\\
            &=\sum_{\Sset \subseteq L \setminus \{n\} \atop |\Sset|=\Ssize-1}[\prod_{w \in \neigh^{-}(u)\cap \Sset}(1-\actprob_{w,u})\cdot (1-\actprob_{n,u})]\\
            &+ \sum_{\Sset \subseteq L \setminus \{n\} \atop |\Sset|=\Ssize}[\prod_{w \in \neigh^{-}(u)\cap \Sset}(1-\actprob_{w,u})]
    \end{aligned}
    \end{equation}

Moving the term $1-\actprob_{n,u}$ outside the summation, then we get the decomposition equation.
\end{itemize}
%Note that when $\Ssize = |L|$, the latter sum is meaningless so it is set to 0. It means that if $\Ssize=|L|$, every subset $\Sset$ consist the arbitrary node $n$.
\end{proof}

\begin{restatable}{lemma}{recursivegeneral}\label{lem:recursive_general}
    Given any seed $t \in \seedset$, any non-seed out-neighbor $u \in (\nodeset \setminus \seedset)\cap \neigh^{+}(t)$, and any $L \subseteq \seedset \setminus\{t\}$, all values $\alpha_{u,t}(L,k)$ for $k \in \{0,\ldots,|L|\}$ can be computed in $O(|L|^2)$ time.
\end{restatable}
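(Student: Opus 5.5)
We will compute the table via the recurrence of \Cref{lem:decompose_eq}, filling it bottom-up over a chain of nested prefixes of $L$. Fix an arbitrary ordering $L=\{n_1,\dots,n_{|L|}\}$ and set $L_i=\{n_1,\dots,n_i\}$, with $L_0=\emptyset$. The table entries are $\alpha_{u,t}(L_i,k)$ for $0\le i\le |L|$ and $0\le k\le |L|$.

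The base row is $i=0$: there $\alpha_{u,t}(L_0,0)=1$ and $\alpha_{u,t}(L_0,k)=0$ for $k>0$, exactly as in the base cases of \Cref{lem:decompose_eq}. For $i\ge 1$ we apply \Cref{lem:decompose_eq} with the distinguished element $n=n_i$, so that $L_i\setminus\{n\}=L_{i-1}$. This gives, for $1\le k\le i$,
\[
\alpha_{u,t}(L_i,k)=(1-\actprob_{n_i,u})\,\alpha_{u,t}(L_{i-1},k-1)+\alpha_{u,t}(L_{i-1},k).
\]
The remaining entries are set by the other cases of the lemma: $\alpha_{u,t}(L_i,0)=1$ and $\alpha_{u,t}(L_i,k)=0$ for $k>i$.

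One point needs care: the factor should be $1$ when $n_i\notin\neigh^{-}(u)$, because the product in the definition of $\alpha$ only ranges over in-neighbours of $u$. We adopt the convention $\actprob_{n_i,u}=0$ for non-edges, which makes the formula correct in that case too. In the algorithm this issue does not actually arise, since there $L\subseteq\seedset(u)$.

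Correctness follows by induction on $i$. Every entry of row $i$ is given either by a base case or by the recurrence applied to row $i-1$ entries, which are already computed. Row $|L|$ is therefore exactly the required values $\alpha_{u,t}(L,k)$ for all $k\in\{0,\dots,|L|\}$.

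For the running time, row $i$ has at most $|L|+1$ entries, each computed in $O(1)$ arithmetic operations (one multiplication and one addition). With $|L|+1$ rows the total is $O(|L|^2)$. Looking up $\actprob_{n_i,u}$ is $O(1)$ given adjacency stored in a hash map or in the local subgraph $G_u$. Memory can be reduced to $O(|L|)$ by keeping a single array and updating $k$ downward from $i$ to $1$, as in Eq.~(\ref{eq:dp_decomposition}). The downward order is essential, because it ensures that $\alpha[k-1]$ still holds the row $i-1$ value when $\alpha[k]$ is updated.

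The only subtle step is this in-place update order together with the non-neighbour convention. The rest is routine bookkeeping.
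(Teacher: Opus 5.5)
Your proposal is correct and follows the paper's proof essentially verbatim: fix an ordering of $L$, fill the table $\alpha_{u,t}(L_i,k)$ over the prefixes $L_i=\{n_1,\dots,n_i\}$ using the recurrence of \Cref{lem:decompose_eq} with $n=n_i$, and charge $O(1)$ per entry for $O(|L|^2)$ total. Your extra remarks are correct refinements rather than a different route: the convention $\actprob_{n_i,u}=0$ when $n_i\notin\neigh^{-}(u)$ is one the paper's recurrence relies on implicitly, and the downward in-place update order matches Eq.~(\ref{eq:dp_decomposition}).
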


\begin{proof}
    Using \Cref{lem:decompose_eq}, each $\alpha_{u,\seed}(L,k)$ can be decomposed recursively. Let $L = \{n_1, n_2, \ldots, n_{|L|} \}$ be an arbitrary ordering of the elements in $L$. We can compute the values $\alpha_{u,\seed}(L_i, k)$ for $L_i = \{n_1, \ldots, n_i\}$ and $k \in \{0, \ldots, i\}$ iteratively for $i = 0, \ldots, |L|$.

    For a given $i$, $\alpha_{u,\seed}(L_i, k)$ for all $k \in \{0, \ldots, i\}$ can be computed using the values computed for $L_{i-1}$ by \Cref{lem:decompose_eq}:
    \begin{equation}
        \alpha_{u,\seed}(L_i, k) =
    \begin{cases}
        1 & k = 0\\
        (1-\actprob_{n_i,u})\cdot \alpha_{u,\seed}(L_{i-1}, k-1) + \alpha_{u,\seed}(L_{i-1}, k) & 1 \leq k \leq i \\
        0 & \text{otherwise}
    \end{cases}
    \end{equation}
    %This process builds up a recursive tree till $\alpha_{u,\seed}(L, \Ssize)$.
    Filling this DP table for all $i=0,\ldots,|L|$ and $k=0,\ldots,i$ takes $O(|L|^2)$ operations since each entry is computed in $O(1)$ time.

    %Moreover, since we compute $\alpha_{u,\seed}(L, k)$ for all possible $k$ values,we can reuse the values of $\alpha_{u,\seed}(L_{i-1}, k)$ for the computing of both $\alpha_{u,\seed}(L_i, k)$ and $\alpha_{u,\seed}(L_i, k+1)$. Combining recursion trees for each possible values of $\Ssize$ together, we get a DP table with height $|L|$ and width $|L|+1$ in \Cref{fig:tree2}. Because each node in the DP table computes in $O(1)$ time, the total time to compute all values up to $L_{|L|}$ is $O(|L|^2)$.

% \begin{figure*}[!h]
%     \includegraphics[width=\textwidth]{figures/recursion_tree2_new.png}
%     \caption{Recursion Tree for all sums}\label{fig:tree2}
%     \end{figure*}
\end{proof}

\subsubsection{Optimization}
We first prove the decomposition property which supports the optimization:
\begin{restatable}{lemma}{localdecomp}\label{lem:local_decomp}
    Given $(G(\nodeset, E), \seedset, \actprob)$, the Shapley value of each seed $\seed \in \seedset$ can be computed as:
    \begin{equation}
    \shap(\seed) = \sum_{u\in \{\nodeset \setminus \seedset\} \cap \neigh^{+}(\seed)} \;\shap_u(\seed)
    \end{equation}
    %where $\shap_u(\seed)$ is the Shapley value of $\seed$ in the subgraph $G_u(V_u, E_u)$, where $V_u = \{u\} \cup \seedset(u)$ and $E_u = \{(v,u) \in E \mid v \in \seedset(u)\}$.
\end{restatable}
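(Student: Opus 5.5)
We decompose the value function and then apply linearity of the Shapley value. For each non-seed node $u$ we define a local game on the player set $\seedset$ by
\[
\val_u(\Sset) = 1-\prod_{w \in \neigh^{-}(u)\cap \Sset}(1-\actprob_{w,u}),
\]
the probability that $u$ is activated at step $1$ by the coalition $\Sset$. The proof of \Cref{lem:non_zero_marginal} (specifically \Cref{eq:value_function_single} together with \Cref{eq:activation_prob}) already shows that $\val(\Sset)=\sum_{u\in \nodeset\setminus\seedset}\val_u(\Sset)$ for every $\Sset\subseteq\seedset$. This rests on two facts: under \onestep termination a non-seed node can only be activated directly by seeds in $\Sset$, and by linearity of expectation the expected count is a sum of per-node activation probabilities. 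Each $\val_u$ satisfies $\val_u(\emptyset)=0$, so it is a valid cooperative game.

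Next we invoke the linearity axiom, which can be checked directly from \Cref{eq:shapley_value_2} since that formula is linear in $\val$. It gives $\shap(\seed)=\sum_{u\in\nodeset\setminus\seedset}\shap^{\val_u}(\seed)$, where $\shap^{\val_u}$ denotes the Shapley value in the game $(\seedset,\val_u)$. Any $u$ that is not an out-neighbor of $\seed$ contributes nothing. In that case $\val_u(\Sset\cup\{\seed\})=\val_u(\Sset)$ for all $\Sset$, because the product does not involve $\seed$. Hence $\seed$ is a null player in $\val_u$ and its Shapley value there is $0$. This restricts the sum to $u\in(\nodeset\setminus\seedset)\cap\neigh^{+}(\seed)$.

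The remaining step, which is the only delicate one, is to show that $\shap^{\val_u}(\seed)$ on the full player set $\seedset$ equals $\shap_u(\seed)$, the Shapley value computed in the local subgraph $G_u$ with player set $\seedset(u)$. The game $\val_u$ depends only on $\Sset\cap\seedset(u)$, so every seed outside $\seedset(u)$ is a null player. We then use the standard fact that removing null players leaves the Shapley values of the remaining players unchanged. The cleanest argument goes through the permutation form \Cref{eq:shapley_value_1}. A uniformly random permutation of $\seedset$, restricted to $\seedset(u)$, is a uniformly random permutation of $\seedset(u)$. Moreover, the marginal contribution of $\seed$ depends only on which members of $\seedset(u)$ precede it. Averaging therefore gives exactly the Shapley value in the game restricted to $\seedset(u)$, and that restricted game is precisely the \onestep value function of $G_u$ by \Cref{def:value_function}. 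Alternatively, one can verify the identity via the binomial identity $\sum_{j}\binom{m}{j}\frac{(s+j)!(n-s-j-1)!}{n!}=\frac{s!(n'-s-1)!}{n'!}$ with $n=n'+m$, but the permutation argument avoids that computation. Combining the three steps yields the lemma.
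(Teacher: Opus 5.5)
Your proof is correct and follows the same route as the paper. You write $\val(\Sset)=\sum_{u}\val_u(\Sset)$, where $\val_u$ is the one-step activation probability of $u$, apply linearity of the Shapley value, and drop every $u\notin\neigh^{+}(\seed)$ because $\seed$ is a null player in $\val_u$. Your third step uses the restricted-permutation argument to show that the Shapley value of $\seed$ in $(\seedset,\val_u)$ equals the one computed on player set $\seedset(u)$ in $G_u$, i.e.\ with the normalization $\Ssize!(|\seedset(u)|-\Ssize-1)!/|\seedset(u)|!$ the algorithm uses; this spells out a null-player-removal step the paper takes for granted when it simply identifies the average over $\Pi(\seedset)$ with $\shap_u(\seed)$.
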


\begin{proof}
Under \onestep, whether $u$ is activated depends only on attempts along incoming edges to $u$ in one round; these events are independent across different $u$'s, so $\val$ decomposes as a sum of local value functions $\val_u$. Shapley values are linear in the value function, hence the second claim.
\end{proof}

\begin{proof}
    For each $u\in\nodeset\setminus\seedset$, define the indicator random variable
    \begin{equation}
    X_u(\Sset)\ :=\ \mathbbm{1}\{u \text{ is activated in one step from }\Sset\}.
    \end{equation}
    By Definition~\ref{def:value_function}, the value function is the expected
    number of activated non-seed nodes, hence by linearity of expectation
    \begin{equation}\label{eq:U-sum-ind}
    \val(\Sset)\ =\ \mathbb{E} \Big[\sum_{u\in\nodeset\setminus\seedset} X_u(\Sset)\Big]
    \ =\ \sum_{u\in\nodeset\setminus\seedset} \mathbb{E}[X_u(\Sset)].
    \end{equation}
    
    For a fixed $u$, consider the subgraph $G_u(V_u, E_u)$ induced by $u$ and $\seedset(u)$. The value function of this subgraph depends on the only non-seed node $u$ and for $\seed \in \Sset \setminus seedset (u)$, it has no chance to activate $u$ under \onestep:
    \begin{equation}
    \val_u(\Sset)= \mathbb{E} [X_u(\seedset(u) \cap \Sset)] = \mathbb{E} [X_u(\Sset)]
    \end{equation}

    Therefore, the value function of the entire graph decomposes as a sum of the value functions of the subgraphs:
    \begin{equation}
    \val(\Sset)\ =\ \sum_{u\in\nodeset\setminus\seedset} \val_u(\Sset).
    \end{equation}

    Shapley values are linear in the value function, hence we can exchange the order of summation and get:
    \begin{equation}
        \begin{aligned}
            \shap(\seed) &= \frac{1}{|\seedset|!}\sum_{\pi\in\Pi(\seedset)} \Big(\val(S_{\pi,\seed}\cup\{\seed\})-\val(S_{\pi,\seed})\Big)\\
            &= \sum_{u\in(\nodeset\setminus\seedset)\cap\neigh^{+}(\seedset)} \frac{1}{|\seedset|!}\sum_{\pi\in\Pi(\seedset)} \Big(\val_u(S_{\pi,\seed}\cup\{\seed\})-\val_u(S_{\pi,\seed})\Big)\\
            &= \sum_{u\in(\nodeset\setminus\seedset)\cap\neigh^{+}(\seedset)} \shap_u(\seed)
        \end{aligned}
    \end{equation}
    where $\shap_u(\seed)$ is the Shapley value of $\seed$ in the local subgraph $G_u(V_u, E_u)$. Finally, $\val_u(\cdot)$ depends only on $\neigh^{-}(u)\cap\seedset$; hence if $\seed\notin\neigh^{-}(u)$, then $\val_u(S\cup\{\seed\})=\val_u(S)$ for all $S$ and $\shap_u(\seed)=0$. Thus the sum can be restricted to
    $u\in(\nodeset\setminus\seedset)\cap\neigh^{+}(\seed)$, which proves the lemma.
    \end{proof}

\subsubsection{Completing the Proof of Theorem~\ref{thm:single_poly}}\label{subsubsec:complete_proof_single_poly}
\begin{proof}[Proof of Theorem~\ref{thm:single_poly}]
We establish both correctness and polynomial-time complexity of $\dpsingleopt$ (\Cref{alg:single_step_main_opt}).

\textit{Correctness:} 
By Lemma~\ref{lem:non_zero_marginal}, the marginal contribution under \onestep\ has the simplified form in Eq.~\eqref{eq:marginval_simplified_appendix}. Substituting this into the Shapley definition and grouping coalitions by size yields the closed-form expression in Lemma~\ref{lem:shapley_single_step} (Eq.~\eqref{eq:shapley_sum}). 

Algorithm~\ref{alg:single_step_main_opt} implements this expression as follows. For each non-seed node $u$, Line~\ref{line:build_subgraph} constructs the local subgraph $G_u(V_u, E_u)$ with $V_u = \seedset(u) \cup \{u\}$, and Line~\ref{line:compute_coef} precomputes the binomial coefficients $C[\Ssize] = \frac{\Ssize!(|\seedset(u)|-\Ssize-1)!}{|\seedset(u)|!}$. For each seed $t \in \seedset(u)$, Line~\ref{line:compute_alpha} computes the array $\{\alpha_{u,t}[\Ssize]\}_{\Ssize=0}^{|\seedset(u)|-1}$ via the dynamic programming recurrence established in Lemma~\ref{lem:decompose_eq}, and Line~\ref{line:compute_local_shapley} computes the local Shapley value $\shap_u(t)$ by evaluating Eq.~\eqref{eq:shapley_sum} restricted to subgraph $G_u$. Finally, by Lemma~\ref{lem:local_decomp}, Line~\ref{line:update_shapley_global} correctly aggregates all local Shapley values to obtain the global Shapley values via $\shap(\seed) = \sum_{u \in (\nodeset \setminus \seedset) \cap \neigh^{+}(\seed)} \shap_u(\seed)$.

\textit{Time Complexity:}
There are at most $|\nodeset| - |\seedset|$ non-seed nodes, and hence at most $|\nodeset| - |\seedset|$ subgraphs to process in Lines~\ref{line:iterate_u}--\ref{line:check_nonempty}. For each subgraph $G_u$, the coefficient computation in Line~\ref{line:compute_coef} takes $O(|\seedset(u)|) \subseteq O(|\seedset|)$ time. For each seed $t \in \seedset(u)$, computing the DP array $\alpha_{u,t}[\cdot]$ in Line~\ref{line:compute_alpha} takes $O(|\seedset(u)|^2)$ time by Lemma~\ref{lem:recursive_general}, and the dot product in Line~\ref{line:compute_local_shapley} takes $O(|\seedset(u)|)$ time. Since $|\seedset(u)| \leq |\seedset|$, processing all seeds in $\seedset(u)$ takes $O(|\seedset(u)| \cdot |\seedset(u)|^2) = O(|\seedset(u)|^3) \subseteq O(|\seedset|^3)$ time per subgraph. Summing over all subgraphs, the total time complexity is $O\bigl((|\nodeset| - |\seedset|) \cdot |\seedset|^3\bigr) = O(|\nodeset| \cdot |\seedset|^3)$. 
%Since $|\seedset| \leq |\nodeset|$, this simplifies to $O(|\nodeset|^4)$ in the worst case.
\end{proof}
\section{Details From Section~\ref{sec:hardness}}\label{sec:appendix_fixedstep}
This section provides the complete proof of the following \Cref{thm:hardness}:

\fixedstephardness*

% \begin{restatable}{lemma}{twostephardness}\label{lem:2step_hardness}
%     Computing the Shapley value for any seed node in the $2$-step termination is $\#P$-hard.%, even when the value function $U(S)$ can be computed in polynomial time.
% \end{restatable}

%\proof
We establish the hardness result for \fixedstep termination with $\timeconst = 2$ steps via a reduction from the $\#P$-complete problem of counting satisfying assignments for a monotone 2-CNF formula~\cite{valiant1979complexity}. In this graph, the maximum directed path-length is 2 and hence the hardness extends to \fullstep termination and \fixedstep termination with $\timeconst \ge 2$ steps. %this finding can be extended to any \fixedstep termination with $\timeconst \ge 2$ steps and the \fullstep termination. 

Let $\phi$ be a monotone 2-CNF formula with $n$ variables $x_1,\dots,x_n$ and $m$ clauses $C_1,\dots,C_m$. Each clause $C_j$ is a disjunction of two unnegated variables. For each clause $C_j$, define $\var(C_j) \;=\; \{\,x_i \mid x_i \text{ appears in } C_j\}$. Denote the number of satisfying assignments of $\phi$ by $\#_{\phi}$. The proof proceeds in the following two main parts.

%Graph Construction and 
\subsection{\textbf{Shapley Value Expression}}\label{subsec:graph_construction}
%For each pair $(r,q)$ with $r\in\{1,\ldots,n\}$ and $q\in\{1,\ldots,m\}$, we build a graph instance $G_{(r,q)}=(V_{(r,q)},E_{(r,q)})$ in the 2-step activation model. We fix one special seed node $\seed_1$ and derive its Shapley value in each graph instance, denoted by $\shap_{(r,q)}(\seed_1)$ (see \Cref{lemma:shap_t1}).
%\paragraph{\textbf{Graph Construction and Shapley Values}}
\textbf{Marginal contribution of $\seed_1$.~} In \Cref{sec:hardness} we have explained the construction of each graph $G_{(r,q)}$. We now show an key lemma of the marginal contribution of the fixed seed node $\seed_1$ in each instance $G_{(r,q)}$.% which establishes the expression of the Shapley value in \Cref{eq:shap_t1}.

\begin{lemma}\label{lemma:marginal_t1}
For each graph $G_{(r,q)}$ where $(r,q)\in [n] \times [m]$, let $\Sset \subseteq U \cup \seedset_r \setminus \{\seed_1\}$ be any subset of seed nodes. The marginal contribution of $\seed_1$ to $\Sset$ is:
\begin{equation}\label{eq:marginal_t1}
\begin{aligned}
    &\val(\Sset \cup \seed_1) - \val(\Sset) \\
    &= 
    \begin{cases}
    1 & \text{if } \Sset = \emptyset\\
    (\frac{1}{q+1})^{\sum_{v_j \in V}  \mathbbm{1}[\exists u_i \in \Sset \text{ s.t. } (u_i, v_j) \in E_{(p,q)}]} & \text{if } \Sset \subseteq U \text{ and } \Sset \not= \emptyset\\
    %(\frac{1}{q+1})^{\sum_{v_j \in V}  \mathbbm{1}[\exists u_i \in \Sset \text{ such that } x_i \in \var(C_j)]} & \text{if } \Sset \subseteq U \text{ and } \Sset \not= \emptyset\\
    0 & \text{otherwise}
    \end{cases}
\end{aligned}
\end{equation}
\end{lemma}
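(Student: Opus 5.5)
The plan is to compute $\val(\Sset\cup\{\seed_1\})$ and $\val(\Sset)$ directly in $G_{(r,q)}$ and compare them node by node. The seed set is $U\cup\seedset_r$, where $U=\{u_1,\dots,u_n\}$ are the variable seeds and $\seedset_r=\{\seed_1,\dots,\seed_r\}$ are the auxiliary seeds. The non-seed nodes are the clause nodes $v_1,\dots,v_m$ and the sink $s$.

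\textbf{Step 1: only the sink matters.} $\seed_1$ has a single out-edge, $(\seed_1,s)$, with probability $1$. Adding $\seed_1$ therefore cannot change whether any clause node $v_j$ is activated: the clause nodes have in-edges only from $U$, and the removal of $\seedset\setminus\Sset$ in $G_{\seedset,\Sset}$ affects only the variable seeds. By linearity of expectation, the marginal contribution equals
\begin{equation*}
\Pr[s \text{ active} \mid \Sset\cup\{\seed_1\}]-\Pr[s \text{ active}\mid \Sset].
\end{equation*}
With $\seed_1$ present, $s$ is activated with probability $1$ at step $1$. Since every directed path has length at most $2$, the same holds under every $\timeconst\ge 2$. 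So the marginal contribution is $1-\Pr[s \text{ active}\mid\Sset]$.

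\textbf{Step 2: case analysis on $\Sset$.}
\begin{itemize}
\item If $\Sset$ contains some auxiliary seed $\seed_\ell$ with $\ell\neq 1$, then $s$ is already active with probability $1$, so the contribution is $0$. This is the ``otherwise'' case.
\item If $\Sset=\emptyset$, then $s$ cannot be reached, so the contribution is $1$.
\item If $\emptyset\neq\Sset\subseteq U$, every clause node $v_j$ with an in-neighbor $u_i\in\Sset$ is activated with certainty at step $1$, because those edges have probability $1$. No other clause node can be activated. At step $2$, each activated $v_j$ independently tries to activate $s$ with probability $\frac{q}{q+1}$. Hence $\Pr[s \text{ inactive}]=(\frac{1}{q+1})^{c(\Sset)}$, where $c(\Sset)$ is the number of clause nodes adjacent to $\Sset$. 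This gives the middle case.
\end{itemize}

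\textbf{Main obstacle.} The only delicate point is the independence and timing claim in the last case. I will argue it from the IC semantics: each newly active node gets one attempt per edge, and the attempts are independent. I will also check that $s$ has no other in-neighbors once the auxiliary seeds outside $\Sset$ have been removed by the construction of $G_{\seedset,\Sset}$ in \Cref{def:value_function}. Finally, I will note that the exponent in the statement is exactly $c(\Sset)$. The edge set is written there as $E_{(p,q)}$, but this is just notation for $E_{(r,q)}$.
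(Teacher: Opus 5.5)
Your proposal is correct and matches the paper's proof essentially step by step. You use the probability-$1$ edge $(t_1,s)$ to reduce the marginal contribution to $1-\Pr[s \text{ active}\mid S]$, then split into the same three cases: $S$ containing another auxiliary seed, $S=\emptyset$, and $\emptyset\neq S\subseteq U$, the last giving the $(\frac{1}{q+1})^{c(S)}$ term. Your remarks on removing $T\setminus S$ in $G_{T,S}$, on the clause activations being deterministic, and on the length-$2$ paths make explicit what the paper's proof leaves implicit.
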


\begin{proof}
The marginal contribution of $\seed_1$ is $\val(\Sset \cup \{\seed_1\}) - \val(\Sset)$. By construction, $\seed_1$ has no path to any clause node $v \in V$. Thus, its addition only affects the sink node $s$. The marginal contribution simplifies to the change in activation probability of $s$:
\begin{small}
\begin{equation}
\val(\Sset \cup \{\seed_1\}) - \val(\Sset) = \Pr[s \text{ is active} | \Sset \cup \{\seed_1\}] - \Pr[s \text{ is active} | \Sset]
\end{equation}
\end{small}

Due to the edge $(\seed_1, s)$ having probability 1, $\Pr[s \text{ is active} | \Sset \cup \{\seed_1\}] = 1$ for any $\Sset$. The expression becomes $1 - \Pr[s \text{ is active} | \Sset]$. We consider three cases for the subset $\Sset$:
\begin{enumerate}[leftmargin=*]
    \item If $\Sset$ contains any auxiliary seed $\seed_i \in \seedset_r \setminus \{\seed_1\}$, then $s$ is already activated with probability 1 via the edge $(\seed_i, s)$. Thus, $\Pr[s \text{ is active} | \Sset] = 1$, and the marginal contribution is $1 - 1 = 0$.
    \item If $\Sset = \emptyset$, no node is initially active, so $\Pr[s \text{ is active} | \emptyset] = 0$. The marginal contribution is $1 - 0 = 1$.

    \item If $\Sset \subseteq U$ is non-empty, $s$ is activated only via clause nodes. The probability that $s$ is \emph{not} activated by $\Sset$ is the probability that for every clause node $v_j$ activated by $\Sset$, the edge $(v_j, s)$ fails. This is $(\frac{1}{q+1})^{\sum_{v_j \in V} \mathbbm{1}[v_j \text{ is activated by } \Sset]}$.
    
    Since a clause node $v_j$ is activated if and only if $\exists u_i \in \Sset$ such that $(u_i, v_j) \in E_{(r,q)}$, we have $\Pr[s \text{ is active} | \Sset] = 1 - (\frac{1}{q+1})^{\sum_{v_j \in V} \mathbbm{1}[\exists u_i \in \Sset \text{ s.t. } (u_i, v_j) \in E_{(r,q)}]}$.
    Thus we obtain the marginal contribution as \Cref{eq:marginal_t1}.
\end{enumerate}
These three cases together establish the lemma.
\end{proof}

\textbf{Shapley value of $\seed_1$.~} Then we establish the expression of the Shapley value in \Cref{eq:shap_t1} of \Cref{sec:hardness} in the following lemma.

%Now that we have derived the marginal contributions of $\seed_1$ for different subsets, we can compute its Shapley value. For each instance $G_{(r,q)}$, denote the Shapley value of $\seed_1$ as $\shap_{(r,q)}(\seed_1)$.

\begin{lemma}\label{lemma:shap_t1}
    For each graph $G_{(r,q)}$, where $r \in \{1,\cdots, n\}$ and $q \in \{1, \cdots, m\}$, the Shapley value of $\seed_1$ is:
    \begin{equation}\label{eq:shap_t1_appendix}
    \begin{aligned}
    &\shap_{(r,q)}(\seed_1) = \frac{1}{n+r} \\
    &+ \sum_{k=1}^{n} \frac{k!(n+r-k-1)!}{(n+r)!} \cdot  \sum_{S \subseteq U \atop |S|=k} (\frac{1}{q+1})^{\sum_{v_j\in V}\mathbbm{1}[\exists u_i \in S \text{ s.t. } (u_i, v_j) \in E_{(r,q)}]}
    \end{aligned}
    \end{equation}
\end{lemma}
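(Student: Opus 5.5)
We substitute the case analysis of \Cref{lemma:marginal_t1} into the subset form of the Shapley value, Eq.~(\ref{eq:shapley_value_2}). In $G_{(r,q)}$ the seed set is $U \cup \seedset_r$ with $|U| = n$ and $|\seedset_r| = r$, so the total number of players is $n+r$. Writing $\Sset \subseteq (U \cup \seedset_r)\setminus\{\seed_1\}$, the Shapley value becomes
\[
\shap_{(r,q)}(\seed_1) = \sum_{\Sset} \frac{|\Sset|!\,(n+r-|\Sset|-1)!}{(n+r)!}\,\bigl[\val(\Sset\cup\{\seed_1\})-\val(\Sset)\bigr].
\]
\Cref{lemma:marginal_t1} shows that only two kinds of coalitions contribute: the empty coalition, and nonempty coalitions contained in $U$. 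Every coalition containing some other auxiliary seed $\seed_i$ ($i \ge 2$) has marginal contribution $0$ and drops out of the sum.

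The two surviving parts are handled as follows.
\begin{itemize}
\item For $\Sset=\emptyset$, the weight is $\frac{0!\,(n+r-1)!}{(n+r)!} = \frac{1}{n+r}$ and the marginal contribution is $1$. This produces the leading term $\frac{1}{n+r}$.
\item For nonempty $\Sset \subseteq U$, we group coalitions by their size $k = |\Sset|$, which ranges over $1,\dots,n$. The weight $\frac{k!\,(n+r-k-1)!}{(n+r)!}$ depends only on $k$, so it factors out of the inner sum. The marginal contribution is $(\frac{1}{q+1})$ raised to the number of clause nodes adjacent to $\Sset$.
\end{itemize}
Summing the inner term over all size-$k$ subsets of $U$ gives exactly the double sum in Eq.~(\ref{eq:shap_t1_appendix}).

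The main point needing care is the bookkeeping of the player set. The normalization must use $n+r$, not $n$, even though the inner sum only ranges over subsets of $U$. This is correct because Eq.~(\ref{eq:shapley_value_2}) weights by the size of the whole seed set, and coalitions that include other auxiliary seeds simply contribute zero rather than being removed from the game.

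We also note that the case $r=1$ is consistent with this argument: then $\seedset_r\setminus\{\seed_1\}=\emptyset$ and the zero case is vacuous. The exponent uses $E_{(r,q)}$, which matches the lemma's $E_{(p,q)}$ (a typo). The derivation is otherwise a direct regrouping, so no further obstacle is expected.
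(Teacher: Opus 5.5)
Your proposal is correct and matches the paper's proof: group coalitions by size over all $n+r$ players and substitute the three cases of \Cref{lemma:marginal_t1}. The empty set gives $\frac{1}{n+r}$, a nonempty $S \subseteq U$ gives the power of $\frac{1}{q+1}$, and any coalition containing another auxiliary seed gives $0$, which also covers every size $k > n$.
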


\begin{proof}
By the definition of Shapley value in \Cref{def:shapley_influence}, we have:
\begin{equation}
    \begin{aligned}
    &\shap_{(r,q)}(\seed_1) \\
    =& \sum_{S \subseteq U\cup \seedset_r \setminus \{\seed_1\}} \frac{|S|!(n+r-|S|-1)!}{(n+r)!} \cdot [\val(S \cup \seed_1) - \val(S)] \\
%    &= \sum_{k=0}^{n+p} \frac{k!(n+p-k)!}{(n+p+1)!} \cdot \sum_{S \subseteq U \cup \seedset_p \atop |S|=k} [\val(S \cup \seed_1) - \val(S)]\\
    %&= \sum_{k=0}^{n+r} \frac{k!(n+r-k)!}{(n+r+1)!} \cdot  \sum_{S \subseteq U \atop |S|=k} (\frac{1}{\lambda_r})^{\text{number of clauses satisfied by variables in } S}
    \end{aligned}
\end{equation} 
%We first analyze the marginal contributions of $\seed_1$ and simplify the expression of the Shapley value of $\seed_1$ in each instance $G_{(p,q)}$.
%To relate the Shapley value to counting satisfying assignments, we analyze the marginal contributions of $t_1$ and connect it to the number of satisfied clauses by varible nodes.

First, we reorganize the sum by grouping subsets $\Sset$ by their size:
\begin{equation}
\begin{aligned}
&\shap_{(r,q)}(\seed_1) \\
&= \sum_{k=0}^{n+r-1} \frac{k!(n+r-k-1)!}{(n+r)!} \cdot \sum_{\Sset \subseteq U \cup \seedset_r \atop |\Sset|=k} [\val(\Sset \cup\{\seed_1\}) - \val(\Sset)]\\
\end{aligned}
\end{equation} 

Then we substitute the marginal contributions from \Cref{lemma:marginal_t1} for different subsets $\Sset$:
\begin{itemize}[leftmargin=*]
    \item When $k=0$: The empty set contributes $\frac{0!(n+r-1)!}{(n+r)!} = \frac{1}{n+r}$
    \item When $1 \leq k \leq n$, we consider two subcases:
    \begin{itemize}[leftmargin=*]
        \item For subsets $\Sset \subseteq U$ (containing only variable nodes), the marginal contribution is $(\frac{1}{q+1})^{\sum_{v_j \in V}  \mathbbm{1}[\exists u_i \in \Sset \text{ s.t. } (u_i, v_j) \in E_{(r,q)}]}$
        \item Otherwise, the marginal contribution is $0$.
    \end{itemize}
    \item When $k > n$: Since there are only $n$ variable nodes, all such sets contain at least one additional seednode from $\seedset_r \setminus \{\seed_1\}$. Therefore, their marginal contribution is $0$.
\end{itemize}
Collecting all the terms, we yield \Cref{eq:shap_t1_appendix}.
\end{proof}

\subsection{\textbf{Computing the Number of Satisfying Assignments}}\label{subsec:recovery_sat}
We show that if one can compute $\shap_{(r,q)}(\seed_1)$ for every pair $(r,q)$ in polynomial time, then $\#_{\phi}$ can be obtained in polynomial time. The argument proceeds in following steps.

\textbf{Step 1: Express $\#_{\phi}$ in terms of $\subcnt_{k,c}$.~} We begin by noting that there exists a bijection between subsets of variables and assignments to the variables in $\phi$:
\begin{observation}\label{obs:subsets_assignments}
    For a monotone 2-CNF formula $\phi$, there is a one-to-one correspondence between subsets of variables $\{x_1,\dots,x_n\}$ and assignments to the variables. Specifically:
    \begin{enumerate}[leftmargin=*]
        \item Every subset of variables uniquely maps to an assignment that setting variables in this subset to $\textsc{True}$ and other variables to $\textsc{False}$. 
        \item Conversely, any valid assignment to $\phi$ can be viewed as picking out those variables that are set to \textsc{True} to form a subset of variables.
    \end{enumerate}
    \end{observation}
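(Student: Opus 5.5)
The statement to prove is Observation~\ref{obs:subsets_assignments}: subsets of $\{x_1,\dots,x_n\}$ are in bijection with truth assignments to $\phi$. The plan is to write down the two maps explicitly and check that they are mutually inverse. Define $f$ from subsets to assignments by letting $f(X)$ be the assignment $\sigma_X$ with $\sigma_X(x_i)=\textsc{True}$ iff $x_i\in X$. Define $g$ from assignments to subsets by $g(\sigma)=\{x_i \mid \sigma(x_i)=\textsc{True}\}$. Both maps are well defined, since each $x_i$ is either in $X$ or not, and each variable receives exactly one truth value.

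The main step is to check that $g\circ f$ and $f\circ g$ are identities. First, $g(f(X))$ consists of exactly the variables that $\sigma_X$ sets to true, which are the members of $X$. Second, $f(g(\sigma))$ agrees with $\sigma$ on every variable $x_i$: it is $\textsc{True}$ precisely when $\sigma(x_i)=\textsc{True}$, and $\textsc{False}$ otherwise. This gives items (1) and (2) of the observation, and as a sanity check both sets have cardinality $2^n$.

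For later use in Step 1, I will also record that the correspondence preserves the relevant data. Under $f$, a clause $C_j$ is satisfied by $\sigma_X$ iff $\var(C_j)\cap X\neq\emptyset$, because $\phi$ is monotone and each clause is a disjunction of unnegated variables. In the graph $G_{(r,q)}$, this is exactly the condition that clause node $v_j$ has an in-neighbor $u_i$ with $x_i\in X$. Grouping subsets by the size $k$ and by the number $c$ of satisfied clauses then yields $\#_\phi=\sum_k \subcnt_{k,m}$. The sum should properly include $k=0$, but $\subcnt_{0,m}=0$ whenever $m\ge1$. No step here is a real obstacle; the only point needing care is to invoke monotonicity when identifying satisfied clauses with covered clause nodes.
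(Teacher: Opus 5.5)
Your proposal is correct and follows the paper, which justifies the observation just by stating the two maps (a subset goes to the assignment setting exactly its members to \textsc{True}, and an assignment goes to its set of true variables); your check that these maps are mutually inverse makes that argument explicit. Your side remarks also match the paper: monotonicity is needed only to identify satisfied clauses with clause nodes that have an in-neighbor in the subset, and starting $\#_\phi=\sum_{k=1}^{n}\subcnt_{k,m}$ at $k=1$ is harmless because $\subcnt_{0,m}=0$ for $m\ge 1$.
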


    By \Cref{obs:subsets_assignments}, we say that a subset of variables \emph{satisfies} $c$ clauses of $\phi$ if the corresponding assignment (setting variables in this subset to \textsc{True} and other variables to \textsc{False}) makes exactly $c$ clauses \textsc{True}.

    Therefore, we define $\subcnt_{k,c}$ as the number of size-$k$ subsets of variables that satisfies exactly $c$ clauses in $\phi$, where $c\in\{0,1,\dots,m\}$ and $k\in\{0,1,\dots,n\}$. Note that since $\phi$ is a monotone 2-CNF formula, all literals are unnegated, thus all variables are set to $\textsc{False}$ ($k=0$) is equivalent to no clauses are satisfied ($c=0$). Therefore, we have $\subcnt_{0,0}=1$,  $\subcnt_{k,0}=0$ for all $k\ge 1$ and $\subcnt_{0,c}=0$ for all $c\ge 1$.  
    Then we can express $\#_{\phi}$ using $\subcnt_{k,c}$:
    \begin{observation}\label{obs:satisfying_assignments}
        The number of satisfying assignments $\#_{\phi}$ equals the sum of size-$k$ subsets of variables that satisfy all $m$ clauses over all possible subset sizes $k$:
        \begin{equation}\label{eq:satisfying_assignments}
            \#_{\phi} = \sum_{k=1}^{n}\subcnt_{k,m}
        \end{equation}
    \end{observation}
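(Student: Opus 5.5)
The plan is to use the bijection of \Cref{obs:subsets_assignments}. Under it, satisfying assignments of $\phi$ are exactly the subsets of variables whose induced assignment makes all $m$ clauses true. I will partition these subsets by their cardinality $k$ and sum the counts $\subcnt_{k,m}$ over $k$.

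\textbf{Step 1.} Observe that the map sending a subset $X\subseteq\{x_1,\dots,x_n\}$ to the assignment that sets $X$ to \textsc{True} and the rest to \textsc{False} is a bijection onto all $2^n$ assignments. It follows that
\[
\#_{\phi}=\bigl|\{X : \text{the assignment induced by } X \text{ satisfies all } m \text{ clauses}\}\bigr|.
\]

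\textbf{Step 2.} Partition these subsets by size. By the definition of $\subcnt_{k,c}$, the number of size-$k$ subsets satisfying exactly $m$ clauses is $\subcnt_{k,m}$. Since ``satisfies all clauses'' is the same as ``satisfies exactly $m$ clauses'', this gives
\[
\#_{\phi}=\sum_{k=0}^{n}\subcnt_{k,m}.
\]

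\textbf{Step 3.} Drop the $k=0$ term. The boundary facts stated before the observation give $\subcnt_{0,c}=0$ for all $c\ge1$. This requires $m\ge1$, which I assume so that the instance is nontrivial; if $m=0$ the count is simply $2^n$. Concretely, the empty subset induces the all-\textsc{False} assignment, and every monotone clause is false under it. Removing this term yields \Cref{eq:satisfying_assignments}.

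The only delicate point is this boundary case $k=0$. It relies on monotonicity, since every clause contains only unnegated literals, together with the assumption $m\ge 1$. Everything else is a direct regrouping of a finite count.
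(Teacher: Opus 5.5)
Your proposal is correct and matches the paper's reasoning. You use the subset--assignment bijection of \Cref{obs:subsets_assignments}, group the satisfying subsets by size, and discard $k=0$ because the all-\textsc{False} assignment satisfies no monotone clause, so $\subcnt_{0,m}=0$ when $m\ge 1$. The boundary fact $\subcnt_{0,m}=0$ is all you need. Unlike the paper's accompanying claim $\subcnt_{k,0}=0$ for $k\ge 1$, it does not require every variable to occur in some clause, and your explicit $m\ge 1$ caveat is a point of care the paper leaves implicit.
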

Thus, if we can compute $\subcnt_{k,c}$ for all $c\in\{1,\dots,m\}$ and $k\in\{1,\dots,n\}$, we can obtain $\#_{\phi}$ in polynomial time.

\textbf{Step 2: Express $\shap_{(r,q)}(\seed_1)$ in terms of $\subcnt_{k,c}$.~} 
We relate the subsets of variable nodes in our construction to subsets of variables in $\phi$ (see \Cref{claim:clause_satisfaction}), showing that $\shap_{(r,q)}(\seed_1)$ can be rewritten using $\subcnt_{k,m}$ (see \Cref{claim:shapley_nck}). So far, we connect the counting of satisfying assignments to the Shapley value of $\seed_1$ using $\subcnt_{k,m}$.
\begin{claim}\label{claim:clause_satisfaction}
    Given a subset of variable nodes $\Sset \subseteq U$, for any clause node $v_j \in V$:
    \begin{equation}\label{eq:clause_satisfaction}
        \mathbbm{1}[\exists u_i \in \Sset \text{ s.t. } (u_i, v_j) \in E_{(r,q)}] = \mathbbm{1}[\exists x_i \in \var(\Sset) \text{ s.t. } x_i \in \var(C_j)]
        %\mathbbm{1}[C_j \text{ is satisfied by } \var(\Sset)]
    \end{equation}
    where $\var(\Sset)$ denotes the set of variables corresponding to variable nodes in $\Sset$ for each subset $S \subseteq U$. 
    
    %$C_j$ is satisfied by $\var(\Sset)$.
    % \begin{equation}
    %     \mathbbm{1}[\exists u_i \in \Sset \text{ s.t. } (u_i, v_j) \in E_{(p,q)}] = \mathbbm{1}[\exists x_i \in \var(\Sset) \text{ s.t. } x_i \in \var(C_j)] = \mathbbm{1}[C_j \text{ is satisfied by } \var(\Sset)]
    % \end{equation}
\end{claim}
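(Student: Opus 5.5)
The plan is to unfold both indicators using the edge set of $G_{(r,q)}$ and show they are equivalent to the same predicate.

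\textbf{Left side.} By construction, the only edges entering a clause node $v_j$ come from variable seed nodes, namely $(u_i,v_j)$ exactly when $x_i\in\var(C_j)$. No auxiliary seed $t_\ell$, no other clause node, and not the sink $s$ has an edge into $v_j$. Hence, for a fixed $\Sset\subseteq U$, the condition ``$\exists u_i\in\Sset$ with $(u_i,v_j)\in E_{(r,q)}$'' is equivalent to ``$\exists u_i\in\Sset$ with $x_i\in\var(C_j)$''.

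\textbf{Right side.} Apply the bijection $u_i\mapsto x_i$ that defines $\var(\Sset)$. The statement $u_i\in\Sset$ is the same as $x_i\in\var(\Sset)$, so the condition above becomes $\exists x_i\in\var(\Sset)$ with $x_i\in\var(C_j)$, which is the right-hand indicator. I would write this as a two-way implication: pick the witness $u_i$ on one side and take the corresponding $x_i$ on the other, and conversely.

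\textbf{Interpretation and use.} Since $\phi$ is monotone, the right-hand predicate says exactly that $C_j$ is satisfied by the assignment setting $\var(\Sset)$ true (Observation~\ref{obs:subsets_assignments}). Summing over $j$ then turns the exponent in Lemma~\ref{lemma:shap_t1} into the number of clauses satisfied by $\var(\Sset)$. This is what the next claim needs to regroup the sum by $\subcnt_{k,c}$.

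\textbf{Main obstacle.} There is essentially none. The one point to verify carefully is that the construction adds no other in-edges to clause nodes, and that edges with probability $1$ make ``activated'' coincide with ``has an edge from $\Sset$''.
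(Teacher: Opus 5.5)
Your proof is correct and follows essentially the same route as the paper's: the edge $(u_i,v_j)$ is present exactly when $x_i\in\var(C_j)$, and the correspondence $u_i\leftrightarrow x_i$ transfers the existential witness in both directions. Your extra checks are harmless but not needed for this claim, since the quantifier already ranges only over $\Sset\subseteq U$ (so in-edges from other nodes are irrelevant) and probability-$1$ activation matters for Lemma~\ref{lemma:marginal_t1}, not here.
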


\begin{proof}
    By the construction of each $G_{(r,q)}$, an edge $(u_i, v_j)$ exists if and only if variable $x_i$ appears in clause $C_j$ in $\phi$. Therefore, $\exists u_i \in \Sset$ such that $(u_i, v_j) \in E_{(r,q)}$ if and only if $\exists x_i \in \var(\Sset)$ such that $x_i \in \var(C_j)$.
\end{proof}    

    Using \Cref{claim:clause_satisfaction}, the Shapley value of $\seed_1$ in each $G_{(r,q)}$ can be reduced to \Cref{eq:shapley_nck}, in terms of $\subcnt_{k,c}$:

    \begin{claim}\label{claim:shapley_nck}
        The Shapley value of $\seed_1$ in each instance $G_{(r,q)}$ can be expressed as:
        \begin{equation}\label{eq:shapley_nck}
            \begin{aligned}
            &\shap_{(r,q)}(\seed_1) = \frac{1}{n+r} \\
            & +\sum_{k=1}^{n} \sum_{c=1}^{m} \frac{k!(n+r-k-1)!}{(n+r)!} \cdot (\frac{1}{q+1})^{c} \cdot \subcnt_{k,c}
            \end{aligned}
        \end{equation}
    \end{claim}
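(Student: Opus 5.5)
The plan is to start from \Cref{lemma:shap_t1} and regroup its inner sum according to how many clause nodes are reached. Fix $k\in\{1,\dots,n\}$ and consider a subset $\Sset\subseteq U$ with $|\Sset|=k$. The exponent of $\frac{1}{q+1}$ is $\sum_{v_j\in V}\mathbbm{1}[\exists u_i\in\Sset \text{ s.t. } (u_i,v_j)\in E_{(r,q)}]$. By \Cref{claim:clause_satisfaction}, each indicator equals $\mathbbm{1}[\exists x_i\in\var(\Sset)\text{ s.t. } x_i\in\var(C_j)]$.

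Because $\phi$ is monotone, $C_j$ is a disjunction of unnegated variables. So under the assignment of \Cref{obs:subsets_assignments} (variables in $\var(\Sset)$ set to \textsc{True}, all others \textsc{False}), $C_j$ is satisfied exactly when some variable of $C_j$ lies in $\var(\Sset)$. Hence the exponent equals the number $c(\Sset)$ of clauses satisfied by $\var(\Sset)$.

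The map $\Sset\mapsto\var(\Sset)$ is a bijection between size-$k$ subsets of $U$ and size-$k$ subsets of variables, since each $u_i$ corresponds to the single variable $x_i$. Partitioning the size-$k$ subsets by the value $c=c(\Sset)\in\{0,\dots,m\}$ therefore gives
\[
\sum_{\Sset\subseteq U,\ |\Sset|=k}\Big(\tfrac{1}{q+1}\Big)^{c(\Sset)}=\sum_{c=0}^{m}\Big(\tfrac{1}{q+1}\Big)^{c}\subcnt_{k,c},
\]
directly from the definition of $\subcnt_{k,c}$.

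It remains to drop the $c=0$ term. For $k\ge1$ we have $\subcnt_{k,0}=0$, as noted before \Cref{obs:satisfying_assignments}. This is the one point that needs care: it holds only if every variable occurs in some clause. If not, a nonempty $\Sset$ of unused variables satisfies zero clauses. I would make this explicit by assuming, without loss of generality, that every variable of $\phi$ appears in at least one clause. Otherwise I would remove the $n'$ unused variables and multiply the count by $2^{n'}$, which preserves the reduction. With $\subcnt_{k,0}=0$ for $k\ge1$, the inner sum starts at $c=1$.

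Substituting this into \Cref{eq:shap_t1_appendix} and multiplying through by the coefficient $\frac{k!(n+r-k-1)!}{(n+r)!}$ yields \Cref{eq:shapley_nck}, with the constant term $\frac{1}{n+r}$ unchanged. The only real obstacle is justifying $\subcnt_{k,0}=0$; everything else is regrouping a finite sum.
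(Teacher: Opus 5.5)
Your proposal is correct and follows the paper's own argument. Both rewrite the exponent via \Cref{claim:clause_satisfaction} as the number of clauses satisfied by $\var(\Sset)$ (using monotonicity), partition the size-$k$ subsets by that count, and substitute into \Cref{lemma:shap_t1}. Your explicit handling of $\subcnt_{k,0}=0$ actually fixes a silent assumption in the paper, which drops the $c=0$ term because ``at least one clause is satisfied'' when $k\ge 1$; that holds only if every variable occurs in some clause, and your preprocessing (delete the $n'$ unused variables, multiply the final count by $2^{n'}$) makes it harmless for the reduction.
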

    \begin{proof}
        By \Cref{claim:clause_satisfaction}, the inner sum of $\shap_{(r,q)}(\seed_1)$ in each instance $G_{(r,q)}$ can be rewrite as:
        \begin{equation}
            \begin{aligned}
            &\sum_{S \subseteq U \atop |S|=k} (\frac{1}{q+1})^{\sum_{j} \mathbbm{1}[\exists u_i \in S \text{ s.t. } (u_i, v_j) \in E_{(r,q)}]}\\
            &= \sum_{\var(\Sset) \subseteq X \atop |\var(\Sset)|=k} (\frac{1}{q+1})^{\sum_{j} \mathbbm{1}[\exists x_i \in \var(\Sset) \text{ s.t. } x_i \in \var(C_j)]}
            \end{aligned}
        \end{equation}
        where $X$ denotes the set of variables $\{x_1, \dots, x_n\}$ in $\phi$.
    
        Since a clause $C_j$ in $\phi$ is satisfied if and only if at least one of its variables is set to \textsc{True}. Therefore, having $\exists x_i \in \var(S) \text{ s.t. } x_i \in \var(C_j)$ precisely captures when clause $C_j$ is satisfied by the assignment where variables in $\var(S)$ are set to \textsc{True} and remaining variables are set to \textsc{False}. Thus:
        \begin{equation}
            \begin{aligned}
            &\sum_{j} \mathbbm{1}[\exists x_i \in \var(S) \text{ s.t. } x_i \in \var(C_j)] \\
            & = \sum_{j} \mathbbm{1}[C_j \text{ is satisfied by } \var(S)]\\
            %&= |\{C_j \in \phi : \exists x_i \in \var(S) \text{ s.t. } x_i \in \var(C_j)\}|\\
            & = |\{C_j \in \phi: C_j \text{ is satisfied by } \var(S)\}|
            \end{aligned}
        \end{equation}
        which counts the number of clauses satisfied by the variables in $\var(\Sset)$.
    
        For any fixed size $k\in [1,n]$, we can partition these set of variables $\var(\Sset)$ with $|\var(\Sset)|=k$ according to the number of clauses they satisfy.  Note that since at least one variable is set to \textsc{True} ($k\geq 1$), at least one clause is satisfied. Then we can simplify the inner sum as:
        \begin{equation}\label{eq:shapley_nck_inner}
        \begin{aligned}
            &\sum_{\var(\Sset) \subseteq X \atop |\var(\Sset)|=k} (\frac{1}{q+1})^{|\{C_j \in \phi: C_j \text{ is satisfied by } \var(S)\}|}\\
            &= \sum_{c=1}^m \sum_{\var(\Sset) \subseteq X,|\var(\Sset)|=k \atop |\{C_j \in \phi: C_j \text{ is satisfied by } \var(S)\}| = c} (\frac{1}{q+1})^c \\
            %&= \sum_{c=1}^m \sum_{S \subseteq U,|S|=k \atop |\{C_j : C_j \text{ satisfied by } \var(S)\}| = c} (\frac{1}{q+1})^c \\
            &= \sum_{c=1}^{m} \subcnt_{k,c} (\frac{1}{q+1})^{c}
        \end{aligned}
        \end{equation}
        %where $\subcnt_{c,k}$ is the number of size-$k$ subsets of variables that satisfy exactly $c$ clauses.

        Substituting \Cref{eq:shapley_nck_inner} back into \Cref{eq:shap_t1_appendix} from \Cref{lemma:shap_t1}, we can rewrite the Shapley value expression as:
        \begin{equation}
            \begin{aligned}
            &\shap_{(r,q)}(\seed_1) = \frac{1}{n+r} \\
            &+ \sum_{k=1}^{n} \frac{k!(n+r-k-1)!}{(n+r)!} \cdot \sum_{c=1}^{m} \subcnt_{k,c} (\frac{1}{q+1})^{c}
            \end{aligned}
        \end{equation}
    Rearranging the terms, we obtain \Cref{eq:shapley_nck}.
    \end{proof}

\textbf{Step 3: Solve for $\subcnt_{k,c}$ via a matrix system.~} For each pair of parameters $(r,q)$, we obtain the $\shap_{(r,q)}(\seed_1)$ of the form \Cref{eq:shapley_nck}. We collect all $m\cdot n$ equations into a matrix system:
\begin{equation}\label{eq:system_linear}
    A \times \subcntmat \times B = D
\end{equation}
where:
\begin{itemize} 
    \item $A$ is an $n \times n$ matrix with entries $a_{p,k} = \frac{k!(n+p-k-1)!}{(n+p)!}$ for $k,p \in [1,n]$:
        \begin{equation}\label{eq:matrix_A}
        A = \begin{pmatrix}
            \frac{1!(n-1)!}{(n+1)!} & \frac{2!(n-2)!}{(n+1)!} & \cdots & \frac{n!0!}{(n+1)!} \\  
            \frac{1!(n)!}{(n+2)!} & \frac{2!(n-1)!}{(n+2)!} & \cdots & \frac{n!1!}{(n+2)!} \\  
            \vdots & \vdots & \ddots & \vdots \\
            \frac{1!(2n-2)!}{(2n)!} & \frac{2!(2n-3)!}{(2n)!} & \cdots & \frac{n!(n-1)!}{(2n)!}
            \end{pmatrix}
        \end{equation}
    \item $B$ is an $m \times m$ matrix with entries $b_{c,q} = (\frac{1}{q+1})^{c}$ for $c,q \in [1,m]$:
    \begin{equation}\label{eq:matrix_B}
        B =\begin{pmatrix}
        (\frac{1}{2})^{1} & (\frac{1}{3})^{1} & \cdots & (\frac{1}{m+1})^{1} \\
        \vdots & \vdots & \ddots & \vdots \\
        (\frac{1}{2})^{m} & (\frac{1}{3})^{m} & \cdots & (\frac{1}{m+1})^{m}
        \end{pmatrix}
    \end{equation}
    \item $D$ is an $n \times m$ matrix with entries $d_{r,q} = \shap_{(r,q)}(\seed_1) - \frac{1}{n+r}$ for $r \in [1,n]$ and $q \in [1,m]$:
        \begin{equation}\label{eq:matrix_D}
            D = \begin{pmatrix}
            \shap_{(1,1)}(\seed_1) - \frac{1}{n+1} & \cdots & \shap_{(1,m)}(\seed_1) - \frac{1}{n+1} \\
            \vdots & \ddots & \vdots \\
            \shap_{(n,1)}(\seed_1) - \frac{1}{2n} & \cdots & \shap_{(n,m)}(\seed_1) - \frac{1}{2n}
            \end{pmatrix}
        \end{equation}
    \item $\subcntmat$ is an $n \times m$ matrix with each entry is $\subcnt_{k,c}$ for $k \in [1,n]$ and $c \in [1,m]$:
    \begin{equation}\label{eq:matrix_subcnt}
        \subcntmat = \begin{pmatrix}
        \subcnt_{1,1} & \subcnt_{1,2} & \cdots & \subcnt_{1,m} \\
        \vdots & \vdots & \ddots & \vdots \\
        \subcnt_{n,1} & \subcnt_{n,2} & \cdots & \subcnt_{n,m}
        \end{pmatrix}
    \end{equation}
\end{itemize}
Now, each $\shap_{(r,q)}(\seed_1)$ can be expressed as:
    \begin{equation}
        \begin{aligned}
            &\shap_{(r,q)}(\seed_1)- \frac{1}{n+r} \\
            &= d_{r,q} \\
            &=\sum_{k=1}^{n} \sum_{c=1}^{m} a_{r,k} \cdot \subcnt_{k,c} \cdot b_{c,q}
        \end{aligned}
    \end{equation}

$A$ and $B$ are known coefficient matrices, $D$ is known if there exists an oracle for computing the Shapley value, and $\subcntmat$ is the matrix of unknown variables $\subcnt_{k,c}$. Then we will show that $A$ and $B$ are non-singular, and thus we can solve for $\subcntmat$ in polynomial time.
\begin{claim}\label{claim:non_singular}
    Both matrices $A$ (\Cref{eq:matrix_A}) and $B$ (\Cref{eq:matrix_B}) are non-singular.
\end{claim}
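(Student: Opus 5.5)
We will prove the two parts of \Cref{claim:non_singular} separately, in each case by recognizing a classical structure in the entries of the matrix.

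\textbf{Matrix $B$.} Set $x_q = \frac{1}{q+1}$ for $q \in [1,m]$. These are $m$ pairwise distinct nonzero reals, and $b_{c,q} = x_q^{c}$. Factor $B = W \cdot \mathrm{diag}(x_1,\dots,x_m)$, where $W$ is the Vandermonde matrix with entries $w_{c,q} = x_q^{c-1}$. This gives $\det B = \bigl(\prod_{q} x_q\bigr)\prod_{q<q'}(x_{q'}-x_q)$. The first product is nonzero because every $x_q \neq 0$, and the second is nonzero because the $x_q$ are distinct. Hence $B$ is non-singular. This part is routine.

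\textbf{Matrix $A$.} Here the key step is to write the entries as Beta integrals. Since $\int_0^1 x^{a}(1-x)^{b}\,dx = \frac{a!\,b!}{(a+b+1)!}$, we have
\[
a_{p,k} = \frac{k!\,(n+p-k-1)!}{(n+p)!} = \int_0^1 (1-x)^{p-1}\, x^{k}(1-x)^{n-k}\,dx ,
\]
for $p,k\in[1,n]$. To show $A$ is non-singular, we show its kernel is trivial. Suppose $A\mathbf{c}=0$ for some $\mathbf{c}=(c_1,\dots,c_n)^\top$, and define $h(x) = \sum_{k=1}^n c_k x^k(1-x)^{n-k}$. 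The condition $A\mathbf{c}=0$ says $\int_0^1 (1-x)^{p-1} h(x)\,dx = 0$ for all $p=1,\dots,n$. The polynomials $(1-x)^{p-1}$ for $p=1,\dots,n$ span all polynomials of degree at most $n-1$. Therefore $h$, which has degree at most $n$, is $L^2[0,1]$-orthogonal to every polynomial of degree at most $n-1$.

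By uniqueness of orthogonal polynomials, this forces $h = \lambda \tilde P_n$, where $\tilde P_n(x)=P_n(2x-1)$ is the shifted Legendre polynomial and $\lambda$ is a scalar. Every term of $h$ contains the factor $x^k$ with $k\ge 1$, so $h(0)=0$. On the other hand $\tilde P_n(0)=P_n(-1)=(-1)^n\neq 0$. Hence $\lambda=0$ and $h\equiv 0$.

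Finally, the polynomials $x^k(1-x)^{n-k}$ for $k=1,\dots,n$ are a subset of the Bernstein basis of degree $n$, so they are linearly independent. From $h\equiv 0$ we conclude $c_k = 0$ for all $k$. Thus $\ker A=\{0\}$ and $A$ is non-singular.

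The main obstacle is the argument for $A$. Its entries have no obvious Vandermonde or Cauchy form, so the crux is the Beta-integral representation together with the orthogonality argument. As a fallback, if the Legendre step needs more care, we can use a direct argument instead. Clearing the common factor $\frac{1}{(n+p)!}$ in row $p$ does not affect singularity, and it leaves $k!\,(n+p-k-1)!$. Dividing column $k$ by $k!$ and by $(n-k)!$ then gives entries $\frac{(n-k+p-1)!}{(n-k)!}$, which is a polynomial in $j=n-k$ of degree $p-1$ with leading coefficient $1$. Row-reducing these rows to monomials $j^{p-1}$ yields a Vandermonde matrix in the distinct nodes $j=0,\dots,n-1$, which is non-singular. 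We would likely present this elementary version in the final proof.

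Once both parts hold, $\Gamma = A^{-1} D B^{-1}$ can be computed by Gaussian elimination in polynomial time. This completes the reduction.
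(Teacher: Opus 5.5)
Your proof is correct. For $B$ it matches the paper: the paper transposes and divides row $q$ of $B^\top$ by $\frac{1}{q+1}$, which is your factorization $B=W\,\mathrm{diag}(x_1,\dots,x_m)$. For $A$ you take a different route.

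The paper multiplies row $p$ by $(n+p)!$, divides column $k$ by $k!$, and reverses the columns. This gives the Hankel matrix $[(i+j)!]_{i,j=0}^{n-1}$, and the paper then cites the known value $\prod_{i=0}^{n-1}(i!)^2$ of its determinant.

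Your main argument uses the Beta-integral representation to turn $A\mathbf{c}=0$ into orthogonality of $h$ against $\mathcal{P}_{n-1}$ on $[0,1]$. Every step checks out: the exponents in the Beta identity, the spanning by $(1-x)^{p-1}$, the one-dimensionality of the orthogonal complement of $\mathcal{P}_{n-1}$ in $\mathcal{P}_n$, $\tilde P_n(0)=(-1)^n$, and the linear independence of the Bernstein polynomials. You can even drop the Legendre step. Write $h(x)=x\,g(x)$ with $g=\sum_k c_k x^{k-1}(1-x)^{n-k}\in\mathcal{P}_{n-1}$. Testing against $g$ itself gives $\int_0^1 x\,g(x)^2\,dx=0$, hence $g\equiv 0$.

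Your fallback is the version closest to the paper. Dividing column $k$ additionally by $(n-k)!$ turns the entries into the monic polynomials $(j+1)\cdots(j+p-1)$ in $j=n-k$. Row reduction then leaves a Vandermonde matrix on the nodes $0,\dots,n-1$. Applied to the paper's Hankel matrix, this is exactly the standard proof of $\det[(i+j)!]=\prod_i (i!)^2$.

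So the paper's argument is shorter but rests on an external determinant identity. Both of yours are self-contained. The fallback is the more elementary of the two and also gives $\det A$ explicitly.
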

\begin{proof}
    For the coefficient matrix $A$ in \Cref{eq:matrix_A}, we multiply each row $i \in [1,n]$ by $(n+i)!$; divide each column $j \in [1,n]$ by $j!$; and reverse the order of columns. Through these transformations, we get the following matrix denoted as $A^{\prime}$:
\begin{equation}
    \begin{split}
   A^{\prime} =& \begin{pmatrix}
   0! & 1! & \cdots & (n-1)! \\
   \vdots & \vdots & \ddots & \vdots \\
   (n-1)! & n! & \cdots & (2n-2)!
   \end{pmatrix}
   \end{split}
\end{equation}
The entries of matrix $A^{\prime}$ is $a^{\prime}_{i,j} = (i-1+j-1)!$, and then 
its determinant is $\det(A^{\prime}) = \prod_{i=0}^{n-1} i!i! \neq 0$, thus $A^{\prime}$ is non-singular ~\cite{bacher2002determinants}. Because these algebraic manipulations preserve non-singularity, $A$ is also non-singular.

For the coefficient matrix $B$ in \Cref{eq:matrix_B}, we take the transpose of $B$:
\begin{equation}
    B^\top = \begin{pmatrix}
    (\frac{1}{2})^{1} & (\frac{1}{2})^{2} & \cdots & (\frac{1}{2})^{m} \\
    \vdots & \vdots & \ddots & \vdots \\
    (\frac{1}{m+1})^{1} & (\frac{1}{m+1})^{2} & \cdots & (\frac{1}{m+1})^{m}
    \end{pmatrix}
\end{equation}
Divide each row by $(\frac{1}{i+1})$, we get a Vandermonde matrix denoted as $(B^\top)^{\prime}$:
\begin{equation}
    {B^\top}^{\prime} = \begin{pmatrix}
    1 & \frac{1}{2} & \cdots & (\frac{1}{2})^{m-1} \\
    1 &\frac{1}{3} & \cdots & (\frac{1}{3})^{m-1} \\
    \vdots & \vdots & \ddots & \vdots \\
    1 & \frac{1}{m+1} & \cdots & (\frac{1}{m+1})^{m-1}
    \end{pmatrix}
\end{equation}
Matrix $(B^\top)^{\prime}$ is a Vandermonde matrix, so the determinant is $\det((B^\top)^{\prime}) = \prod_{1 \leq i < j \leq m} (\frac{1}{j+1} - \frac{1}{i+1}) \neq 0$ and it is a non-singular matrix. Thus $\det(B) = \det(B^\top) = \det((B^\top)^{\prime}) \cdot \prod_{i=1}^{m} \frac{1}{i+1} \neq 0$, implying that $B$ is non-singular.
\end{proof}

Finally, we show that we can solve the matrix system \Cref{eq:system_linear} in polynomial time and thus compute the total number of satisfying assignments $\#_{\phi}$:
\begin{claim}\label{claim:solve_subcntmat}
    If we can compute $\shap_{(r,q)}(\seed_1)$ for each $G_{(r,q)}$ in polynomial time, we can compute the number of satisfying assignments $\#_{\phi}$ in polynomial time.
\end{claim}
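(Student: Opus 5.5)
The plan is to make the solve step explicit. Given the oracle, we build the $nm$ instances $G_{(r,q)}$, query for each $\shap_{(r,q)}(\seed_1)$, and assemble $D$. Then we recover $\subcntmat = A^{-1} D B^{-1}$, which is well defined by \Cref{claim:non_singular}. Finally we output $\#_{\phi}=\sum_{k=1}^{n}\subcnt_{k,m}$ by \Cref{obs:satisfying_assignments}.

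\textbf{Step 1: polynomial-time construction and queries.} Each $G_{(r,q)}$ has $n+m+r+1 \le 2n+m+1$ nodes and $O(m+n)$ edges, with rational probabilities $1$ or $\frac{q}{q+1}$. Building all $nm$ graphs therefore takes polynomial time, and so do the $nm$ oracle calls. By \Cref{claim:shapley_nck}, each entry $d_{r,q}=\shap_{(r,q)}(\seed_1)-\frac{1}{n+r}$ equals $(A\subcntmat B)_{r,q}$, so the system in \Cref{eq:system_linear} holds exactly.

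\textbf{Step 2: solving the system in polynomial time.} Since $A$ and $B$ are invertible, $\subcntmat$ is the unique solution. It can be computed with two Gaussian eliminations over $\mathbb{Q}$: first solve $AY=D$, then $\subcntmat B = Y$, i.e.\ $B^\top \subcntmat^\top = Y^\top$. Equivalently, one can view the system as one linear system with matrix $A\otimes B^\top$ of size $nm\times nm$, which is nonsingular because $\det(A\otimes B^\top)=\det(A)^m\det(B)^n$.

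\textbf{Main obstacle: bit complexity.} Gaussian elimination over the rationals runs in time polynomial in the dimension and in the input bit length (Edmonds/Bareiss). So we must check that all entries have polynomially many bits. The entries $a_{r,k}=\frac{k!(n+r-k-1)!}{(n+r)!}$ have numerators and denominators bounded by $(2n)!$, which is $O(n\log n)$ bits. The entries $b_{c,q}=(q+1)^{-c}$ have $O(m\log m)$ bits. The Shapley values are rationals with bounded bit length: by \Cref{claim:shapley_nck}, each is a combination of the integers $\subcnt_{k,c}\le 2^n$ with these coefficients. Hence the exact rational output of the oracle has polynomial size, and $D$ does too.

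Once $\subcntmat$ is obtained, summing the last column gives $\#_{\phi}$ in polynomial time. This completes the Turing reduction from \#2-CNF, so computing $\shap(\seed_1)$ is $\#P$-hard.
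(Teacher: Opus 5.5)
Your proposal is correct and follows the same route as the paper: assemble $D$ from the $nm$ oracle answers, recover $\subcntmat = A^{-1} D B^{-1}$ using the non-singularity of $A$ and $B$ (\Cref{claim:non_singular}), and output $\sum_{k=1}^{n}\subcnt_{k,m}$. The Gaussian-elimination argument and the bit-length bounds on the entries of $A$, $B$, and $D$ only spell out the polynomial-time step, which the paper asserts without detail.
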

\begin{proof}
By \Cref{claim:non_singular}, we can solve all $\subcnt_{k,c}$ by:
\begin{equation}\label{eq:solve_subcntmat}
    \subcntmat = A^{-1} \times D \times B^{-1}
\end{equation}
If we can compute $\shap_{(p,q)}(\seed_1)$ for each $G_{(p,q)}$ in polynomial time, we can compute all $\subcnt_{k,c}$ in polynomial time through \Cref{eq:solve_subcntmat}. Then we can sum $\subcnt_{k,m}$ over all $k$ to compute $\#_{\phi}$ in polynomial time.
\end{proof}

\section{Details of $\permuteMC$}\label{sec:permute_mc_appendix}
This section provides precise pseudocode, proofs of unbiasedness and additive error guarantees, and runtime analysis for the Monte–Carlo permutation estimator $\permuteMC$.

\subsection{\textbf{Pseudocode}}
We present the algorithm $\permuteMC$ in \Cref{alg:approx_perm_mc}. The number of permutations $n_{\pi}$ and the number of Monte Carlo simulations $n_{\text{MC}}$ per value estimation are determined by \Cref{prop:permute_mc_property}, based on the approximation parameters $\epsilon$ and $\delta$ (Lines~\ref{line:input_n_pi}--\ref{line:input_n_mc}). The algorithm samples $n_{\pi}$ random permutations of $\seedset$ (Line~\ref{line:sample_perm}). For each permutation $\pi_i$, it iterates over the seed nodes in the order defined by $\pi_i$ (Line~\ref{line:seed_loop}), maintaining a set $S$ of previously visited seeds (initialized to $\emptyset$ in Line~\ref{line:init_S}). At each step, the marginal contribution of the current seed $\seed_j$ is estimated as $\valsim(S \cup \{\seed_j\}) - \valsim(S)$, where $\valsim(\cdot)$ denotes the average influence computed over $n_{\text{MC}}$ simulations (Lines~\ref{line:val_union}--\ref{line:val_S}). These marginal contributions are accumulated in $\textit{est}[\seed_j]$ (Line~\ref{line:update_est}), and $S$ is updated accordingly (Line~\ref{line:update_S}). Finally, the estimated Shapley value for each seed is obtained by averaging over all $n_{\pi}$ permutations (Line~\ref{line:return_permute_mc}).

\begin{algorithm}[!htbp]
    \caption{$\permuteMC(G, \seedset, \actprob, n_{\pi}, n_{\text{MC}})$}
    \label{alg:approx_perm_mc}
    \begin{algorithmic}[1]
    \Require Network $G(V,E)$, seed set $\seedset$, activation probabilities $\actprob$, approximation parameters \label{line:input}
   
    \Ensure Estimated Shapley values $\widehat{\shap(\seed)}$ for all $\seed \in \seedset$ \label{line:output}
    \State Number of permutation of $\seedset$: $n_{\pi} \gets O\left( \frac{|V|^2}{\epsilon^2} \ln (\frac{|\seedset|}{\delta}) \right)$ \label{line:input_n_pi}
    \State Number of simulations per value estimation: $n_{\text{MC}} \gets O\left( \frac{|V|^2}{\epsilon^2} \ln (\frac{|V|^2 |\seedset|^2}{\epsilon^2 \delta}) \right)$ \label{line:input_n_mc}
    \State Initialize $\textit{est}[\seed] \gets 0$ for all $\seed \in \seedset$ \label{line:init}
    \For{$i = 1$ to $n_{\pi}$} \label{line:perm_loop}
        \State Sample a random permutation $\pi_i$ of $\seedset$ \label{line:sample_perm}
        \State Initialize $S \gets \emptyset$ \label{line:init_S}
        \For{each $\seed_j$ in $\pi_i$} \label{line:seed_loop}
            \State $\hat{\val}(S \cup \{\seed_j\}) \gets$ average influence of $S \cup \{\seed_j\}$ from $n_{\text{MC}}$ simulations \label{line:val_union}
            \State $\hat{\val}(S) \gets$ average influence of $S$ from $n_{\text{MC}}$ simulations \label{line:val_S}
            \State $\textit{est}[\seed_j] \mathrel{+}= \hat{\val}(S \cup \{\seed_j\}) - \hat{\val}(S)$ \label{line:update_est}
            \State $S \gets S \cup \{\seed_j\}$ \label{line:update_S}
        \EndFor \label{line:end_seed_loop}
    \EndFor \label{line:end_perm_loop}
    \State \Return $\widehat{\shap(\seed)} \gets \frac{1}{n_{\pi}} \textit{est}[\seed]$ for all $\seed \in \seedset$ \label{line:return_permute_mc}
    \end{algorithmic}
    \end{algorithm}

\subsection{Theoretical guarantees}
Using Hoeffding's inequality~\cite{hoeffding1994probability}, we establish the following theoretical performance guarantees on both accuracy %(\Cref{prop:permute_mc_property}) 
and efficiency.

\begin{restatable}{proposition}{permutemcproperty}\label{prop:permute_mc_property}
Given $(G(\nodeset, E), \seedset, \actprob)$, $\permuteMC$ returns $\widehat{\shap(\seed)}$ for all $\seed \in \seedset$ such that: (1) $\mathbb{E}[\widehat{\shap(\seed)}] = \shap(\seed)$, $\forall \seed \in \seedset$; (2) for any $\epsilon > 0$ and $\delta \in (0,1)$, with probability at least $1 - \delta$, $|\widehat{\shap(\seed)} - \shap(\seed)| \le \epsilon$ holds for all $\seed \in \seedset$; (3) the running time is $O\left( \frac{|V|^4}{\epsilon^4} \ln( \frac{|\seedset|}{\delta} + \frac{|V|^2 |\seedset|^2}{\epsilon^2 \delta}) \cdot |\seedset| \cdot |E| \right)$. 
\end{restatable}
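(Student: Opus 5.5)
Part (1) will follow by a two-level conditioning argument. Fix a seed $\seed$ and write $\widehat{\shap(\seed)} = \frac{1}{n_{\pi}}\sum_{i=1}^{n_{\pi}} X_i$, where $X_i = \valsim(S_{\pi_i,\seed}\cup\{\seed\}) - \valsim(S_{\pi_i,\seed})$. Conditioned on $\pi_i$, each $\valsim(\cdot)$ is an average of $n_{\text{MC}}$ independent propagation simulations on $G_{T,S}$. Each simulated count is an unbiased sample of $\val(\cdot)$ by \Cref{def:value_function}. So $\mathbb{E}[X_i \mid \pi_i] = \val(S_{\pi_i,\seed}\cup\{\seed\}) - \val(S_{\pi_i,\seed})$. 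Averaging over a uniformly random $\pi_i$ then gives exactly Eq.~(\ref{eq:shapley_value_1}). Linearity of expectation finishes (1).

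For (2), I will split the error into a permutation-sampling term and a simulation term:
\begin{equation*}
|\widehat{\shap(\seed)} - \shap(\seed)| \le \Bigl|\tfrac{1}{n_\pi}\textstyle\sum_i (X_i - M_i)\Bigr| + \Bigl|\tfrac{1}{n_\pi}\textstyle\sum_i M_i - \shap(\seed)\Bigr|,
\end{equation*}
where $M_i$ is the true marginal contribution under $\pi_i$.

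\textbf{Second term.} The $M_i$ are i.i.d., lie in $[0,|V|]$, and have mean $\shap(\seed)$. Hoeffding's inequality with $n_\pi = O(\frac{|V|^2}{\epsilon^2}\ln\frac{|\seedset|}{\delta})$ makes this term at most $\epsilon/2$ except with probability $\delta/(2|\seedset|)$.

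\textbf{First term.} I will bound every individual estimate $\valsim(\cdot)$ used anywhere in the algorithm to within $\epsilon/4$. Each simulated count lies in $[0,|V|]$, so Hoeffding gives failure probability $2\exp(-n_{\text{MC}}\epsilon^2/(8|V|^2))$ per estimate. There are $2 n_\pi |\seedset|$ such estimates in total. A union bound over all of them forces $n_{\text{MC}} = O(\frac{|V|^2}{\epsilon^2}\ln\frac{n_\pi|\seedset|}{\delta})$. Substituting $n_\pi$ yields the stated $\ln(\frac{|V|^2|\seedset|^2}{\epsilon^2\delta})$ form, up to the absorbed $\ln\ln$ factor. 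On this good event each $|X_i - M_i|\le \epsilon/2$, so their average is at most $\epsilon/2$.

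A final union bound over the $|\seedset|$ seeds for the second term, together with the global bound for the first, gives total failure probability at most $\delta$.

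For (3), the count is direct. There are $n_\pi$ permutations. Each makes $|\seedset|$ marginal-contribution estimates, and each estimate uses $2n_{\text{MC}}$ simulations costing $O(|E|)$ each via BFS over live edges. This gives $O(n_\pi n_{\text{MC}}|\seedset||E|)$. Multiplying the two $|V|^2/\epsilon^2$ factors and bounding the product of the logarithms (loosely) by the stated expression gives the claimed runtime.

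The main obstacle is the bookkeeping in (2). One must make the union bound over all $2n_\pi|\seedset|$ value estimates consistent with the chosen $n_{\text{MC}}$. One must also handle the slightly informal way the logarithmic factors of $n_\pi$ and $n_{\text{MC}}$ are combined in the stated runtime; I will be loose with constants and absorb $\ln\ln$ terms.
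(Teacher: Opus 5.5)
Your proposal follows the paper's proof essentially step for step. It gets unbiasedness by conditioning on the sampled permutation and then averaging over a uniform $\pi$. It uses the same split of the error into a simulation term and a permutation-sampling term, each held to $\epsilon/2$ with failure probability $\delta/2$ via Hoeffding and union bounds. You union-bound the $2n_{\pi}|T|$ individual value estimates at accuracy $\epsilon/4$, whereas the paper union-bounds the $n_{\pi}|T|$ marginal differences at accuracy $\epsilon/2$ with range $2|V|$; this difference is immaterial. The runtime count $O(n_{\pi}\, n_{\text{MC}}\, |T|\, |E|)$ is also the same.

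Your final claim in (3) does not hold as written. The product $\ln\frac{|T|}{\delta}\cdot\ln\frac{|V|^2|T|^2}{\epsilon^2\delta}$ is not $O\bigl(\ln(\frac{|T|}{\delta}+\frac{|V|^2|T|^2}{\epsilon^2\delta})\bigr)$, as letting $\delta\to 0$ shows. So what you actually establish is the bound with the product of the two logarithms. The paper's own proof has the same gap and only concludes that the runtime is polynomial, so this is a defect of the stated bound rather than of your approach.
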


\proof
    The proof consists of the following three parts.

\textbf{(1) Unbiasedness of the Estimator.~}We first show that the estimator of the value function $\valsim$ is unbiased in \Cref{lemma:mc_unbiased}. Then, based on the unbiasedness of $\valsim$, and the uniformity of the permutation sampling, we obtain the unbiasedness of the Shapley value estimator $\widehat{\shap(\seed)}$.
\begin{lemma}[Unbiasedness of value function estimation]\label{lemma:mc_unbiased}
    Given (G, $\seedset$, $\actprob$), for any $S \subseteq \seedset$, the estimator $\valsim(S)$ obtained by $\permuteMC$ is unbiased, i.e., $\mathbb{E}[\valsim(S)] = \val(S)$.
    \end{lemma}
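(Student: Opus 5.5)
We need to show that $\mathbb{E}[\valsim(S)] = \val(S)$, where $\valsim(S)$ is the empirical average of the number of activated non-seed nodes over $n_{\text{MC}}$ independent simulations of the diffusion in $G_{T,S}$ started from $S$, under the given termination rule. The plan is to reduce this to a single simulation and then use linearity of expectation.

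First, we fix $S\subseteq \seedset$ and let $X_1,\dots,X_{n_{\text{MC}}}$ denote the outcomes of the simulations. Each $X_i$ is the number of nodes in $\nodeset\setminus\seedset$ activated by the $i$-th run. Each run is executed on $G_{T,S}$: the seeds in $\seedset\setminus S$ and their incident edges are removed, the process starts from active set $S$, and it stops after $\timeconst$ steps (or at quiescence for \fullstep). In every step, each newly active node attempts to activate each inactive out-neighbor independently with probability $\actprob(u,v)$.

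The key observation is that one simulation draws a sample from exactly the random process defining $\val(S)$ in \Cref{def:value_function}. The simulator uses the same graph $G_{T,S}$, the same seed set $S$, the same independent coin flips with probabilities $\actprob$, and the same termination rule. Hence the law of $X_i$ coincides with the law of $|\{u\in\nodeset\setminus\seedset : u \text{ activated in } (G_{T,S},\actprob,S)\}|$, and so $\mathbb{E}[X_i]=\val(S)$ for each $i$. If one prefers to be concrete, one can write $X_i=\sum_{u\in\nodeset\setminus\seedset}\mathbbm{1}[u \text{ activated in run } i]$. Then $\mathbb{E}[X_i]=\sum_u \Pr[u \text{ activated}]$, which matches the expansion of $\val(S)$ in the same form, as done in the proof of \Cref{lem:local_decomp}.

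Finally, since $\valsim(S)=\frac{1}{n_{\text{MC}}}\sum_i X_i$, linearity of expectation gives $\mathbb{E}[\valsim(S)]=\frac{1}{n_{\text{MC}}}\cdot n_{\text{MC}}\cdot\val(S)=\val(S)$. Independence of the runs is not needed for unbiasedness; it is only needed later for Hoeffding's inequality. The only step that requires care is the second one: verifying that the simulator matches the definition exactly. In particular, it must operate on $G_{T,S}$ rather than $G$, so that removed seeds neither activate nor relay influence, and it must count only non-seed activations. The simulator as specified satisfies both conditions, so no real obstacle is expected.
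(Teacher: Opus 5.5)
Your proposal is correct and follows essentially the same route as the paper, which writes $\valsim(S)$ as an average of per-node activation indicators $\mathbb{I}_{i,v}(S)$ over the $n_{\text{MC}}$ simulations. The paper then applies linearity of expectation, and the expected indicators sum to $\val(S)$. Your remark that each simulation must run on $G_{T,S}$ and count only non-seed activations is a point the paper leaves implicit.
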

    
    \begin{proof}
    For a fixed set of seed nodes $\Sset$, let $\mathbb{I}_{i,v}(\Sset)$ be the indicator random variable for whether $v \in \nodeset \setminus \seedset$ is activated in the $i$-th simulation that propogates from $\Sset$. Then the estimator $\valsim(\Sset)$ is computed as:
\begin{equation}
    \valsim(\Sset) = \frac{1}{n_{\text{MC}}} \sum_{i=1}^{n_{\text{MC}}} \sum_{v \in \nodeset \setminus \seedset} \mathbb{I}_{i,v}(\Sset)
    \end{equation}
    %where $m$ is the number of Monte Carlo simulations.
    
    Since all simulations are identically distributed, we take the expectation of $\valsim(\Sset)$ and obtain the following equation:
    \begin{align}
        \mathbb{E}[\valsim(S)] &= \mathbb{E}\left[\frac{1}{n_{\text{MC}}} \sum_{i=1}^{n_{\text{MC}}} \sum_{v \in \nodeset \setminus \seedset} \mathbb{I}_{i,v}(S)\right] \label{eq:mc_unbiased_1} \\
        &= \frac{1}{n_{\text{MC}}} \sum_{i=1}^{n_{\text{MC}}} \sum_{v \in \nodeset \setminus \seedset} \mathbb{E}[\mathbb{I}_{i,v}(S)] \label{eq:mc_unbiased_2} \\
        &= \frac{1}{n_{\text{MC}}} \sum_{i=1}^{n_{\text{MC}}} \sum_{v \in \nodeset \setminus \seedset} \Pr(\mathbb{I}_{i,v}(S) = 1) \label{eq:mc_unbiased_3} \\
        &= \sum_{v \in \nodeset \setminus \seedset} \Pr(v \text{ is activated } | S) \label{eq:mc_unbiased_4} \\
        &= \val(S) \label{eq:mc_unbiased_5}
    \end{align}

\end{proof}

Then we prove the unbiasedness of the Shapley value estimator:

\begin{proof}[Proof for \Cref{prop:permute_mc_property} (1)]\label{proof:shapley_unbiased}
    Taking expectations over both the random permutation $\pi$ and the MC randomness, we get:
    \begin{equation}
    \begin{aligned} 
    \mathbb{E}[\widehat{\shap(\seed)}] &= \mathbb{E}\left[\frac{1}{n_{\pi}} \sum_{i=1}^{n_{\pi}} \big(\valsim(\Sset_{\pi_i, \seed} \cup \{\seed\}) - \valsim(\Sset_{\pi_i, \seed})\big)\right]\\
    &= \frac{1}{n_{\pi}} \sum_{i=1}^{n_{\pi}} \mathbb{E}[\valsim(\Sset_{\pi_i, \seed} \cup \{\seed\}) - \valsim(\Sset_{\pi_i, \seed})]
    \end{aligned}
    \end{equation}

    Note that for any single permutation $\pi_i$, randomness comes from two sources: (1) the random sampling of the permutation $\pi_i$ and (2) the random Monte Carlo simulations used to estimate $\val$. Thus, using the law of total expectation:
        \begin{align} 
        &\mathbb{E}[\valsim(\Sset_{\pi_i, \seed} \cup \{\seed\}) - \valsim(\Sset_{\pi_i, \seed})] \\
        &= \mathbb{E}_{\pi_i \sim \Pi(\seedset)}\left[\mathbb{E}_{\valsim}\left[\valsim(\Sset_{\pi_i, \seed} \cup \{\seed\}) - \valsim(\Sset_{\pi_i, \seed})|\pi_i\right]\right] \label{eq:shapley_unbiased_1}\\
        &= \mathbb{E}_{\pi_i \sim \Pi(\seedset)}\left[\val(\Sset_{\pi_i, \seed} \cup \{\seed\}) - \val(\Sset_{\pi_i, \seed})\right] \label{eq:shapley_unbiased_2}\\
        &= \frac{1}{|\seedset|!} \sum_{\pi \in \Pi(\seedset)} [\val(\Sset_{\pi, \seed} \cup \{\seed\}) - \val(\Sset_{\pi, \seed})] \label{eq:shapley_unbiased_3}
        \end{align}
\Cref{eq:shapley_unbiased_2} follows from the unbiasedness of $\valsim$ in \Cref{lemma:mc_unbiased}. \Cref{eq:shapley_unbiased_3} follows from that each permutation $\pi_i$ is sampled uniformly at random from $\Pi(\seedset)$,

    Therefore:
    \begin{equation}
    \begin{aligned}
    &\mathbb{E}[\widehat{\shap(\seed)}] \\
    &=  \frac{1}{n_{\pi}} \sum_{i=1}^{n_{\pi}} \left( \frac{1}{|\seedset|!} \sum_{\pi \in \Pi(\seedset)} [\val(\Sset_{\pi, \seed} \cup \{\seed\}) - \val(\Sset_{\pi, \seed})] \right) \\
    &= \frac{1}{|\seedset|!} \sum_{\pi \in \Pi(\seedset)} [\val(\Sset_{\pi, \seed} \cup \{\seed\}) - \val(\Sset_{\pi, \seed})] \\
    &= \shap(\seed) \label{eq:shapley_unbiased_4}
    \end{aligned}
    \end{equation}
    %The last equality follows from the definition of Shapley value. This completes the proof that $\widehat{\shap(\seed)}$ is an unbiased estimator of $\shap(\seed)$.
\end{proof}

\textbf{(2) Additive approximation error bound.~}
We will use the Hoeffding's inequality to prove \Cref{prop:permute_mc_property} (2), as detailed in \Cref{lemma:shapley_error_bound}.

\begin{fact}[Hoeffding's inequality]\label{fact:hoeffding}
    Given independent random variables $X_1, X_2, \cdots, X_n$ with bounds $a_i \leq X_i \leq b_i$, for all $t>0$ it satisfies:
    \begin{equation}
    \Pr[|\sum_{i=1}^{n} X_i - \mathbb{E}[\sum_{i=1}^{n} X_i]| \geq t] \leq 2 \exp{(-\frac{2 t^2}{\sum_{i=1}^n(b_i-a_i)^2})}\\
\end{equation}
\end{fact}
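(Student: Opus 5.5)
The plan is the standard Chernoff--Cramér argument: bound the moment generating function of the centered sum and apply Markov's inequality. Write $S=\sum_{i=1}^n X_i$ and $Y_i = X_i - \mathbb{E}[X_i]$, so each $Y_i$ has mean zero and lies in an interval $[a_i-\mathbb{E}X_i,\, b_i-\mathbb{E}X_i]$ of length $b_i-a_i$.

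\emph{Upper tail.} For any $\lambda>0$, Markov's inequality applied to $e^{\lambda(S-\mathbb{E}S)}$ gives
$\Pr[S-\mathbb{E}S\ge t]\le e^{-\lambda t}\,\mathbb{E}[e^{\lambda\sum_i Y_i}]$.
By independence the expectation factorizes as $\prod_i \mathbb{E}[e^{\lambda Y_i}]$.

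The key ingredient is \emph{Hoeffding's lemma}. It states that for a mean-zero random variable $Y\in[a,b]$,
$\mathbb{E}[e^{\lambda Y}]\le \exp(\lambda^2(b-a)^2/8)$.

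Granting the lemma, the bound becomes $\exp\bigl(-\lambda t+\lambda^2\sum_i(b_i-a_i)^2/8\bigr)$. Minimizing over $\lambda$ gives $\lambda = 4t/\sum_i(b_i-a_i)^2$, and hence $\Pr[S-\mathbb{E}S\ge t]\le \exp\bigl(-2t^2/\sum_i(b_i-a_i)^2\bigr)$.

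\emph{Lower tail.} Applying the same argument to $-X_i$, which lie in intervals of the same lengths, yields the identical bound for $\Pr[S-\mathbb{E}S\le -t]$. A union bound over the two tails produces the factor $2$ in the statement.

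\emph{Proof of Hoeffding's lemma (the main obstacle).} First use convexity of $x\mapsto e^{\lambda x}$ on $[a,b]$:
$e^{\lambda x}\le \frac{b-x}{b-a}e^{\lambda a}+\frac{x-a}{b-a}e^{\lambda b}$.
Taking expectations and using $\mathbb{E}Y=0$ gives
$\mathbb{E}e^{\lambda Y}\le (1-p)e^{\lambda a}+pe^{\lambda b}$, where $p=-a/(b-a)\in[0,1]$.

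Now set $h=\lambda(b-a)$. The right-hand side rewrites as $e^{L(h)}$ with $L(h)=-hp+\ln(1-p+pe^h)$.

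Next check the behaviour of $L$ at the origin. Direct computation gives $L(0)=0$ and $L'(0)=0$. For the second derivative, $L''(h)=\theta(1-\theta)$ with $\theta = pe^h/(1-p+pe^h)\in[0,1]$, so $L''(h)\le 1/4$ everywhere.

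Finally, Taylor's theorem with Lagrange remainder gives $L(h)\le h^2/8=\lambda^2(b-a)^2/8$, which proves the lemma. The delicate points are the change of variables and verifying the uniform curvature bound $L''\le 1/4$; everything else is routine.
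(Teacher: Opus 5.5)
Your proof is correct. It is the classical Chernoff--Cram\'er argument with Hoeffding's lemma proved via the convexity bound and the curvature estimate $L''\le 1/4$, and the computations check out: the reparametrization $\lambda a=-ph$, $\lambda b=(1-p)h$, the optimal $\lambda=4t/\sum_i(b_i-a_i)^2$, and the union bound over the two tails.

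The paper gives no proof of this statement. It records it as a Fact and cites Hoeffding's original paper, whose argument is essentially yours. The only addition needed is to treat the degenerate cases separately: $\sum_i(b_i-a_i)^2=0$ in the main bound and $a=b$ in the lemma, where your choices of $\lambda$ and $p$ divide by zero. Both are trivial because the variables are then almost surely constant.
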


\begin{lemma}[Additive error bound of Shapley value estimator]\label{lemma:shapley_error_bound}
    For any $\epsilon > 0$ and $\delta > 0$, the following holds for all $\seed \in \seedset$:
\begin{equation}\label{eq:total_error}
    \Pr\left( |\widehat{\shap(\seed)} - \shap(\seed)| \geq \epsilon \right) \leq \delta.
\end{equation}
if the number of permutations satisfies
\begin{equation}\label{eq:n_size}
    n_{\pi} \geq \frac{8 |\nodeset|^2}{\epsilon^2} \ln\left( \frac{4 |\seedset|}{\delta} \right)
\end{equation}
and the number of Monte Carlo simulations satisfies
\begin{equation}\label{eq:m_size}
    n_{\text{MC}} \geq \frac{8 |\nodeset|^2}{\epsilon^2} \ln ( \frac{4 n_{\pi} |\seedset|}{\delta} )
\end{equation}
\end{lemma}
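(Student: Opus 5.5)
The plan is to split the total error into a permutation-sampling part and a Monte Carlo part, and to give each a budget of $\epsilon/2$ and $\delta/2$. Write $\Delta_i(\seed) = \val(\Sset_{\pi_i,\seed}\cup\{\seed\}) - \val(\Sset_{\pi_i,\seed})$ for the exact marginal contribution under the $i$-th sampled permutation, and $\widehat{\Delta}_i(\seed) = \valsim(\Sset_{\pi_i,\seed}\cup\{\seed\}) - \valsim(\Sset_{\pi_i,\seed})$ for its Monte Carlo estimate. The triangle inequality gives
\begin{equation*}
|\widehat{\shap(\seed)} - \shap(\seed)| \le \Bigl|\tfrac{1}{n_\pi}\textstyle\sum_i \Delta_i(\seed) - \shap(\seed)\Bigr| + \tfrac{1}{n_\pi}\textstyle\sum_i |\widehat{\Delta}_i(\seed) - \Delta_i(\seed)|,
\end{equation*}
and it suffices to make each term at most $\epsilon/2$ with failure probability at most $\delta/2$, simultaneously for all seeds.

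\textbf{Permutation term.} The $\Delta_i(\seed)$ are i.i.d. over the uniformly random permutations, with mean $\shap(\seed)$ by Eq.~(\ref{eq:shapley-1}). Each lies in $[-|\nodeset|, |\nodeset|]$, since $\val$ takes values in $[0,|\nodeset|]$; monotonicity would even give $[0,|\nodeset|]$. Apply Hoeffding (\Cref{fact:hoeffding}) to the sum with $t = n_\pi\epsilon/2$ and range width $2|\nodeset|$. This gives failure probability $2\exp(-n_\pi\epsilon^2/(8|\nodeset|^2))$. A union bound over the $|\seedset|$ seeds shows that the condition in Eq.~(\ref{eq:n_size}) makes the total at most $\delta/2$.

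\textbf{Monte Carlo term.} For each of the at most $2n_\pi|\seedset|$ value estimates $\valsim(S)$ used by the algorithm, condition on the set $S$. Then $\valsim(S)$ is an average of $n_{\text{MC}}$ i.i.d. variables in $[0,|\nodeset|]$ with mean $\val(S)$, by \Cref{lemma:mc_unbiased}. Hoeffding with deviation $\epsilon/4$ gives failure probability $2\exp(-n_{\text{MC}}\epsilon^2/(8|\nodeset|^2))$. If every estimate is within $\epsilon/4$ of its true value, then every $|\widehat{\Delta}_i - \Delta_i| \le \epsilon/2$, and so is their average. A union bound over the $2n_\pi|\seedset|$ estimates requires $2n_\pi|\seedset|\cdot 2\exp(-n_{\text{MC}}\epsilon^2/(8|\nodeset|^2)) \le \delta/2$. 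This holds once $n_{\text{MC}} \ge \frac{8|\nodeset|^2}{\epsilon^2}\ln\frac{8n_\pi|\seedset|}{\delta}$, which matches Eq.~(\ref{eq:m_size}) up to the constant inside the logarithm. If the constants must match exactly, I would tighten the accounting: either use range $[0,|\nodeset|]$ with deviation $\epsilon/4$ and split $\delta$ unevenly, or note that only one fresh estimate is needed per step because $\valsim(S)$ can be reused along a permutation. A final union bound over the two events gives the claim.

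The main obstacle is the Monte Carlo part. Conditioning is needed because the coalitions $\Sset_{\pi_i,\seed}$ are themselves random, and the union bound must cover every estimate the algorithm actually computes; this is where the $\ln(n_\pi|\seedset|/\delta)$ factor comes from. Getting the exact constants $8$ and $4$ to line up requires a careful choice of the error and probability splits.
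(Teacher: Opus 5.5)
Your decomposition into a permutation-sampling term and a Monte Carlo term, the $\epsilon/2$ and $\delta/2$ budgets, and your permutation-term argument are exactly what the paper does. That argument applies Hoeffding with range $2|\nodeset|$ and then a union bound over the $|\seedset|$ seeds. Your explicit conditioning on the sampled coalitions $\Sset_{\pi_i,\seed}$ is more careful than the paper's treatment.

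The gap is in the Monte Carlo term. You control each of the $2n_\pi|\seedset|$ value estimates separately to within $\epsilon/4$, which doubles the number of events in the union bound. You therefore need $\ln(8n_\pi|\seedset|/\delta)$, so as written you prove a weaker statement than Eq.~(\ref{eq:m_size}).

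Neither repair you sketch closes this cleanly.
\begin{itemize}
\item Splitting $\delta$ unevenly does nothing by itself. Under the stated hypotheses your two failure bounds are $\delta/2$ (permutations) and $\delta$ (Monte Carlo), and both are fixed by Eqs.~(\ref{eq:n_size}) and~(\ref{eq:m_size}). Making this route work would require two further steps: shrinking the permutation range to $[0,|\nodeset|]$ via monotonicity of $\val$, and reallocating $\epsilon$ (e.g.\ $\epsilon/4$ to permutations, $3\epsilon/4$ to Monte Carlo), not just $\delta$. Monotonicity under multi-step termination holds by a live-edge coupling, but you would have to argue it.
\item Reusing $\valsim(\Sset)$ along a permutation analyzes a different estimator. \Cref{alg:approx_perm_mc} draws fresh simulations for $\valsim(\Sset)$ at Line~\ref{line:val_S}.
\end{itemize}

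The missing step is to apply Hoeffding to the difference directly, as the paper does. Conditional on $\Sset=\Sset_{\pi_i,\seed}$, write $\widehat{\Delta}_i(\seed)=\frac{1}{n_{\text{MC}}}\sum_{j=1}^{n_{\text{MC}}}(X_j-Y_j)$. Here $X_j$ and $Y_j$ are the activation counts in the $j$-th simulation from $\Sset\cup\{\seed\}$ and from $\Sset$, respectively. The pairs $(X_j,Y_j)$ are i.i.d., each difference lies in $[-|\nodeset|,|\nodeset|]$, and the mean is $\Delta_i(\seed)$. Hoeffding at deviation $\epsilon/2$ gives failure probability $2\exp(-n_{\text{MC}}\epsilon^2/(8|\nodeset|^2))$ per marginal. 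A union bound over only the $n_\pi|\seedset|$ marginals then gives at most $2n_\pi|\seedset|\cdot\frac{\delta}{4n_\pi|\seedset|}=\delta/2$ under Eq.~(\ref{eq:m_size}), which matches the stated constants exactly.
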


\begin{proof}

The total error in estimating $\shap(\seed)$ can be decomposed as two terms as follows: (1) the error from estimating marginal contributions $E_1(\seed)$ and (2) the error from sampling a finite number of permutations $E_2(\seed)$. For simplicity, we denote the marginal contribution $\Delta_{\seed}^{\pi_i} = \val(\Sset_{\pi_i, \seed} \cup \{\seed\}) - \val(\Sset_{\pi_i, \seed})$, and thus $\hat{\Delta}_{\seed}^{\pi_i} = \valsim(\Sset_{\pi_i, \seed} \cup \{\seed\}) - \valsim(\Sset_{\pi_i, \seed})$ in the proof.

\begin{equation}
\begin{aligned}
|\widehat{\shap(\seed)} - \shap(\seed)| &= \left|\frac{1}{n_{\pi}}\sum_{i=1}^{n_{\pi}} \hat{\Delta}_{\seed}^{\pi_i} - \shap(\seed)\right| \\
&= \left|\frac{1}{n_{\pi}}\sum_{i=1}^{n_{\pi}} \hat{\Delta}_{\seed}^{\pi_i} - \frac{1}{n_{\pi}}\sum_{i=1}^{n_{\pi}} \Delta_{\seed}^{\pi_i} + \frac{1}{n_{\pi}}\sum_{i=1}^{n_{\pi}} \Delta_{\seed}^{\pi_i} - \shap(\seed)\right| \\
&\leq \underbrace{\left|\frac{1}{n_{\pi}}\sum_{i=1}^{n_{\pi}} (\hat{\Delta}_{\seed}^{\pi_i} - \Delta_{\seed}^{\pi_i})\right|}_{E_1(\seed)} + \underbrace{\left|\frac{1}{n_{\pi}}\sum_{i=1}^{n_{\pi}} \Delta_{\seed}^{\pi_i} - \shap(\seed)\right|}_{E_2(\seed)}
\end{aligned}
\end{equation}

We bound the probability of $E_1(\seed)$ and $E_2(\seed)$ separately using Hoeffding's inequality and then take the union bound.

\paragraph{Error in marginal contribution estimation}
For each marginal contribution $\Delta_{\seed}^{\pi_i}$, the estimation error is:
\begin{equation}
    \begin{aligned}
    &|\hat{\Delta}_{\seed}^{\pi_i} - \Delta_{\seed}^{\pi_i}| = |\valsim(\Sset_{\pi_i, \seed} \cup \{\seed\}) - \valsim(\Sset_{\pi_i, \seed}) - (\val(\Sset_{\pi_i, \seed} \cup \{\seed\}) - \val(\Sset_{\pi_i, \seed}))|\\
    &\leq |\hat{\val}(\Sset_{\pi_i, \seed} \cup \{\seed\}) - \val(\Sset_{\pi_i, \seed} \cup \{\seed\})| + |\hat{\val}(\Sset_{\pi_i, \seed}) - \val(\Sset_{\pi_i, \seed})|
    \end{aligned}
\end{equation}

Since both $\valsim(\Sset_{\pi_i, \seed} \cup \{\seed\})$ and $\valsim(\Sset_{\pi_i, \seed})$ are estimated via Monte Carlo simulations with $|\nodeset|$ as the upper bound, applying Hoeffding's inequality for each $\hat{\Delta}_{\seed}^{\pi_i}$ and set $\epsilon_{\Delta}$ as the maximum error in estimating $\Delta_{\seed}^{\pi_i}$:

\begin{equation}
\begin{aligned}
\Pr\left( \left| \hat{\Delta}_{\seed}^{\pi_i} - \Delta_{\seed}^{\pi_i} \right| \geq \epsilon_{\Delta}\right) &\leq 2 \exp\left( -\frac{2 n_{\text{MC}} \epsilon_{\Delta}^2}{(2 |\nodeset|)^2} \right)\\
&= 2 \exp\left( -\frac{n_{\text{MC}} \epsilon_{\Delta}^2}{2 |\nodeset|^2} \right)\\
\end{aligned}
\end{equation}

Then we apply this result to bound the probability of $E_1(\seed)$ exceeding $\epsilon_{\Delta}$:

\begin{equation}
\begin{aligned}
\Pr(E_1(\seed) \geq \epsilon_{\Delta}) &= \Pr\left(\left|\frac{1}{n}\sum_{i=1}^n (\hat{\Delta}_{\seed}^{\pi_i} - \Delta_{\seed}^{\pi_i})\right| \geq \epsilon_{\Delta}\right) \\
&\leq \Pr\left(\exists i: |\hat{\Delta}_{\seed}^{\pi_i} - \Delta_{\seed}^{\pi_i}| \geq \epsilon_{\Delta}\right)
\end{aligned}
\end{equation}
Because we estimated the marginal contribution of a given seed node for all $n$ permutations, we have:

\begin{equation}
    \begin{aligned}
    \Pr(E_1(\seed) \geq \epsilon_{\Delta}) &\leq \Pr\left(\exists i: |\hat{\Delta}_{\seed}^{\pi_i} - \Delta_{\seed}^{\pi_i}| \geq \epsilon_{\Delta}\right) \\
    &\leq n \cdot \Pr\left( \left| \hat{\Delta}_{\seed}^{\pi_i} - \Delta_{\seed}^{\pi_i} \right| \geq \epsilon_{\Delta}\right)\\
    &\leq n \cdot 2 \exp\left( -\frac{m \epsilon_{\Delta}^2}{2 |\nodeset|^2} \right)
    \end{aligned}
    \end{equation}
Applying the union bound over all seeds:

\begin{equation}
\begin{aligned}
\Pr(\exists \seed \in \seedset: E_1(\seed) \geq \epsilon_{\Delta}) &\leq |\seedset| \cdot n \cdot 2 \exp\left( -\frac{m \epsilon_{\Delta}^2}{2 |\nodeset|^2} \right)
\end{aligned}
\end{equation}
Then we set $\epsilon_{\Delta} = \frac{\epsilon}{2}$ and replace $m$ using \Cref{eq:m_size}:
\begin{equation}
    \begin{aligned}
    &\Pr(\exists \seed \in \seedset: E_1(\seed) \geq \frac{\epsilon}{2})\\
    &\leq |\seedset| \cdot n \cdot 2 \exp\left( -\frac{\epsilon^2}{8 |\nodeset|^2} \cdot  \frac{8 |\nodeset|^2}{\epsilon^2} \ln ( \frac{4 n |\seedset|}{\delta} ) \right)\\
    & = \frac{\delta}{2}
    \end{aligned}
    \end{equation}
Therefore, the probability that $\forall \seed \in \seedset$, $E_1(\seed)$ at most $\frac{\epsilon}{2}$ is at least $1-\frac{\delta}{2}$.

\paragraph{Error from sampling permutations}
Then we consider $E_2(\seed)$, the error from sampling permutations.
Both $\val(\Sset_{\pi_i, \seed} \cup \{\seed\})$ and $\val(\Sset_{\pi_i, \seed})$ are bounded between $0$ and $|\nodeset|$, so we can apply Hoeffding's inequality as before and set $\epsilon_{\text{perm}}$ as the maximum error for $E_2(\seed)$:

\begin{equation}
\begin{aligned}
&\Pr\left( E_2(\seed) \geq \epsilon_{\text{perm}} \right) = \Pr\left( \left| \frac{1}{n_{\pi}} \sum_{i=1}^{n_{\pi}} \Delta_{\seed}^{\pi_i} - \shap(\seed) \right| \geq \epsilon_{\text{perm}} \right) \\
&\leq 2 \exp\left( -\frac{2 n_{\pi} \epsilon_{\text{perm}}^2 }{ (2 |\nodeset|)^2 } \right) = 2 \exp\left( -\frac{n_{\pi} \epsilon_{\text{perm}}^2}{2 |\nodeset|^2} \right)
\end{aligned}
\end{equation}

Similarly, applying the union bound over all seeds:

\begin{equation}
\begin{aligned}
\Pr\left( \exists \seed \in \seedset : E_2(\seed) \geq \epsilon_{\text{perm}} \right) &\leq  |\seedset| \cdot 2 \exp\left( -\frac{ n \epsilon_{\text{perm}}^2 }{ 2 |\nodeset|^2 } \right)
\end{aligned}
\end{equation}
We set $\epsilon_{\text{perm}} = \frac{\epsilon}{2}$ and replace $n$ using \Cref{eq:n_size}:

\begin{equation}
\begin{aligned}
&\Pr\left( \exists \seed \in \seedset : E_2(\seed) \geq \frac{\epsilon}{2} \right)\\
&\leq  |\seedset| \cdot 2 \exp\left( -\frac{\epsilon^2}{8 |\nodeset|^2} \cdot  \frac{8 |\nodeset|^2}{\epsilon^2} \ln ( \frac{4 |\seedset|}{\delta} ) \right)\\
& = \frac{\delta}{2}
\end{aligned}
\end{equation}
Hence, the probability that $\forall \seed \in \seedset$, $E_2(\seed)$ at most $\frac{\epsilon}{2}$ is at least $1-\frac{\delta}{2}$.

\paragraph{Combining the errors}
Combine the probability of $E_1(\seed)$ and $E_2(\seed)$ exceeding their respective errors:
\begin{equation}
\begin{aligned}
&\Pr\left( \exists \seed \in \seedset : \left| \hat{\shap}(\seed) - \shap(\seed) \right| \geq \epsilon \right)\\
\leq& \Pr\left( \exists \seed \in \seedset : E_1(\seed) + E_2(\seed) \geq \epsilon \right)\\
\leq& \Pr\left( \exists \seed \in \seedset : E_1(\seed) \geq \frac{\epsilon}{2} \right) + \Pr\left( \exists \seed \in \seedset : E_2(\seed) \geq \frac{\epsilon}{2} \right) \\
\leq& \frac{\delta}{2} + \frac{\delta}{2} = \delta.
\end{aligned}
\end{equation}

This shows that with probability at least $1-\delta$, the total error for all seeds is bounded by $\epsilon$, completing the proof of the approximation guarantee.

\end{proof}
\textbf{(3) Runtime Complexity.~}
Last, we prove the polynomial runtime of $\permuteMC$:
\begin{proof}[Proof for \Cref{prop:permute_mc_property} (3)]
    For each of the $n_{\pi}$ permutations and each seed node in $\seedset$, we perform $n_{\text{MC}}$ Monte Carlo simulations. So the total number of simulations is $O(n_{\pi} \cdot |\seedset| \cdot n_{\text{MC}})$. Since each simulation runs a BFS traversal that takes $O(|E|)$ time, the runtime per simulation is $O(|E|)$. therefore, the total runtime is $O(n_{\pi} \cdot |\seedset| \cdot n_{\text{MC}} \cdot |E|)$
    
    As analyzed in \Cref{lemma:shapley_error_bound}, the sampling sizes $n_{\pi}$ and $n_{\text{MC}}$ are polynomial:
    \begin{align}
        n_{\pi} &\geq \frac{8 |\nodeset|^2}{\epsilon^2} \ln ( \frac{4 n_{\pi} |\seedset|}{\delta} ) = O\left( \frac{|V|^2}{\epsilon^2} \ln ( \frac{|V|^2 |\seedset|^2}{\epsilon^2 \delta} ) \right) \\
        n_{\text{MC}} &\geq \frac{8 |V|^2}{\epsilon^2} \ln ( \frac{4 |\seedset|}{\delta} ) = O\left( \frac{|V|^2}{\epsilon^2} \ln ( \frac{|\seedset|}{\delta} ) \right)
        \end{align}
    %Since $|\seedset|$ is bounded by $|\nodeset|$, each $m$ and $n$ is polynomial in $|V|$, $|\seedset|$, $1/\epsilon$, and $\ln(1/\delta)$.

    As $|E| = O(|V|^2)$ and $|\seedset| \leq |V|$, substituting the values of $n$ and $m$, the total runtime is polynomial in $|V|$, $\frac{1}{\epsilon}$, and $\ln(\frac{1}{\delta})$.
\end{proof}

\section{Details of $\algliveedge$}\label{sec:algliveedge_appendix}

\subsection{\textbf{Pseudocode}}
We present $\algliveedge$ in \Cref{alg:live_edge_graph}. It first samples $n=O\left( \frac{|V|^2}{\epsilon^2} \ln( \frac{|\seedset|}{\delta}) \right)$ live-edge graphs in Line~\ref{line:sample_graph}, where each edge in $g_i$ is sampled based on its activation probability.
For each sampled $g_i$, it operates in two phases: (1) first, it performs a single multi-source Breadth-First Search (BFS) starting from all seed nodes to construct a reachability map, $B_{g_i}$, which records the set of seed nodes that can reach each non-seed node (Line~\ref{line:msbfs}). (2)Then the algorithm iterates through all non-seed nodes (Line~\ref{line:node_loop_start})
and distributes its unit value equally among all seeds in its reachability set $B_{g_i}[v]$ by adding $1/|B_{g_i}[v]|$ to each seed's estimated contribution $est_{g_i}[\seed]$ (Lines~\ref{line:seed_loop_start} to~\ref{line:seed_loop_end}).
Finally, after all sampled live-edge graphs are processed, the algorithm returns the averaged contributions as estimated final Shapley values (Line~\ref{line:average}).

\begin{algorithm}[!htbp]
    \caption{$\algliveedge(G, T, p, \epsilon, \delta)$}
    \label{alg:live_edge_graph}
    \begin{algorithmic}[1]
    \Require Network $G(V,E)$, seed set $T$, activation probabilities $p$, approximation parameters $\epsilon >0$, $\delta \in (0,1)$
    \Ensure Estimated Shapley values $\widehat{\text{Shap}}(t)$ for all $t \in T$
    \State Number of sampled live-edge graphs $n \gets O\left( \frac{|V|^2}{\epsilon^2} \ln( \frac{|\seedset|}{\delta}) \right)$ \label{ine:num_samples}
    \For{$i = 1$ to $n$} \label{line:outer_loop_start}
        \State Sample a live-edge graph $g_i$ from $G$ with $p$ \label{line:sample_graph}
        \State \textbf{Phase 1: Construct bipartite reachability map $B_i$} \label{line:phase1_start}
        \State $B_{g_i} \gets \textsc{MultiSourceBFS}(g_i, \seedset)$ \label{line:msbfs}
        %\State \textit{\textcolor{gray}{// Perform a multi-source BFS on $g_i$ from all seed nodes to get a map $B_{g_i}$ that for each non-seed node $v \in V \setminus T$, $B_{g_i}[v]$ stores the set of seeds that can reach $v$ in $g_i$}} \label{line:msbfs_comment}
        \State \textit{\textcolor{gray}{// $B_{g_i}[v]$ stores seed nodes that can reach non-seed $v$ in $g_i$}} \label{line:msbfs_comment}
        % \State Initialize a queue $Q$
        % \For{each $t \in T$}
        %     \State Enqueue $(t, t)$ into $Q$ \Comment{(current\_node, originating\_seed)}
        % \EndFor
        % \While{$Q$ is not empty}
        %     \State Pop $(u, s)$ from $Q$ 
        %     \For{each out-neighbor $v$ of $u$ in $g_i$}
        %         \If{$v \notin T$ \textbf{and} $s \notin B_i[v]$}
        %             \State Add $s$ to $B_i[v]$ \Comment{as $s$ can reach $v$}
        %             \State Enqueue $(v, s)$ into $Q$
        %         \EndIf
        %     \EndFor
        % \EndWhile
        %\State \Comment{Get a bipartite graph $B_i$ that stores for each edge $(s, v)$ in $B$, $s \in T$ can reach $v \in V \setminus T$ in $g_i$}
        \State \textbf{Phase 2: Credit Distribution} \label{line:phase2_start}
        \State Initialize $\textit{est}_{g_i}[t] \gets 0$ for all $t \in T$ \label{line:init_est}
        \For{each non-seed node $v \in V \setminus T$ with $B_{g_i}[v] \not= \emptyset$} \label{line:node_loop_start}
            %\State contribution $\gets \frac{1}{|B_{g_i}[v]|}$ \label{line:calc_contribution}
            \For{each seed $t \in B_{g_i}[v]$} \label{line:seed_loop_start}
                \State $\textit{est}_{g_i}[t] \gets \textit{est}_{g_i}[t] +  \frac{1}{|B_{g_i}[v]|}$ \label{line:add_contribution}
            \EndFor \label{line:seed_loop_end}
        \EndFor \label{line:node_loop_end}
    \EndFor \label{line:outer_loop_end}
    \State \textbf{Phase 3: Final Averaging} \label{line:phase3_start}
        \State $\widehat{\text{Shap}}(t) \gets \frac{1}{n} \sum_{i = 1}^{n} \textit{est}_{g_i}[t]$ for all $t \in T$ \label{line:average}
    \State \Return $\widehat{\text{Shap}}(t)$ for all $t \in T$ \label{line:return}
    \end{algorithmic}
\end{algorithm}

% \begin{proposition}[Runtime of $\algliveedge$]\label{prop:liveedge_runtime}
% For any $\epsilon > 0$ and $\delta \in (0, 1)$, the total running time of $\algliveedge$ is $O\left(\frac{|V|^2}{\epsilon^2} \ln(\frac{|\seedset|}{\delta}) \cdot |\seedset||E|\right)$.
% \end{proposition}

\subsection{\textbf{Credit-splitting identity.}}
We first establish the connection between the Shapley value and then live-edge graph realizations, which is the key
tool underlying $\algliveedge$.

Recall that for a live-edge graph $g$
and a non-seed node $v \in V \setminus \seedset$, $B_g[v] \subseteq \seedset$
denotes the set of seed nodes that can reach $v$ in $g$, as computed by
the multi-source BFS in Algorithm~3.

\begin{lemma}\label{lem:credit-splitting}
For any fixed live-edge graph $g$, any non-seed node $v \in V \setminus \seedset$,
and any seed $\seed \in \seedset$,
\begin{equation}
    \mathbb{E}_{\pi}\left[\mathbbm{1}[\seed \text{ activates } v \text{ in } (g, \pi)]\right]
    = \frac{\mathbbm{1}[\seed \in B_g[v]]}{|B_g[v]|}
\end{equation}
where $\pi$ is a uniformly random permutation of $\seedset$, and we adopt the convention $\frac{0}{0} = 0$ when $B_g[v] = \emptyset$.
\end{lemma}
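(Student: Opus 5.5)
The proof starts by pinning down what ``$\seed$ activates $v$ in $(g,\pi)$'' means. For a fixed live-edge graph $g$ and a permutation $\pi$, we credit $v$ to the seed whose addition first makes $v$ reachable as seeds are added in the order of $\pi$. In other words, $\seed$ activates $v$ iff $v$ is reachable from $S_{\pi,\seed}\cup\{\seed\}$ in $g_{T,S_{\pi,\seed}\cup\{\seed\}}$ but not from $S_{\pi,\seed}$ in $g_{T,S_{\pi,\seed}}$. This indicator is exactly the realization-level marginal contribution of $\seed$ at $v$. The task is then to show that this happens iff $\seed\in B_g[v]$ and $\seed$ precedes every other element of $B_g[v]$ in $\pi$.

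The key structural step is a reachability decomposition. For any $S\subseteq T$, $v$ is reachable from $S$ in the restricted graph $g_{T,S}$ iff $S\cap B_g[v]\neq\emptyset$. To make this hold, I will read $B_g[v]$ as the set of seeds $\seed'$ that have a path to $v$ in $g$ avoiding all other seed nodes. This matches \Cref{def:value_function}, since seeds outside $S$ are deleted; equivalently, it is reachability in $g$ after removing edges into seeds, as in \Cref{lem:shapley_rr}.

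Both directions of the decomposition then go through directly:
\begin{itemize}
\item If $\seed'\in S\cap B_g[v]$, its seed-avoiding path survives in $g_{T,S}$, so $v$ is reachable from $S$.
\item Conversely, take any path in $g_{T,S}$ from some $s\in S$ to $v$, and let $\seed'$ be the last seed on that path. The seed $\seed'$ lies in $S$, since the path lives in $g_{T,S}$. The suffix of the path from $\seed'$ to $v$ contains no other seeds, so $\seed'\in B_g[v]$.
\end{itemize}

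With the decomposition in hand, the marginal indicator equals $\mathbbm{1}[(S_{\pi,\seed}\cup\{\seed\})\cap B_g[v]\ne\emptyset]-\mathbbm{1}[S_{\pi,\seed}\cap B_g[v]\ne\emptyset]$. This is $1$ exactly when $\seed\in B_g[v]$ and no element of $B_g[v]$ precedes $\seed$ in $\pi$, and $0$ otherwise.

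Finally, I take the expectation over uniform $\pi$. If $\seed\notin B_g[v]$, the indicator is identically $0$. This also covers $B_g[v]=\emptyset$, consistent with the $0/0$ convention. If $\seed\in B_g[v]$, the relative order that $\pi$ induces on $B_g[v]$ is uniform, so each of its $|B_g[v]|$ elements is first with probability $1/|B_g[v]|$, which gives the claim.

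The main obstacle is the reachability decomposition, specifically making sure $B_g[v]$ is defined so that paths through other seeds are excluded. If $B_g[v]$ counted paths passing through other seeds, the decomposition could fail, so I would state explicitly that the BFS does not expand through seed nodes, which matches \Cref{lem:shapley_rr}.
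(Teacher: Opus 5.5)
Your proof is correct and follows the same route as the paper: $\seed$ activates $v$ in $(g,\pi)$ exactly when it is the first element of $B_g[v]$ in $\pi$, and the uniform relative order that a random $\pi$ induces on $B_g[v]$ gives probability $1/|B_g[v]|$. The paper only asserts this first-seed characterization, whereas your reachability decomposition proves it. You are also right that $B_g[v]$ must mean seed-avoiding reachability, as the BFS in \Cref{alg:live_edge_graph} and the graph $G'$ of \Cref{lem:shapley_rr} implicitly enforce; otherwise the identity fails under \Cref{def:value_function}, because a seed that reaches $v$ only through another seed is a null player for $v$.
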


\begin{proof}
Fix a live-edge graph $g$ and a non-seed node $v$.
In the influence propagation process under $(g, \pi)$, node $v$ is activated
by the first seed in $\pi$ that can reach $v$ in $g$.

If $\seed \notin B_g[v]$, then $\seed$ cannot reach $v$ in $g$, so
$\mathbbm{1}[\seed \text{ activates } v \text{ in } (g, \pi)] = 0$ for every permutation $\pi$.
The right-hand side is also $0$ since $\mathbbm{1}[\seed \in B_g[v]] = 0$.
If $\seed \in B_g[v]$, let $k = |B_g[v]| \geq 1$.
Among all $|\seedset|!$ permutations of $\seedset$, the relative order of the $k$ seeds
in $B_g[v]$ is uniformly distributed over all $k!$ orderings, regardless of the
positions of seeds outside $B_g[v]$.
Therefore, each of the $k$ seeds in $B_g[v]$ appears first among them
with probability exactly $1/k$.
Hence,
\begin{equation}
    \mathbb{E}_{\pi}\left[\mathbbm{1}[\seed \text{ activates } v \text{ in } (g, \pi)]\right]
    = \frac{1}{k}
    = \frac{\mathbbm{1}[\seed \in B_g[v]]}{|B_g[v]|}. \qedhere
\end{equation}
\end{proof}

%Therefore, for any fixed live-edge graph $g$, the contribution assigned to seed $\seed$ for node $v$ is $\frac{\mathbbm{1}[\seed \in B_g[v]]}{|B_g[v]|}$, which equals the probability that $\seed$ is the first seed to activate $v$ under a uniformly random permutation of seeds.
\subsection{Theoretical guarantees}
Then we present the approximation guarantees and runtime complexity of $\algliveedge$ in the following proposition, along with the complete proof.

\begin{restatable}{proposition}{liveedgeproperty}\label{prop:liveedge_property}
    Given $(G(\nodeset, E), \seedset, \actprob)$, $\algliveedge$ returns $\widehat{\shap}(\seed)$ for all $\seed \in \seedset$ such that: (1) $\mathbb{E}[\widehat{\shap}(\seed)] = \shap(\seed)$, $\forall \seed \in \seedset$; (2) for any $\epsilon > 0$ and $\delta \in (0,1)$, with probability at least $1 - \delta$, $|\widehat{\shap}(\seed) - \shap(\seed)| \le \epsilon$ holds for all $\seed \in \seedset$; (3) the running time is $O\left(\frac{|V|^2}{\epsilon^2} \ln(\frac{|\seedset|}{\delta}) \cdot |\seedset| \cdot |E|\right)$.
\end{restatable}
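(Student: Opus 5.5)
The plan has three parts: unbiasedness through the credit-splitting identity (\Cref{lem:credit-splitting}), concentration through Hoeffding's inequality (\Cref{fact:hoeffding}), and runtime by counting BFS work per sample.

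\textbf{Unbiasedness.} The first step is to express $\shap(\seed)$ through live-edge graphs. By the standard live-edge equivalence of the IC model, sample $g$ from $G$ and fix it. A non-seed node $v$ is activated by a coalition $\Sset$ in $G_{T,\Sset}$ exactly when some seed of $\Sset$ reaches $v$ in $g$ via a path avoiding seeds outside $\Sset$. I will take $B_g[v]$ to be the seeds reaching $v$ along paths whose interior avoids all seeds, which is what a multi-source BFS that does not expand through seeds computes. With this choice the activation indicator is $\mathbbm{1}[B_g[v]\cap \Sset\neq\emptyset]$, since each seed has its own seed-free path.

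Hence $\val(\Sset)=\mathbb{E}_g\big[\sum_{v\notin\seedset}\mathbbm{1}[B_g[v]\cap\Sset\neq\emptyset]\big]$. For fixed $g$ and $\pi$, the marginal contribution of $\seed$ to $v$ along $\pi$ is $1$ iff $\seed$ is the first element of $B_g[v]$ in $\pi$, i.e.\ iff $\seed$ activates $v$ in $(g,\pi)$. Averaging over $\pi$ with \Cref{lem:credit-splitting}, and exchanging $\mathbb{E}_\pi$ and $\mathbb{E}_g$ by linearity and independence, gives
\begin{equation*}
\shap(\seed)=\mathbb{E}_g\Big[\sum_{v\in V\setminus\seedset}\frac{\mathbbm{1}[\seed\in B_g[v]]}{|B_g[v]|}\Big]=\mathbb{E}_g\big[\textit{est}_g[\seed]\big].
\end{equation*}
Since $\widehat{\shap}(\seed)$ is the average of i.i.d.\ copies of $\textit{est}_{g_i}[\seed]$, claim (1) follows.

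\textbf{Concentration.} Each $\textit{est}_{g_i}[\seed]$ lies in $[0,|V|]$, because it receives at most $1$ per non-seed node. Hoeffding's inequality applied to the average of $n$ samples gives $\Pr(|\widehat{\shap}(\seed)-\shap(\seed)|\ge\epsilon)\le 2\exp(-2n\epsilon^2/|V|^2)$. A union bound over the $|\seedset|$ seeds yields claim (2) once $n\ge \frac{|V|^2}{2\epsilon^2}\ln\frac{2|\seedset|}{\delta}$, which matches the $O(\cdot)$ choice in the algorithm.

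\textbf{Runtime.} Sampling $g_i$ costs $O(|E|)$. The multi-source BFS propagates origin labels, so each node can be enqueued at most once per seed; this costs $O(|\seedset|\cdot|E|)$ per sample. Credit distribution costs $\sum_v|B_{g_i}[v]|\le|\seedset||V|$, which is dominated by the BFS term. Multiplying by $n$ gives claim (3).

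\textbf{Main obstacle.} The delicate step is the first one: justifying that the coalition value with seeds $\seedset\setminus\Sset$ \emph{removed} equals reachability in the live-edge graph. In particular, paths must not be allowed to pass through removed seeds, and $B_g[v]$ must be defined consistently with this, namely via seed-free interior paths. If the paper's BFS instead allowed passage through seeds, the identity would fail whenever one seed reaches $v$ only through another seed. So I would fix this definition explicitly before invoking \Cref{lem:credit-splitting}.
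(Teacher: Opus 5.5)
Your proposal is correct and follows the paper's proof essentially step for step. Unbiasedness comes from the live-edge equivalence together with \Cref{lem:credit-splitting} and linearity. Accuracy comes from Hoeffding with $\textit{est}_{g_i}[\seed]\in[0,|V|]$ and a union bound, giving $n\ge\frac{|V|^2}{2\epsilon^2}\ln\frac{2|\seedset|}{\delta}$. The runtime comes from $O(|\seedset|\cdot|E|)$ BFS work per sample.

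Your insistence that $B_g[v]$ count only seeds reaching $v$ by seed-free interior paths is a real improvement. This is what the BFS computes by never expanding into seeds, and what $\algrrset$ enforces by deleting edges into seeds. The paper's phrase ``reachable from $S$ in $g$'' leaves this implicit, yet it is needed because $\val$ is defined on $G_{T,S}$. One small fill-in: the direction ``$\Sset$ activates $v$ $\Rightarrow$ $B_g[v]\cap\Sset\neq\emptyset$'' follows by truncating the activating path at its last seed.
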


\begin{proof}
    We prove each property separately.

\textbf{(1) Unbiasedness.~}
By the live-edge graph equivalence, for any seed set $S \subseteq \seedset$, we have
$U(S) = \mathbb{E}_{g}[u_g(S)]$, where $u_g(S)$ counts the number of non-seed nodes
reachable from $S$ in a random live-edge graph $g$.
Therefore, the marginal contribution of $\seed$ under permutation $\pi$ satisfies:
\begin{equation}
    U(S_{\pi,\seed} \cup \{\seed\}) - U(S_{\pi,\seed})
    = \mathbb{E}_{g}\left[u_g(S_{\pi,\seed} \cup \{\seed\}) - u_g(S_{\pi,\seed})\right]
\end{equation}
For a fixed live-edge graph $g$ and permutation $\pi$, the quantity
$u_g(S_{\pi,\seed} \cup \{\seed\}) - u_g(S_{\pi,\seed})$ counts exactly the non-seed nodes
that are reachable from $\seed$ in $g$ but not reachable from any seed preceding $\seed$ in $\pi$.
This is precisely the number of nodes that $\seed$ activates in $(g, \pi)$.
Substituting into the permutation-based definition of Shapley value (\Cref{eq:shapley_value_1}) and exchanging the sum over permutations
with the expectation over $g$ (by linearity of expectation):
\begin{equation}\label{eq:shap-perm}
    \shap(\seed) = \mathbb{E}_{g}\left[\mathbb{E}_{\pi}\left[\sum_{v \in V \setminus \seedset} \mathbbm{1}[\seed \text{ activates } v \text{ in } (g, \pi)]\right]\right]
\end{equation}
where the outer expectation is over random live-edge graphs $g$ sampled from $G$ according to activation probabilities $p$,
and the inner expectation is over uniformly random permutations $\pi$ of $\seedset$.

Applying \Cref{lem:credit-splitting}:
\begin{equation}\label{eq:shap-credit}
    \shap(\seed)
    = \mathbb{E}_{g}\left[\sum_{v \in V \setminus \seedset} \frac{\mathbbm{1}[\seed \in B_g[v]]}{|B_g[v]|}\right]
\end{equation}

Now consider what $\algliveedge$ computes.
For each independently sampled live-edge graph $g_i$ ($i = 1, \ldots, n$), the algorithm
performs a multi-source BFS from all seeds to obtain $B_{g_i}[v]$
for every non-seed node $v$, and then sets:
\begin{equation}
    \textit{est}_{g_i}[\seed]
    = \sum_{\substack{v \in V \setminus \seedset \\ B_{g_i}[v] \neq \emptyset}}
      \frac{\mathbbm{1}[\seed \in B_{g_i}[v]]}{|B_{g_i}[v]|}
\end{equation}
Each $\textit{est}_{g_i}[\seed]$ therefore has the same distribution as
$\sum_{v \in V \setminus \seedset} \frac{\mathbbm{1}[\seed \in B_g[v]]}{|B_g[v]|}$.
The final output is:
\begin{equation}
    \widehat{\shap}(\seed) = \frac{1}{n}\sum_{i=1}^{n} \textit{est}_{g_i}[\seed]
\end{equation}
By linearity of expectation and Eq.~\eqref{eq:shap-credit}:
\begin{equation}
\begin{aligned}
    &\mathbb{E}\left[\widehat{\shap}(\seed)\right]
    = \frac{1}{n}\sum_{i=1}^{n} \mathbb{E}\left[\textit{est}_{g_i}[\seed]\right]\\
    &= \mathbb{E}_{g}\left[\sum_{v \in V \setminus \seedset} \frac{\mathbbm{1}[\seed \in B_g[v]]}{|B_g[v]|}\right]
    = \shap(\seed).
\end{aligned}
\end{equation}

\noindent\textbf{(2) Accuracy.}
For each sampled live-edge graph $g_i$, each non-seed node $v$ contributes
at most $1$ to $\textit{est}_{g_i}[\seed]$
(since $\frac{\mathbbm{1}[\seed \in B_{g_i}[v]]}{|B_{g_i}[v]|} \leq 1$).
Summing over at most $|V \setminus \seedset| \leq |V|$ non-seed nodes,
we obtain $\textit{est}_{g_i}[\seed] \in [0, |V|]$.

Since $g_1, \ldots, g_n$ are sampled independently,
$\textit{est}_{g_1}[\seed], \ldots, \textit{est}_{g_n}[\seed]$ are independent random variables,
each bounded in $[0, |V|]$.
Applying Hoeffding's inequality to $\widehat{\shap}(\seed) = \frac{1}{n}\sum_{i=1}^{n} \textit{est}_{g_i}[\seed]$:
\begin{equation}
    \Pr\left[|\widehat{\shap}(\seed) - \shap(\seed)| > \epsilon\right]
    \leq 2\exp\left(-\frac{2n\epsilon^2}{|V|^2}\right).
\end{equation}
Applying a union bound over all $|\seedset|$ seeds, we get:
\begin{equation}
    \Pr\left[\exists \seed \in \seedset: |\widehat{\shap}(\seed) - \shap(\seed)| > \epsilon\right]
    \leq |\seedset| \cdot 2\exp\left(-\frac{2n\epsilon^2}{|V|^2}\right)
\end{equation}

Setting the right-hand side to at most $\delta$ and solving for $n$:
\begin{equation}
\begin{aligned}
    &|\seedset| \cdot 2\exp\left(-\frac{2n\epsilon^2}{|V|^2}\right) \leq \delta
    \quad \Longleftrightarrow \quad\\
    &n \geq \frac{|V|^2}{2\epsilon^2} \ln\left(\frac{2|\seedset|}{\delta}\right)
    = O\left(\frac{|V|^2}{\epsilon^2} \ln\left(\frac{|\seedset|}{\delta}\right)\right)
    \end{aligned}
\end{equation}

Therefore, with probability at least $1 - \delta$,
$|\widehat{\shap}(\seed) - \shap(\seed)| \leq \epsilon$ holds for all $\seed \in \seedset$.

\noindent\textbf{(3) Runtime.}
For each of the $n$ sampled live-edge graphs, the algorithm performs three operations.
First, sampling a live-edge graph $g_i$ requires one pass over all edges, taking $O(|E|)$ time.
Second, the multi-source BFS starting from all seeds in $\seedset$ constructs the
reachability map $B_{g_i}$.
In the worst case, each edge $(u, v)$ in $g_i$ may be traversed once per seed
that reaches $u$, giving a cost of $O(|\seedset| \cdot |E|)$.
Third, the credit distribution phase iterates over all non-seed nodes and their
reachability sets; since each node $v$ has $|B_{g_i}[v]| \leq |\seedset|$ entries,
this takes $O(|V| \cdot |\seedset|)$ time, which is dominated by the BFS cost.

Therefore, the per-iteration cost is $O(|\seedset| \cdot |E|)$.
With $n = O\left(\frac{|V|^2}{\epsilon^2} \ln\left(\frac{|\seedset|}{\delta}\right)\right)$
iterations, the total runtime is:
\begin{equation}
    O\left(\frac{|V|^2}{\epsilon^2} \ln\left(\frac{|\seedset|}{\delta}\right) \cdot |\seedset| \cdot |E|\right). \qedhere
\end{equation}
\end{proof}

\section{Details of $\algrrset$}\label{sec:rr_analysis_appendix}
We first present $\algrrset$ in \Cref{alg:shapley_rr} along with its subroutine $\textsc{EstimateThreshold}$ in \Cref{subsec:rrset_pseudocode}, and then provide the proof for Lemma~\ref{lem:shapley_rr} in \Cref{app:proof_shapley_rr}. , and then present results and rigorous analysis of approximation guarantees in \Cref{subsec:rrset_approximation_guarantees} and runtime complexity in \Cref{subsec:rr_time_complexity_appendix}.
The analysis adapts the framework from \cite{chen2017interplay,tang2015influence} to our specific problem of a fixed seed set and a modified value function.

\subsection{\textbf{Pseudocode}}\label{subsec:rrset_pseudocode}

\begin{algorithm}[!htbp]
\caption{$\algrrset(G, \seedset, \actprob, \varepsilon, \ell, k)$}
\label{alg:shapley_rr}
\begin{algorithmic}[1]
\Require Network $G(V, E)$, seed set $\seedset$, activation probabilities $\actprob$, approximation parameters $\varepsilon > 0$, $\ell > 0$, $k \in [|\seedset|]$
\Ensure Estimated Shapley values $\widehat{\shap(\seed)}$ for all $\seed \in \seedset$
\State \textbf{Phase 0: Graph Modification}
\State $E^{\prime} \gets \{(u,v) \in E \mid v \notin \seedset\}$; $G^{\prime} \gets (V, E^{\prime})$ \label{line:rr_modify_graph}
\State \textit{\textcolor{gray}{// Remove all incoming edges to seed nodes}}

\State $\rrsize \gets |V \setminus \seedset|$ \label{line:rr_n_prime}
\State \textbf{Phase 1: Parameter Estimation}
\State $\boldsymbol{LB} \gets \textsc{EstimateThreshold}(G^{\prime}, \seedset, \varepsilon, \ell, k)$ \label{line:rr_phase1_lb}
\State \textit{\textcolor{gray}{// Find lower bound of the $k$-th largest Shapley value}}
\State $\theta \gets \left\lceil \frac{ \rrsize \left( 2 + \frac{2}{3} \varepsilon \right)}{ \varepsilon^2 \cdot \boldsymbol{LB}} \left( \ell \ln \rrsize + \ln |\seedset| + \ln 4 \right)  \right\rceil$ \label{line:rr_phase2_theta}
\State \textit{\textcolor{gray}{// Calculate required number of RR sets}}
\State \textbf{Phase 2: Shapley Value Estimation}
\State Reset $\boldsymbol{est}_t \gets 0$ for all $t \in \seedset$
\For{$j = 1$ to $\theta$} \label{line:rr_phase2_loop_start}
    \State Generate a random RR set $R_j$ by selecting a node from $V \setminus \seedset$ uniformly at random \label{line:sample_rrset}
    \If{$R_j \cap \seedset \neq \emptyset$}
        \For{each $t \in R_j \cap \seedset$}\label{line:rr_phase2_update_est}
            \State $\boldsymbol{est}_t \gets \boldsymbol{est}_t + \frac{1}{| R_j \cap \seedset |}$ \label{line:rr_phase2_update_est_inner}
        \EndFor
    \EndIf
\EndFor \label{line:rr_phase2_loop_end}
\State \Return $\widehat{\shap}(t) \gets \rrsize \cdot \frac{\boldsymbol{est}_t}{\theta}$ for all $t \in \seedset$ \label{line:rr_phase2_return}

\end{algorithmic}
\end{algorithm}

\begin{algorithm}[!htbp]
    \caption{$\textsc{EstimateThreshold}(G^{\prime}, \seedset, \varepsilon, \ell, k)$}
    \label{alg:estimate_threshold_complete}
    \begin{algorithmic}[1]
    \Require modified graph $G^{\prime} = (V, E^{\prime})$, seed set $\seedset$, activation probabilities $\actprob$, approximation parameters $\varepsilon > 0$, $\ell > 0$, $k \in [|\seedset|]$
    \Ensure $\boldsymbol{LB}$ \Comment{A lower bound estimate for the $k$-th largest Shapley value among all nodes in $\seedset$}
    \State Initialize $\boldsymbol{LB} \gets 1$, $\varepsilon' \gets \sqrt{2} \varepsilon$, $\theta_0 \gets 0$, $\boldsymbol{est}_t \gets 0$ for all $t \in \seedset$ \label{line:et_init}
    \For{$i = 1$ to $\left\lfloor \log_2 \rrsize \right\rfloor - 1$} \label{line:et_outer_loop}
        \State $x_i \gets \frac{\rrsize}{2^i}$ \label{line:et_xi}
        \State $\theta_i \gets \left\lceil \frac{ \rrsize\left( 2 + \frac{2}{3} \varepsilon' \right)}{ {\varepsilon'}^2 x_i } \left(\ell \ln \rrsize + \ln |\seedset| + \ln \log_2 \rrsize + \ln 2 \right) \right\rceil$ \label{line:et_theta_i}
        \For{$j = 1$ to $\theta_i - \theta_{i-1}$} \label{line:et_inner_loop}
            \State Generate a random RR set $R_j$ in $G^{\prime}$ by selecting a node from $V \setminus \seedset$ uniformly at random \label{line:et_sample_rr}
            \If{$R_j \cap \seedset \neq \emptyset$}
                \For{each $t \in R_j \cap \seedset$}
                    \State $\boldsymbol{est}_t \gets \boldsymbol{est}_t + \frac{1}{| R_j \cap \seedset |}$ \label{line:et_update_est}
                \EndFor
            \EndIf
        \EndFor
        \State $\boldsymbol{est}^{(k)} \gets $ the $k$-th largest value in $\left\{\boldsymbol{est}_{\seed}\right\}_{\seed \in \seedset}$ \label{line:et_est_k}
        \If{$\rrsize \cdot \frac{\boldsymbol{est}^{(k)}}{\theta_i} \geq (1 + \varepsilon') x_i$} \label{line:et_check}
            \State $\boldsymbol{LB} \gets \rrsize \cdot \frac{\boldsymbol{est}^{(k)}}{\theta_i(1 + \varepsilon')} $ \label{line:et_set_lb}
            \State \textbf{break}
        \EndIf
    \EndFor
    \State \Return $\boldsymbol{LB}$ \label{line:et_return}
    \end{algorithmic}
\end{algorithm}

\textbf{Description of the pseudocode.}
\Cref{alg:shapley_rr} takes parameters $\varepsilon$ (multiplicative error), $\ell$ (confidence $1 - 1/n^\ell$), and $k$ (the error bound holds for the $k$ largest Shapley values).
It runs in three phases.
\textbf{Phase 0 (Lines~\ref{line:rr_modify_graph}--\ref{line:rr_n_prime}):} Build the modified graph $G^{\prime} = (V, E^{\prime})$ by removing all incoming edges to seed nodes, and set $\rrsize = |V \setminus \seedset|$.
\textbf{Phase 1 (Lines~\ref{line:rr_phase1_lb}--\ref{line:rr_phase2_theta}):} Call $\textsc{EstimateThreshold}$ (\Cref{alg:estimate_threshold_complete}) to obtain a lower bound $\boldsymbol{LB}$ on the $k$-th largest Shapley value, then compute the required number of RR sets $\theta$ from the formula at Line~\ref{line:rr_phase2_theta}.
The subroutine $\textsc{EstimateThreshold}$ (\Cref{alg:estimate_threshold_complete}) follows the adaptive-sampling framework of~\cite{chen2017interplay}: it initializes $\boldsymbol{LB}$ and $\boldsymbol{est}_t$ (Line~\ref{line:et_init}), then over rounds $i = 1, \ldots, \lfloor \log_2 \rrsize \rfloor - 1$ (Line~\ref{line:et_outer_loop}) sets $x_i = \rrsize/2^i$ and $\theta_i$ (Lines~\ref{line:et_xi}--\ref{line:et_theta_i}), generates $\theta_i - \theta_{i-1}$ new RR sets and updates $\boldsymbol{est}_t$ for each seed in $R_j \cap \seedset$ by $1/|R_j \cap \seedset|$ (Lines~\ref{line:et_inner_loop}--\ref{line:et_update_est}), takes the $k$-th largest $\boldsymbol{est}^{(k)}$ (Line~\ref{line:et_est_k}), and if $\rrsize \cdot \boldsymbol{est}^{(k)}/\theta_i \geq (1+\varepsilon') x_i$ sets $\boldsymbol{LB}$ and breaks (Lines~\ref{line:et_check}--\ref{line:et_set_lb}); it returns $\boldsymbol{LB}$ (Line~\ref{line:et_return}).
\textbf{Phase 2 (Lines~\ref{line:rr_phase2_loop_start}--\ref{line:rr_phase2_return}):} Reset $\boldsymbol{est}_t \gets 0$ for all $t \in \seedset$.
For $j = 1$ to $\theta$, generate a random RR set $R_j$ by sampling a root from $V \setminus \seedset$ uniformly (Line~\ref{line:sample_rrset}); if $R_j \cap \seedset \neq \emptyset$, add $1/|R_j \cap \seedset|$ to $\boldsymbol{est}_t$ for each $t \in R_j \cap \seedset$ (Lines~\ref{line:rr_phase2_update_est}--\ref{line:rr_phase2_update_est_inner}).
Finally, return $\widehat{\shap}(t) = \rrsize \cdot \boldsymbol{est}_t / \theta$ for all $t \in \seedset$ (Line~\ref{line:rr_phase2_return}).

\subsection{Proof of Lemma~\ref{lem:shapley_rr}}\label{app:proof_shapley_rr}
\shapleyRr*

The above \Cref{lem:shapley_rr} shows the connection between the Shapley value of a seed node $\seed$ and the probability that $\seed$ appears in a random RR set, which is the foundamental principle of $\algrrset$. In this section, we first prove the connection between RR sets and the influence propagation process on $G^{\prime}$ in \Cref{lem:rr_equivalence}, then the relationship between RR sets and the value function and the marginal contribution in \Cref{lem:rr_valuefunction,lem:marginal_rr}. Finally, we use these results to complete the proof of \Cref{lem:shapley_rr}. The proof structure is similar to \cite{chen2017interplay} but with adjustment to our problem setting.

\begin{lemma}\label{lem:rr_equivalence}
    For any $\Sset \subseteq \seedset$ and any $v \in \nodeset \setminus \seedset$, denote $\rrset_v$ as an RR set for $v$ in $G^{\prime}$. The probability that $\rrset_v$ intersects $\Sset$ equals to the probability that $\Sset$ activates $v$ in an influence propagation process on $G$:
    \begin{equation}\label{eq:rr_equivalence}
        \Pr[\Sset \cap \rrset_v \neq \emptyset] = \Pr[\Sset \text{ activates } v \text{ in } G]
    \end{equation}
\end{lemma}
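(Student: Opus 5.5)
We will prove the equality by a coupling through the live-edge graph representation of the IC model. Throughout, ``$\Sset$ activates $v$ in $G$'' is read as in \Cref{def:value_function}: the diffusion is run on $G_{T,S}$ with seed set $\Sset$, i.e., all seeds in $\seedset\setminus\Sset$ and their incident edges are removed. (For \fixedstep termination the same argument goes through with reachability replaced by reachability within $\timeconst$ hops.)

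\textbf{Step 1: express the right-hand side via live-edge graphs.} We invoke the standard live-edge equivalence of \cite{kempe2003maximizing}, already used in the proof of \Cref{prop:liveedge_property}. Sample each edge $e$ of $G_{T,S}$ independently as live with probability $p(e)$. Then the event that $\Sset$ activates $v$ has the same probability as the event that $v$ is reachable from $\Sset$ along live edges.

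\textbf{Step 2: pass from $G_{T,S}$ to $G'$.} We show that reachability from $\Sset$ to $v$ in a live-edge sample of $G_{T,S}$ has the same law as reachability from $\Sset$ to $v$ in a live-edge sample $g'$ of $G'$. Couple the two samples by using the same coin for every edge present in both graphs. Suppose some live path from a node of $\Sset$ to $v$ exists. Take its last visit to a node of $\Sset$ and keep only the suffix from that point. The suffix starts in $\Sset$, and none of its later nodes lies in $\Sset$. None of its nodes lies in $\seedset\setminus\Sset$ either, since those nodes are deleted in $G_{T,S}$ and are themselves seeds. So every edge of the suffix enters a non-seed node, which means it belongs to both $G_{T,S}$ and $G'$. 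The same truncation applies starting from $g'$: a path in $G'$ cannot pass through any seed after its first vertex, because $G'$ has no edges entering seeds. Its first vertex is in $\Sset$, so the path avoids $\seedset\setminus\Sset$ and lies in $G_{T,S}$. Hence the two reachability events coincide under the coupling, and their probabilities are equal.

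\textbf{Step 3: identify the left-hand side.} By definition, $\rrset_v$ is the set of nodes that can reach $v$ in a random live-edge graph $g'$ of $G'$. Therefore $\Sset\cap\rrset_v\neq\emptyset$ holds exactly when some node of $\Sset$ reaches $v$ in $g'$. Combining this with Steps 1 and 2 gives \Cref{eq:rr_equivalence}.

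The main obstacle is Step 2, which must handle two differences between the graphs. First, $G'$ keeps the seeds in $\seedset\setminus\Sset$ together with their outgoing edges, whereas $G_{T,S}$ deletes them. Second, $G'$ drops the edges entering seeds in $\Sset$, whereas $G_{T,S}$ keeps them. The path-truncation argument above resolves both: seeds in $\Sset$ are active at time $0$, so edges into them never matter, and a seed outside $\Sset$ is unreachable in $G'$ and so can only be the start of a path, which does not affect reachability from $\Sset$.
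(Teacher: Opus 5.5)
Your proposal is correct and follows essentially the same route as the paper: RR-set intersection is identified with live-edge reachability in $G'$, and that reachability is matched to activation once you check that the graph modification does not change activation of non-seed nodes. Your Step 2 coupling supplies the justification the paper states in one line, and your reading of ``activates $v$ in $G$'' as diffusion on $G_{T,S}$ is the one under which the statement holds. Under a literal reading on the full $G$, a seed in $\seedset\setminus\Sset$ could be activated by $\Sset$ and relay influence to $v$.
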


\begin{proof}
    Let $g$ be a live-edge graph obtained by removing each edge $e$ with $1-p(e)$ probability from $G^{\prime}$. We generate an RR set $\rrset_v$ for $v$ in $g$ by traversing the graph in reverse starting from $v$. Then the event $\Sset$ intersects $\rrset_v$ is equivalent to that there exists a path from $\Sset$ to $v$ in $g$. Moreover, the event that $\Sset$ activates $v$ in $g$ is also equivalent to that there exists a path from $\Sset$ to $v$ in $g$. Therefore, the two events are equivalent in any realization of $g$. Then their probabilities over the distribution of $g$ must be equal:
    \begin{equation}
    \begin{aligned}
        \Pr_{g \sim G^{\prime}}[\Sset \cap \rrset_v \neq \emptyset] & = \Pr_{g \sim G^{\prime}}[\text{there exists a path from } \Sset \text{ to } v \text{ in } g] \\
        & = \Pr_{g \sim G^{\prime}}[v \text{ is activated by } \Sset \text{ in } G^{\prime}]
    \end{aligned}
    \end{equation}

    Moreover, since $G'$ is the modified graph obtained by removing incoming edges to the seed nodes in $\Sset$ from $G$. For all non-seed nodes, the activation probability is unchanged:
    \begin{equation}
        \begin{aligned}
            \Pr_{g \sim G^{\prime}}[S \cap R_v \neq \emptyset]& = \Pr_{g \sim G^{\prime}}[v \text{ is activated by } S \text{ in } G^{\prime}]\\
            & = \Pr_{g \sim G}[v \text{ is activated by } S \text{ in } G]
        \end{aligned}
        \end{equation}

    This establishes the desired connection between the RR sets and the influence propagation process, completing the proof.
\end{proof}

Then we can establish the following relationship between the value function of any subset of seed nodes and the probability that a random RR set intersects with this subset.

\begin{restatable}{lemma}{lemrrvfunction}\label{lem:rr_valuefunction}
    Given ($G(V,E)$, $\seedset$,$\actprob$), let $\rrset$ be a random RR set generated in $G^{\prime}$ by selecting a node $v$ uniformly at random from $\nodeset \setminus \seedset$ and performing reverse reachability from $v$. Then for any $\Sset \subseteq \seedset$, its value function is proportional to the probability that $\Sset$ intersects $\rrset$:
    \begin{equation} \label{eq:rr_valuefunction}
        \val_{G,\seedset}(\Sset) = \rrsize \cdot \Pr[\Sset \cap \rrset \neq \emptyset]    
    \end{equation}
    where $\rrsize = |\nodeset \setminus \seedset|$.
    
\end{restatable}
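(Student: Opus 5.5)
The plan is to start from \Cref{def:value_function}, write $\val_{G,\seedset}(\Sset)$ as a sum of per-node activation probabilities, and then replace each probability with an RR-set intersection probability using \Cref{lem:rr_equivalence}.

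\textbf{Per-node decomposition.} For a fixed $\Sset \subseteq \seedset$, linearity of expectation gives
\begin{equation*}
\val_{G,\seedset}(\Sset) = \sum_{v \in \nodeset \setminus \seedset} \Pr[v \text{ is activated in } (G_{\seedset,\Sset}, \actprob, \Sset)].
\end{equation*}

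\textbf{Reconciling the graph $G_{\seedset,\Sset}$ with $G'$.} This is the only step needing care, because \Cref{lem:rr_equivalence} talks about activation "in $G$", while $\val$ is defined on $G_{\seedset,\Sset}$. I will argue through live-edge realizations and show that the following three events coincide in every realization:
\begin{itemize}
\item $v$ is reachable from $\Sset$ in a live-edge sample of $G_{\seedset,\Sset}$;
\item $v$ is reachable from $\Sset$ in a live-edge sample of $G'$ restricted to paths avoiding $\seedset \setminus \Sset$;
\item $\Sset \cap \rrset_v \neq \emptyset$, where $\rrset_v$ is the RR set of $v$ in $G'$.
\end{itemize}
The reason is that in $G'$ every seed has in-degree zero. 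So any path from $\Sset$ to a non-seed $v$ can be cut at its last seed, which lies in $\Sset$ if the path starts in $\Sset$. The remaining subpath starts at that seed in $\Sset$ and passes only through non-seed nodes. Such a path avoids $\seedset \setminus \Sset$ and uses only edges present in $G_{\seedset,\Sset}$, with the same probabilities. Conversely, paths in $G_{\seedset,\Sset}$ never enter $\seedset \setminus \Sset$. Edges into members of $\Sset$ that appear in $G_{\seedset,\Sset}$ but not in $G'$ are irrelevant, since those seeds are active from step $0$ anyway.

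This establishes $\Pr[v \text{ activated in } G_{\seedset,\Sset}] = \Pr[\Sset \cap \rrset_v \neq \emptyset]$. The identity is also what \Cref{lem:rr_equivalence} asserts, read with "$G$" meaning the coalition graph. Here I assume \fullstep termination, as is implicit in the RR-set framework.

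\textbf{Averaging over the random root.} With the root $v$ chosen uniformly from $\nodeset \setminus \seedset$, the law of total probability gives
\begin{equation*}
\Pr[\Sset \cap \rrset \neq \emptyset] = \frac{1}{\rrsize} \sum_{v \in \nodeset \setminus \seedset} \Pr[\Sset \cap \rrset_v \neq \emptyset].
\end{equation*}
Multiplying by $\rrsize$ and substituting the per-node identity from the previous step yields \Cref{eq:rr_valuefunction}.
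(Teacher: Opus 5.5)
Your proposal is correct and follows the paper's proof: write $\val_{G,\seedset}(\Sset)$ as a sum over $v\in\nodeset\setminus\seedset$ of activation probabilities, replace each term by $\Pr[\Sset\cap\rrset_v\neq\emptyset]$ via \Cref{lem:rr_equivalence}, and average over the uniformly random root. Your extra live-edge argument identifying the coalition graph $G_{\seedset,\Sset}$ with $G'$ (seeds have in-degree zero in $G'$, so every reaching path can be cut at its last seed) makes precise what the paper's \Cref{lem:rr_equivalence} states loosely as activation ``in $G$''. Your restriction to \fullstep termination is also appropriate, since unrestricted reverse reachability only captures that case.
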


\begin{proof}
    First, by \Cref{def:value_function}, the value function $\val(\Sset)$ is the expected number of nodes in $V \setminus \seedset$ activated by $\Sset$:
    \begin{equation} 
        \begin{aligned}
            \val_{G,\seedset}(\Sset)
            & = \sum_{v \in V \setminus \seedset} \Pr[ v \text{ is activated by } \Sset ]
        \end{aligned}
        \end{equation}
    Then, By \Cref{lem:rr_equivalence}, for each  $v \in V \setminus \seedset$:
\begin{equation}    
\Pr[ v \text{ is activated by } \Sset ] = \Pr[ \Sset \cap \rrset_v \neq \emptyset ]
\end{equation}
Given that each node $v \in V \setminus \seedset$ has an equal probability of being chosen when generating a random RR set $\rrset$, the probability over all $v \in V \setminus \seedset$ is:
\begin{equation}
\begin{aligned}
    \Pr[\Sset \cap \rrset \neq \emptyset] 
    &= \sum_{v \in \nodeset \setminus \seedset} \Pr[v \text{ is selected}] \cdot \Pr[\Sset \cap \rrset_v \neq \emptyset] \\
    &= \sum_{v \in \nodeset \setminus \seedset} \frac{1}{\rrsize} \cdot \Pr[\Sset \cap \rrset_v \neq \emptyset] \\
    &= \frac{1}{\rrsize} \sum_{v \in \nodeset \setminus \seedset} \Pr[v \text{ is activated by } \Sset] \\
    &= \frac{\val_{G,\seedset}(\Sset)}{\rrsize}
\end{aligned}
\end{equation}
Rearranging the final equation, we obtain \Cref{eq:rr_valuefunction} and complete the proof.
\end{proof}

Then we connect the marginal contribution of a seed node to the probability that the node appears in a random RR set:
\begin{lemma}\label{lem:marginal_rr}
Given ($G(V,E)$, $\seedset$,$\actprob$), for any $\Sset \subseteq \seedset$ and any $\seed \in \seedset \setminus \Sset$, the marginal contribution of $\seed$ to $\Sset$ is:
\begin{equation}\label{eq:marginal_rr}
\val(\Sset \cup { \seed }) - \val(\Sset) = \rrsize \cdot \Pr[ \seed \in \rrset \land \Sset \cap \rrset = \emptyset ]
\end{equation}
\end{lemma}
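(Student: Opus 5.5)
We plan to derive \Cref{lem:marginal_rr} directly from \Cref{lem:rr_valuefunction}, applied twice: once to the coalition $\Sset \cup \{\seed\}$ and once to $\Sset$. By \Cref{lem:rr_valuefunction}, both value functions are expressed with respect to the same random RR set $\rrset$ generated in $G^{\prime}$ (root uniform over $\nodeset \setminus \seedset$). So
\begin{equation*}
\val(\Sset \cup \{\seed\}) - \val(\Sset) = \rrsize \cdot \Bigl( \Pr[(\Sset \cup \{\seed\}) \cap \rrset \neq \emptyset] - \Pr[\Sset \cap \rrset \neq \emptyset] \Bigr).
\end{equation*}
It is essential that both probabilities refer to one and the same random object $\rrset$, with no dependence on the coalition. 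This holds because $G^{\prime}$ is obtained from $G$ by deleting incoming edges to all of $\seedset$, independent of which subset is active. Consequently, \Cref{lem:rr_equivalence} holds for every $\Sset \subseteq \seedset$ on the same sample space.

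Next we will handle the difference of probabilities by an event decomposition. The event $\{(\Sset \cup \{\seed\}) \cap \rrset \neq \emptyset\}$ is the disjoint union of two events:
\begin{itemize}
\item $\{\Sset \cap \rrset \neq \emptyset\}$;
\item $\{\seed \in \rrset \land \Sset \cap \rrset = \emptyset\}$.
\end{itemize}
These are disjoint because the second event requires $\Sset \cap \rrset = \emptyset$. Their union is the whole event, since $(\Sset \cup \{\seed\}) \cap \rrset \neq \emptyset$ means either $\rrset$ meets $\Sset$, or it misses $\Sset$ but contains $\seed$. By additivity of probability, the difference in the display therefore equals $\Pr[\seed \in \rrset \land \Sset \cap \rrset = \emptyset]$. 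Multiplying by $\rrsize$ gives \Cref{eq:marginal_rr}.

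The only subtle point is justifying that \Cref{lem:rr_valuefunction} applies with a common $G^{\prime}$ to coalitions of the seed set. This also needs to be consistent with \Cref{def:value_function}, which evaluates $\val(\Sset)$ on $G_{T,S}$, where the seeds of $\seedset \setminus \Sset$ are deleted. We will argue it as follows. In $G^{\prime}$, seeds have no incoming edges, so an inactive seed $w \in \seedset \setminus \Sset$ can never become active. Hence $w$ cannot relay influence, and it behaves exactly as if deleted. Reachability from $\Sset$ to a non-seed $v$ in a live-edge sample of $G^{\prime}$ therefore coincides with reachability in $G_{T,S}$ restricted to paths avoiding other seeds. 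This is the same identification already used implicitly in \Cref{lem:rr_equivalence}. With that accepted, the rest is the short set-theoretic computation above.
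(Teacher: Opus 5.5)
Your proposal is correct and matches the paper's proof essentially step for step: apply \Cref{lem:rr_valuefunction} to both $\Sset$ and $\Sset \cup \{\seed\}$, write $\{(\Sset\cup\{\seed\})\cap\rrset\neq\emptyset\}$ as the disjoint union of $\{\Sset\cap\rrset\neq\emptyset\}$ and $\{\seed\in\rrset \land \Sset\cap\rrset=\emptyset\}$, and subtract. Your extra justification is sound and is left implicit in the paper's \Cref{lem:rr_equivalence}: $G^{\prime}$, and hence the law of $\rrset$, depends only on $\seedset$ and not on the coalition, and inactive seeds in $G^{\prime}$ behave as if deleted.
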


\begin{proof}
    From Lemma \ref{lem:rr_valuefunction}, we have:
    \begin{align}
    \val(\Sset) & = \rrsize \cdot \Pr[ \Sset \cap \rrset \ne \emptyset ]\\
    \val( \Sset \cup \{ \seed \} ) & = \rrsize \cdot \Pr[ ( \Sset \cup \{ \seed \} ) \cap \rrset \ne \emptyset ]
    \end{align}
    
    Moreover, since $\seed \in \seedset \setminus \Sset$, the events $\Sset \cap \rrset \ne \emptyset$ and $\seed \in \rrset \land \Sset \cap \rrset = \emptyset$ are disjoint. Then:
    \begin{equation}
        \Pr[ ( \Sset \cup \{ \seed \} ) \cap \rrset \ne \emptyset ] = \Pr[ \Sset \cap \rrset \ne \emptyset ] + \Pr[ \seed \in \rrset \land \Sset \cap \rrset = \emptyset ]
    \end{equation}

    Therefore, the marginal contribution is:
    
    \begin{equation}
    \begin{aligned}
    \val( \Sset \cup \{ \seed \} ) - \val(\Sset) &= \rrsize \left( \Pr[ ( \Sset \cup \{ \seed \} ) \cap \rrset \ne \emptyset ] - \Pr[ \Sset \cap \rrset \ne \emptyset ] \right) \\
    &= \rrsize \cdot \Pr[ \seed \in \rrset \land \Sset \cap \rrset = \emptyset ]
    \end{aligned}
    \end{equation}
    
\end{proof}

Using the above results, we prove \Cref{lem:shapley_rr} as follows:
\begin{proof}[Proof of \Cref{lem:shapley_rr}]
    Since seed node are in the intersetction of $\rrset$ and $\seedset$, we replace the RR set $\rrset$ with $\rrset^{\prime} = \rrset \cap \seedset$ in the expression of marginal contribution:
    \begin{equation}
        \Pr[\seed \in \rrset \land S_{\pi,\seed} \cap \rrset = \emptyset] = \Pr[\seed \in \rrset^{\prime} \land S_{\pi,\seed} \cap \rrset^{\prime} = \emptyset] 
    \end{equation}
    
    Applying Lemma \ref{lem:marginal_rr} to the definition of Shapley value (\Cref{def:shapley_influence}), we have:
    \begin{equation}
    \begin{aligned}
    \shap(\seed) & = \mathbb{E}_{\pi \sim \Pi(\seedset)} [\val(S_{\pi,\seed} \cup \{\seed\}) - \val(S_{\pi,\seed})] \\
    & = \mathbb{E}_{\pi \sim \Pi(\seedset)} [ \rrsize  \Pr [\seed \in \rrset^{\prime} \land S_{\pi,\seed} \cap \rrset^{\prime} = \emptyset] ]\\
    &= \mathbb{E}_{\pi}[\rrsize \mathbb{E}_{\rrset^{\prime}}[\mathbb{I}\{t \in \rrset^{\prime} \land S_{\pi,t} \cap \rrset^{\prime} = \emptyset\}]]\\
    &= \rrsize \mathbb{E}_{\rrset^{\prime}}[\mathbb{E}_{\pi}[\mathbb{I}\{t \in \rrset^{\prime} \land S_{\pi,t} \cap \rrset^{\prime} = \emptyset\}]]
    \end{aligned}
    \end{equation}
    
    And the probability that $S_{\pi,\seed} \cap \rrset^{\prime} = \emptyset$ (i.e., $\seed$ is the first node from $\rrset^{\prime}$ in $\pi$) is $\frac{1}{|\rrset^{\prime}|}$ when $\seed \in \rrset^{\prime}$, and $0$ otherwise, because $\pi$ is a uniformly random permutation of $\seedset$ and $\seed$ must be the first node from $\rrset^{\prime}$ in $\pi$ for the event to happen. Therefore,
    \begin{equation}
    \begin{aligned}   
    & \mathbb{E}_{\pi}[\mathbb{I}\{t \in \rrset^{\prime} \land S_{\pi,t} \cap \rrset^{\prime} = \emptyset\}]\\
    &= \mathbb{I}\{\seed \in \rrset^{\prime}\} \cdot \mathbb{E}_{\pi}\left[\mathbb{I}\{S_{\pi,\seed} \cap \rrset^{\prime} = \emptyset\} \mid \seed \in \rrset^{\prime}\right] \\
    &= \frac{\mathbb{I}\{\seed \in \rrset^{\prime}\} }{|\rrset^{\prime}|}
\end{aligned}
    \end{equation}
    
    Combining, we get:
    
    \begin{equation}
    \shap(\seed) = \rrsize \cdot \mathbb{E}_{\rrset^{\prime}}\left[\frac{\mathbb{I}\{\seed \in \rrset^{\prime}\}}{|\rrset^{\prime}|}\right] 
    \end{equation}
\end{proof}

\subsection{Approximation Guarantees}\label{subsec:rrset_approximation_guarantees}
After establishing the relationship between RR sets and the Shapley value, we show that the final estimator $\widehat{\shap}^{(k)}$ of $\algrrset$ (\Cref{alg:shapley_rr}) serves as a good estimate of $\shap^{(k)}$ with high probability, as shown in \Cref{thm:rr_approx}:
\begin{restatable}{proposition}{rrguarantee}\label{thm:rr_approx}
    For any $\varepsilon > 0$, $\ell > 0$ and $k \in [|\seedset|]$, suppose that $\shap^{(k)} \geq 1$, with probability at least $1 - 1/n^{\ell}$, $\algrrset$ returns $\widehat{\shap}(\seed)$ for all $\seed \in \seedset$ such that:
    \begin{small}
    \begin{equation}\label{eq:final_guarantee}
    \begin{cases}
    |\widehat{\shap}(\seed) - \shap(\seed)| \leq \varepsilon \shap(\seed), & \forall \seed \text{ with } \shap(\seed) > \shap^{(k)}, \\
    |\widehat{\shap}(\seed) - \shap(\seed)| \leq \varepsilon \shap^{(k)}, & \forall \seed \text{ with } \shap(\seed) \leq \shap^{(k)}.
    \end{cases}
    \end{equation}
    \end{small}
\end{restatable}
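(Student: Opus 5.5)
The plan is to adapt the two-phase analysis of \cite{chen2017interplay}, using \Cref{lem:shapley_rr} as the unbiasedness engine; I read the $n$ in the statement as $\rrsize$. For a random RR set $R_j$ in $G^{\prime}$, define $Y_j(\seed) = \mathbb{I}\{\seed \in R_j\cap\seedset\}/|R_j\cap\seedset| \in [0,1]$. By \Cref{lem:shapley_rr}, $\mu_\seed := \mathbb{E}[Y_j(\seed)] = \shap(\seed)/\rrsize$, and the output is $\widehat{\shap}(\seed) = \rrsize \sum_j Y_j(\seed)/\theta$. The only probabilistic tool will be the standard Chernoff bounds for i.i.d.\ $[0,1]$ variables with mean $\mu$:
\[
\Pr\Big[\sum_{j} Y_j \ge (1+\delta)\theta\mu\Big] \le \exp\Big(-\frac{\delta^2\theta\mu}{2+\frac{2}{3}\delta}\Big),
\qquad
\Pr\Big[\sum_{j} Y_j \le (1-\delta)\theta\mu\Big] \le \exp\Big(-\frac{\delta^2\theta\mu}{2}\Big).
\]
I will split the failure probability $1/\rrsize^{\ell}$ into two halves: one for Phase 1 (the event $\boldsymbol{LB} \le \shap^{(k)}$ fails) and one for Phase 2 (the final estimates deviate).

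\textbf{Phase 1: $\boldsymbol{LB}\le\shap^{(k)}$ w.h.p.} The round sizes $\theta_i$ are deterministic, so within round $i$ the accumulated $\theta_i$ RR sets are i.i.d. I will show that with probability at least $1-1/(2\rrsize^{\ell})$, for every round $i$ and every seed $\seed$,
\[
\rrsize\,\boldsymbol{est}_\seed/\theta_i \le (1+\varepsilon')\max(\shap(\seed),x_i).
\]
Apply the upper tail with deviation $\varepsilon'\max(\shap(\seed),x_i)$, i.e.\ $\delta = \varepsilon'\max(\shap(\seed),x_i)/\shap(\seed)$. The exponent is then at least $\varepsilon'^2 x_i\theta_i/(\rrsize(2+\frac{2}{3}\varepsilon'))$, by the same monotonicity manipulation used in case (b) below. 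Plugging in $\theta_i$ gives per-event failure at most $1/(2\rrsize^{\ell}|\seedset|\log_2\rrsize)$, and a union bound over seeds and at most $\log_2\rrsize$ rounds finishes this step.

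Next suppose the algorithm breaks in round $i$. Then $k$ seeds have estimate at least $\rrsize\boldsymbol{est}^{(k)}/\theta_i \ge (1+\varepsilon')x_i$. Among any $k$ seeds, at least one, say $\seed^\ast$, has $\shap(\seed^\ast)\le\shap^{(k)}$. Its estimate exceeds $(1+\varepsilon')x_i$, so the max above must be $\shap(\seed^\ast)$, and hence
\[
\rrsize\,\boldsymbol{est}^{(k)}/\theta_i \le \rrsize\,\boldsymbol{est}_{\seed^\ast}/\theta_i \le (1+\varepsilon')\shap^{(k)}.
\]
This gives $\boldsymbol{LB}\le\shap^{(k)}$. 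If no break occurs, $\boldsymbol{LB}=1\le\shap^{(k)}$ by the hypothesis $\shap^{(k)} \ge 1$.

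\textbf{Phase 2.} Phase 2 resets the counters and draws fresh RR sets, so conditioned on $\boldsymbol{LB}$ (hence $\theta$) the $Y_j$ are i.i.d. Assume $\boldsymbol{LB}\le\shap^{(k)}$.
\begin{itemize}
\item[(a)] If $\shap(\seed)>\shap^{(k)}$, take $\delta=\varepsilon$. Both tails have exponent at least $\varepsilon^2\theta\,\shap(\seed)/(\rrsize(2+\frac{2}{3}\varepsilon)) \ge \varepsilon^2\theta\boldsymbol{LB}/(\rrsize(2+\frac{2}{3}\varepsilon))$, which by the choice of $\theta$ is at least $\ell\ln\rrsize+\ln|\seedset|+\ln 4$.
\item[(b)] If $\shap(\seed)\le\shap^{(k)}$, take $\delta=\varepsilon\shap^{(k)}/\shap(\seed)$, so the deviation is $\varepsilon\shap^{(k)}$. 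The upper-tail exponent equals
\[
\frac{\varepsilon^2(\shap^{(k)})^2\theta}{\rrsize\,(2\shap(\seed)+\frac{2}{3}\varepsilon\shap^{(k)})} \ge \frac{\varepsilon^2\shap^{(k)}\theta}{\rrsize(2+\frac{2}{3}\varepsilon)},
\]
using $\shap(\seed)\le\shap^{(k)}$. The lower tail is at least as good, or trivial when $\delta\ge1$.
\end{itemize}
In both cases each tail fails with probability at most $1/(4\rrsize^{\ell}|\seedset|)$. A union bound over two tails and $|\seedset|$ seeds gives $1/(2\rrsize^{\ell})$, and combining with Phase 1 yields the claimed guarantee.

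The step I expect to be the main obstacle is Phase 1. The first difficulty is the adaptivity: the stopping round is data-dependent, which is why I bound every round simultaneously rather than conditioning on the stopping round. The second is checking that the constants in $\theta_i$, with $\varepsilon'=\sqrt2\varepsilon$ and the extra $\ln\log_2\rrsize+\ln 2$ term, exactly absorb the union bound. If the $\max(\shap(\seed),x_i)$ formulation fails to close cleanly, the fallback is to handle seeds with $\shap(\seed)\ge x_i$ and $\shap(\seed)<x_i$ as two separate cases, as in \cite{chen2017interplay}.
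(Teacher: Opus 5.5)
Your proposal is correct and follows the paper's proof. Phase~1 gives $\boldsymbol{LB}\le\shap^{(k)}$ except with probability $1/(2{n'}^{\ell})$; Phase~2 uses the same two-case Chernoff argument ($\delta=\varepsilon$ when $\shap(\seed)>\shap^{(k)}$, $\delta=\varepsilon\shap^{(k)}/\shap(\seed)$ otherwise) with $\boldsymbol{LB}$ in place of $\shap^{(k)}$ in $\theta$; and a union bound finishes. Your uniform bound $n'\boldsymbol{est}_{\seed}/\theta_i<(1+\varepsilon')\max(\shap(\seed),x_i)$ over all seeds and rounds is a tidier merge of the paper's two cases ($x_i>\shap^{(k)}$: no break; $x_i\le\shap^{(k)}$: $\boldsymbol{LB}_i<\shap^{(k)}$) with the same pigeonhole step, using plain Chernoff on fixed-size prefixes instead of the martingale bound. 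Keep that event strict, as the Chernoff bound gives, so that a break at exact equality with $(1+\varepsilon')x_i$ is covered.
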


In this section, we first prove that our estimator, which uses the frequency of a seed node's appearance in RR sets, provides an unbiased estimate of the Shapley value (see \Cref{subsec:unbiasedness_of_estimator}).
Then in \Cref{subsec:approximation_guarantees}, we derive concentration bounds to show that our estimator converges to the true Shapley value with high probability, given a sufficient number of RR sets. Since the number of RR sets required depends on the true value of the $k$-th largest Shapley value among all seed nodes, which is unknown, we estimate a lower bound for the $k$-th largest Shapley value using a martingale approach in \Cref{subsec:estimating_lower_bound_for_shap_k}. Finally, we combine the results to show the proof of \Cref{thm:rr_approx} in \Cref{subsec:rr_approx_proof}.

\subsubsection{\textbf{Unbiasedness of the Estimator}}\label{subsec:unbiasedness_of_estimator}

By \Cref{alg:shapley_rr}, our estimator for the Shapley value of each $\seed \in \seedset$ is:

\begin{equation}
\widehat{\shap(\seed)} = \rrsize \cdot \frac{1}{\theta} \sum_{i=1}^{\theta} X_{\rrset_i}(\seed)
\end{equation}

where $\rrsize = |V \setminus \seedset|$, $\theta$ is the total number of RR sets generated, and $\rrset_1, \rrset_2, \dots, \rrset_{\theta}$ are independent random RR sets generated in $G^{\prime}$. The random variable $X_{\rrset_i}(\seed)$ equals to $\frac{1}{|\rrset_i^{\prime}|}$ if $\seed \in \rrset_i^{\prime}$, and $0$ otherwise. We establish the following result:
\begin{restatable}{lemma}{unbiasedestimator}\label{lem:unbiased_estimator}
For any $\seed \in \seedset$, $\algrrset$ returns an unbiased estimator of $\shap(\seed)$, i.e.,
\begin{equation}
    \mathbb{E}[\widehat{\shap(\seed)}] = \shap(\seed)
\end{equation}
\end{restatable}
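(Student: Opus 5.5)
The plan is to reduce the claim to \Cref{lem:shapley_rr} using linearity of expectation and the identical distribution of the RR sets generated in Phase~2. The main subtlety is what is held fixed. Here $\theta$ is treated as a fixed positive integer, which is how the estimator is written above. Equivalently, $\theta$ is computed in Phase~1 from RR sets that are independent of the Phase~2 samples, since Phase~2 resets $\boldsymbol{est}_t$ and draws fresh RR sets. I would therefore condition on $\theta$ (and on the Phase~1 randomness), prove unbiasedness conditionally, and then remove the conditioning with the tower property.

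First, note that $\rrset_1,\dots,\rrset_\theta$ are each generated the same way in $G^{\prime}$: pick a root uniformly from $V\setminus\seedset$, sample a live-edge graph, and take reverse reachability. So each has the law of the random RR set $\rrset$ in \Cref{lem:shapley_rr}. Consequently the random variables $X_{\rrset_i}(\seed)=\mathbb{I}\{\seed\in\rrset_i^{\prime}\}/|\rrset_i^{\prime}|$, with $\rrset_i^{\prime}=\rrset_i\cap\seedset$, are identically distributed. Each takes values in $[0,1]$, using the convention $0/0=0$ when $\rrset_i^{\prime}=\emptyset$; this matches the algorithm, which skips RR sets that do not meet $\seedset$. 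So the expectations below are finite.

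Next, compute conditionally on $\theta$:
\[
\mathbb{E}\bigl[\widehat{\shap(\seed)}\mid\theta\bigr]=\frac{\rrsize}{\theta}\sum_{i=1}^{\theta}\mathbb{E}\bigl[X_{\rrset_i}(\seed)\bigr]=\rrsize\cdot\mathbb{E}_{\rrset}\!\left[\frac{\mathbb{I}\{\seed\in\rrset^{\prime}\}}{|\rrset^{\prime}|}\right].
\]
By \Cref{lem:shapley_rr}, the right-hand side equals $\shap(\seed)$. Since this value does not depend on $\theta$, taking the outer expectation gives $\mathbb{E}[\widehat{\shap(\seed)}]=\shap(\seed)$.

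The only real obstacle is justifying that Phase~2 samples are independent of $\theta$. If one instead reused the Phase~1 samples, the stopping rule could introduce bias. I would point to the fact that \Cref{alg:shapley_rr} resets the counters and generates new RR sets after $\theta$ is fixed, so $\theta$ is independent of $\rrset_1,\dots,\rrset_\theta$ and the conditioning argument is valid.
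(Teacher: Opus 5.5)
Your proposal is correct and follows essentially the same route as the paper: both apply linearity of expectation to the identically distributed Phase~2 variables $X_{\rrset_i}(\seed)$, whose mean is $\shap(\seed)/\rrsize$ by \Cref{lem:shapley_rr}. Your extra step of conditioning on $\theta$ and using that Phase~2 draws fresh RR sets independent of Phase~1 is a worthwhile justification. The paper leaves it implicit by treating $\theta$ as fixed.
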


\begin{proof}
    \sloppy
    Since $X_{\rrset_i}(\seed)$ are independent and identically distributed random variables with expectation $\mathbb{E}[X_{\rrset_i}(\seed)] = \frac{\shap(\seed)}{n'}$ (from \Cref{lem:shapley_rr}) and linearity of expectation, we have:
    
    \begin{equation}
    \begin{aligned}
        \mathbb{E}[\widehat{\shap(\seed)}] & = \mathbb{E}[\rrsize \cdot \frac{1}{\theta} \sum_{i=1}^{\theta} X_{\rrset_i}(\seed)]\\
        & = \rrsize \cdot \frac{1}{\theta}\sum_{i=1}^{\theta} \mathbb{E}[X_{\rrset_i}(\seed)]\\
        & = \rrsize \cdot \frac{1}{\theta} \sum_{i=1}^{\theta} \frac{\shap(\seed)}{n^{\prime}}\\
        & = \shap(\seed)
    \end{aligned}
    \end{equation}
\end{proof}

\subsubsection{\textbf{Approximation Bound}}\label{subsec:approximation_guarantees} The estimator provides accurate approximations with high probability when a sufficient number of RR sets are generated, as shown in the following \Cref{lem:approximation_guarantee}:
\begin{restatable}{lemma}{approxguarantee}\label{lem:approximation_guarantee}
    For any $\varepsilon > 0$, $\ell > 0$, and $k \in [ | \seedset | ]$. If the number of RR sets $\theta$ satisfies:
    %Suppose \Cref{alg:shapley_rr} generates
    \begin{equation}\label{eq:theta_value}
        \theta \geq \frac{ \rrsize \left( \ell \ln \rrsize + \ln |\seedset| + \ln 4 \right) \left( 2 + \frac{2}{3} \varepsilon \right) }{ \varepsilon^2 \shap^{(k)} }
    \end{equation}
    \sloppy
    where $\shap^{(k)}$ is the $k$-th largest Shapley value among all $\{ \shap(\seed) \}_{ \seed \in \seedset }$. Then, the following holds with probability at least $1 - \frac{1}{2 {\rrsize}^\ell}$:
    \begin{enumerate}
    \item For all $\seed \in \seedset$ with $\shap(\seed) > \shap^{(k)}$,
    \begin{equation}
    \left| \widehat{ \shap(\seed) } - \shap(\seed) \right| \leq \varepsilon \shap(\seed)
    \end{equation}
    \item For all $\seed \in \seedset$ with $\shap(\seed) \leq \shap^{(k)}$,
    \begin{equation}
    \left| \widehat{ \shap(\seed) } - \shap(\seed) \right| \leq \varepsilon \shap^{(k)}
    \end{equation}
    \end{enumerate}
\end{restatable}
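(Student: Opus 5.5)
We would prove \Cref{lem:approximation_guarantee} by a per-seed martingale/Chernoff concentration bound followed by a union bound over $\seedset$, in the style of \cite{chen2017interplay,tang2015influence}. For a fixed seed $\seed$, write $\widehat{\shap(\seed)} = \frac{\rrsize}{\theta}\sum_{i=1}^{\theta} X_{\rrset_i}(\seed)$, where the $X_{\rrset_i}(\seed)\in[0,1]$ are i.i.d.\ with mean $\mu=\shap(\seed)/\rrsize$ by \Cref{lem:unbiased_estimator}. We will use the standard Bernstein/Chernoff form for bounded variables in $[0,1]$ with mean $\mu$:
\begin{equation*}
\Pr\Bigl[\bigl|\textstyle\sum_i X_i - \theta\mu\bigr| \ge \lambda\Bigr] \le 2\exp\Bigl(-\frac{\lambda^2}{2\theta\mu + \frac{2}{3}\lambda}\Bigr).
\end{equation*}
This form uses $\mathrm{Var}(X_i)\le \mathbb{E}[X_i^2]\le \mathbb{E}[X_i]=\mu$, which is where the constants $2$ and $\frac23$ come from.

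\textbf{Case 1: $\shap(\seed) > \shap^{(k)}$.} Take $\lambda = \varepsilon\theta\mu$. The exponent becomes $\frac{\varepsilon^2\theta\mu}{2+\frac23\varepsilon} = \frac{\varepsilon^2\theta\,\shap(\seed)}{\rrsize(2+\frac23\varepsilon)}$. Since $\shap(\seed)\ge \shap^{(k)}$, the bound \eqref{eq:theta_value} on $\theta$ makes the exponent at least $\ell\ln\rrsize+\ln|\seedset|+\ln 4$. Hence the failure probability is at most $2\cdot\frac{1}{4\rrsize^{\ell}|\seedset|} = \frac{1}{2\rrsize^\ell|\seedset|}$.

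\textbf{Case 2: $\shap(\seed) \le \shap^{(k)}$.} Take $\lambda = \varepsilon\theta\mu^{(k)}$ with $\mu^{(k)}=\shap^{(k)}/\rrsize$. Then $\frac{\lambda^2}{2\theta\mu+\frac23\lambda} \ge \frac{\varepsilon^2\theta^2(\mu^{(k)})^2}{2\theta\mu^{(k)}+\frac23\varepsilon\theta\mu^{(k)}} = \frac{\varepsilon^2\theta\mu^{(k)}}{2+\frac23\varepsilon}$, using $\mu\le\mu^{(k)}$ to enlarge the denominator. The same substitution of $\theta$ gives failure probability at most $\frac{1}{2\rrsize^\ell|\seedset|}$.

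A union bound over the $|\seedset|$ seeds then gives total failure probability at most $\frac{1}{2\rrsize^\ell}$. Multiplying the deviations by $\rrsize/\theta$ converts them into $|\widehat{\shap(\seed)}-\shap(\seed)|\le\varepsilon\shap(\seed)$ in Case 1 and $\le\varepsilon\shap^{(k)}$ in Case 2.

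The main obstacle is justifying the concentration inequality with the precise constants $2+\frac23\varepsilon$. The summands are fractional ($1/|\rrset'|$), not Bernoulli, so the usual RR-set Chernoff bound does not apply verbatim. We would handle this as \cite{chen2017interplay} does: invoke the martingale/Bernstein bound for $[0,1]$-valued variables, whose variance proxy is the mean. Independence of the $\theta$ RR sets is needed here, and it holds because $\theta$ is fixed in this lemma; the adaptive choice of $\theta$ is deferred to the later analysis.
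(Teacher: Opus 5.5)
Your proposal is correct and follows essentially the same argument as the paper. Both proofs apply, per seed, a two-case concentration bound: deviation $\varepsilon\shap(\seed)$ when $\shap(\seed)>\shap^{(k)}$, and deviation $\varepsilon\shap^{(k)}$ otherwise, using $\shap(\seed)\le\shap^{(k)}$ to enlarge the denominator. Each case gives failure probability $\frac{1}{2|\seedset|{\rrsize}^{\ell}}$, and a union bound over $\seedset$ finishes. Your additive Bernstein form, applied to both tails, is equivalent to the paper's multiplicative Chernoff bound with $\delta=\lambda/(\theta\mu)$, and it is slightly cleaner: it avoids the $0<\delta<1$ restriction on the paper's lower-tail bound, which the paper's Case~2 choice $\delta=\varepsilon\shap^{(k)}/\shap(\seed)$ does not guarantee.
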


First, we present the following Chernoff bounds \cite{chung2006concentration} that will be used in our analysis:

\begin{fact}[Chernoff Bounds]
Let $\boldsymbol{Y}$ be the sum of $t$ i.i.d. random variables with mean $\mu$ and within the range $[0, 1]$. For any $\delta > 0$, the following upper tail bound holds:
\begin{equation}
\Pr\{\boldsymbol{Y} - t\mu \geq \delta \cdot t\mu \} \leq \exp \left( - \frac{\delta^2}{2 + \frac{2}{3} \delta} t\mu \right).
\end{equation}
For any $0 < \delta < 1$, the lower tail bound is given by:
\begin{equation}
\Pr\{\boldsymbol{Y} - t\mu \leq - \delta \cdot t\mu \} \leq \exp \left( - \frac{\delta^2}{2} t\mu \right).
\end{equation}
\end{fact}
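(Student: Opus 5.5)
The approach is the standard moment-generating-function (Cramér--Chernoff) argument. Let $\boldsymbol{Y}=\sum_{i=1}^t X_i$ with the $X_i$ i.i.d., $X_i\in[0,1]$ and $\mathbb{E}[X_i]=\mu$. The first step is a per-variable bound on the MGF. Since $x\mapsto e^{\lambda x}$ is convex, for $x\in[0,1]$ we have $e^{\lambda x}\le 1-x+xe^{\lambda}$. Taking expectations gives, for every real $\lambda$,
\begin{equation*}
\mathbb{E}[e^{\lambda X_i}]\le 1+\mu(e^{\lambda}-1)\le \exp\bigl(\mu(e^{\lambda}-1)\bigr).
\end{equation*}
By independence, $\mathbb{E}[e^{\lambda \boldsymbol{Y}}]\le \exp\bigl(t\mu(e^{\lambda}-1)\bigr)$. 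This reduction from $[0,1]$-valued variables to the Poisson-type MGF is the only place where boundedness is used.

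\textbf{Upper tail.} For $\lambda>0$, Markov's inequality applied to $e^{\lambda \boldsymbol{Y}}$ gives
\begin{equation*}
\Pr\{\boldsymbol{Y}\ge(1+\delta)t\mu\}\le \exp\bigl(t\mu(e^\lambda-1)-\lambda(1+\delta)t\mu\bigr).
\end{equation*}
I will choose $\lambda=\ln(1+\delta)$, which yields the bound $\exp(-t\mu\, h(\delta))$ with $h(\delta)=(1+\delta)\ln(1+\delta)-\delta$. It then remains to prove the elementary inequality $h(\delta)\ge \delta^2/(2+\tfrac23\delta)=3\delta^2/(6+2\delta)$ for $\delta>0$. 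I expect this to be the main obstacle, and I plan to handle it by comparing derivatives.

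Both sides vanish at $\delta=0$. The derivatives are $h'(\delta)=\ln(1+\delta)$ and $\frac{d}{d\delta}\frac{3\delta^2}{6+2\delta}=\frac{36\delta+6\delta^2}{(6+2\delta)^2}$. So it suffices to show that $\ln(1+\delta)$ dominates the latter derivative. I will insert the intermediate bound $\ln(1+\delta)\ge \frac{2\delta}{2+\delta}$. This bound holds because both sides vanish at $0$ and their derivatives satisfy $\frac{1}{1+\delta}\ge\frac{4}{(2+\delta)^2}$, which is equivalent to $(2+\delta)^2\ge4(1+\delta)$. Finally, $\frac{2\delta}{2+\delta}\ge\frac{36\delta+6\delta^2}{4(3+\delta)^2}$ reduces, after clearing denominators, to
\begin{equation*}
18+12\delta+2\delta^2\ge 18+12\delta+\tfrac32\delta^2,
\end{equation*}
which is true.

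\textbf{Lower tail.} For $0<\delta<1$, apply Markov's inequality to $e^{-s\boldsymbol{Y}}$ with $s>0$, using the MGF bound at $\lambda=-s$:
\begin{equation*}
\Pr\{\boldsymbol{Y}\le(1-\delta)t\mu\}\le\exp\bigl(t\mu(e^{-s}-1)+s(1-\delta)t\mu\bigr).
\end{equation*}
I will choose $s=-\ln(1-\delta)$, which gives $\exp\bigl(-t\mu[\delta+(1-\delta)\ln(1-\delta)]\bigr)$. To conclude, I expand
\begin{equation*}
(1-\delta)\ln(1-\delta)=-\delta+\sum_{k\ge2}\frac{\delta^k}{k(k-1)}.
\end{equation*}
All terms of the sum are nonnegative for $0<\delta<1$, so $\delta+(1-\delta)\ln(1-\delta)\ge\delta^2/2$. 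This gives the stated bound $\exp(-\tfrac{\delta^2}{2}t\mu)$.
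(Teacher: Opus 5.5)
Your proof is correct and complete. The convexity bound $\mathbb{E}[e^{\lambda X_i}]\le 1+\mu(e^{\lambda}-1)\le \exp(\mu(e^{\lambda}-1))$ and the choices $\lambda=\ln(1+\delta)$, $s=-\ln(1-\delta)$ are right. Both derivative comparisons reduce to $\delta^2\ge 0$, and the series identity $\delta+(1-\delta)\ln(1-\delta)=\sum_{k\ge 2}\delta^k/(k(k-1))$ holds. The degenerate case $\mu=0$ is trivially covered.

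The paper gives no proof of this fact; it imports it from the Chung--Lu survey. There, bounds of this shape come from Bernstein-type, variance-controlled inequalities, e.g.\ $\Pr\{\boldsymbol{Y}-\mathbb{E}\boldsymbol{Y}\ge\lambda\}\le\exp\bigl(-\lambda^2/(2(\mathrm{Var}(\boldsymbol{Y})+\lambda/3))\bigr)$ for independent summands with $X_i\le\mathbb{E}X_i+1$, together with a lower-tail analogue. The stated form then follows from $\mathrm{Var}(\boldsymbol{Y})\le\sum_i\mathbb{E}[X_i^2]\le t\mu$ (since $X_i^2\le X_i$) with $\lambda=\delta t\mu$.

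Your route differs: it bounds the MGF through the mean (a Poisson comparison) rather than the variance. You pass through the sharper Bennett-type exponent $(1+\delta)\ln(1+\delta)-\delta$, and your calculus step is exactly the standard Bennett-to-Bernstein relaxation. You gain a self-contained argument with a stronger intermediate bound. The variance-based route is more general: it needs only an upper bound on the summands and improves when $\mathrm{Var}(\boldsymbol{Y})\ll t\mu$.

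A further payoff of your approach is that the convexity bound holds conditionally. Whenever the conditional mean is $\mu$, $\mathbb{E}[e^{\lambda X_i}\mid X_1,\dots,X_{i-1}]\le\exp(\mu(e^{\lambda}-1))$. Iterating this with the tower property proves verbatim the martingale tail bound that the paper cites separately later in the appendix.
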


\begin{proof}[Proof of \Cref{lem:approximation_guarantee}]\label{app:proof_approximation_guarantee}
    Since $\mathbb{E}[X_{\rrset_i}(\seed)] = \frac{\shap(\seed)}{n^{\prime}}$ and the value range of $X_{\rrset_i}(\seed)$ is $[0,1]$ by \Cref{lem:shapley_rr}, we can apply the Chernoff bounds for all $\seed \in \seedset$. We discuss by two cases: $\shap(\seed) > \shap^{(k)}$ and $\shap(\seed) \leq \shap^{(k)}$.
    First, for every $\seed \in \seedset$ with $\shap(\seed) > \shap^{(k)}$:
    %\begin{equation}
    \begin{align}
    &\Pr[|\widehat{\shap}(\seed) - \shap(\seed)| \geq \varepsilon \cdot \shap(\seed)] \\
    %= & Pr[|n' \cdot \frac{\mathbf{est}_{\theta}^\theta}{\theta} - \shap(\seed)| \geq \varepsilon \cdot \shap(\seed)] \\
    = & Pr[|\frac{\rrsize}{\theta} \sum_{j=1}^{\theta} X_{\rrset_j}(\seed) -  \shap(\seed)| \geq \varepsilon  \cdot \shap(\seed)] \\
    = & Pr[|\sum_{j=1}^{\theta} X_{\rrset_j}(\seed) - \theta \cdot \frac{\shap(\seed)}{\rrsize} | \geq  \varepsilon \cdot \theta \cdot \frac{\shap(\seed)}{\rrsize}] \\
    = & Pr[|\sum_{j=1}^{\theta} X_{\rrset_j}(\seed) - \theta \cdot \mathbb{E}[X_{\rrset_j}(\seed)]| \geq  \varepsilon \cdot \theta \cdot \mathbb{E}[X_{\rrset_j}(\seed)]] \\ %\quad (\text{By \Cref{lem:shapley_rr}})\\
    \leq& 2 \exp(-\frac{\varepsilon^2}{2+\frac{2}{3}\varepsilon} \cdot \frac{\theta}{\rrsize} \cdot \shap(\seed)) \quad (\text{By the Chernoff Bound})\\
    \leq& 2 \exp(-\frac{\varepsilon^2}{2+\frac{2}{3}\varepsilon} \cdot \frac{\shap(\seed)}{\rrsize} \cdot \frac{ \rrsize \left( \ell \ln \rrsize + \ln |\seedset| + \ln 4 \right) \left( 2 + \frac{2}{3} \varepsilon \right) }{ \varepsilon^2 \shap^{(k)} })\\
    & \quad (\text{By \Cref{eq:theta_value}}) \nonumber\\
    =& 2\exp(-\frac{\shap(\seed)}{\shap^{(k)}}\cdot (\ell \ln \rrsize + \ln |\seedset| + \ln 4))\\
    \leq & 2 \exp(- \ell \ln \rrsize - \ln |\seedset| - \ln 4)  \quad (\text{By }\shap(\seed)>\shap^{(k)})\\
    = & \frac{1}{2|\seedset|{\rrsize}^{\ell}}
    \end{align}
    %\end{equation}
    
    Then, for every $\seed \in \seedset$ with $\shap(\seed) \leq \shap^{(k)}$, we have:
    %\begin{equation}
    \begin{align}
    &\Pr[|\hat{\shap}(\seed) - \shap(\seed)| \geq \varepsilon \cdot \shap^{(k)}] \\
    %= & Pr[|n' \cdot \frac{\mathbf{est}_{\theta}^\theta}{\theta} - \shap(\seed)| \geq \varepsilon \cdot \shap^{(k)}] \\
    = & Pr[|\sum_{j=1}^{\theta} X_{\rrset_j}(\seed) - \theta \cdot \frac{\shap(\seed)}{\rrsize} | \geq  (\varepsilon \cdot \frac{\shap^{(k)}}{\shap(\seed)}) \cdot (\theta \cdot \frac{\shap(\seed)}{\rrsize})]\\
    %= & Pr[|\sum_{j=1}^{\theta} \mathbf{X}_{\mathbf{R}_j}(\seed) - \theta \cdot \mathbb{E}[X_{R_i}(\seed)]| \geq  \varepsilon \cdot \theta \cdot \mathbb{E}[X_{R_i}(\seed)]] \quad (\text{By \Cref{lem:shapley_rr}})\\
    \leq& 2 \exp(-\frac{(\varepsilon \cdot \frac{\shap^{(k)}}{\shap(\seed)})^2}{2+\frac{2}{3}(\varepsilon \cdot \frac{\shap^{(k)}}{\shap(\seed)})} \cdot \frac{\theta}{\rrsize} \cdot \shap(\seed)\\
    & (\text{By the Chernoff Bound}) \nonumber\\
    = & 2\exp(- \frac{\varepsilon^2(\shap^{(k)})^2}{\rrsize(2\shap(\seed)+\frac{2}{3}\varepsilon \shap^{(k)})})\cdot \theta)\\
    \leq & 2\exp(- \frac{\varepsilon^2\shap^{(k)}}{\rrsize(2+\frac{2}{3}\varepsilon)} \cdot \theta) \quad (\shap(\seed) \leq \shap^{(k)})\\
    \leq & 2\exp(- \frac{\varepsilon^2\shap^{(k)}}{\rrsize(2+\frac{2}{3}\varepsilon)} \cdot \frac{ \rrsize \left( \ell \ln \rrsize + \ln |\seedset| + \ln 4 \right) \left( 2 + \frac{2}{3} \varepsilon \right) }{ \varepsilon^2 \shap^{(k)} })\\
    & (\text{By \Cref{eq:theta_value}}) \nonumber\\
    \leq & 2 \exp(- \ell \ln \rrsize - \ln |\seedset| - \ln 4)\\
    = & \frac{1}{2|\seedset|{\rrsize}^{\ell}}
    \end{align}
    %\end{equation}

    Last, we apply the union bound to the two cases to obtain the final result that for all $\seed \in \seedset$:
    \begin{align}
        \Pr&\left[ \text{Approximation guarantee fails for some } t \in T \right] \nonumber \\
        &\leq |\seedset| \cdot \dfrac{ 1 }{ 2 |\seedset|{\rrsize}^{\ell} } = \dfrac{ 1 }{ 2 {\rrsize}^{\ell} }
    \end{align}
    %Therefore, we obatin the number of samples $\theta$ required using the $k$-th largest Shapley value among all Shapley values of seed set as
\end{proof}

\subsubsection{\textbf{Estimating a Lower Bound for \texorpdfstring{$\shap^{(k)}$}{shap(k)}}}\label{subsec:estimating_lower_bound_for_shap_k}
Since $\shap^{(k)}$ is not known in advance, we estimate a lower bound $\boldsymbol{LB}$ for $\shap^{(k)}$ in $\textsc{EstimateThreshold}$ (\Cref{alg:estimate_threshold_complete}). 

%\textbf{Martingale Definition and Properties.} 
First, we construct a martingale sequence.
\begin{definition}[Martingale]
    A sequence of random variables $\{Y_i(t)\}_{i \geq 1}$ is a martingale if and only if that for all $i \geq 1$, (1) $\mathbb{E}[|Y_i|]<\infty$; (2) $\mathbb{E}\left[Y_i(t) \,|\, Y_1(t), Y_2(t), \cdots, Y_{i-1}(t)\right] = Y_{i-1}(t)$.
\end{definition}
 
\begin{lemma}
    \label{lem:martingale}
    Let $\theta^{\prime}$ be the number of RR sets generated in Phase 1, and let $\rrset^{(1)}_1, \rrset^{(1)}_2, \dots, \rrset^{(1)}_{\theta^{\prime}}$ be these RR sets. For every $\seed \in \seedset$ and every $i \geq 1$,
    \begin{equation}
    \mathbb{E}\left[X_{{\rrset^{(1)}_i}}(t) \,|\,  X_{\rrset^{(1)}_2}(t), \cdots, X_{\rrset^{(1)}_{i-1}}(t)\right] = \dfrac{\shap(t)}{n'},
    \end{equation}
    
    Define the sequence $\{Y_i(t)\}_{i \geq 1}$ as the following:
    \begin{equation}
    \boldsymbol{Y}_i(t) = \sum_{j = 1}^{i} \left(X_{\rrset^{(1)}_j}(t) - \dfrac{\shap(t)}{n'}\right).
    \end{equation}
    Then for every $\seed \in \seedset$, $\{Y_i(t), i \geq 1\}$ is a martingale.
    \end{lemma}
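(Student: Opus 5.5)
The plan is to reduce both claims to a single structural fact: in Phase 1 (\textsc{EstimateThreshold}), each RR set $\rrset^{(1)}_i$ is generated afresh. Its root is drawn uniformly from $V\setminus\seedset$, and its live-edge graph is sampled from $G^{\prime}$ independently of all earlier randomness. The stopping rule (Line~\ref{line:et_check}) can decide \emph{how many} RR sets are generated, but it never changes the law of the $i$-th RR set once that set is generated. Consequently, $\rrset^{(1)}_i$ is independent of $\rrset^{(1)}_1,\dots,\rrset^{(1)}_{i-1}$. Since $X_{\rrset^{(1)}_j}(\seed)$ is a deterministic function of $\rrset^{(1)}_j$, the random variable $X_{\rrset^{(1)}_i}(\seed)$ is independent of $X_{\rrset^{(1)}_1}(\seed),\dots,X_{\rrset^{(1)}_{i-1}}(\seed)$. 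The conditioning set in the statement should read $X_{\rrset^{(1)}_1}(\seed),\dots$; including or omitting the first term does not matter for the argument.

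\textbf{Step 1 (conditional mean).} Using this independence, the conditional expectation reduces to an unconditional one. Lemma~\ref{lem:shapley_rr} then gives $\mathbb{E}[X_{\rrset^{(1)}_i}(\seed)] = \mathbb{E}_{\rrset}\big[\mathbb{I}\{\seed\in\rrset^{\prime}\}/|\rrset^{\prime}|\big] = \shap(\seed)/\rrsize$, which is exactly the computation already carried out in Lemma~\ref{lem:unbiased_estimator}. To handle the adaptive stopping formally, extend the sampling to an infinite i.i.d. sequence of RR sets. The algorithm only observes a prefix of this sequence, so the identity holds for every index $i\ge 1$.

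\textbf{Step 2 (martingale property).} Integrability is immediate: $X_{\rrset}(\seed)\in[0,1]$ and $\shap(\seed)\le\rrsize$, so $|\boldsymbol{Y}_i(\seed)|\le 2i<\infty$. Write $\boldsymbol{Y}_i(\seed)=\boldsymbol{Y}_{i-1}(\seed)+\big(X_{\rrset^{(1)}_i}(\seed)-\shap(\seed)/\rrsize\big)$. Condition on $\boldsymbol{Y}_1(\seed),\dots,\boldsymbol{Y}_{i-1}(\seed)$; these generate the same $\sigma$-algebra as $X_{\rrset^{(1)}_1}(\seed),\dots,X_{\rrset^{(1)}_{i-1}}(\seed)$, because the map between the two sequences is an invertible linear transformation. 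The term $\boldsymbol{Y}_{i-1}(\seed)$ is then measurable and passes through the conditional expectation. The increment has conditional mean zero by Step 1. Hence $\mathbb{E}[\boldsymbol{Y}_i(\seed)\mid \boldsymbol{Y}_1(\seed),\dots,\boldsymbol{Y}_{i-1}(\seed)]=\boldsymbol{Y}_{i-1}(\seed)$, and $\{\boldsymbol{Y}_i(\seed)\}_{i\ge 1}$ is a martingale for every $\seed\in\seedset$.

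The only delicate point is justifying the independence in Step 1 despite the data-dependent stopping in Phase 1. I would address it with the infinite-sequence coupling described there: the number $\theta^{\prime}$ of RR sets the algorithm generates is a stopping time with respect to the natural filtration of that sequence. This is also the form needed later, when martingale concentration is applied.
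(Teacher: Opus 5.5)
Your proposal is correct and follows essentially the same route as the paper. Fresh, independent generation of each RR set gives the conditional mean $\shap(\seed)/\rrsize$ via Lemma~\ref{lem:shapley_rr}, and the martingale property follows by writing $\boldsymbol{Y}_i(\seed)$ as $\boldsymbol{Y}_{i-1}(\seed)$ plus a mean-zero increment. Your additions make explicit what the paper leaves implicit: the infinite i.i.d.\ coupling that handles the data-dependent $\theta^{\prime}$, the identification of the $\sigma$-algebras, and the corrected conditioning set starting at $X_{\rrset^{(1)}_1}(\seed)$.
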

    
\begin{proof}

Since the generation process of each RR set independent, each $X_{\rrset^{(1)}_i}(\seed)$ is independent of the previous RR sets. Then for every $\seed \in \seedset$ and $i \geq 1$, we have:
\begin{equation}\label{eq:martingale_x}
\mathbb{E}\left[X_{\rrset^{(1)}_i}(\seed) \,|\,X_{\rrset^{(1)}_2}(\seed), \cdots, X_{\rrset^{(1)}_{i-1}}(\seed) \right] = \dfrac{\shap(\seed)}{\rrsize}
\end{equation}

By the definition of $\boldsymbol{Y}_i(\seed)$, we know that first, the value range of $\boldsymbol{Y}_i(\seed)$ is $[-i,i]$. Second, 
    \begin{equation}
    \boldsymbol{Y}_i(\seed) = \boldsymbol{Y}_{i-1}(\seed) + \left(X_{\rrset^{(1)}_i}(\seed) - \dfrac{\shap(\seed)}{\rrsize}\right)
    \end{equation}
    
By \Cref{eq:martingale_x}, we have:
\begin{equation}
\begin{aligned}
    &\mathbb{E}\left[\boldsymbol{Y}_i(\seed) \,|\, \boldsymbol{Y}_{i-1}(\seed), \cdots, \boldsymbol{Y}_1(\seed)\right] \\
    &= \boldsymbol{Y}_{i-1}(\seed) + \mathbb{E}\left[X_{\rrset^{(1)}_i}(\seed) \,|\, X_{\rrset^{(1)}_2}(\seed), \cdots, X_{\rrset^{(1)}_{i-1}}(\seed)\right] - \dfrac{\shap(\seed)}{\rrsize} \\
    &= \boldsymbol{Y}_{i-1}(\seed) + \dfrac{\shap(\seed)}{\rrsize} - \dfrac{\shap(\seed)}{\rrsize} \\
    &= \boldsymbol{Y}_{i-1}(\seed)
\end{aligned}
\end{equation}
Therefore, we prove that $\{Y_i(\seed),i \geq 1\}$ is a martingale.    
\end{proof}

%\textbf{Martingale Concentration Inequality.~} 
Since $X_{R^{(1)}_i}(t) \in [0,1]$ and $\dfrac{\shap(t)}{n'} \in [0,1]$, the martingale sequence satisfies $|X_{R^{(1)}_i}(t) - \dfrac{\shap(t)}{n'}| \leq 1$. So we can apply the following tail bounds:

\begin{fact}[Martingale Tail Bounds]
    \label{fact:martingale_tail_bounds}
    Let $\boldsymbol{X}_1, \boldsymbol{X}_2, \ldots, \boldsymbol{X}_t$ be a sequence of random variables such that (1) the value range is $[0,1]$ for each $X_i$; (2) for some $\mu \in[0,1], \mathbb{E}\left[\boldsymbol{X}_i \mid \boldsymbol{X}_1, \boldsymbol{X}_2, \ldots, \boldsymbol{X}_{i-1}\right]=\mu$ for every $i \in[t]$. Let $\boldsymbol{Y}=\sum_{i=1}^t \boldsymbol{X}_i$. For any $\delta>0$, we have:

\begin{equation}
\operatorname{Pr}\{\boldsymbol{Y}-t \mu \geq \delta \cdot t \mu\} \leq \exp \left(-\frac{\delta^2}{2+\frac{2}{3} \delta} t \mu\right)
\end{equation}

For any $0<\delta<1$, we have
\begin{equation}
\operatorname{Pr}\{\boldsymbol{Y}-t \mu \leq-\delta \cdot t \mu\} \leq \exp \left(-\frac{\delta^2}{2} t \mu\right)
\end{equation}

\end{fact}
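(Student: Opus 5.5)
The plan is the standard exponential-moment (Chernoff) argument. The only change from the i.i.d.\ case is that independence is replaced by the tower property of conditional expectation. Write $\mathcal{F}_{i-1}$ for the information in $\boldsymbol{X}_1,\ldots,\boldsymbol{X}_{i-1}$, and set $\boldsymbol{Y}_i=\sum_{j\le i}\boldsymbol{X}_j$.

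\textbf{Step 1: conditional moment-generating function.} Fix any real $\lambda$. Since $x\mapsto e^{\lambda x}$ is convex on $[0,1]$, we have $e^{\lambda x}\le 1-x+xe^{\lambda}$ for $x\in[0,1]$. Taking conditional expectations and using $\mathbb{E}[\boldsymbol{X}_i\mid\mathcal{F}_{i-1}]=\mu$ gives
\[
\mathbb{E}[e^{\lambda \boldsymbol{X}_i}\mid \mathcal{F}_{i-1}]\le 1+\mu(e^{\lambda}-1)\le \exp\bigl(\mu(e^{\lambda}-1)\bigr).
\]

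\textbf{Step 2: unroll over the sequence.} Write $\mathbb{E}[e^{\lambda \boldsymbol{Y}_t}]=\mathbb{E}\bigl[e^{\lambda \boldsymbol{Y}_{t-1}}\,\mathbb{E}[e^{\lambda\boldsymbol{X}_t}\mid\mathcal{F}_{t-1}]\bigr]$ and apply Step 1. Induction on $t$ then yields
\[
\mathbb{E}[e^{\lambda \boldsymbol{Y}}]\le \exp\bigl(t\mu(e^{\lambda}-1)\bigr).
\]
This is the same bound one gets for a sum of independent Bernoulli($\mu$) variables, so the rest of the proof is the classical Chernoff calculation.

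\textbf{Step 3: upper tail.} For $\lambda>0$, Markov's inequality gives
\[
\Pr\{\boldsymbol{Y}\ge(1+\delta)t\mu\}\le \exp\bigl(t\mu(e^\lambda-1)-\lambda(1+\delta)t\mu\bigr).
\]
Choose $\lambda=\ln(1+\delta)$. The bound becomes $\exp(-t\mu\, h(\delta))$ with $h(\delta)=(1+\delta)\ln(1+\delta)-\delta$.

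\textbf{Step 4: lower tail.} For $0<\delta<1$, use $-\lambda$ with $\lambda=-\ln(1-\delta)>0$ on the event $\{\boldsymbol{Y}\le(1-\delta)t\mu\}$. This yields the bound $\exp(-t\mu\, g(\delta))$ with $g(\delta)=\delta+(1-\delta)\ln(1-\delta)$.

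\textbf{Step 5: the two elementary inequalities (main obstacle).} The probabilistic part is routine once Step 1 is in place. The only real work is to show
\[
h(\delta)\ge \frac{\delta^2}{2+\frac23\delta}\quad\text{for }\delta\ge 0,\qquad g(\delta)\ge \frac{\delta^2}{2}\quad\text{for }0\le\delta<1.
\]

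For $g$, note that $g(0)=g'(0)=0$ and $g''(\delta)=1/(1-\delta)\ge 1$. Integrating twice gives $g(\delta)\ge\delta^2/2$.

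For $h$, set $\phi(\delta)=h(\delta)-\delta^2/(2+2\delta/3)$. First check $\phi(0)=\phi'(0)=0$. Then I would compare second derivatives: $h''(\delta)=1/(1+\delta)$, whereas $\delta^2/(2+2\delta/3)$ has second derivative $27/(3+\delta)^3$. So it suffices to show $(3+\delta)^3\ge 27(1+\delta)$ for $\delta\ge0$. Expanding, $(3+\delta)^3-27(1+\delta)=9\delta^2+\delta^3\ge 0$, which holds.

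If this second-derivative route turned out awkward, the fallback is to cite the standard inequality $\ln(1+\delta)\ge 2\delta/(2+\delta)$ and proceed via the first derivative instead. Either way the result is exactly the stated tail bounds.
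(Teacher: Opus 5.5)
Your proof is correct and complete. The chord bound $e^{\lambda x}\le 1-x+xe^{\lambda}$ on $[0,1]$, combined with the tower property, gives $\mathbb{E}[e^{\lambda\boldsymbol{Y}}]\le\exp\bigl(t\mu(e^{\lambda}-1)\bigr)$ for every real $\lambda$. Your choices of $\lambda$ then yield exactly $\exp(-t\mu\,h(\delta))$ and $\exp(-t\mu\,g(\delta))$. The calculus also checks out: writing $\frac{\delta^2}{2+2\delta/3}=\frac{3}{2}\bigl(\delta-3+\frac{9}{3+\delta}\bigr)$ confirms the second derivative $27/(3+\delta)^3$, and $(3+\delta)^3-27(1+\delta)=9\delta^2+\delta^3\ge 0$.

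The paper itself gives no proof of this statement. It is imported as a Fact along with the rest of the RR-set analysis of Chen--Teng and Tang et al., where it comes from Chung--Lu's martingale concentration inequalities. For the upper tail, for instance, one applies the Bernstein-type bound $\Pr\{\boldsymbol{Y}-t\mu\ge\lambda\}\le\exp\bigl(-\lambda^2/(2(\sum_i\sigma_i^2+M\lambda/3))\bigr)$ with $\lambda=\delta t\mu$. Here the increment bound is $M=1$, and each conditional variance $\sigma_i^2$ is at most $\mathbb{E}[\boldsymbol{X}_i^2\mid\mathcal{F}_{i-1}]\le\mu$.

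That route reaches the $\delta^2/(2+\frac{2}{3}\delta)$ form in one line and needs only variance and increment control, but it relies on an external theorem. Your route is self-contained and elementary, and it passes through the sharper Bennett-type exponents $h(\delta)$ and $g(\delta)$ before relaxing them. It also proves slightly more than stated. It only uses $\mathbb{E}[\boldsymbol{X}_i\mid\mathcal{F}_{i-1}]\le\mu$ for the upper tail and $\mathbb{E}[\boldsymbol{X}_i\mid\mathcal{F}_{i-1}]\ge\mu$ for the lower tail. And it works for any filtration to which the $\boldsymbol{X}_i$ are adapted, for example conditioning on the full past RR sets rather than only on the past $\boldsymbol{X}$ values.
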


%\subsubsection{Analysis of the Estimation Algorithm}

Then, the following lemma shows that the estimator $\boldsymbol{est}_i^{(k)}$ is a good estimate of $\shap^{(k)}$ for both case that $x_i$ is greater than or less than $\shap^{(k)}$.

\begin{restatable}{lemma}{lowerboundestimation}\label{lem:lower_bound_estimation}
    For each $i = 1, 2, \cdots, \lfloor\log_2 \rrsize\rfloor-1$, provided that $\theta_i$ satisfies:
    \begin{equation} \label{eq:theta_i}
     \theta_i \geq \frac{\rrsize(2+\frac{2}{3}\varepsilon')}{{\varepsilon'}^2\cdot x_i}\left(\ell \ln \rrsize + \ln |\seedset| + \ln (\log_2 \rrsize) + \ln 2 \right)
    \end{equation}
    \begin{enumerate}[leftmargin=*]
        \item  If $x_i = \frac{\rrsize}{2^i} > \shap^{(k)}$, then with probability at least $1 -\frac{1}{2 {\rrsize}^{\ell} \log_2 \rrsize}$, $\frac{\rrsize \cdot \boldsymbol{est}_i^{(k)}}{ \theta_i} < (1 + \varepsilon^{\prime}) \cdot x_i$, i.e., $\widehat{\shap}_i^{(k)} < (1 + \varepsilon^{\prime}) x_i$.
        \item  If $x_i = \frac{\rrsize}{2^i} \leq \shap^{(k)}$, then with probability at least $1 -\frac{1}{2{\rrsize}^{\ell} \log_2 \rrsize}$, $\frac{\rrsize \cdot \boldsymbol{est}_i^{(k)}}{ \theta_i} < (1 + \varepsilon^{\prime}) \cdot \shap^{(k)}$, i.e., $\widehat{\shap}_i^{(k)} \geq (1 - \varepsilon') x_i$
    \end{enumerate}
\end{restatable}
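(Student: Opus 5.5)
The plan is to treat each round $i$ separately. In round $i$, $\boldsymbol{est}_t$ is the sum $\sum_{j=1}^{\theta_i} X_{\rrset^{(1)}_j}(\seed)$ over the first $\theta_i$ RR sets generated in Phase~1. These sets are reused across rounds but generated independently, so by \Cref{lem:martingale} each summand lies in $[0,1]$ and has conditional mean $\mu_t=\shap(\seed)/\rrsize$ given the past. This lets me apply the Martingale Tail Bounds (\Cref{fact:martingale_tail_bounds}) with $t=\theta_i$ to every seed individually. I then argue about the $k$-th largest order statistic $\boldsymbol{est}^{(k)}_i$ through statements about individual seeds, followed by a union bound over at most $|\seedset|$ seeds. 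The $\ln|\seedset|$ term in \Cref{eq:theta_i} absorbs this union bound, and the $\ln(\log_2\rrsize)+\ln 2$ terms leave room for a later union bound over rounds.

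\textbf{Case (1), $x_i>\shap^{(k)}$.} Suppose $\rrsize\,\boldsymbol{est}^{(k)}_i/\theta_i\ge(1+\varepsilon')x_i$. Then at least $k$ seeds have $\rrsize\,\boldsymbol{est}_t/\theta_i\ge(1+\varepsilon')x_i$. At most $k-1$ seeds have $\shap(\seed)>\shap^{(k)}$, and this count is unaffected by ties. Hence some seed with $\shap(\seed)\le\shap^{(k)}<x_i$ is overestimated to at least $(1+\varepsilon')x_i$. It therefore suffices to bound, for a fixed seed with $\shap(\seed)<x_i$, the probability
\[
\Pr\Bigl[\textstyle\sum_j X_j-\theta_i\mu_t\ge \delta\,\theta_i\mu_t\Bigr],\qquad \delta=\frac{(1+\varepsilon')x_i-\shap(\seed)}{\shap(\seed)}\ge\frac{\varepsilon' x_i}{\shap(\seed)}.
\]
If $\shap(\seed)=0$ the event is impossible, so assume $\shap(\seed)>0$. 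The map $\delta\mapsto\delta^2/(2+\tfrac23\delta)$ is increasing, so I may replace $\delta$ by its lower bound $\varepsilon' x_i/\shap(\seed)$. The exponent becomes
\[
\frac{\varepsilon'^2x_i^2\theta_i}{\rrsize\bigl(2\shap(\seed)+\tfrac23\varepsilon'x_i\bigr)}\ \ge\ \frac{\varepsilon'^2x_i\theta_i}{\rrsize(2+\tfrac23\varepsilon')},
\]
using $\shap(\seed)<x_i$. By \Cref{eq:theta_i} this exponent is at least $\ell\ln\rrsize+\ln|\seedset|+\ln\log_2\rrsize+\ln 2$. Each such seed therefore fails with probability at most $1/(2|\seedset|\rrsize^{\ell}\log_2\rrsize)$, and a union bound over seeds gives the claimed probability.

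\textbf{Case (2), $x_i\le\shap^{(k)}$.} I would prove the form that the later argument needs: with the stated probability, $\rrsize\,\boldsymbol{est}^{(k)}_i/\theta_i\ge(1-\varepsilon')\shap^{(k)}\ge(1-\varepsilon')x_i$. Take the $k$ seeds with the largest Shapley values; each satisfies $\shap(\seed)\ge\shap^{(k)}\ge x_i$. For each of them, the lower-tail bound with $\delta=\varepsilon'$ gives
\[
\Pr\bigl[\rrsize\,\boldsymbol{est}_t/\theta_i<(1-\varepsilon')\shap(\seed)\bigr]\le\exp\Bigl(-\frac{\varepsilon'^2}{2}\,\theta_i\frac{\shap(\seed)}{\rrsize}\Bigr)\le\exp\Bigl(-\frac{\varepsilon'^2 x_i\theta_i}{\rrsize(2+\tfrac23\varepsilon')}\Bigr),
\]
which is again at most $1/(2|\seedset|\rrsize^{\ell}\log_2\rrsize)$ by \Cref{eq:theta_i}. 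If all $k$ of these seeds are estimated above $(1-\varepsilon')\shap^{(k)}$, then the $k$-th largest estimate is at least that value. A union bound over these $k\le|\seedset|$ seeds finishes the case.

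The main obstacle is the reduction from the order statistic $\boldsymbol{est}^{(k)}_i$ to per-seed events, in particular the tie-handling in case (1). A second concern is justifying the concentration bounds when the same RR sets are reused across rounds and $\theta_i$ is reached adaptively. I would handle this by noting that $\theta_i$ is a deterministic function of $i$ alone, and that \Cref{lem:martingale} supplies exactly the conditional-mean hypothesis that \Cref{fact:martingale_tail_bounds} requires. The monotonicity step in case (1), which replaces $\delta$ by $\varepsilon' x_i/\shap(\seed)$, also needs a brief check.
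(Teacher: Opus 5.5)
Your Case (1) follows the paper's argument: the order statistic forces some seed with $\shap(\seed)\le\shap^{(k)}$ to be overestimated, you apply the upper martingale tail bound per seed, and you take a union bound over at most $|\seedset|$ seeds. Your tie handling is cleaner than the paper's.

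The gap is in Case (2). You prove the lower bound $\rrsize\,\boldsymbol{est}^{(k)}_i/\theta_i\ge(1-\varepsilon')\shap^{(k)}$. The inequality the lemma actually asserts there is the upper bound $\rrsize\,\boldsymbol{est}^{(k)}_i/\theta_i<(1+\varepsilon')\shap^{(k)}$; the trailing ``i.e.'' clause in the statement is not equivalent to it and is a slip. Your claim that the lower bound is ``the form the later argument needs'' is backwards. \Cref{lem:lb_correctness} invokes Case (2) for every $i\ge i^*$ to get $\boldsymbol{LB}_i=\rrsize\,\boldsymbol{est}^{(k)}_i/(\theta_i(1+\varepsilon'))<\shap^{(k)}$. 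That is, it needs the returned $\boldsymbol{LB}$ not to exceed $\shap^{(k)}$, so that Phase 2 draws enough RR sets. A lower bound on $\boldsymbol{est}^{(k)}_i$ says nothing about this. Your lower-tail computation is correct, but it is the kind of fact used for the running time in \Cref{lem:est_bound}, not for the correctness of $\boldsymbol{LB}$.

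The missing piece comes from rerunning your Case (1) argument with threshold $(1+\varepsilon')\shap^{(k)}$ in place of $(1+\varepsilon')x_i$, which is what the paper does.
\begin{itemize}
\item If $\rrsize\,\boldsymbol{est}^{(k)}_i/\theta_i\ge(1+\varepsilon')\shap^{(k)}$, then some seed with $\shap(\seed)\le\shap^{(k)}$ reaches this threshold, which requires a deviation of at least $\varepsilon'\shap^{(k)}$ above its mean.
\item The upper tail bound with $\delta=\varepsilon'\shap^{(k)}/\shap(\seed)$ gives an exponent $\frac{\varepsilon^{\prime 2}(\shap^{(k)})^2\theta_i}{\rrsize(2\shap(\seed)+\frac{2}{3}\varepsilon'\shap^{(k)})}\ge\frac{\varepsilon^{\prime 2}\shap^{(k)}\theta_i}{\rrsize(2+\frac{2}{3}\varepsilon')}\ge\frac{\varepsilon^{\prime 2}x_i\theta_i}{\rrsize(2+\frac{2}{3}\varepsilon')}$, using first $\shap(\seed)\le\shap^{(k)}$ and then $x_i\le\shap^{(k)}$.
\item \Cref{eq:theta_i} and a union bound over at most $|\seedset|$ seeds then give the stated probability.
\end{itemize}
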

  
\begin{proof}

    Let $\boldsymbol{R}_1^{(1)}, \boldsymbol{R}_2^{(1)}, \ldots, \boldsymbol{R}_{\theta_i}^{(1)}$ be the $\theta_i$ generated RR sets by the end of the $i$-th iteration of the for-loop at $textsc{EstimateThreshold}$. By \Cref{lem:martingale}, we can apply the martingale tail bound of Fact 29 on the sequence $\{X_{\rrset_1^{(1)}}(\seed), X_{\rrset_2^{(1)}}(\seed), \cdots, X_{\rrset_i^{(1)}}(\seed)\}$ for each $\seed \in \seedset$.
    
    Denote $\boldsymbol{est}_{\seed,i}$ as the value of $\boldsymbol{est}_{\seed}$ in the i-th iteration of the same for-loop. Then we have $\boldsymbol{est}_{\seed,i} = \sum_{j=1}^{\theta_i} X_{\rrset_j^{(1)}}(\seed)$. Denote $\widehat{\shap}_i(\seed) = \dfrac{\rrsize \cdot \boldsymbol{est}_{\seed,i}}{ \theta_i}$ as the estimator of $\shap(\seed)$ at the end of the $i$-th iteration.

\paragraph{\textbf{Case 1:}} Consider $x_i > \shap^{(k)}$. First, for every $\seed \in \seedset$ such that $\shap(\seed) \leq \shap^{(k)}$, we have:
\vspace{0em}
\begin{equation} \label{eq:lower_bound_estimation_case1}
    %\small
 \begin{aligned}
    & \Pr\left\{\frac{\rrsize \cdot \boldsymbol{est}_{\seed,i}}{ \theta_i} \geq (1 + \varepsilon^{\prime}) \cdot x_i\right\} \\
    & =\Pr\left\{\boldsymbol{est}_{\seed,i} \geq (1 + \varepsilon^{\prime}) \frac{\theta_i \cdot x_i}{\rrsize}\right\} \\
    & = \Pr\left\{\boldsymbol{est}_{\seed,i} - \theta_i \cdot \frac{x_i}{\rrsize} \geq \varepsilon^{\prime}\frac{\theta_i \cdot x_i}{\rrsize}\right\} \\
    & \leq \Pr\left\{\boldsymbol{est}_{\seed,i} - \theta_i \cdot \frac{\shap(\seed)}{\rrsize} \geq (\varepsilon^{\prime}\cdot \frac{x_i}{\shap(\seed)}) \frac{\theta_i \cdot \shap(\seed)}{\rrsize}\right\} (\text{By }x_i > \shap(\seed)) \\
    & \leq \exp \left(-\frac{\left(\varepsilon^{\prime} \cdot \frac{x_i}{\shap(\seed)}\right)^2}{2+\frac{2}{3}\left(\varepsilon^{\prime} \cdot \frac{x_i}{\shap(\seed)} \right)} \cdot \theta_i \cdot \frac{\shap(\seed)}{\rrsize}\right) \quad (\text{By \Cref{fact:martingale_tail_bounds}}) \\
    & =\exp \left(-\frac{{\varepsilon^{\prime}}^2 \cdot x_i^2}{2 \shap(\seed)+\frac{2}{3}\varepsilon^{\prime} x_i} \cdot \frac{\theta_i}{\rrsize}\right) \\
    & \leq \exp \left(-\frac{\varepsilon^{\prime 2} \cdot x_i}{2+\frac{2}{3} \varepsilon^{\prime}} \cdot \frac{\theta_i}{\rrsize}\right) \quad (\text {By } x_i > \shap^{(k)} \geq \shap(\seed)) \\
    & \leq \frac{1}{2|\seedset| \cdot {\rrsize}^{\ell} \log_2 \rrsize} \quad (\text{By \Cref{eq:theta_i}})\\
\end{aligned}
\end{equation}

%\normalsize
Note that $\boldsymbol{est}_i^{(k)}$ is the $k$-th largest value among $\{\boldsymbol{est}_{\seed,i}\}_{\seed \in \seedset}$. There are at most $|\seedset|-k$ nodes $\seed$ with $\shap(\seed) < \shap^{(k)}$. This implies that there is at least one node $\seed$ with $\shap(\seed) \geq \shap^{(k)}$ and $\boldsymbol{est}_{\seed,i} \leq \boldsymbol{est}_i^{(k)}$. More precisely, there are at most $k$ such nodes $\seed$ with $\shap(\seed)$ ranked at or above $\shap^{(k)}$. Therefore, we have:

\begin{equation}
\begin{aligned}
&\Pr\left\{\frac{\rrsize \cdot \boldsymbol{est}_{i}^{(k)}}{\theta_i} \leq (1 + \varepsilon^{\prime}) \cdot x_i\right\} \\
& \leq \Pr\left\{\exists \seed \in \seedset, \shap(\seed) \leq \shap^{(k)}, \frac{\rrsize \cdot \boldsymbol{est}_{\seed,i}}{\theta_i} \geq (1 + \varepsilon^{\prime}) \cdot x_i\right\} \\
& \leq \frac{k}{2 |\seedset| {\rrsize}^{\ell} \log_2 \rrsize}
\end{aligned}
\end{equation}

\paragraph{\textbf{Case 2:}} Consider $x_i \leq \shap^{(k)}$. First, for every $\seed \in \seedset$ such that $\shap(\seed) \leq \shap^{(k)}$, we have:

\begin{equation} \label{eq:lower_bound_estimation_case2}
\begin{aligned}
& \Pr\left\{\frac{\rrsize \cdot \boldsymbol{est}_{\seed,i}}{\theta_i} \geq (1+ \varepsilon^{\prime}) \cdot \shap^{(k)}\right\} \\
& = \Pr\left\{\boldsymbol{est}_{\seed,i} - \theta_i \cdot \frac{\shap^{(k)}}{\rrsize} \leq \varepsilon^{\prime}\frac{\theta_i \cdot \shap^{(k)}}{\rrsize}\right\} \\
& \leq \Pr\left\{\boldsymbol{est}_{\seed,i} - \theta_i \cdot \frac{\shap(\seed)}{\rrsize} \leq (\varepsilon^{\prime}\cdot \frac{\shap^{(k)}}{\shap(\seed)}) \frac{\theta_i \cdot \shap(\seed)}{\rrsize}\right\} \\
& \quad (\text{By } \shap(\seed) \leq \shap^{(k)}) \\
& \leq \exp \left(-\frac{\left(\varepsilon^{\prime} \cdot \frac{\shap^{(k)}}{\shap(\seed)}\right)^2}{2+\frac{2}{3}\left(\varepsilon^{\prime} \cdot \frac{\shap^{(k)}}{\shap(\seed)} \right)} \cdot \theta_i \cdot \frac{\shap(\seed)}{\rrsize}\right) \quad (\text{By \Cref{fact:martingale_tail_bounds}}) \\
& = \exp \left(-\frac{\varepsilon^{\prime 2} \cdot (\shap^{(k)})^2}{2 \shap(\seed)+\frac{2}{3}\varepsilon^{\prime} \shap^{(k)}} \cdot \frac{\theta_i}{\rrsize}\right) \\
& \leq \exp \left(-\frac{\varepsilon^{\prime 2} \cdot \shap^{(k)}}{2+\frac{2}{3}\varepsilon^{\prime}} \cdot \frac{\theta_i}{\rrsize}\right) \quad{(\text{By } \shap(\seed) \leq \shap^{(k)})} \\
& \leq exp \left(-\frac{\varepsilon^{\prime 2} \cdot x_i}{2+\frac{2}{3}\varepsilon^{\prime}} \cdot \frac{\theta_i}{\rrsize}\right) \quad{(\text{By } x_i \leq \shap^{(k)})} \\
& \leq \frac{1}{2|\seedset| \cdot {\rrsize}^{\ell} \log_2 \rrsize} \quad (\text{By \Cref{eq:theta_i}})\\
\end{aligned}
\end{equation}                             
Similarly, by taking the union bound, we have

\begin{equation}
\begin{aligned}
& \Pr\left\{\frac{\rrsize \cdot \boldsymbol{est}_{i}^{(k)}}{\theta_i} \geq (1+ \varepsilon^{\prime}) \cdot \shap^{(k)}\right\} \\
& \leq \Pr\left\{\exists \seed \in \seedset, \shap(\seed) \leq \shap^{(k)}, \frac{\rrsize \cdot \boldsymbol{est}_{\seed,i}}{\theta_i} \geq (1+ \varepsilon^{\prime}) \cdot \shap^{(k)}\right\} \\
& \leq \frac{1}{2 {\rrsize}^{\ell} \log_2 \rrsize}
\end{aligned}
\end{equation}

\end{proof}

%\subsubsection{Establishing the Lower Bound $\boldsymbol{LB}$}
Next, we establish the lower bound $\boldsymbol{LB}$ in $\textsc{EstimateThreshold}$ (\Cref{alg:estimate_threshold_complete}).

\begin{restatable}{lemma}{lbcorrectness}\label{lem:lb_correctness}
    Suppose that $\shap^{(k)} \geq 1$. With probability at least $1 - \frac{1}{2{\rrsize}^{\ell}}$, $\boldsymbol{LB}$ computed at the end of Phase 1 satisfies $\boldsymbol{LB} \leq \shap^{(k)}$.
\end{restatable}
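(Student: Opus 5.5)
We follow the standard argument from the adaptive-sampling analysis of \cite{chen2017interplay,tang2015influence}. The plan is to show that, with high probability, \textsc{EstimateThreshold} never breaks in an iteration where $x_i$ is too large relative to $\shap^{(k)}$. We also show that whenever it breaks, the value it assigns to $\boldsymbol{LB}$ is at most $\shap^{(k)}$.

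\textbf{Bad event.} Let $i^\ast$ be the first index with $x_{i^\ast} = \rrsize/2^{i^\ast} \le \shap^{(k)}$. If no such index exists within the loop range, or the loop never breaks, then $\boldsymbol{LB}=1\le\shap^{(k)}$ by the hypothesis $\shap^{(k)}\ge 1$. Define the bad event $\mathcal{E}$ as the union of two kinds of failure:
\begin{itemize}
\item for some $i$ with $x_i>\shap^{(k)}$, the check $\rrsize\cdot\boldsymbol{est}_i^{(k)}/\theta_i \ge (1+\varepsilon')x_i$ succeeds;
\item for some $i$ with $x_i\le \shap^{(k)}$, we have $\rrsize\cdot\boldsymbol{est}_i^{(k)}/\theta_i \ge (1+\varepsilon')\shap^{(k)}$.
\end{itemize}
By \Cref{lem:lower_bound_estimation}, which applies because $\theta_i$ in \Cref{alg:estimate_threshold_complete} is chosen to satisfy \Cref{eq:theta_i}, each such failure in iteration $i$ has probability at most $\frac{1}{2\rrsize^{\ell}\log_2\rrsize}$. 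Note that the first case of that lemma gives a bound of $\frac{k}{2|\seedset|\rrsize^\ell\log_2\rrsize}$, which is at most this since $k\le|\seedset|$. There are at most $\lfloor\log_2\rrsize\rfloor-1\le\log_2\rrsize$ iterations, so a union bound gives $\Pr[\mathcal{E}]\le \frac{1}{2\rrsize^{\ell}}$.

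\textbf{Conclusion on the complement.} Suppose $\mathcal{E}$ does not occur. Then in every iteration with $x_i>\shap^{(k)}$ the break condition fails, so the loop can only stop at some $i$ with $x_i\le\shap^{(k)}$. At that iteration we set
\[
\boldsymbol{LB}=\frac{\rrsize\cdot\boldsymbol{est}_i^{(k)}}{\theta_i(1+\varepsilon')}.
\]
Because the second kind of failure did not occur, $\rrsize\cdot\boldsymbol{est}_i^{(k)}/\theta_i<(1+\varepsilon')\shap^{(k)}$, and hence $\boldsymbol{LB}<\shap^{(k)}$. If the loop never breaks, $\boldsymbol{LB}=1\le\shap^{(k)}$ as noted above. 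Either way $\boldsymbol{LB}\le\shap^{(k)}$ with probability at least $1-\frac{1}{2\rrsize^\ell}$.

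\textbf{Main obstacle.} The delicate point is that the break index is itself random and depends on the same RR sets. We therefore cannot condition on it. The argument avoids this by bounding failures uniformly over all iterations simultaneously via the union bound, not only at the index where the loop stops. This is valid because \Cref{lem:lower_bound_estimation} is stated for each fixed $i$ using the martingale bounds of \Cref{lem:martingale}, which tolerate the adaptive reuse of earlier RR sets. We also need the second statement of \Cref{lem:lower_bound_estimation} in its upper-tail form, namely $\rrsize\,\boldsymbol{est}_i^{(k)}/\theta_i<(1+\varepsilon')\shap^{(k)}$, which is exactly what its proof establishes.
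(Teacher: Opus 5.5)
Your proof is correct and follows essentially the same route as the paper. It uses a union bound over the at most $\lfloor\log_2\rrsize\rfloor-1$ iterations: part~(1) of \Cref{lem:lower_bound_estimation} excludes a break while $x_i>\shap^{(k)}$, the upper-tail form of part~(2) gives $\boldsymbol{LB}<\shap^{(k)}$ at any later break, and $\boldsymbol{LB}=1\le\shap^{(k)}$ if the loop never breaks. The only difference is that you merge the paper's two cases (according to whether $\shap^{(k)}\ge x_{\lfloor\log_2\rrsize\rfloor-1}$) into a single bad event, which is slightly cleaner.
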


\begin{proof}
    Let $\boldsymbol{LB}_i = \rrsize \cdot \frac{\boldsymbol{est}_i^{(k)}}{\theta_i \cdot\left(1+\varepsilon^{\prime}\right)}$. \\
    \textbf{Case 1:} If $\shap^{(k)} \geq x_{\lfloor\log_2 \rrsize\rfloor-1}$, let $i^*$ be the smallest index such that $x_{i^*} \leq \shap^{(k)}$. Thus for all iterations $i < i^*$, we have $x_i > \shap^{(k)}$. Applying \Cref{lem:lower_bound_estimation} (1) on each $i \leq i^*$:
    \begin{equation}
        \Pr\left\{\frac{\rrsize \cdot \boldsymbol{est}_{i}^{(k)}}{\theta_i} < (1 + \varepsilon^{\prime}) \cdot x_i\right\}  \geq 1- \frac{1}{2 {\rrsize}^{\ell} \log_2 \rrsize}
    \end{equation}
    Taking the union bound over all iterations $i < i^*$, we have:
    \begin{equation}
        \Pr\left\{\frac{\rrsize \cdot \boldsymbol{est}_{i}^{(k)}}{\theta_i} < (1 + \varepsilon^{\prime}) \cdot x_i \text{ for all } i < i^*\right\} \geq 1- \frac{i-1}{2 {\rrsize}^{\ell} \log_2 \rrsize}
    \end{equation}
    This means that with probability at least $1 - \frac{i-1}{2 {\rrsize}^{\ell} \log_2 \rrsize}$, the algorithm will not break before the $i^*$-th iteration. Therefore, $\boldsymbol{LB} = \boldsymbol{LB}_{i}$ for some $i \geq i^*$ or $\boldsymbol{LB} = 1$.

    Then for every $i \geq i^*$, we have $x_i \leq \shap^{(k)}$. Applying \Cref{lem:lower_bound_estimation} (Case 2) on each $i \geq i^*$:
    \begin{equation}
    \begin{aligned}
        &\Pr\left\{\frac{\rrsize \cdot \boldsymbol{est}_{i}^{(k)}}{\theta_i} < (1 + \varepsilon^{\prime}) \cdot \shap^{(k)}\right\} \\
        &= \Pr\left\{\boldsymbol{LB}_i < \shap^{(k)}\right\} \\
        &\geq 1- \frac{1}{2 {\rrsize}^{\ell} \log_2 \rrsize}
    \end{aligned}
    \end{equation}
    Taking the union bound again, we have:
    \begin{equation}
        \Pr\left\{\boldsymbol{LB} < \shap^{(k)} \right\} \geq 1- \frac{1}{2 {\rrsize}^{\ell}}
    \end{equation}

    \textbf{Case 2:} If $\shap^{(k)} < x_{\lfloor\log_2 \rrsize\rfloor-1}$, we use the similar argument as the above and we can show that with probability at least $1 - \frac{1}{2 {\rrsize}^{\ell}}$, the for-loop would not break at any iteration and thus $\boldsymbol{LB} = 1$ as the initail value. Since $\shap^{(k)} \geq 1$, we still have $\boldsymbol{LB} \leq \shap^{(k)}$.
\end{proof}

\subsubsection{\textbf{Proof of \Cref{thm:rr_approx}}}\label{subsec:rr_approx_proof}
%Finally, we conclude the proof for \Cref{thm:rr_approx}:
%\rrguarantee*
\begin{proof}[Proof of \Cref{thm:rr_approx}]
    By Lemma~\ref{lem:lb_correctness}, with probability at least $1 - \frac{1}{2{\rrsize}^{\ell}}$, we have $\boldsymbol{LB} \leq \shap^{(k)}$. From the approximation guarantee established in \Cref{lem:approximation_guarantee}, with $\theta$ set appropriately using $\boldsymbol{LB}$, we have that, with probability at least  $1 - \frac{1}{2{\rrsize}^{\ell}}$, the estimates $\widehat{\shap}(\seed)$ satisfy the desired bounds \Cref{eq:final_guarantee}. 
    Then, we use $\boldsymbol{LB}$ in place of $\shap^{(k)}$ to determine the number of RR sets $\theta$ in Phase 2 of the algorithm, as per Equation~\eqref{eq:theta_value}. By the union bound, we know that with probability at least $1 - \frac{1}{{\rrsize}^{\ell}}$, the estimates $\widehat{\shap}(\seed)$ satisfy \Cref{eq:final_guarantee}.

\end{proof}

%\subsection{Proof of \Cref{prop:overall_expected_running_time}}\label{subsec:rr_time_complexity_appendix}

\subsection{Runtime Complexity}\label{subsec:rr_time_complexity_appendix}
%The following %proposition 
%shows that expected runtime complexity of $\algrrset$ is the product of the total number of RR sets generated in two phases, and the expected cost of generating one RR set.
Last, we analyze the time complexity of $\algrrset$. 
%We establish upper bounds on the expected running time by analyzing the number of RR sets generated in both Phase 1 and Phase 2, and the computational cost associated with generating and processing each RR set. The proof follows \cite{tang2015influence,chen2017interplay}, with adjustment to our problem.%\fs{The analysis follows the similar structure as \cite{tang2015influence,chen2017interplay}, but since the different expression of the number of RR sets generated in each phase, the results are slightly different.}
We first establish \Cref{lem:expected_running_time} that relates the expected running time of our algorithm to the expected number of RR sets generated and the expected time to generate an RR set. Next, as the time complexity of generating a single RR set is related to the expected width $EPT$ of a random RR set, we relate $EPT$ of a random RR set to the influence spread of a single node in \Cref{lem:ept}. Then we derive the bounds on the expected number of RR sets generated in both Phase 1 ($\boldsymbol{\theta}^{\prime}$) and Phase 2 ($\boldsymbol{\theta}$) in \Cref{lem:theta_bound}. Combining the results from above lemmas, we obtain the overall expected running time of our algorithm in \Cref{prop:overall_expected_running_time}. The proof follows \cite{tang2015influence,chen2017interplay}, with adjustment to our problem.

A random variable $\boldsymbol{\tau}$ is a \emph{stopping time} for martingale $\left\{\boldsymbol{Y}_i, i \geq 1\right\}$ if $\tau$ takes positive integer values, and the event $\boldsymbol{\tau}=i$ depends only on the values of $\boldsymbol{Y}_1, \boldsymbol{Y}_2, \ldots, \boldsymbol{Y}_i$.

\begin{fact}[Martingale Stopping Theorem]\label{fact:martingale_stopping_theorem}
Suppose that $\left\{\boldsymbol{Y}_i, i \geq 1\right\}$ is a martingale and $\boldsymbol{\tau}$ is a stopping time for $\left\{\boldsymbol{Y}_i, i \geq 1\right\}$. If $\tau \leq c$ for some constant $c$ independent of $\left\{\boldsymbol{Y}_i, i \geq 1\right\}$, then $\mathbb{E}\left[\boldsymbol{Y}_\tau\right]=\mathbb{E}\left[\boldsymbol{Y}_1\right]$.
\end{fact}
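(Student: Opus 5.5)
We prove this directly by writing $\boldsymbol{Y}_\tau$ as a telescoping sum with finitely many terms. Because $\tau$ is a positive integer with $\tau \le c$, we may take $c$ to be a positive integer and write
\[
\boldsymbol{Y}_\tau \;=\; \boldsymbol{Y}_1 + \sum_{i=2}^{c} \left(\boldsymbol{Y}_i - \boldsymbol{Y}_{i-1}\right)\mathbbm{1}[\tau \ge i].
\]
This holds pointwise: on the event $\tau = j$, only the increments with $i \le j$ survive, and they telescope to $\boldsymbol{Y}_j$. The sum has at most $c-1$ terms, so $|\boldsymbol{Y}_\tau| \le \sum_{i=1}^{c} |\boldsymbol{Y}_i|$. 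Condition (1) of the martingale definition makes each $\boldsymbol{Y}_i$ integrable, hence $\boldsymbol{Y}_\tau$ is integrable, and linearity of expectation applies to the finite sum without any convergence issues.

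It then suffices to show that each increment term has zero mean, i.e. $\mathbb{E}\left[(\boldsymbol{Y}_i - \boldsymbol{Y}_{i-1})\mathbbm{1}[\tau \ge i]\right] = 0$ for $2 \le i \le c$. The key observation is that $\{\tau \ge i\}$ is the complement of $\bigcup_{j=1}^{i-1}\{\tau = j\}$. By the stopping-time property, each event $\{\tau = j\}$ depends only on $\boldsymbol{Y}_1,\dots,\boldsymbol{Y}_j$. Therefore the indicator $\mathbbm{1}[\tau \ge i]$ is a function of $\boldsymbol{Y}_1,\dots,\boldsymbol{Y}_{i-1}$.

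By the tower property, taking out what is known,
\[
\mathbb{E}\left[(\boldsymbol{Y}_i - \boldsymbol{Y}_{i-1})\mathbbm{1}[\tau \ge i]\right] = \mathbb{E}\Big[\mathbbm{1}[\tau \ge i]\cdot\big(\mathbb{E}[\boldsymbol{Y}_i \mid \boldsymbol{Y}_1,\dots,\boldsymbol{Y}_{i-1}] - \boldsymbol{Y}_{i-1}\big)\Big] = 0,
\]
where the last equality is condition (2) of the martingale definition. Summing over $i$ gives $\mathbb{E}[\boldsymbol{Y}_\tau] = \mathbb{E}[\boldsymbol{Y}_1]$.

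The only delicate step is justifying that $\mathbbm{1}[\tau \ge i]$ is measurable with respect to the first $i-1$ values, so that it can be pulled out of the conditional expectation. This must be read from the stopping-time definition, which conditions on the same natural filtration of $(\boldsymbol{Y}_j)$ as the martingale property. The hypotheses that $c$ is a constant and independent of the sequence are exactly what keep the sum finite and guarantee integrability. This removes the need for dominated-convergence arguments.
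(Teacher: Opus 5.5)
Your proof is correct. The paper gives no proof to compare against. It records this as a standard Fact, carried over from the RR-set analyses it adapts (Chen--Teng and Tang et al., who take it from textbook probability). What you have written is the textbook argument for optional stopping at a bounded stopping time, and each step checks out:
\begin{itemize}
\item the finite telescoping identity for $\boldsymbol{Y}_\tau$, with $c$ replaced by $\lfloor c\rfloor$ if necessary;
\item integrability from $|\boldsymbol{Y}_\tau|\le\sum_{i\le c}|\boldsymbol{Y}_i|$;
\item the observation that $\{\tau\ge i\}$ is the complement of $\bigcup_{j<i}\{\tau=j\}$, and so is determined by $\boldsymbol{Y}_1,\dots,\boldsymbol{Y}_{i-1}$;
\item the tower property applied to each increment.
\end{itemize}

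One further remark, which your write-up already accommodates. The argument uses only that $\mathbbm{1}[\tau\ge i]$ and the conditional expectation of $\boldsymbol{Y}_i-\boldsymbol{Y}_{i-1}$ refer to the same past information $\mathcal{F}_{i-1}$. It therefore holds verbatim for an arbitrary filtration $(\mathcal{F}_i)$, not just the natural filtration of $(\boldsymbol{Y}_i)$.

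That is the version the paper actually needs in Lemma~\ref{lem:expected_running_time}. There the stopping time $\boldsymbol{\theta}'$ is decided by how the RR sets meet $\seedset$, through the $\boldsymbol{est}_{\seed}$ values, not by the partial sums $\boldsymbol{W}_i$ of their widths. So the Fact as literally stated (with $\{\tau=i\}$ depending only on $\boldsymbol{Y}_1,\dots,\boldsymbol{Y}_i$) does not directly apply. Take instead $\mathcal{F}_i=\sigma(\rrset^{(1)}_1,\dots,\rrset^{(1)}_i)$. Under this filtration $\boldsymbol{W}_i$ is still a martingale because the RR sets are i.i.d., and $\boldsymbol{\theta}'$ is a bounded stopping time, so your proof covers that use as well.
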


%For the Independent Cascade model, the expected time to generate a random RR set is O(m'/n'), where m' is the number of edges and n' is the number of nodes in G' (excluding the seed set).
Given a fixed subset $R \subseteq V$, let the width of $R$, denoted $\omega(R)$, be the total in-degrees of nodes in $R$. The time complexity to generate the random $R R$ set $\boldsymbol{R}$ is $\Theta(\omega(\boldsymbol{R})+1)$. We leave the constant 1 in the above formula because $\omega(\boldsymbol{R})$ could be less than 1 or even $o(1)$ when $m<n$, while $\Theta(1)$ time is needed just to select a random root. The expected time complexity to generate a random RR set is $\Theta(\mathbb{E}[\omega(\boldsymbol{R})]+1)$.

Let $EPT=\mathbb{E}[\omega(\boldsymbol{R})]$ be the expected width of a random $R R$ set. Let $\boldsymbol{\theta}^{\prime}$ be the random variable denoting the number of RR sets generated in Phase 1.

\begin{restatable}{lemma}{expectedrunningtime}\label{lem:expected_running_time}
The expected running time of \Cref{alg:shapley_rr} is:
\begin{equation}
\Theta\left( \left( \mathbb{E}[ \boldsymbol{\theta}^{\prime} ] + \mathbb{E}[ \boldsymbol{\theta} ] \right) \cdot ( \text{EPT} + 1 ) \right)
\end{equation}
where:
\begin{itemize}
\item $\boldsymbol{\theta}^{\prime}$ is the number of RR sets generated in Phase 1.
\item $\boldsymbol{\theta}$ is the number of RR sets generated in Phase 2.
\item $\text{EPT} = \mathbb{E}[ \omega( \boldsymbol{R} ) ]$ is the expected width of a random RR set $\boldsymbol{R}$, with $\omega( \boldsymbol{R} )$ being the total in-degree of nodes in $\boldsymbol{R}$.
\end{itemize}
\end{restatable}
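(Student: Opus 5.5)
The plan is to split the running time of \Cref{alg:shapley_rr} into three parts: the one-time work of Phase 0, the work spent generating and processing RR sets in Phase 1 (inside $\textsc{EstimateThreshold}$), and the same work in Phase 2. The RR-set work is the dominant term, and it is the only part whose cost is random.

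\textbf{Setup cost.} Phase 0 builds $G'$ in $O(|E|)$ time. Computing $\theta$ and extracting the $k$-th largest $\boldsymbol{est}$ value at most $\lfloor \log_2 \rrsize\rfloor$ times take lower-order time. I will either absorb these into the main term, as in \cite{tang2015influence,chen2017interplay}, or treat graph input as a standing preprocessing cost.

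\textbf{Cost of one RR set.} Generating a random RR set $\boldsymbol{R}$ means choosing a root uniformly from $V\setminus\seedset$ and doing a reverse BFS in $G'$. Each in-edge of a visited node is examined once, with one coin flip per edge, so the cost is $\Theta(\omega(\boldsymbol{R})+1)$. The credit update $\boldsymbol{est}_t \mathrel{+}= 1/|R_j\cap\seedset|$ costs $O(|\boldsymbol{R}|)$. Since $|\boldsymbol{R}|\le \omega(\boldsymbol{R})+1$ (every visited node except the root is reached through an in-edge of a visited node), the update is dominated by generation. In expectation, one RR set therefore costs $\Theta(EPT+1)$.

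\textbf{Random number of RR sets.} The total cost is $\sum_{j=1}^{\boldsymbol{\theta}'+\boldsymbol{\theta}} c_j$, where $c_j$ is the cost of the $j$-th RR set and the count $\boldsymbol{\theta}'+\boldsymbol{\theta}$ is itself random and depends on earlier samples. Multiplying expectations directly is not justified, so this is the main obstacle. I will handle it with Wald's identity through \Cref{fact:martingale_stopping_theorem}. Define the martingale
\[
Z_i=\sum_{j\le i}\bigl(c_j-\mathbb{E}[c_j]\bigr),
\]
which is valid because each RR set is generated with fresh, independent randomness, exactly as in \Cref{lem:martingale}. The stopping index $\tau$ is the total number of RR sets. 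Whether Phase 1 stops at round $i$ depends only on the RR sets generated so far, and $\theta$ is then determined by $\boldsymbol{LB}$, so $\tau$ is a stopping time. It is also deterministically bounded: Phase 1 has at most $\lfloor\log_2\rrsize\rfloor-1$ rounds with explicit $\theta_i$, and $\boldsymbol{LB}\ge 1$ keeps $\theta$ bounded by a constant independent of the samples. The stopping theorem then gives $\mathbb{E}[Z_\tau]=0$, so $\mathbb{E}[\sum_{j\le\tau}c_j]=\mathbb{E}[\tau]\cdot\Theta(EPT+1)$.

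\textbf{Conclusion.} Adding the two phases with linearity of expectation yields $\Theta\bigl((\mathbb{E}[\boldsymbol{\theta}']+\mathbb{E}[\boldsymbol{\theta}])(EPT+1)\bigr)$. For the lower bound I note that every generated RR set costs at least $\Omega(\omega(\boldsymbol{R})+1)$, so the same identity applies with $\Omega$ in place of $O$.
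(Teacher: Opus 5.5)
Your proposal is correct and follows essentially the same argument as the paper. Both use a per-RR-set cost of $\Theta(\omega(\boldsymbol{R})+1)$, absorb the credit update via $|\boldsymbol{R}|\le\omega(\boldsymbol{R})+1$, and apply the martingale stopping theorem (Wald's identity) to handle the random number of RR sets. The one difference is cosmetic: you apply the stopping theorem once to the combined, deterministically bounded stopping time $\boldsymbol{\theta}'+\boldsymbol{\theta}$, with respect to the filtration generated by the RR sets, whereas the paper applies it to Phase 1 and treats Phase 2 ``similarly''; your version makes explicit that $\boldsymbol{\theta}$ depends on the Phase 1 samples.
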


\begin{proof}
The proof contains the following 4 steps:

\textbf{(1) Time Complexity of Generating an RR Set}

For each RR set $\rrset$, the time to generate $\rrset$ is $\Theta( \omega( \rrset ) + 1 )$, as the time to select a random root node and perform BFS to find the RR set. Note that the constant term $\Theta(1)$ is not absorbed by $\omega( \rrset )$ since $\omega( \rrset )$ because the width of the RR set could be less than $1$.

\textbf{(2) Analysis of $\textsc{EstimateThreshold}$}

Let ${\rrset}^{(1)}_1, {\rrset}^{(1)}_2, \dots, {\rrset}^{(1)}_{\boldsymbol{\theta}^{\prime}}$ be the RR sets generated in Phase 1. For each RR set ${\rrset}^{(1)}_j$, we need to update $\boldsymbol{est}_{\seed}$ for each $\seed \in \seedset$ that appears in ${\rrset}^{(1)}_j$, which takes $\Theta(|{\rrset}^{(1)}_j|)$ time. Since $\seedset \subseteq \nodeset$, and the size of ${\rrset}^{(1)}_j$ is at most $\omega( {\rrset}^{(1)}_j ) + 1$ (because the induced subgraph is weakly connected), the algorithm takes $\Theta(\omega( {\rrset}^{(1)}_j ) + 1 + |{\rrset}^{(1)}_j|) = \Theta(\omega( {\rrset}^{(1)}_j ) + 1)$ for each RR set. Therefore, summing up for all $\boldsymbol{\theta}^{\prime}$ RR sets, the total expected running time of Phase 1 is:

\begin{equation}
\Theta\left( \sum_{j = 1}^{\boldsymbol{\theta}^{\prime}} \left( \omega( {\rrset}^{(1)}_j ) + 1 \right) \right)
\end{equation}

Define $\boldsymbol{W}_i = \sum_{j = 1}^{i} \left( \omega( {\rrset}^{(1)}_j ) - EPT \right)$ for $i \geq 1$. By an argument similar to that in \Cref{lem:martingale}, $\{ \boldsymbol{W}_i, i \geq 1 \}$ is a martingale. The stopping time $\boldsymbol{\theta}^{\prime}$ is upper bounded by a constant $\theta_{\left\lfloor \log_2 \rrsize \right\rfloor - 1}$, and depends only on the RR sets already generated. Therefore, by \Cref{fact:martingale_stopping_theorem}:

\begin{equation}
\mathbb{E}[ \boldsymbol{W}_{\boldsymbol{\theta}^{\prime}} ] = \mathbb{E}[ \boldsymbol{W}_1 ] = 0
\end{equation}

Thus:

\begin{equation}
\mathbb{E}\left[ \sum_{j = 1}^{\boldsymbol{\theta}^{\prime}} \omega( {\rrset}^{(1)}_j ) \right] - \mathbb{E}[ \boldsymbol{\theta}^{\prime} ] \cdot EPT = 0
\end{equation}

Rearranging:

\begin{equation}
\mathbb{E}\left[ \sum_{j = 1}^{\boldsymbol{\theta}^{\prime}} \omega( {\rrset}^{(1)}_j ) \right] = \mathbb{E}[ \boldsymbol{\theta}^{\prime} ] \cdot EPT
\end{equation}

Therefore, the expected running time of Phase 1 is:

\begin{equation}
\Theta\left( \mathbb{E}[ \boldsymbol{\theta}^{\prime} ] \cdot ( EPT + 1 ) \right).
\end{equation}

\textbf{(3) Analysis of Phase 2}

Similarly, in Phase 2, we generate $\boldsymbol{\theta}$ RR sets independently. The expected running time of Phase 2 is:

\begin{equation}
\Theta\left( \mathbb{E}[ \boldsymbol{\theta} ] \cdot ( EPT + 1 ) \right)
\end{equation}

\textbf{(4) Total Expected Running Time}

Combining both phases, the total expected running time is:

\begin{equation}
\Theta\left( \left( \mathbb{E}[ \boldsymbol{\theta}^{\prime} ] + \mathbb{E}[ \boldsymbol{\theta} ] \right) \cdot ( EPT + 1 ) \right)
\end{equation}

\end{proof}

\begin{restatable}{lemma}{ept}\label{lem:ept}
Let $\tilde{\boldsymbol{v}}$ be a random node drawn from $V \setminus \seedset$ with probability proportional to the in-degree of $\tilde{\boldsymbol{v}}$ in $G^{\prime}$. Let $\boldsymbol{R}$ be a random RR set generated in $G^{\prime}$. Then:

\begin{equation}
EPT = \mathbb{E}_{\boldsymbol{R}}[ \omega( \boldsymbol{R} ) ] = \dfrac{m^{\prime}}{n^{\prime}} \cdot \mathbb{E}_{\tilde{\boldsymbol{v}}}[ \val( \tilde{\boldsymbol{v}} ) ],
\end{equation}

where:
\begin{itemize}
\item $m^{\prime}$ is the number of edges in $G^{\prime}$. % (excluding edges to $\seedset$).
\item $n^{\prime} = |V \setminus \seedset|$.
\item $\val(\tilde{\boldsymbol{v}})$ is the expected value function of $\tilde{\boldsymbol{v}}$. % in $G^{\prime}$.
\end{itemize}
\end{restatable}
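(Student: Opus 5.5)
This is the standard EPT identity from Tang et al.\ (2015), adapted to $G^{\prime}$ with roots drawn from $V\setminus\seedset$. The plan is to write the width as a sum over edges, swap the order of expectation, and then use the RR-set/activation equivalence of \Cref{lem:rr_equivalence} to convert membership probabilities into a value function.

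\textbf{Step 1 (edge decomposition).} By definition, $\omega(\rrset)=\sum_{w\in\rrset} d^{-}_{G^{\prime}}(w)$, the total in-degree of nodes in $\rrset$ within $G^{\prime}$. I would rewrite this as a sum over edges:
\[
\omega(\rrset)=\sum_{(u,w)\in E^{\prime}}\mathbb{I}\{w\in\rrset\}.
\]
Taking expectation over the random root and the random live-edge graph then gives
\[
EPT=\sum_{(u,w)\in E^{\prime}}\Pr[w\in\rrset].
\]
Every edge of $E^{\prime}$ has its head in $V\setminus\seedset$, since incoming edges to seeds were removed. So only non-seed heads $w$ appear, and each such $w$ is counted $d^{-}_{G^{\prime}}(w)$ times.

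\textbf{Step 2 (membership as activation).} For a fixed non-seed $w$, the event $w\in\rrset$ means there is a live path from $w$ to the uniformly random root $v\in V\setminus\seedset$. Hence
\[
\Pr[w\in\rrset]=\frac{1}{\rrsize}\sum_{v\in V\setminus\seedset}\Pr[w\text{ reaches }v\text{ in }g\sim G^{\prime}].
\]
This is exactly the argument of \Cref{lem:rr_valuefunction} (via \Cref{lem:rr_equivalence}) with the singleton $\{w\}$ playing the role of the source set. It yields $\Pr[w\in\rrset]=\val(\{w\})/\rrsize$. Here $\val(\{w\})$ is the expected number of non-seed nodes reached from $w$ in $G^{\prime}$, which is what the statement means by ``the expected value function of $\tilde{\boldsymbol{v}}$''.

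\textbf{Step 3 (reweighting).} Combining the first two steps,
\[
EPT=\frac{1}{\rrsize}\sum_{w\in V\setminus\seedset} d^{-}_{G^{\prime}}(w)\,\val(\{w\}).
\]
Multiplying and dividing by $m^{\prime}=\sum_{w} d^{-}_{G^{\prime}}(w)$ turns this into
\[
EPT=\frac{m^{\prime}}{\rrsize}\,\mathbb{E}_{\tilde{\boldsymbol{v}}}\big[\val(\tilde{\boldsymbol{v}})\big],
\]
where $\tilde{\boldsymbol{v}}$ is drawn with probability proportional to its in-degree in $G^{\prime}$.

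\textbf{Main obstacle.} The delicate point is Step 2, because $w$ is a non-seed node and the paper defines $\val$ on seed subsets only. I would handle this by stating explicitly that $\val(\{w\})$ denotes the singleton-source spread in $G^{\prime}$, counting non-seed nodes including $w$ itself. I would then re-run the proof of \Cref{lem:rr_equivalence} for an arbitrary source node, which goes through verbatim because it only uses the equivalence between reverse reachability and forward reachability in each live-edge realization.

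In Step 1, the self-inclusion of the root and the convention that only non-seed heads contribute should also be checked. Both are consistent because the root is itself in $V\setminus\seedset$.
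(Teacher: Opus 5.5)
Your proof is correct and is essentially the paper's argument: the paper writes $\omega(R)=m'\cdot p(R)$, where $p(R)$ is the probability that a uniformly random edge of $G'$ points into $R$, swaps the expectations over $R$ and $\tilde{\boldsymbol{v}}$, and applies $\val(S)=n'\Pr[S\cap R\neq\emptyset]$ to a singleton, which matches your edge-sum, linearity and Step~2. Your explicit reading of $\val(\{w\})$ for a non-seed $w$ as the singleton spread in $G'$, counting $w$ itself, is a point the paper passes over when it applies that identity outside $S\subseteq \seedset$; only the live-edge reachability half of \Cref{lem:rr_equivalence} is needed for it.
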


\begin{proof}
For a fixed set $R \subseteq V$, let $p(R)$ be the probability that a randomly selected edge in $G^{\prime}$ points to a node in $R$. Since $R$ has $\omega(R)$ edges pointing to nodes in $R$, and the total number of edges in $G^{\prime}$ is $m^{\prime}$, we have that $p(R) = \dfrac{\omega(R)}{m^{\prime}}$.

Denote $d_v$ as the in-degree of a node $v$. We have:
\begin{equation}
\begin{aligned}
p(R) & =\sum_{(u, v) \in E^{\prime}} \frac{1}{m^{\prime}} \cdot \mathbb{I}\{v \in R\} \\
& =\sum_{v \in V} \frac{d_v}{m^{\prime}} \cdot \mathbb{I}\{v \in R\}=\mathbb{E}_{\tilde{\boldsymbol{v}}}[\mathbb{I}\{\tilde{\boldsymbol{v}} \in R\}]
\end{aligned}
\end{equation}

Then for a random RR set $\rrset$, we have:

\begin{equation}
\begin{aligned}
\mathbb{E}_{\rrset}[\omega(\rrset)] & =m^{\prime} \cdot \mathbb{E}_{\rrset}[p(\rrset)] \\
& =m^{\prime} \cdot \mathbb{E}_{\rrset}\left[\mathbb{E}_{\tilde{\boldsymbol{v}}}[\mathbb{I}\{\tilde{\boldsymbol{v}} \in \rrset\}]\right] \\
& =m^{\prime} \cdot \mathbb{E}_{\hat{\boldsymbol{v}}}\left[\mathbb{E}_{\rrset}[\mathbb{I}\{\tilde{\boldsymbol{v}} \in \rrset\}]\right] \\
& =m^{\prime} \cdot \mathbb{E}_{\hat{\boldsymbol{v}}}\left[\Pr(\tilde{\boldsymbol{v}} \in \boldsymbol{R})\right] \\
\end{aligned}
\end{equation}
%By \Cref{lem:rr_valuefunction}, 
Recall that $\val(\Sset) = \rrsize \cdot \Pr[\Sset \cap \rrset \neq \emptyset]$. Therefore, when the subset $\Sset$ is a single node $\tilde{\boldsymbol{v}}$, we have $\Pr(\tilde{\boldsymbol{v}} \in \boldsymbol{R}) = \frac{\val(\tilde{\boldsymbol{v}})}{\rrsize}$. Therefore, we have:
\begin{equation}
    EPT = \mathbb{E}_{\rrset}[\omega(\rrset)] = m^{\prime} \cdot \mathbb{E}_{\hat{\boldsymbol{v}}}[\frac{\val(\tilde{\boldsymbol{v}})}{\rrsize}] = \frac{m^{\prime}}{\rrsize} \cdot \mathbb{E}_{\hat{\boldsymbol{v}}}[\val(\tilde{\boldsymbol{v}})]
\end{equation}
\end{proof}

%\subsubsection{Bounding $\mathbb{E}[ \boldsymbol{\theta}^{\prime} ]$ and $\mathbb{E}[ \boldsymbol{\theta} ]$}

We now derive bounds on $\mathbb{E}[ \boldsymbol{\theta}^{\prime} ]$ and $\mathbb{E}[ \boldsymbol{\theta} ]$, the expected numbers of RR sets generated in Phases 1 and 2, respectively. First, we prove the following two inequalities:

\begin{lemma}\label{lem:est_bound}
For each $i=1,2, \ldots,\left\lfloor\log _2 \rrsize\right\rfloor-1$, if $\shap^{(k)} \geq \left(1+\varepsilon^{\prime}\right)^2 \cdot x_i$, then the following holds with probability at least $1-\frac{k}{2 |\seedset| {\rrsize}^{\ell} \log _2 \rrsize}$:
\begin{align}
    \frac{\rrsize \cdot \boldsymbol{est}^{(k)}_i}{\theta_i} &> \frac{\shap^{(k)}}{1+\varepsilon^{\prime}} \label{eq:est_bound_1}\\
    \frac{ \rrsize \cdot \boldsymbol{est}^{(k)}_i}{\theta_i} &> \left(1+\varepsilon^{\prime}\right) \cdot x_i \label{eq:est_bound_2}
\end{align}
\end{lemma}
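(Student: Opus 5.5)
Since $\shap^{(k)} \geq (1+\varepsilon')^2 x_i > (1+\varepsilon')x_i$, inequality \eqref{eq:est_bound_1} implies \eqref{eq:est_bound_2}: $\frac{\shap^{(k)}}{1+\varepsilon'} \geq (1+\varepsilon')x_i$. So the plan is to prove only \eqref{eq:est_bound_1} with the stated probability.

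To handle the order statistic, I use the same device as in \Cref{lem:lower_bound_estimation}. Let $T_k$ be the set of (at most, after tie-breaking exactly) $k$ seeds whose true Shapley values are the $k$ largest, so each $\seed \in T_k$ has $\shap(\seed) \geq \shap^{(k)}$. Suppose every $\seed \in T_k$ satisfies $\rrsize \cdot \boldsymbol{est}_{\seed,i}/\theta_i > \shap^{(k)}/(1+\varepsilon')$. Then at least $k$ estimates exceed the threshold, so the $k$-th largest estimate $\boldsymbol{est}^{(k)}_i$ does too. A union bound over the $k$ seeds in $T_k$ then reduces the claim to a per-seed failure probability of at most $\frac{1}{2|\seedset|\rrsize^{\ell}\log_2\rrsize}$.

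For a fixed $\seed \in T_k$, the failure event is a lower-tail event. Write $\mu = \shap(\seed)/\rrsize$. Since $\shap(\seed) \geq \shap^{(k)}$, the event $\boldsymbol{est}_{\seed,i} \leq \theta_i \shap^{(k)}/(\rrsize(1+\varepsilon'))$ is contained in $\boldsymbol{est}_{\seed,i} - \theta_i\mu \leq -\delta\,\theta_i\mu$ with $\delta = \frac{\varepsilon'}{1+\varepsilon'} \in (0,1)$. The summands $X_{\rrset^{(1)}_j}(\seed)$ form a martingale by \Cref{lem:martingale}, so the lower tail of \Cref{fact:martingale_tail_bounds} gives the bound
\[
\exp\Bigl(-\tfrac{\varepsilon'^2}{2(1+\varepsilon')^2}\,\theta_i\,\tfrac{\shap(\seed)}{\rrsize}\Bigr).
\]
I then substitute $\shap(\seed) \geq \shap^{(k)} \geq (1+\varepsilon')^2 x_i$. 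The $(1+\varepsilon')^2$ factors cancel, leaving $\exp(-\varepsilon'^2 x_i\theta_i/(2\rrsize))$. Since $2 \leq 2+\tfrac{2}{3}\varepsilon'$, this is at most $\exp\bigl(-\frac{\varepsilon'^2 x_i \theta_i}{(2+\frac23\varepsilon')\rrsize}\bigr)$. Plugging in the choice of $\theta_i$ from \Cref{eq:theta_i} bounds this by $\exp(-\ell\ln\rrsize - \ln|\seedset| - \ln\log_2\rrsize - \ln 2) = \frac{1}{2|\seedset|\rrsize^{\ell}\log_2\rrsize}$.

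The main obstacle is the order-statistic step, which the union-bound argument above handles, together with choosing $\delta$ so that the $(1+\varepsilon')^2$ hypothesis cancels exactly. The remaining steps are routine.
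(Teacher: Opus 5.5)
Your proposal is correct and follows the same route as the paper: a per-seed lower-tail bound from \Cref{fact:martingale_tail_bounds} with deviation $\frac{\varepsilon'}{1+\varepsilon'}$, then $\shap(\seed) \geq \shap^{(k)} \geq (1+\varepsilon')^2 x_i$ to cancel the $(1+\varepsilon')^2$ factor, then the choice of $\theta_i$, then a union bound over the $k$ seeds ranked at or above $\shap^{(k)}$ to handle the order statistic, and finally \eqref{eq:est_bound_2} deduced from \eqref{eq:est_bound_1}. Your explicit comparison $2 \leq 2+\frac{2}{3}\varepsilon'$ fills in a step that the paper leaves implicit.
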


\begin{proof}
Let $\rrset_1^{(1)}, \rrset_2^{(1)}, \ldots, \rrset_{\theta_i}^{(1)}$ be the $\theta_i$ generated RR sets by the end of the $i$-th iteration of the for-loop in Phase 1. Recall that $\boldsymbol{est}_{\seed,i} = \sum_{j = 1}^{\theta_i} X_{\rrset_j^{1}}(\seed)$.

If $\shap^{(k)} \geq\left(1+\varepsilon^{\prime}\right)^2 \cdot x_i$, then we can apply \Cref{fact:martingale_tail_bounds} and have:

\begin{equation}
\begin{aligned}
    &\Pr \left[\rrsize \cdot \frac{\boldsymbol{est}_{\seed,i}}{\theta_i} \leq \frac{\shap(\seed)}{1+\varepsilon^{\prime}}\right]\\
    &= \Pr \left[\boldsymbol{est}_{\seed,i} - \theta_i \cdot \frac{\shap(\seed)}{\rrsize} \leq -\frac{\varepsilon^{\prime}}{1+\varepsilon^{\prime}} \cdot \theta_i \cdot \frac{\shap(\seed)}{\rrsize}\right]\\
    &\leq \exp \left[-\frac{\varepsilon^{\prime 2}}{2\left(1+\varepsilon^{\prime}\right)^2} \cdot \theta_i \cdot \frac{\shap(\seed)}{\rrsize}\right] \quad \text{(By \Cref{fact:martingale_tail_bounds})}\\
    &\leq \exp \left[-\frac{\varepsilon^{\prime 2} \cdot x_i}{2 \rrsize} \cdot \theta_i\right] \quad \left(\text{By } x_i \leq \frac{\shap^{(k)}}{\left(1+e^{\prime}\right)^2} \leq \frac{\shap(\seed)}{\left(1+\varepsilon^{\prime}\right)^2}\right)\\
    & \leq \frac{1}{2 |\seedset| {\rrsize}^{\ell} \log _2 \rrsize}
\end{aligned}
\end{equation}

Then we take the union bound to obtain \Cref{eq:est_bound_1}. Note that there is at least one node $\seed$ with $\shap(\seed) \geq \shap^{(k)}$ and $\boldsymbol{est}_{\seed,i} \leq \boldsymbol{est}_i^{(k)}$. More precisely, there are at most $k$ such nodes $\seed$ with $\shap(\seed)$ ranked at or above $\shap^{(k)}$. Therefore, we have:

\begin{equation}
\begin{aligned}
    &\Pr \left[\rrsize \cdot \frac{\boldsymbol{est}^{(k)}_{i}}{\theta_i} \leq \frac{\shap^{(k)}}{1+\varepsilon^{\prime}}\right]\\
    & \leq \Pr\left[\exists \seed \in \seedset, \shap(\seed) \geq \shap^{(k)}, \rrsize \cdot \frac{\boldsymbol{est}_{\seed,i}}{\theta_i} \leq \frac{\shap(\seed) }{\left(1+\varepsilon^{\prime}\right)}\right] \\
    & \leq k \Pr\left[\rrsize \cdot \frac{\boldsymbol{est}_{\seed,i}}{\theta_i} \leq \frac{\shap(\seed) }{\left(1+\varepsilon^{\prime}\right)}\right] \\
    & \leq \frac{k}{2 |\seedset| {\rrsize}^{\ell} \log _2 \rrsize}
\end{aligned}
\end{equation}

Last, by applying $\shap^{(k)} \geq\left(1+\varepsilon^{\prime}\right)^2 \cdot x_i$ to \Cref{eq:est_bound_1}, we can also obtain \Cref{eq:est_bound_2}.
\end{proof}

Then we show the expected number of RR sets generated in both Phase 1 and Phase 2.

\begin{restatable}{lemma}{thetabound}\label{lem:theta_bound}

Under our algorithm, the expected number of RR sets generated satisfies:
\begin{align}
    \mathbb{E}[ \boldsymbol{\theta}^{\prime} ] &= O\left(\frac{n \ell \log n }{\shap^{(k)} \varepsilon^2}\right)\\
    \mathbb{E}[ \boldsymbol{\theta} ] &= O\left(\frac{n \ell \log n }{\shap^{(k)} \varepsilon^2}\right)
\end{align}

provided that $\ell \geq \dfrac{\log_2 k - \log_2 \log_2 \rrsize + \log_2 \rrsize - \log_2 |\seedset|}{\log_2 \rrsize}$.

\end{restatable}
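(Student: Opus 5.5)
Phase 2 is the easier half, so I will do it first. By Line~\ref{line:rr_phase2_theta}, $\boldsymbol{\theta}$ is a deterministic function of $\boldsymbol{LB}$:
\[
\boldsymbol{\theta} = \Theta\!\left(\frac{\rrsize(\ell\ln\rrsize + \ln|\seedset|)}{\varepsilon^2 \boldsymbol{LB}}\right).
\]
So it suffices to show $\mathbb{E}[1/\boldsymbol{LB}] = O(1/\shap^{(k)})$. I will split on the good event $\mathcal{G}$ of Lemma~\ref{lem:est_bound}. Let $i^\ast$ be the first round with $(1+\varepsilon')^2 x_{i^\ast} \le \shap^{(k)}$.
\begin{itemize}
\item \emph{On $\mathcal{G}$.} Eq.~\eqref{eq:est_bound_2} forces the loop to stop at round $i^\ast$ or earlier. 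When it stops at some round $i$, the break condition gives $\boldsymbol{LB} \ge x_i \ge x_{i^\ast}$. Since $x_{i^\ast} = \Theta(\shap^{(k)}/(1+\varepsilon')^2)$, we get $1/\boldsymbol{LB} = O(1/\shap^{(k)})$.
\item \emph{Off $\mathcal{G}$.} Here I only use $\boldsymbol{LB} \ge 1$: it is initialized to $1$, and any value set at a break is at least $x_i \ge 2$. The failure probability is at most $k/(2|\seedset|\rrsize^\ell\log_2\rrsize)$, so this part contributes at most $\rrsize\cdot k/(|\seedset|\rrsize^\ell\log_2\rrsize)$ times the $\theta$-prefactor.
\end{itemize}
The hypothesis on $\ell$ is exactly $\rrsize^\ell \ge k\rrsize/(|\seedset|\log_2\rrsize)$. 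This makes the off-$\mathcal{G}$ contribution $O(1)\cdot(\ell\ln\rrsize)/\varepsilon^2$, which is dominated by the main term because $\shap^{(k)} \le \rrsize$. Together,
\[
\mathbb{E}[\boldsymbol{\theta}] = O\!\left(\frac{n\ell\log n}{\shap^{(k)}\varepsilon^2}\right).
\]
Here $\ln|\seedset| \le \ln n$ is absorbed, and $\varepsilon' = \sqrt2\,\varepsilon$ only changes constants.

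For Phase 1, $\boldsymbol{\theta}'$ equals $\theta_i$ at the stopping round, or $\theta_{\lfloor\log_2\rrsize\rfloor-1}$ if the loop never breaks. Since $x_i = \rrsize/2^i$, the $\theta_i$ increase geometrically, with
\[
\theta_i = \Theta\!\left(\frac{\rrsize(\ell\ln\rrsize + \ln|\seedset| + \ln\log_2\rrsize)}{\varepsilon'^2 x_i}\right).
\]
On $\mathcal{G}$ the loop stops by round $i^\ast$, so $\boldsymbol{\theta}' \le \theta_{i^\ast} = O\!\left(\rrsize\,\ell\log\rrsize/(\varepsilon^2\shap^{(k)})\right)$. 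Off $\mathcal{G}$, I bound $\boldsymbol{\theta}'$ by the maximum $\theta_{\lfloor\log_2\rrsize\rfloor-1} = O(\ell\log\rrsize/\varepsilon^2)$. Multiplied by the failure probability, this is negligible, by the same condition on $\ell$ as before.

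If $i^\ast$ does not exist within the loop range, then $\shap^{(k)} = O(1)$. In that case the trivial bound $\boldsymbol{\theta}' \le \theta_{\max} = O(\ell\log\rrsize/\varepsilon^2)$ already matches the claim, because $\rrsize/\shap^{(k)} = \Omega(\rrsize)$ exceeds the constant.

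The main obstacle is the bookkeeping at the boundary. I need that the break at round $i \le i^\ast$ really gives $\boldsymbol{LB} \ge x_{i^\ast}$ up to constants: the loop can stop at round $i^\ast$ itself, and the factor $(1+\varepsilon')^2$ between $x_{i^\ast}$ and $\shap^{(k)}$ must be absorbed into the $O(\cdot)$. I also need that Lemma~\ref{lem:est_bound} applies at round $i^\ast$ given the earlier samples. Because the estimates are cumulative, the martingale tail bound (Fact~\ref{fact:martingale_tail_bounds}) is what justifies applying it at a fixed round, so no additional union over rounds is needed.
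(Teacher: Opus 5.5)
Your proposal is correct and follows essentially the paper's route. You split on whether $i^\ast$ exists, use the good event of \Cref{lem:est_bound} at round $i^\ast$ to get $\boldsymbol{\theta}'\le\theta_{i^\ast}$ and $\boldsymbol{LB}=\Omega(\shap^{(k)})$, fall back to $\theta_{\lfloor\log_2\rrsize\rfloor-1}$ and $\boldsymbol{LB}\ge 1$ off that event, and absorb the failure term using the condition on $\ell$ (failure probability at most $1/(2\rrsize)$) together with $\shap^{(k)}\le\rrsize$. Your bound $\boldsymbol{LB}\ge x_i\ge x_{i^\ast}$, taken from the break condition, is cleaner than the paper's appeal to Eq.~\eqref{eq:est_bound_1}, because it also covers a break before round $i^\ast$.

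One slip needs fixing. Since $x_{\lfloor\log_2\rrsize\rfloor-1}\in[2,4)$, the largest round size is $\theta_{\lfloor\log_2\rrsize\rfloor-1}=\Theta\bigl(\rrsize(\ell\ln\rrsize+\ln|\seedset|)/\varepsilon^2\bigr)$, not $O(\ell\log\rrsize/\varepsilon^2)$. Both places where you use it survive the correction. Off the good event, the factor $\rrsize$ is cancelled by the $1/(2\rrsize)$ failure probability, exactly as in your Phase~2 accounting. In the no-$i^\ast$ case, $\shap^{(k)}=O(1)$ makes the claimed bound $\Omega(\rrsize\,\ell\log\rrsize/\varepsilon^2)$. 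That same observation also handles $\boldsymbol{\theta}$ in the no-$i^\ast$ case, which you did not state explicitly.
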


\begin{proof}
We first prove for $\boldsymbol{\theta}^{\prime}$. We consider two cases seperately and then combine them: $\shap^{(k)}<\left(1+\varepsilon^{\prime}\right)^2 \cdot x_{\lfloor\log _2 \rrsize\rfloor-1}$ and $\shap^{(k)} \geq \left(1+\varepsilon^{\prime}\right)^2 \cdot x_{\lfloor\log _2 \rrsize\rfloor-1}$.

\textbf{Case 1:} $\shap^{(k)}<\left(1+\varepsilon^{\prime}\right)^2 \cdot x_{\lfloor\log _2 \rrsize\rfloor-1}$.
Recall that $\theta_i$ in Phase 1 is:
\begin{equation}    
    \theta_i = \left\lceil \frac{\rrsize(2+\frac{2}{3}\varepsilon')}{{\varepsilon'}^2\cdot x_i}\left(\ell \ln \rrsize + \ln |\seedset| + \ln (\log_2 \rrsize) + \ln 2 \right) \right\rceil
\end{equation}
Since $x_{\lfloor\log _2 \rrsize\rfloor-1} = \frac{\rrsize}{2^{\lfloor\log _2 \rrsize\rfloor-1}} \leq 2$, we have $\shap^{(k)}<4(1+ \varepsilon^{\prime})^2$. Thus in the worst case we can bound $\boldsymbol{\theta}^{\prime}$ based on $\shap^{(k)}$ as:

\begin{align}
& \boldsymbol{\theta}^{\prime}=\theta_{\lfloor \log _2 \rrsize\rfloor-1} \\
& \leq \left\lceil \frac{n^{\prime} \left( \ell \ln n^{\prime} + \ln |\seedset| + \ln \log_2 n^{\prime} + \ln 2 \right) \left( 2 + \frac{2}{3} \varepsilon^{\prime} \right)}{\varepsilon^{\prime 2}} \right\rceil \\
& \leq \left\lceil \frac{n^{\prime} \left( \ell \ln n^{\prime} + \ln |\seedset| + \ln \log_2 n^{\prime} + \ln 2 \right) \left( 2 + \frac{2}{3} \varepsilon^{\prime} \right) \cdot 4(1+ \varepsilon^{\prime})^2}{\varepsilon^{\prime 2} \cdot \shap^{(k)}} \right\rceil \label{eq:theta_prime_bound_1}\\ 
& =O\left( \frac{\rrsize (\ell \log \rrsize +\log |\seedset|) }{\shap^{(k)} \varepsilon^2}\right) \label{eq:theta_prime_bound_2}\\ 
& =O\left(\frac{n \ell \log n }{\shap^{(k)} \varepsilon^2}\right) \label{eq:theta_prime_bound_3}
\end{align}

From \Cref{eq:theta_prime_bound_1} to \Cref{eq:theta_prime_bound_2} follows the fact that $\varepsilon^{\prime}=\sqrt{2} \cdot \varepsilon$ and $\varepsilon$ is a sufficiently small positive parameter, so $\left( 2 + \frac{2}{3} \varepsilon^{\prime} \right) \cdot 4(1+ \varepsilon^{\prime})^2$ remains a constant. Moreover, we can relax the bound from \Cref{eq:theta_prime_bound_2} to \Cref{eq:theta_prime_bound_3} in terms of the size of the network $n$.

Similarly, for $\boldsymbol{\theta}$, since $\boldsymbol{LB} \geq 1$, we have:

\begin{equation}
\begin{aligned}
\boldsymbol{\theta} & \leq\left\lceil\frac{ \rrsize \left( \ell \ln \rrsize + \ln |\seedset| + \ln 4 \right) \left( 2 + \frac{2}{3} \varepsilon^{\prime} \right) }{\varepsilon^2}\right\rceil \\
& \leq\left\lceil\frac{ \rrsize \left( \ell \ln \rrsize + \ln |\seedset| + \ln 4 \right) \left( 2 + \frac{2}{3} \varepsilon^{\prime} \right)4(1+ \varepsilon^{\prime})^2 }{\varepsilon^2 \cdot \shap^{(k)}}\right\rceil \\
& =O\left( \frac{\rrsize (\ell \log \rrsize +\log |\seedset|) }{\shap^{(k)} \varepsilon^2}\right)\\ 
& =O\left(\frac{n \ell \log n }{\shap^{(k)} \varepsilon^2}\right) 
\end{aligned}
\end{equation}

\textbf{Case 2:} $\shap^{(k)} \geq \left(1+\varepsilon^{\prime}\right)^2 \cdot x_{\lfloor\log _2 \rrsize\rfloor-1}$

Let $i^*$ be the smallest index such that $(1+\varepsilon^{\prime})^2 \cdot x_{i^*} \leq \shap^{(k)}$. Therefore, $\shap^{(k)} < (1+\varepsilon^{\prime})^2 \cdot x_{i^*-1}$. Here we denote $x_0 = \rrsize$.

By the \Cref{eq:est_bound_2} in \Cref{lem:est_bound}, we know that with probability at least $1-\frac{k}{2 |\seedset| {\rrsize}^{\ell} \log _2 \rrsize}$, Phase 1 will stop at the $i^*$-th iteration. Similarly, we can bound $\boldsymbol{\theta}^{\prime}$ as:

\begin{equation}\label{eq:theta_prime_bound_case2}
\begin{aligned}
    \boldsymbol{\theta}^{\prime} &  = \theta_{i^*}  =  \left\lceil \frac{\rrsize \left( \ell \ln \rrsize + \ln |\seedset| + \ln \log_2 \rrsize + \ln 2 \right) \left( 2 + \frac{2}{3} \varepsilon^{\prime} \right)}{\varepsilon^{\prime 2} x_{i^*}} \right\rceil \\
    & \leq \left\lceil \frac{\rrsize \left( \ell \ln \rrsize + \ln |\seedset| + \ln \log_2 \rrsize + \ln 2 \right) \left( 2 + \frac{2}{3} \varepsilon^{\prime} \right) (1+\varepsilon^{\prime})^2}{\varepsilon^{\prime 2} \shap^{(k)}} \right\rceil \\
    & = O\left( \frac{\rrsize (\ell \log \rrsize +\log |\seedset|) }{\shap^{(k)} \varepsilon^2}\right) \\
    & =O\left(\frac{n \ell \log n }{\shap^{(k)} \varepsilon^2}\right) 
\end{aligned}
\end{equation}

By the \Cref{eq:est_bound_1} in \Cref{lem:est_bound}, when Phase 1 stops at the $i^*$-th iteration, the $\boldsymbol{LB}$ satisfies:
\begin{equation}
    \boldsymbol{LB} = \frac{\rrsize \cdot \boldsymbol{est}^{(k)}_{i^*}}{\theta_{i^*} \cdot (1+\varepsilon^{\prime})} \geq \frac{\shap^{(k)}}{(1+\varepsilon^{\prime})^2}
\end{equation}

Then we can bound $\boldsymbol{\theta}$ as:

\begin{equation}\label{eq:theta_bound_case2}
\begin{aligned}
    \boldsymbol{\theta} &\leq \left\lceil \frac{ \rrsize \left( \ell \ln \rrsize + \ln |\seedset| + \ln 4 \right) \left( 2 + \frac{2}{3} \varepsilon \right) (1+\varepsilon^{\prime})^2 }{ \varepsilon^2 \shap^{(k)} }  \right\rceil \\
    &= O\left( \frac{\rrsize (\ell \log \rrsize +\log |\seedset|) }{\shap^{(k)} \varepsilon^2}\right) \\
    &= O\left(\frac{n \ell \log n }{\shap^{(k)} \varepsilon^2}\right) 
\end{aligned}
\end{equation}

Moreover, Phase 1 will not stop at the $i^*$-th iteration with probability at most $\frac{k}{2 |\seedset| {\rrsize}^{\ell} \log _2 \rrsize}$. In the worst case, it continues to iteration $\lfloor \log_2 \rrsize \rfloor -1$, and $\boldsymbol{\theta^{\prime}}  =O\left( \frac{\rrsize (\ell \log \rrsize +\log |\seedset|) }{\varepsilon^2}\right) = O\left(\frac{n \ell \log n }{\varepsilon^2}\right)$.
Combine with \Cref{eq:theta_prime_bound_case2}, because the fact that $\shap^{(k)} \leq \rrsize$, and the condition that $\ell \geq \dfrac{\log_2 k - \log_2 \log_2 \rrsize + \log_2 \rrsize - \log_2 |\seedset|}{\log_2 \rrsize}$, we have:
\begin{equation}
        \begin{aligned}
            \mathbb{E}[\boldsymbol{\theta}^{\prime}] &= O\left(\frac{n \ell \log n }{\shap^{(k)} \varepsilon^2}\right) + \frac{k}{2 |\seedset| {\rrsize}^{\ell} \log _2 \rrsize} \cdot O\left(\frac{n \ell \log n }{\varepsilon^2}\right) \\
            &= O\left(\frac{n \ell \log n }{\shap^{(k)} \varepsilon^2}\right)
        \end{aligned}
    \end{equation}

Similarly, the worst case in Phase 2 is that $\boldsymbol{LB} = 1$, and we have: $\boldsymbol{\theta} = O\left( \frac{\rrsize (\ell \log \rrsize +\log |\seedset|) }{\varepsilon^2}\right) = O\left(\frac{n \ell \log n }{\varepsilon^2}\right)$. Combine with \Cref{eq:theta_bound_case2}, we have:
\begin{equation}
    \begin{aligned} 
        \mathbb{E}[\boldsymbol{\theta}] &= O\left(\frac{n \ell \log n }{\shap^{(k)} \varepsilon^2}\right) + \frac{k}{2 |\seedset| {\rrsize}^{\ell} \log _2 \rrsize} \cdot O\left(\frac{n \ell \log n }{\shap^{(k)} \varepsilon^2}\right) \\
        &= O\left(\frac{n \ell \log n }{\shap^{(k)} \varepsilon^2}\right)
    \end{aligned}
\end{equation}

Note that we set $\varepsilon^{\prime} = \sqrt{2} \cdot \varepsilon$ as suggested in \cite{tang2015influence}. Moreover, the condition on $\ell$ ensures the probability that Phase 1 will not stop at the $i^*$-th iteration is at most $\frac{1}{2\rrsize}$.

\end{proof}

%\subsubsection{Final Time Complexity}
Combining the results from above lemmas, we obtain the overall expected running time of our algorithm:
%\overallexpectedrunningtime*
\begin{restatable}{proposition}{overallexpectedrunningtime}\label{prop:overall_expected_running_time}
    Let $\tilde{\boldsymbol{v}}$ be a random node drawn from $V \setminus \seedset$ with probability proportional to the in-degree of $\tilde{\boldsymbol{v}}$ in $G^{\prime}$. Denote $m^{\prime}$ is the number of edges in $G^{\prime}$ and $\mathbb{E}_{\tilde{v}}[ \val( \tilde{v} ) ]$ is the expected value function of $\tilde{\boldsymbol{v}}$. Then the expected runtime complexity of $\algrrset$ is:  
 \begin{small}
    \begin{equation}
        O\left(\frac{n \ell \log n }{\shap^{(k)} \varepsilon^2} \cdot \dfrac{m^{\prime}}{n^{\prime}} \cdot \mathbb{E}_{\tilde{v}}[ \val( \tilde{v} ) ]\right),
    % O\left( \dfrac{n^{\prime} \left( \ell \ln n^{\prime} + \ln |T| \right) (\ln n^{\prime})}{\varepsilon^{2} \shap^{(k)}} \cdot \left( \dfrac{m^{\prime}}{n^{\prime}} \cdot \mathbb{E}_{\tilde{v}}[ \sigma^{\prime}( \{ \tilde{v} \} ) ] + 1 \right) \right)
    \end{equation}
    \end{small}
   % under the condition that
    
    $$
    \text{under condition} \ \ \begin{small}\ell \geq \dfrac{\log_2 k - \log_2 \log_2 \rrsize + \log_2 \rrsize - \log_2 |\seedset|}{\log_2 \rrsize}  \end{small}.
    $$
  
    %where $n = |V|$, $n^{\prime} = |V \setminus \seedset|$, $m^{\prime} = |E^{\prime}|$, and $\mathbb{E}_{\tilde{v}}[ \val( \tilde{v} ) ]$ is the expected value function of a random node in $V \setminus \seedset$.
\end{restatable}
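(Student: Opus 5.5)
The proof combines three earlier results: \Cref{lem:expected_running_time}, \Cref{lem:ept} and \Cref{lem:theta_bound}. \Cref{lem:expected_running_time} gives the expected runtime as
\[
\Theta\bigl((\mathbb{E}[\boldsymbol{\theta}^{\prime}]+\mathbb{E}[\boldsymbol{\theta}])(EPT+1)\bigr).
\]
Two routine facts about this expression are used:
\begin{itemize}
\item Phase~0 builds $G^{\prime}$ in $O(|E|)$ time, so it only adds a lower-order additive term.
\item The per-RR-set work of updating $\boldsymbol{est}_t$ is already absorbed in $\Theta(\omega(\rrset)+1)$. This holds because $|\rrset\cap\seedset|\le|\rrset|\le\omega(\rrset)+1$, as argued in that lemma.
\end{itemize}

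\textbf{Step 1 (bound the number of RR sets).} Assume the stated condition on $\ell$. Then \Cref{lem:theta_bound} gives
\[
\mathbb{E}[\boldsymbol{\theta}^{\prime}]+\mathbb{E}[\boldsymbol{\theta}] = O\!\left(\frac{n\ell\log n}{\shap^{(k)}\varepsilon^2}\right).
\]
This bound covers both cases. The first is when Phase~1 stops at the iteration $i^*$ predicted by \Cref{lem:est_bound}. The second is the low-probability failure event, whose worst-case cost is damped by the factor $\frac{k}{2|\seedset|{\rrsize}^{\ell}\log_2\rrsize}$.

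\textbf{Step 2 (bound the expected width).} By \Cref{lem:ept},
\[
EPT=\frac{m^{\prime}}{n^{\prime}}\,\mathbb{E}_{\tilde v}[\val(\tilde v)].
\]

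\textbf{Step 3 (multiply).} Substituting both bounds gives
\[
O\!\left(\frac{n\ell\log n}{\shap^{(k)}\varepsilon^2}\cdot\Bigl(\frac{m^{\prime}}{n^{\prime}}\mathbb{E}_{\tilde v}[\val(\tilde v)]+1\Bigr)\right).
\]

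The step I expect to need the most care is removing the additive ``$+1$'' to reach the exact form of the statement. The planned argument has two parts:
\begin{itemize}
\item Whenever $\tilde v$ has an in-neighbor in $G^{\prime}$, that node itself is activated, so $\val(\tilde v)\ge 1$, treating a node as reaching itself in the RR-set convention used in \Cref{lem:ept}. Hence $\mathbb{E}_{\tilde v}[\val(\tilde v)]\ge 1$.
\item In the nondegenerate regime $m^{\prime}\ge n^{\prime}$, the constant term is therefore dominated. If $m^{\prime}<n^{\prime}$, I would simply state that the $+1$ term is retained, as is standard in the analyses of \cite{tang2015influence,chen2017interplay}, and that it is absorbed into the $O(\cdot)$ under the mild assumption that $G^{\prime}$ is not extremely sparse.
\end{itemize}

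The remaining technical point is already handled by \Cref{lem:expected_running_time}. The number of Phase~1 RR sets is a random stopping time, so the expected total width factorizes as $\mathbb{E}[\boldsymbol{\theta}^{\prime}]\cdot EPT$ by the martingale stopping theorem (\Cref{fact:martingale_stopping_theorem}). For Phase~2 I need one extra justification, because $\boldsymbol{\theta}$ depends on the Phase~1 outcome. Conditioned on Phase~1, the Phase~2 RR sets are fresh and independent. So the same identity $\mathbb{E}[\sum\omega]=\mathbb{E}[\boldsymbol{\theta}]\cdot EPT$ follows by conditioning on $\boldsymbol{LB}$ and applying the tower property.
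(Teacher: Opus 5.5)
Your proposal is correct and follows the paper's proof: substitute \Cref{lem:theta_bound} and \Cref{lem:ept} into \Cref{lem:expected_running_time}. It is in fact more careful than the paper in two places. First, the paper silently drops the additive $+1$. Your point that removing it needs $EPT=\Omega(1)$, e.g.\ $m'\ge n'$ together with $\mathbb{E}_{\tilde v}[\val(\tilde v)]\ge 1$, is right; without such an assumption the stated form can fail, for instance one seed with a single probability-$1$ edge to one of $n'$ non-seeds. Second, the paper only says ``similarly'' for Phase~2, and your conditioning-on-$\boldsymbol{LB}$ argument is the correct justification there.

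One small correction: building $G'$ explicitly in $O(|E|)$ is not always lower-order, because the stated bound can be sublinear in $|E|$. For example, one seed with probability-$1$ edges to all $n'$ non-seeds and $k=1$ gives $O(\ell\log n/\varepsilon^2)$. So either realize $G'$ implicitly by never expanding in-edges of seeds during the reverse BFS, or keep $O(|E|)$ as a separate additive term.
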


\begin{proof}
From Lemma~\ref{lem:expected_running_time}, the expected running time is:

\begin{equation}
\Theta\left( \left( \mathbb{E}[ \boldsymbol{\theta}^{\prime} ] + \mathbb{E}[ \boldsymbol{\theta} ] \right) \cdot (EPT + 1 ) \right)
\end{equation}

Substituting $\mathbb{E}[ \boldsymbol{\theta}^{\prime} ]$ and $\mathbb{E}[ \boldsymbol{\theta} ]$ from \ref{lem:theta_bound}, we have:

\begin{equation}
\mathbb{E}[ \boldsymbol{\theta}^{\prime} ] + \mathbb{E}[ \boldsymbol{\theta} ] = O\left(\frac{n \ell \log n }{\shap^{(k)} \varepsilon^2}\right)
\end{equation}

From Lemma~\ref{lem:ept}, we have:
\begin{equation}
\text{EPT} = \dfrac{m^{\prime}}{n^{\prime}} \cdot \mathbb{E}_{\tilde{v}}[ \val( \tilde{v} ) ]
\end{equation}

Combining these, the expected running time is:

\begin{equation}
O\left(\frac{n \ell \log n }{\shap^{(k)} \varepsilon^2} \cdot \dfrac{m^{\prime}}{n^{\prime}} \cdot \mathbb{E}_{\tilde{v}}[ \val( \tilde{v} ) ]\right)
% O\left( \dfrac{ (\ell \ln \rrsize + \ln |\seedset|)\cdot (m^{\prime}\mathbb{E}_{\tilde{v}}[ \sigma^{\prime}( \{ \tilde{v} \} ) ] + \rrsize)}{\varepsilon^{2} \shap^{(k)}} \right)
\end{equation}

This completes the proof.
\end{proof}

\balance
\section{Details from Section~\ref{sec:experiments}}\label{sec:appendix_experiments}
\subsection{Parameter Sensitivity Analysis}\label{subsec:param_tuning}

\begin{figure}[!htbp]
    \centering
    \begin{subfigure}[b]{0.22\textwidth}
        \centering
        \includegraphics[width=\textwidth]{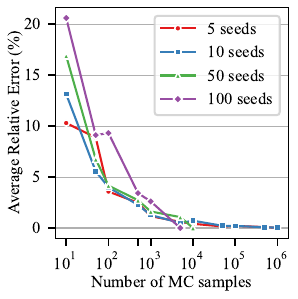}
        \caption{$\permuteMC$: number of MC samples $n_{\text{MC}}$ ($n_{\pi} = 500$)}
        \label{fig:param_permute_m_n500}
    \end{subfigure}
    \hfill
    \begin{subfigure}[b]{0.22\textwidth}
        \centering
        \includegraphics[width=\textwidth]{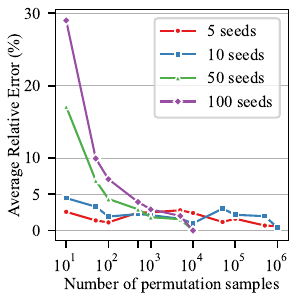}
        \caption{$\permuteMC$: number of permutations $n_{\pi}$ ($n_{\text{MC}} = 500$)}
        \label{fig:param_permute_n_m500}
    \end{subfigure}
    \vfill
    \begin{subfigure}[b]{0.22\textwidth}
        \centering
        \includegraphics[width=\textwidth]{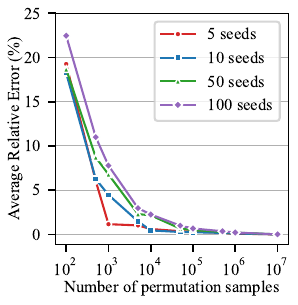}
        \caption{$\algliveedge$: number of sampled live edge graphs $n$}
        \label{fig:param_permute2_n}
    \end{subfigure}
    \hfill
    \begin{subfigure}[b]{0.22\textwidth}
        \centering
        \includegraphics[width=\textwidth]{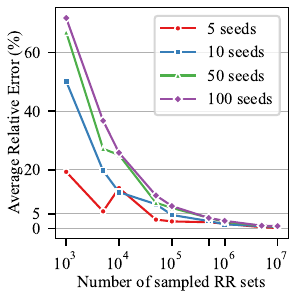}
        \caption{$\algrrset$: number of sampled RR sets $\theta$}
        \label{fig:param_rr_epsilon}
    \end{subfigure}
    \caption{Average relative error varying the parameters of different algorithms on \congressdata network}
    \label{fig:tune_parameters_congress}
\end{figure}

To investigate the effect of number of samples and select appropriate sample sizes to balance approximation accuracy and computational efficiency on approximation algorithms $\permuteMC$, $\algliveedge$ and $\algrrset$, we vary the sampling parameters of each algorithm on the \congressdata dataset under the \fullstep termination. \Cref{fig:tune_parameters_congress} shows average relative error for different seed set sizes (5, 10, 50, and 100), which is calculated against the result obtained using the largest number of samples for each varying parameters.
%For above experiments in \Cref{subsec:quality_analysis,subsec:runtime_analysis}, we set the sampling parameters for each algorithm to achieve average relative error less than $5\%$ to balance the accuracy and efficiency.

For $\permuteMC$, we vary two sampling parameters: the number of seed permutation samples $n_{\pi}$ (\Cref{fig:param_permute_n_m500}), and the number of Monte Carlo samples $n_{\text{MC}}$ that estimates value functions (\Cref{fig:param_permute_m_n500}). 
%As expected, increasing $n_{\text{MC}}$ reduces the error by improving the accuracy of influence spread estimation, and increasing $n_{\pi}$ reduces the error by better approximating the true expectation over all permutations. 
When $n_{\text{MC}} = 500$, the relative error compared with the benchmark $n_{\text{MC}}=10^6$ is less than $5\%$, and when $n_{\pi}=500$ the relative error compared with the benchmark $n_{\pi}=10^6$ is less than $4\%$. Thus we set $n_{\text{MC}}=500$ and $n_{\pi}=500$ for $\permuteMC$ for other experiments.
For $\algliveedge$, we vary the number of sampled live-edge graphs $n$. \Cref{fig:param_permute2_n} shows that the relative error converges faster when the seed size is smaller. Moreover, the relative error is decreasing to $\leq 5\%$ and becomes stable when $n=5,000$ for all seed sizes. Thus we set $n=5,000$.
For the \rrsetexp algorithm, we vary the number of sampled RR sets $\theta$. \Cref{fig:param_rr_epsilon} shows the relative error drops below $4\%$ when $\theta = 50,000$. Thus we set $\theta = 50,000$.

\subsection{\textbf{Impact of termination step $\timeconst$}}\label{subsec:impact_k_appendix}
%We investigate how the termination step $\timeconst$ affects accuracy and runtime on synthetic graphs. 
Figure~\ref{fig:vary_step_shapley} shows increasing $\timeconst$ yields diminishing accuracy returns, as most activations occur in early steps ($\timeconst \leq 16$). While both algorithms converge similarly, \algliveedgeexp's error increases while \rrsetexp's decreases as $\timeconst$ grows. This occurs because \algliveedgeexp's forward graph traversal incurs higher variance with longer propagation paths, while \rrsetexp's backward graph traversal benefits from deeper paths. Figure~\ref{fig:vary_step_runtime} shows \rrsetexp's runtime grows with $\timeconst$ due to deeper reverse sampling, while \algliveedgeexp remains more stable as the cost of sampling live-edge graphs is independent of $\timeconst$. Thus, \algliveedgeexp excels when diffusion terminates quickly, while \rrsetexp maintains superior efficiency for all $\timeconst$.

\begin{figure}[!htbp]
\begin{minipage}{\linewidth}
\begin{subfigure}{0.47\columnwidth}
    \centering
    \includegraphics[width=\columnwidth]{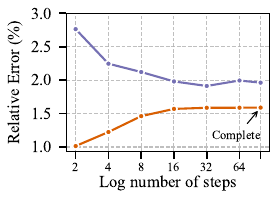}
    \caption{Relative error}
    \label{fig:vary_step_shapley}
\end{subfigure}%
\hfill
\begin{subfigure}{0.5\columnwidth}
    \centering
    \includegraphics[width=\columnwidth]{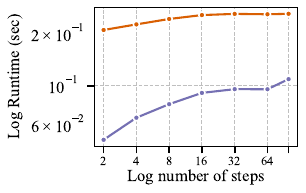}
    \caption{Runtime}
    \label{fig:vary_step_runtime}
\end{subfigure}
\end{minipage}

\begin{minipage}{\columnwidth}
    \centering
    \includegraphics[width=\columnwidth]{figures/experiments_new/synthetic_quality/algorithm_legend_fixed_point.pdf}
\end{minipage}
\caption{Varying termination step $\timeconst$ on synthetic graph (5K nodes, 10 avg. degree, 500 seeds).}
\label{fig:vary_step_experiment}
\end{figure}

% \clearpage

\end{document}